\newcommand{\de}{\ensuremath{\partial}}
\newcommand{\dee}{\ensuremath{\textrm{d}}}
\newcommand{\inty}[4]{\ensuremath{ \int_{#1}^{#2} \! #3 \, \dee#4 }}
\newcommand{\field}[1]{\mathbb{#1}}
\newcommand{\ip}[2]{\ensuremath{ \left< \left. #1 \right| #2 \right> } }
\newcommand{\bra}[1]{\langle #1|}
\newcommand{\ket}[1]{| #1 \rangle}
\newcommand{\comment}[1]{}
\newcommand{\la}{\langle}
\newcommand{\ra}{\rangle}
\newcommand{\cC}{\mathcal{C}}
\newcommand{\cD}{\mathcal{D}}
\newcommand{\cB}{\mathcal{B}}
\newcommand{\cJ}{\mathcal{J}}
\newcommand{\R}{\mathbb{R}}
\DeclareMathOperator{\range}{range}
\DeclareMathOperator{\diag}{diag}
\DeclareMathOperator{\im}{Im}
\DeclareMathOperator{\divd}{div}
\DeclareMathOperator{\loc}{loc}
\DeclareMathOperator{\uloc}{uloc}
\newcommand\numberthis{\addtocounter{equation}{1}\tag{\theequation}}
\DeclareMathOperator{\dist}{dist}
\DeclareMathOperator{\diam}{diam}
\newtheorem{definition}{Definition}
\newtheorem{assumption}{Assumption}
\newtheorem{remark}{Remark}
\newtheorem{theorem}{Theorem}
\newtheorem*{theorem*}{Theorem}
\newtheorem*{maintheorem*}{Main Theorem}
\newtheorem{lemma}{Lemma}
\newtheorem*{lemma*}{Lemma}
\newtheorem*{conjecture*}{Conjecture}
\newtheorem{proposition}{Proposition}
\newtheorem{corollary}[proposition]{Corollary}
\renewcommand{\vec}[1]{\boldsymbol{#1}}
\numberwithin{lemma}{section}
\numberwithin{example}{section}
\numberwithin{figure}{section}
\numberwithin{proposition}{section}
\numberwithin{equation}{section}
\numberwithin{theorem}{section}
\numberwithin{remark}{section}
\numberwithin{definition}{section}
\numberwithin{assumption}{section}
\title[Generalized Wannier functions for non-periodic systems]{Existence and computation of generalized Wannier functions for non-periodic systems in two dimensions and higher}
\author{Kevin D. Stubbs, Alexander B. Watson, and Jianfeng Lu}
\definecolor{purp}{RGB}{160, 32, 240}
\definecolor{lightblue}{RGB}{32, 160, 240}
\begin{document}

\maketitle

\begin{abstract}
Exponentially-localized Wannier functions (ELWFs) are an orthonormal basis of the Fermi projection of a material consisting of functions which decay exponentially fast away from their maxima.
When the material is insulating and crystalline, conditions which guarantee existence of ELWFs in dimensions one, two, and three are well-known, and methods for constructing the ELWFs numerically are well-developed.
We consider the case where the material is insulating but not necessarily crystalline, where much less is known. In one spatial dimension, Kivelson and Nenciu-Nenciu have proved ELWFs can be constructed as the eigenfunctions of a self-adjoint operator acting on the Fermi projection. In this work, we identify an assumption under which we can generalize the Kivelson-Nenciu-Nenciu result to two dimensions and higher. Under this assumption, we prove that ELWFs can be constructed as the eigenfunctions of a sequence of self-adjoint operators acting on the Fermi projection.
\end{abstract}

\section{Introduction} \label{sec:introduction}
The starting point for understanding electronic properties of materials is the many-body ground state of the material's electrons.
In the independent electron approximation, electrons in the ground state occupy the eigenstates of the single-electron Hamiltonian with energy up to the Fermi level. 
The subspace of the single-electron Hilbert space occupied by electrons in the ground state is known as the Fermi projection \cite{ashcroft_mermin}. It is often desirable
to find orthonormal bases of the Fermi projection which are as spatially localized as possible. Such bases are important both for theoretical and numerical studies of materials. For example, they form the basis of the modern theory of polarization \cite{1993King-SmithVanderbilt,1994Resta,2012MarzariMostofiYatesSouzaVanderbilt}, and can dramatically speed up numerical calculations \cite{1999Goedecker,2005LeeNardelliMarzari,2006StengelSpaldin,2012MarzariMostofiYatesSouzaVanderbilt}.

For insulating \textit{crystalline} materials, it is natural to work with orthonormal bases which share the periodic structure of the material in the sense that the basis is closed under translation by Bravais lattice vectors. Such bases are known as Wannier bases and their elements are known as Wannier functions.
Wannier functions can be
constructed (in the simplest case) by integrating the Bloch functions 
of the occupied Bloch bands
with respect to the quasi-momentum over the Brillouin zone \cite{1937Wannier}. 
Since each Bloch function is only defined up to a complex phase, 
or ``gauge'', Wannier functions are not unique. 
By changing the gauge, one can change the spatial localization of the corresponding Wannier functions. 

In pioneering work, Kohn 
found that for non-degenerate Bloch bands of inversion-symmetric crystals in one spatial dimension it is always possible to choose the gauge of the Bloch functions such that the associated Wannier functions decay \emph{exponentially fast} in space \cite{1959Kohn}.
In the years since, many authors have worked to generalize Kohn's result. A summary of these results is as follows. 
\begin{itemize}
\item In one spatial dimension, the Fermi projection of an insulating crystalline material can always be represented by exponentially-localized Wannier functions \cite{1964DesCloizeaux,1964DesCloizeaux2,1983Nenciu,1988HelfferSjostrand,1991Nenciu}. 
\item In two dimensions, the same result holds if and only if the Chern number, a topological invariant associated with the occupied Bloch functions, vanishes \cite{1964DesCloizeaux,1964DesCloizeaux2,1983Nenciu,1988HelfferSjostrand,1991Nenciu,2007BrouderPanatiCalandraMourougane,2007Panati,2018MonacoPanatiPisanteTeufel}. 
\item In three dimensions, the result holds as long as three ``Chern-like'' numbers all vanish \cite{1964DesCloizeaux,1964DesCloizeaux2,1983Nenciu,1988HelfferSjostrand,1991Nenciu,2007BrouderPanatiCalandraMourougane,2007Panati}. 
\end{itemize}
An important special case of these results is that exponentially-localized Wannier bases always exist whenever the insulating crystal is symmetric under (Bosonic or Fermionic) time-reversal, since this implies that the Chern and Chern-like numbers vanish \cite{2007Panati}. When these symmetries hold, a further challenge is to find Wannier functions which respect the symmetries, this additional constraint requires more refined methods and can create further topological obstructions; see, for example \cite{2006FuKane,2016CorneanHerbstNenciu,2016FiorenzaMonacoPanati,2016FiorenzaMonacoPanati2,2017CorneanMonaco,2017CorneanMonacoTeufel}.

For insulating materials without crystalline atomic structure, Bloch functions do not exist and hence cannot be used to construct a spatially localized basis. In spite of this, Nenciu-Nenciu \cite{1998NenciuNenciu}, following 
Kivelson \cite{1982Kivelson}, proved by construction that exponentially-localized bases of the Fermi projection always exist in one spatial dimension, and that this same construction applied to a crystalline material yields a Wannier basis.
Exponentially-localized bases of the Fermi projection are conjectured to exist more generally, at least whenever time-reversal symmetry holds \cite{1993NenciuNenciu,1991Niu}. Beyond one spatial dimension, their existence has been proven in a few special cases \cite{2008CorneanNenciuNenciu,1993NenciuNenciu,1993GellerKohn,2015Prodan,2016CorneanHerbstNenciu} (see Section \ref{sec:previous_works} for discussion of these works); for more details of the one-dimensional case see \cite{1974RehrKohn,1974KohnOnffroy}. In what follows, we will refer to any localized basis of the Fermi projection of a non-periodic insulator as a generalized Wannier basis, and to the elements of such a basis as generalized Wannier functions.

In this work, we introduce and prove validity of a new construction of exponentially localized generalized Wannier functions for periodic and non-periodic insulators in two dimensions and higher. Our results can be thought of as generalizing Kivelson and Nenciu-Nenciu's ideas to higher dimensions, although our method relies on novel assumptions, one for each dimension higher than one, which are unnecessary in one dimension. We provide full details of our proofs for the case of an infinite material in two dimensions described by a continuum PDE model. Our proofs, being somewhat more operator-theoretic than those of Nenciu-Nenciu, extend in straightforward ways (up to assumptions and details which we explain in each case) to discrete, finite, and higher dimensional models. Our proof of exponential localization, in particular, requires several new ideas compared with that of Nenciu-Nenciu.


When applied to periodic insulating materials, our construction yields exponentially localized Wannier functions, and hence our assumptions imply vanishing of the Chern number in two dimensions, and of the three Chern-like numbers in three dimensions. Since our construction preserves time-reversal symmetries of the Hamiltonian, our assumptions also imply vanishing of the topological obstructions which appear when time-reversal symmetries are imposed. We conjecture that our assumptions imply vanishing of topological indices formulated for non-periodic insulators, and for insulators in other symmetry classes, as well. In the case of the two dimensional Chern marker, this follows from combining our results with \cite{Marcelli2020} (see also \cite{Lu2021}). For other examples of non-periodic formulations of topological invariants, see \cite{1994BellissardvanElstSchulz-Baldes,2015Loring,2006Kitaev,2011BiancoResta,2010Prodan,2018GrafShapiro}.

Our construction implies a new algorithm for numerically computing generalized Wannier functions in finite systems, both with periodic boundary conditions and otherwise. Numerical tests on the two dimensional Haldane model suggest that our assumption generally holds whenever the model is in its non-topological phase, and we find that our algorithm indeed yields exponentially localized generalized Wannier functions for that model, even with weak disorder.
The algorithm fails in the topological (Chern) insulator phase because our assumption does not hold in this case. We show that this algorithm can be used to compute two dimensional Wannier functions which respect different choices of boundary conditions, and model symmetries such as time-reversal symmetry, in \cite{Stubbs2020}. In three dimensions, our construction requires an additional assumption, and we leave numerical exploration of this case to future work, although we argue that our assumptions are always valid, at least, for insulators in arbitrary dimensions close to the ``atomic limit'', where onsite potentials dominate inter-site hopping.\footnote{The importance of this limit was emphasized in \cite{Bradlyn2017}.} We remark that in general, methods for numerical computation of \emph{generalized} Wannier functions are significantly less developed than methods for computing Wannier functions \cite{1997MarzariVanderbilt,2001SouzaMarzariVanderbilt,2012MarzariMostofiYatesSouzaVanderbilt,2010ELiLu,2015DamleLinYing,2017DamleLinYing}, although see Appendix A of \cite{1997MarzariVanderbilt} and \cite{1998SilvestrelliMarzariVanderbiltParrinello}. Numerical methods for computing generalized Wannier functions in finite systems can be viewed as Boys localization schemes \cite{1960Boys}.

\subsection{Paper Organization}
The remainder of our paper is organized as follows. We present our main theoretical result without making our assumptions completely explicit, and explain in what sense our result generalizes Kivelson-Nenciu-Nenciu's one dimensional result, in Section \ref{sec:kivelson_idea}. We summarize what is known about the validity of our assumptions in Section \ref{sec:outlook}. We review previous literature on constructing generalized Wannier functions for non-periodic materials in Section \ref{sec:previous_works}. We show results of implementing our numerical algorithm in Section \ref{sec:numerical-results}. We review some notations in Section \ref{sec:notations}, before presenting the strategy of the proof of the main theorem, and make our assumptions precise, in Section \ref{sec:results}. The proof of our main theorem is presented across Sections \ref{sec:self-adjointness}, \ref{sec:pjypj-disc-spec}, and \ref{sec:pjypj-exp-loc}. We explain how our results may be generalized to higher dimensional systems in Section \ref{sec:higher-d}, and to discrete systems in Section \ref{sec:discrete}. We defer proofs of key estimates required for the proof of our main theorem to Appendices \ref{sec:pj-props} and \ref{sec:shifting-proof}. We finish by proving in Appendix \ref{sec:gWFs_closed} that, for periodic models satisfying our assumptions, the generalized Wannier functions yielded by our construction are actually Wannier functions.

\subsection{The Kivelson-Nenciu-Nenciu Idea and our Main Theorem} \label{sec:kivelson_idea} 
In this section, we begin by reviewing Kivelson-Nenciu-Nenciu's construction of exponentially localized generalized Wannier functions in one spatial dimension. We will then present our main theorem without making our assumptions completely precise. We will make our assumptions precise in Section \ref{sec:results}. 

Consider the Hilbert space $\mathcal{H} = L^2(\field{R})$ and let $H$ be a Hamiltonian of the form
\[
  H = -\Delta + V(x),
\]
where $V$ is a real potential, not necessarily periodic, satisfying $V \in L^2_{\uloc}(\field{R})$. Assume that the Fermi level lies in a spectral gap of $H$, let $P$ denote the Fermi projection\footnote{Strictly speaking, the Fermi level is not well-defined unless $V$ satisfies further assumptions to ensure the system is spatially ergodic. More precisely, what is required here is that $P$ must be the projection onto an isolated part of the spectrum of $H$.}, and let $X$ denote the position operator $X f(x) = x f(x)$. The Kivelson-Nenciu-Nenciu idea is that \emph{in one spatial dimension, the eigenfunctions of the operator $PXP$ form an exponentially localized generalized Wannier basis}. A sketch of Nenciu-Nenciu's rigorous proof that this proposal works is as follows.

Using exponential decay of $P$ (Lemma 1 of \cite{1998NenciuNenciu}), $PXP$ is well-defined and self-adjoint $\mathcal{D}(X) \cap \range(P) \rightarrow \range(P)$.
Because of decay induced by $X$ (in the resolvent), $PXP$ has compact resolvent and hence purely real and discrete spectrum.
Since the spectral theorem implies that the eigenfunctions of $PXP$ form an orthonormal basis of $\range(P)$, it remains only to prove that the eigenfunctions of $PXP$ exponentially decay. This can be verified by a direct calculation from the eigenequation $PXP f = \lambda f$ which again relies on exponential decay of $P$. When $V$ is invariant under translations, $PXP$ commutes with the translation operator up to a constant factor, and hence the eigenfunctions of $PXP$ are Wannier functions.\footnote{This is the reason for working with $P X P$ rather than, say, $P X^2 P$, which would yield generalized Wannier functions that do not satisfy this property.}

In this work, we focus on the Hilbert space $\mathcal{H} = L^2(\field{R}^2)$ and let $H$ be a Hamiltonian of the form
\begin{equation} \label{eq:Ham}
    H = ( - i \nabla + A(\vec{x}) )^2 + V(\vec{x}).
\end{equation}
Suppose $V$ is a real scalar function and $A$ is a real vector function, not necessarily periodic, satisfying $V \in L^2_{\uloc}(\field{R}^2)$, $A \in L^4_{\loc}(\field{R}^2; \field{R}^2)$, and $\divd A \in L^2_{\loc}(\field{R}^2)$.
Assume again that the Fermi level lies in a spectral gap of $H$, and let $P$ denote the Fermi projection. 

The Kivelson-Nenciu-Nenciu idea does not generalize to this case in a straightforward way for the following reason. Suppose we let $X$ denote a two-dimensional position operator acting as $X f(\vec{x}) = x_1 f(\vec{x})$ with respect to some choice of non-parallel co-ordinate axes. Then the decay induced by $X$ is not enough for the resolvent of $PXP$ to be compact in two dimensions (unless the system is effectively one-dimensional, see \cite{2008CorneanNenciuNenciu}). To generalize the Kivelson-Nenciu-Nenciu idea to two dimensions, we make an additional assumption. The additional assumption we make is that the operator $PXP$ has \textit{uniform spectral gaps}, a notion we make precise in Assumption \ref{def:usg}. 
Numerical simulations on the Haldane model \cite{1988Haldane} suggest this assumption is equivalent to vanishing of the Chern number in this case: see Figure \ref{fig:pxp-evalues}. 

\begin{figure*}
  \centering
  \title{\textbf{Sorted Non-Zero Eigenvalues of $PXP$}}

  \includegraphics[width=.8\linewidth]{./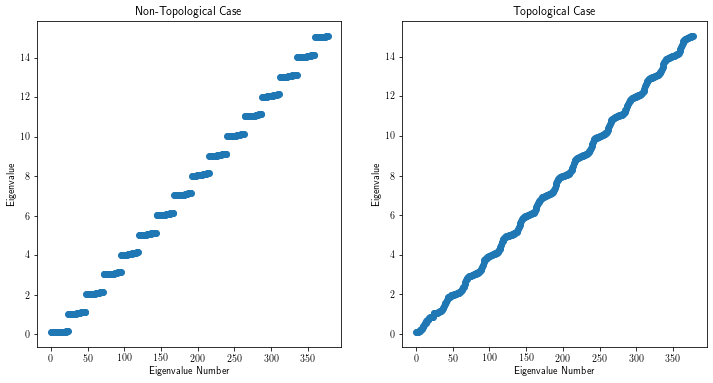}
    \caption{Detail from plot of the sorted non-zero eigenvalues of $PXP$ where $P$ is the Fermi projection and $X$ is the lattice position operator $\left[ X \psi \right]_{m,n} = \left[ m \psi \right]_{m,n}$ for the Haldane model on a $24 \times 24$ lattice with periodic boundary conditions. Parameters for the Haldane model are defined in Section \ref{sec:haldane-def}. The left plot corresponds to a non-topological phase (Chern = 0) with parameters $(t,t',v,\phi) = (1,0,1,0)$. The right plot corresponds to topological phase (Chern = 1) with parameters $(t,t',v,\phi) = (1,\frac{1}{4},1,\frac{\pi}{2})$. In the non-topological phase, the spectrum of $PXP$ shows clear gaps, while in the topological phase, the spectrum does not have clear gaps. We can gain additional insight into these results by Bloch transforming $PXP$ with respect to $x_2$; see Figure \ref{fig:charge_centers}.}
  \label{fig:pxp-evalues}
\end{figure*}

The uniform spectral gap assumption allows us to reduce the original problem of finding an exponentially-localized basis of $\range(P)$ to the problem of finding exponentially-localized bases of the set of subspaces $\range(P_j)$, where $P_j$ denotes the spectral projection onto each separated component of the spectrum of $PXP$. Crucially, functions in $\range(P_j)$ are quasi-one dimensional in the sense that they decay with respect to $X$ away from lines $x_1 = \eta_j$, where $\eta_j$ is a real constant, for each $j$. Using this property, we can apply the Kivelson-Nenciu-Nenciu idea to each $\range(P_j)$ in turn, and thereby build up a generalized Wannier basis of all of $\range(P)$.

We consider the family of operators $P_j Y P_j$, where $Y$ is a position operator acting in a non-parallel direction to $X$ as $Y f(\vec{x}) = x_2 f(\vec{x})$. We first prove, using exponential decay of $P_j$ (proved in Appendix \ref{sec:pj-props}), that with appropriate domains these operators are well-defined and self-adjoint in $L^2(\field{R}^2)$. We then prove, using decay induced by $Y$ combined with the fact that functions in $P_j$ decay in $X$, that each $P_j Y P_j$ has compact resolvent and hence purely real and discrete spectrum. We finally prove that the eigenfunctions of $P_j Y P_j$ exponentially decay by a direct calculation from the eigenequation $P_j Y P_j f = \lambda f$, again using exponential decay of $P_j$. It now follows immediately that the set of eigenfunctions of each of the $P_j Y P_j$ operators forms an exponentially localized basis of $\range(P)$.

We are now in a position to present the main result of this paper.
\begin{theorem}[Main Theorem] \label{th:main_theorem}
Let $H$ be the two dimensional continuum Hamiltonian \eqref{eq:Ham}, where the potentials $A$ and $V$ satisfy certain regularity conditions (Assumption \ref{as:H_assump}). Suppose that $H$ has a spectral gap containing the Fermi level (Assumption \ref{as:gap_assump}), and let $P$ be the Fermi projection. Let $X$ and $Y$ denote position operators $X f(\vec{x}) = x_1 f(\vec{x})$ and $Y f(\vec{x}) = x_2 f(\vec{x})$ with respect to a choice of non-parallel two-dimensional axes. Then, if $PXP$ has uniform spectral gaps (Assumption \ref{def:usg}), there exist functions $\{ \psi_{j,m} \}_{(j,m) \in \cJ \times \mathcal{M}}$ and points $\{ (a_j,b_m) \}_{(j, m) \in \cJ \times \mathcal{M}} \in \field{R}^2$ such that
  \begin{enumerate}
  \item The collection $\{ \psi_{j,m} \}_{(j,m) \in \cJ \times \mathcal{M}}$ is an orthonormal basis of $\range(P)$. 
  \item Each $\psi_{j,m}$ is exponentially localized at $(a_{j,m},b_{j,m})$ in the sense that
    \begin{equation}
      \int_{\field{R}^2} e^{2\gamma \sqrt{1 + (x_1-a_{j,m})^2 + (x_2-b_{j,m})^2}} \vert\psi_{j,m}(\vec{x})\vert^2 \,\emph{d}\vec{x} \leq C,
      \label{eq:exp-loc}
    \end{equation}
  where $(C,\gamma)$ are finite, positive constants which are independent of $j$ and $m$.
  \item The set of $\{ \psi_{j,m} \}_{(j,m) \in \cJ \times \mathcal{M}}$ are the set of eigenfunctions of the operators $P_j Y P_j \rvert_{\range(P_j)}$, where $P_j$ are the band projectors defined by Definition \ref{def:band-projectors}.
\end{enumerate}
Here $\cJ$ and $\mathcal{M}$ are the countable sets which index the projectors $P_j$, and the eigenfunctions of $P_j Y P_j$ for fixed $j$, respectively. 
\end{theorem}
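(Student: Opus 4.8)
The plan is to follow the three-stage strategy sketched in Section~\ref{sec:kivelson_idea}, reducing the two-dimensional problem to a family of quasi-one-dimensional problems and then running the Kivelson--Nenciu--Nenciu argument on each. First I would establish the decay estimates for the Fermi projection $P$ and, more importantly, for the band projectors $P_j$: these are the analogues of Lemma~1 of \cite{1998NenciuNenciu} in two dimensions, and I expect them to be proved in Appendix~\ref{sec:pj-props} using Combes--Thomas-type estimates on the resolvent of $H$ together with the Cauchy integral (Riesz) representation of $P$. The key qualitative fact I would extract is not merely that $P_j$ has an exponentially decaying integral kernel, but that functions in $\range(P_j)$ are \emph{localized near a line} $x_1 = \eta_j$ — this is what makes the $j$-th subspace effectively one-dimensional and is the crucial consequence of the uniform spectral gap Assumption~\ref{def:usg} applied to $PXP$.

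Next, for each fixed $j$, I would show that $P_j Y P_j$, with domain $\mathcal{D}(Y) \cap \range(P_j)$, is well-defined, symmetric, and essentially self-adjoint on $\range(P_j)$; this is the content of Section~\ref{sec:self-adjointness}, and the argument parallels the one-dimensional case — one uses the exponential decay of $P_j$ to control $[Y, P_j]$ and to show $P_j Y P_j$ maps into $\range(P_j)$, then invokes a Wüst- or Kato--Rellich-type perturbation argument. Then, in Section~\ref{sec:pjypj-disc-spec}, I would prove $P_j Y P_j$ has compact resolvent: here the decay induced by $Y$ in the resolvent handles the $x_2$ direction, while the quasi-one-dimensionality of $\range(P_j)$ (decay in $x_1$ away from $x_1 = \eta_j$) handles the $x_2$-orthogonal direction, so that the relevant resolvent is Hilbert--Schmidt or at least compact. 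By the spectral theorem the eigenfunctions of $P_j Y P_j\rvert_{\range(P_j)}$ then form an orthonormal basis of $\range(P_j)$, and since $\bigoplus_j \range(P_j) = \range(P)$ (the $P_j$ being the spectral projections of the self-adjoint $PXP$ onto its separated spectral clusters), the union over $j$ and over eigenfunctions gives statements~(1) and~(3), with $a_{j,m}$ essentially $\eta_j$ and $b_{j,m}$ the corresponding eigenvalue of $P_j Y P_j$.

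Statement~(2), the uniform exponential localization~\eqref{eq:exp-loc}, is the substantive analytic content and is proved in Section~\ref{sec:pjypj-exp-loc}. The plan is a direct calculation from the eigenequation $P_j Y P_j \psi_{j,m} = \lambda \psi_{j,m}$: localization in $x_2$ near $b_{j,m} = \lambda$ comes from testing the eigenequation against $e^{\gamma \langle x_2 \rangle}$-type weights and using the exponential decay of $P_j$ to absorb the commutator terms, exactly as in the one-dimensional Nenciu--Nenciu argument; localization in $x_1$ near $a_{j,m} = \eta_j$ comes instead from the line-localization property of $\range(P_j)$ established earlier. Combining the two directions gives the isotropic weight $e^{2\gamma\sqrt{1 + (x_1 - a_{j,m})^2 + (x_2 - b_{j,m})^2}}$. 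The main obstacle — and the place where, as the authors note, new ideas beyond \cite{1998NenciuNenciu} are needed — is obtaining the constants $(C,\gamma)$ \emph{uniformly in $j$ and $m$}: the decay rate and prefactor for $\range(P_j)$ a priori depend on the size of the $j$-th spectral gap of $PXP$, so one must use the \emph{uniform} spectral gap hypothesis to get $j$-independent Combes--Thomas rates, and then track carefully that the subsequent $P_j Y P_j$ estimates do not degrade as $j$ or $m$ vary. I expect this bookkeeping — propagating a single pair of constants through the two-step decay argument — to be the technical heart of the proof.
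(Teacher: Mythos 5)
Your proposal is structurally correct and matches the paper's three-stage strategy: (i) establish that the band projectors $P_j$ have exponentially localized kernels and that $\|(X-\eta_j)P_j\|$ is bounded (Proposition~\ref{prop:pj-props}), (ii) prove self-adjointness and compact resolvent of $P_j Y P_j$ (Lemmas~\ref{lem:pjypj}, \ref{lem:compact_res}), and (iii) prove exponential decay of eigenfunctions (Lemma~\ref{lem:exp_decay}), assembling these via the spectral theorem. A few points where the paper's implementation differs from or sharpens your sketch: the self-adjointness argument is not a K\"ust/Kato--Rellich perturbation but a direct surjectivity argument (showing $PXP \pm i\mu$ maps onto $\range(P)$, as in Nenciu--Nenciu's part~(i)). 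For the uniformity in $j$ that you rightly identify as the technical heart, the paper's tool is the \emph{shifting lemma} (Lemma~\ref{lem:shifting}): instead of direct Combes--Thomas bounds that would degrade as $|\lambda| \to \infty$, one shifts the contour so that all resolvent estimates depend only on $|\lambda - \eta_j|$, which is uniformly bounded by the gap constants $(d,D)$ of Assumption~\ref{def:usg}. And for the exponential decay itself, the paper does not simply ``test against weights'' separately in each direction; the key innovation is a specific choice of auxiliary operator $O = b\,\Pi^X_{[\eta-b,\eta+b]}\Pi^Y_{[\eta'-b,\eta'+b]} + |X-\eta|$ in the manipulated eigenequation $v = \mathcal{L}v$, together with the elementary bound $\|[P_j,\Pi]\|\le\tfrac12$ for commutators of orthogonal projectors (Lemma~\ref{lem:pos-comm-bd}), which is what lets the contraction estimate $\|A\|\le 3/4$ close uniformly. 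Your proposal correctly anticipates that these ingredients exist and what they must accomplish, but it does not pin down these specific devices.
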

We break up the proof of our main theorem into a series of lemmas to be proved, and make our assumptions precise, in Section \ref{sec:results}. We prove these lemmas in Sections \ref{sec:self-adjointness}, \ref{sec:pjypj-disc-spec}, and \ref{sec:pjypj-exp-loc}, using estimates on $P_j$ proved in Appendices \ref{sec:pj-props} and \ref{sec:shifting-proof}.

The generalization of our main theorem to discrete models is relatively simple because of the operator-theoretic structure of our proof. As long as the entries of the discrete Hamiltonian decay exponentially away from the diagonal, we can establish identical operator bounds on the Fermi projector $P$ as in the continuum case, and the rest of the proof goes through without modification; see Section \ref{sec:discrete}.

In finite systems with periodic or Dirichlet boundary conditions, the operators $P_j Y P_j$ which appear in the proof of our main theorem can be constructed and diagonalized numerically. Hence our theorem and its proof imply a simple algorithm for generating a generalized Wannier basis for a given $H$. Note that in our main theorem we allow for a possibly non-zero magnetic potential $A$ so that we can justify applying our construction to the Haldane model even with complex hopping. We test the effectiveness of this algorithm on the Haldane model in Section \ref{sec:numerical-results}. 

A sketch of the generalization of our results to three dimensions is as follows. Consider position operators $X$, $Y$, and $Z$ associated with a three-dimensional basis acting on $\mathcal{H}: = L^2(\field{R}^3)$, and consider the operator $PXP$. Assume $PXP$ has uniform spectral gaps, and let $P_j$ denote spectral projections onto each of the separated components of the spectrum of $PXP$. Now assume the operators $P_j Y P_j$ \emph{also} have uniform spectral gaps, and let $P_{j,k}$ denote spectral projections onto each of the separated components of the spectrum of $P_j Y P_j$. By analogous reasoning to the two dimensional case, functions in $\range(P_{j,k})$ are quasi-one dimensional. We therefore claim that the set of eigenfunctions of the operator $P_{j,k} Z P_{j,k}$ will form an exponentially localized basis of $\range(P_{j,k})$ for each $j, k$, and that the union of all of these eigenfunctions over $j$ and $k$ will form an exponentially-localized basis of $\range(P)$.  
For more detail, see Section \ref{sec:higher-d}.

The following result addresses the special case where the potentials $A$ and $V$ are periodic.
\begin{theorem} \label{th:periodic_case}
Assume the conditions of Theorem \ref{th:main_theorem}, that $A$ and $V$ are periodic with respect to a Bravais lattice $\Lambda$, that the position operators $X$ and $Y$ are chosen with respect to the basis vectors of the Bravais lattice (Assumption \ref{as:XY_periodic}), and that the $P_j$ are defined in a way which respects the lattice (Assumption \ref{as:Pj_periodic}). Then the index sets can be taken as $\mathcal{J} = \mathcal{M} = \field{Z}$, and the sets of functions $\{ \psi_{j,m} \}_{(j,m) \in \field{Z} \times \field{Z}}$, and centers $\{ (a_{j,m},b_{j,m}) \}_{(j,m) \in \field{Z} \times \field{Z}}$, whose existence is guaranteed by Theorem \ref{th:main_theorem}, can be chosen to be closed under translations in the lattice $\Lambda$. 
\end{theorem}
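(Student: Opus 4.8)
The periodic case should follow by tracking how periodicity of $H$, $A$, $V$ propagates through each object in the construction, and then invoking the abstract existence statement of Theorem \ref{th:main_theorem} together with the structural information recorded in Appendix \ref{sec:gWFs_closed} (which, as the excerpt states, proves the generalized Wannier functions are genuine Wannier functions in the periodic setting). Concretely, I would argue as follows. First, since $A,V$ are $\Lambda$-periodic, $H$ commutes with the lattice translations $\{T_\lambda\}_{\lambda\in\Lambda}$ (magnetic translations if $A\neq 0$, but by hypothesis the flux is trivial so ordinary translations can be used, or at worst a projective representation that still commutes with $P$); hence the Fermi projection $P$ commutes with all $T_\lambda$. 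With $X$ chosen along a primitive Bravais vector $\vec{a}_1$ (Assumption \ref{as:XY_periodic}), one checks the commutator identity $T_{\vec a_1} X T_{\vec a_1}^{-1} = X + 1$ (in units where $\vec a_1$ has $x_1$-length $1$), so $T_{\vec a_1}(PXP)T_{\vec a_1}^{-1} = PXP + P$, i.e. $PXP$ is unitarily equivalent, via $T_{\vec a_1}$, to itself shifted by the identity on $\operatorname{ran}P$. This is exactly the one-dimensional Kivelson--Nenciu--Nenciu covariance structure, now in the $x_1$-direction.

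**Indexing the band projectors.** The spectrum of $PXP$ is therefore invariant under $\sigma\mapsto\sigma+1$, and the uniform spectral gap assumption (Assumption \ref{def:usg}) guarantees it decomposes into separated clusters. Periodicity means the clusters come in $\Z$-orbits under the shift: choosing one cluster per orbit and then translating gives a labelling $\cJ = \Z \times \{1,\dots,N\}$ for some finite $N$ (the number of orbits), with $P_{j+(1,0)} = T_{\vec a_1} P_j T_{\vec a_1}^{-1}$ — this is the content of Assumption \ref{as:Pj_periodic}, that the $P_j$ are chosen compatibly with the lattice. Absorbing the finite index into $\Z$ (or simply relabelling) we may take $\cJ = \Z$ with $P_{j+1} = T_{\vec a_1}P_j T_{\vec a_1}^{-1}$. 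Then $\operatorname{ran}P_{j+1} = T_{\vec a_1}\operatorname{ran}P_j$, and the operators $P_{j+1}YP_{j+1} = T_{\vec a_1}(P_j Y' P_j)T_{\vec a_1}^{-1}$, where $Y'=T_{\vec a_1}^{-1}YT_{\vec a_1}=Y$ since $Y$ is along $\vec a_2$ and $T_{\vec a_1}$ shifts only in the $x_1$-direction; hence $P_{j+1}YP_{j+1}$ is unitarily conjugate to $P_j Y P_j$ by $T_{\vec a_1}$. Consequently the eigenfunctions and eigenvalues of $P_{j+1}YP_{j+1}$ are the $T_{\vec a_1}$-translates of those of $P_jYP_j$, so within the $j$-index the basis is already closed under $T_{\vec a_1}$ once we match up $\mathcal M$ consistently.

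**Closing up in the second direction.** It remains to get closure under $T_{\vec a_2}$. Here the key point is that, for fixed $j$, $T_{\vec a_2}$ commutes with $P_j$: indeed $T_{\vec a_2}$ commutes with $P$ and with $X$ (translation in $x_2$ does not change $x_1$), hence with $PXP$ and with all its spectral projections $P_j$. So $T_{\vec a_2}$ restricts to a unitary on $\operatorname{ran}P_j$, and on that subspace $T_{\vec a_2}(P_jYP_j)T_{\vec a_2}^{-1} = P_j Y P_j + P_j$ by the same commutator identity $T_{\vec a_2}YT_{\vec a_2}^{-1}=Y+1$. This is precisely the one-dimensional covariance used by Nenciu--Nenciu to conclude that the eigenfunctions of $P_jYP_j$ form a Wannier basis of $\operatorname{ran}P_j$: the spectrum of $P_jYP_j$ is $\Z$-periodic, so $\mathcal M = \Z$ (after again absorbing a finite multiplicity), and $T_{\vec a_2}$ maps the $m$-th eigenfunction to the $(m+1)$-th (with eigenvalue shifted by $1$). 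Combining the two covariances, $\psi_{j+1,m} = T_{\vec a_1}\psi_{j,m}$ and $\psi_{j,m+1} = T_{\vec a_2}\psi_{j,m}$ up to phases, with centers $(a_{j+1,m},b_{j+1,m}) = (a_{j,m},b_{j,m}) + \vec a_1$ and $(a_{j,m+1},b_{j,m+1}) = (a_{j,m},b_{j,m})+\vec a_2$; hence the whole family $\{\psi_{j,m}\}_{(j,m)\in\Z\times\Z}$ and the set of centers are closed under $\Lambda$-translations, which is the assertion.

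**Main obstacle.** The delicate point is the second step: showing that the band projectors $P_j$ can genuinely be organized into a single $\Z$-orbit structure (i.e. that the "finite multiplicity $N$ per shift-period" can be absorbed into the $\Z$ index in a way compatible with \emph{both} $T_{\vec a_1}$-covariance of $\{P_j\}_j$ and $T_{\vec a_2}$-invariance of each $P_j$), and that Assumptions \ref{as:XY_periodic} and \ref{as:Pj_periodic} are precisely what make this bookkeeping consistent. One must also be a little careful if $A\neq 0$: the relevant translations are magnetic, and one needs the hypothesis that $H$'s flux per cell is trivial (implicit in "$A$ periodic with respect to $\Lambda$", via a gauge in which magnetic translations form a genuine — not merely projective — unitary representation) for $P$ and the $P_j$ to commute with honest translations; otherwise the commutator identities pick up cocycle phases and the "closed under translations" statement must be interpreted in the magnetic-translation sense. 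I would state the lemma in whichever of these conventions Section \ref{sec:results} and Appendix \ref{sec:gWFs_closed} adopt, and cite the covariance computation there rather than repeating it.
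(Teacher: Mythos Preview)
Your argument is correct and follows essentially the same route as the paper's: both use the commutator identities $T_{\vec{v}_1}(PXP)T_{\vec{v}_1}^{-1}=PXP+L_1 P$ and $[T_{\vec{v}_2},P_j]=0$ to obtain $T_{\vec{v}_1}P_j=P_{j+1}T_{\vec{v}_1}$ and the analogous shift for $P_jYP_j$, and then read off closure of the eigenfunctions and centers under $\Lambda$-translations. Two minor remarks: your magnetic-translation caveat is unnecessary since $A$ itself is assumed $\Lambda$-periodic (so ordinary translations already commute with $H$), and the paper's Assumption \ref{as:Pj_periodic} sidesteps your multiplicity-$N$ bookkeeping by taking each $\sigma_j$ to be the whole shifted copy $\sigma_{[0,L_1)}+L_1 j$, giving $\cJ=\field{Z}$ directly.
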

We prove Theorem \ref{th:periodic_case} in Appendix \ref{sec:gWFs_closed}; its generalizations to discrete, finite, and higher-dimensional models are clear. Theorem \ref{th:periodic_case} proves that, for periodic models, the uniform spectral gaps assumption on $PXP$ implies that $P$ has an exponentially localized Wannier basis. By the usual Fourier transform construction (see, for example, (4) of \cite{2012MarzariMostofiYatesSouzaVanderbilt}), Theorem \ref{th:periodic_case} implies the existence of an analytic and periodic Bloch frame for the Fermi projection, and hence vanishing of the Chern number \cite{2007Panati,2018MonacoPanatiPisanteTeufel}. In Appendix \ref{sec:chern-app}, we show how the uniform spectral gaps assumption allows for an explicit construction via parallel transport of an analytic and periodic Bloch frame in two dimensions.

The following is simple to prove (it follows immediately from the Riesz projection formula), but has important consequences.
\begin{theorem} \label{th:symmetries_preserved}
    Let $\mathcal{T}$ denote an anti-unitary operator $L^2(\field{R}^2) \rightarrow L^2(\field{R}^2)$ which squares to $1$ or $-1$. Assume that $\mathcal{T}$ commutes with $H$, $X$, and $Y$, and that $PXP$ has uniform spectral gaps so that the projectors $P_j$ are well-defined. Then $\mathcal{T}$ commutes with $P$, $PXP$, and, for each $j \in \mathcal{J}$, $P_j$ and $P_j Y P_j$.
\end{theorem}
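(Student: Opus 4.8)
The plan is to reduce the whole statement to one elementary fact about anti-unitaries: if $\mathcal{T}$ is anti-unitary and commutes with a self-adjoint operator $A$, then $\mathcal{T}$ commutes with $\mathbf{1}_S(A)$ for every Borel set $S\subseteq\field{R}$ — equivalently, with $\bar f(A)$ for every bounded Borel $f$, the complex conjugation being forced by the anti-linearity of $\mathcal{T}$. Following the hint, I would establish this via the Riesz projection formula. Note that the hypothesis $\mathcal{T}^2=\pm 1$ is not used in the proof of the commutation relations themselves; it only matters for the consequences mentioned after the statement.

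First I would show $\mathcal{T}$ commutes with $P$. Since $\mathcal{T}$ is anti-linear and commutes with $H$, we have $\mathcal{T}(z-H)=(\bar z-H)\mathcal{T}$ on $\cD(H)$ for every $z\in\field{C}\setminus\field{R}$, hence $\mathcal{T}(z-H)^{-1}=(\bar z-H)^{-1}\mathcal{T}$ on all of $L^2(\field{R}^2)$. By Assumption \ref{as:gap_assump}, $P=\frac{1}{2\pi i}\oint_\Gamma (z-H)^{-1}\,dz$ for a positively oriented contour $\Gamma$ enclosing the part of $\sigma(H)$ below the Fermi level; because $H$ is self-adjoint this spectral set is a compact subset of $\field{R}$, so $\Gamma$ may be chosen invariant under complex conjugation. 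Applying $\mathcal{T}$ to the integral, anti-linearity conjugates both the prefactor $\frac{1}{2\pi i}$ and the integration variable; conjugating the variable reverses the orientation of $\Gamma$, and these two sign changes cancel, leaving $\mathcal{T}P=P\mathcal{T}$. (One could equally invoke the elementary fact above directly, since $P$ is a spectral projection of $H$.)

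The remaining claims follow by bootstrapping. Since $\mathcal{T}$ commutes with $P$ it maps $\range(P)$ onto itself isometrically and anti-linearly, so it restricts to an anti-unitary of $\range(P)$; since by hypothesis $\mathcal{T}$ preserves $\cD(X)$ and intertwines $X$, it preserves the domain $\cD(X)\cap\range(P)$ and, using $\mathcal{T}P=P\mathcal{T}$, commutes with $PXP$ there, where $PXP$ is self-adjoint by the results of Section \ref{sec:self-adjointness}. Repeating the Riesz-formula computation of the previous paragraph one level down — now with the self-adjoint operator $PXP$ on the Hilbert space $\range(P)$, the intertwining $\mathcal{T}(z-PXP)^{-1}=(\bar z-PXP)^{-1}\mathcal{T}$, and a conjugation-symmetric contour around the isolated real spectral component defining $P_j$ in Definition \ref{def:band-projectors} — gives $\mathcal{T}P_j=P_j\mathcal{T}$ for each $j$. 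Finally, since $\mathcal{T}$ commutes with $P_j$ and preserves $\cD(Y)$ while intertwining $Y$, it preserves $\cD(Y)\cap\range(P_j)$ and commutes with $P_jYP_j$ there.

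There is no serious obstacle. The only point needing care is the sign bookkeeping in the second paragraph: one must check that conjugating $\frac{1}{2\pi i}$, conjugating the integration variable, and reversing the orientation of a conjugation-symmetric contour combine to fix the Riesz projection rather than negate it. This is precisely where self-adjointness of $H$ (and later of $PXP$) is used in an essential way, since it places the relevant spectral sets on the real axis and thus guarantees conjugation-symmetric enclosing contours exist; all other steps are routine verifications of $\mathcal{T}$-invariance of domains, immediate from the hypotheses and the self-adjointness statements already proved in the paper.
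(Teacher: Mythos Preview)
Your proposal is correct and follows exactly the approach the paper indicates: the paper does not give a detailed proof but simply says the result ``follows immediately from the Riesz projection formula,'' and your argument carries out precisely that computation, including the correct sign bookkeeping for the anti-linear action on a conjugation-symmetric contour. Your observation that $\mathcal{T}^2=\pm 1$ is not needed for the commutation relations themselves is also accurate.
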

The first important implication of Theorem \ref{th:symmetries_preserved} is: whenever the Hamiltonian is real, so are the operators $P_j Y P_j$ for every $j \in \mathcal{J}$, and hence (up to multiplying by a complex constant with modulus $1$) the generalized Wannier functions we construct are real. The second is that whenever the Hamiltonian commutes with a Fermionic time-reversal symmetry operator $\mathcal{T}$ (an anti-unitary operator which squares to $-1$), the set of eigenfunctions of each $P_j Y P_j$ for every $j \in \mathcal{J}$ is closed under $\mathcal{T}$. It follows immediately that, at least for periodic models in two dimensions, whenever $PXP$ has uniform spectral gaps, the $\field{Z}_2$ index must vanish. If it did not, our construction would yield exponentially localized Wannier functions respectful of the Fermionic time reversal symmetry, a contradiction \cite{2006FuKane,2017CorneanMonacoTeufel}.



\subsection{Outlook} \label{sec:outlook}

In this work we find that gap conditions on projected position operators ($PXP$ in two dimensions, $PXP$ and $P_j Y P_j$ in three dimensions, and so on) are sufficient conditions for constructing exponentially localized Wannier functions and generalized Wannier functions. This finding provides a new perspective, which does not rely fundamentally on periodicity, from which to study topological materials (although note \cite{1991Niu}, which emphasizes the role of eigenfunctions of $PXP$ in the quantum Hall effect). Our findings can be considered the generalization of ideas of Vanderbilt, Soluyanov, and co-authors, who have emphasized the importance of the spectrum of the operator $PXP$ for understanding topological properties in the periodic setting \cite{2011SoluyanovVanderbilt,2011SoluyanovVanderbilt2,2014TaherinejadGarrityVanderbilt,2017GreschYazyevTroyerVanderbiltBernevigSoluyanov,2018WuZhangSongTroyerSoluyanov}, to the non-periodic setting. 

Our work poses two immediate open questions: 
\begin{enumerate}[label=(\arabic*)]
\item Do general hypotheses exist, either for periodic or non-periodic materials, which guarantee that projected position operators have gaps?
\item How do these hypotheses relate to the topological obstructions which play such an important role in the periodic case? 
\end{enumerate}
Here we collect some observations relevant to these questions. We hope to give more complete answers in future work.

We start by noting that it is possible to construct relatively trivial models for which gap conditions on $PXP$ in two dimensions, and on $PXP$ and $P_j Y P_j$ for all $j$ in three dimensions, and so on, are guaranteed to hold. Start with a discrete periodic model, with two sites per unit cell, and set the onsite potentials equal to $\pm v$ for some non-zero and positive $v$. Set all other terms in the Hamiltonian equal to zero. The spectrum is then $\pm v$, and the associated eigenfunctions can be taken to be delta functions at each site. This can be understood as the ``atomic limit'', where the sites of the model are infinitely far from each other \cite{Bradlyn2017}. Let $P$ denote the projection onto the $-v$ eigenfunctions, and define $X, Y$ etc. to measure position relative to the lattice. The spectra of $PXP$, $P_j Y P_j$ etc. are now explicit: they all equal $\field{Z}$ (up to overall shifts and scaling factors), and hence all of these operators satisfy the uniform spectral gaps assumption. The generalized Wannier functions yielded by our construction in this case are simply deltas at each site with onsite potential $-v$. 

The models and Wannier functions constructed thus far are trivial. However, note that if we turn on inter-site hopping continuously, the spectrum of the operator $H$, the projector $P$, and the spectra of the projected position operators, will all change continuously as well. It follows that the uniform spectral gaps of the projected position operators will persist, even for non-zero (sufficiently small) inter-site hopping amplitudes. Note that this argument holds even if we allow for onsite potentials and/or hopping amplitudes which vary throughout the structure, so that in this way we can construct discrete models, both periodic and non-periodic, for which uniform spectral gaps assumptions are guaranteed to hold. We expect that the argument will also go through for continuous models in the ``strong-binding'' limit where tight-binding models describe the low-energy behavior (see, for example, \cite{Dimassi-Sjoestrand:99,Fefferman2018,Shapiro2020} and references therein).

The examples just constructed answer question (1) positively, albeit with hypotheses which are perhaps too strong to be generally useful. In our numerical experiments on the two dimensional Haldane model, the gap condition on $PXP$ appears to be roughly equivalent to vanishing of the Chern number for that model, even far from the atomic limit. However, the gap condition cannot be equivalent to vanishing of the Chern number in general, because Theorem \ref{th:symmetries_preserved} implies that (see also \cite{Stubbs2020} for numerical verification) $PXP$ cannot have gaps when the Kane-Mele model is in its topological ($\field{Z}_2$ index odd) phase, even though the Chern number vanishes in that case. It is therefore tempting to conjecture that $PXP$ has gaps whenever all topological obstructions, including those created by model symmetries such as time-reversal, vanish. Such a conjecture would still require care to state and prove, however, since the gap condition might not hold for totally arbitrary non-periodic non-topological systems, and/or be sensitive to the co-ordinate axes used to define the operator $X$. It is tempting to make similar conjectures in higher dimensions as well, with the same caveats. These conjectures are supported by the example of an insulator in the atomic limit, since this is precisely the paradigm of a topologically trivial insulator \cite{Bradlyn2017}.

We can gain further insight, at least in the periodic case, by revisiting the works of Vanderbilt and Soluyanov, so we now briefly recall their ideas. In periodic materials in $d$ dimensions, the operator $PXP$ can be decomposed using Bloch theory into operators $PXP(\vec{k})$ where $\vec{k}$ is a $(d-1)$-dimensional wave-vector. Topological data, such as Chern numbers, turn out to be encoded in winding numbers of the eigenvalues (band structure) of these operators. For more details, see \cite{2011SoluyanovVanderbilt,2011SoluyanovVanderbilt2,2014TaherinejadGarrityVanderbilt,2017GreschYazyevTroyerVanderbiltBernevigSoluyanov,2018WuZhangSongTroyerSoluyanov}. The associated eigenfunctions of these operators can be constructed by integrating Bloch functions in a particular gauge with respect to $(d-1)$ components of the wave-vector (as opposed to $d$ in the case of Wannier functions), resulting in ``hybrids'' between localized Wannier functions and extended Bloch functions, known as ``hybrid'', or ``hermaphrodite'', Wannier functions (see also \cite{2001SgiarovelloPeressiResta}).


Motivated by these works, we have investigated whether gap conditions can be directly (i.e. not through Theorem \ref{th:periodic_case}) related to vanishing of topological invariants. In Appendix \ref{sec:chern-app}, we give an alternative proof, using parallel transport, that, for two dimensional periodic materials, if $PXP$ has uniform spectral gaps, it is always possible to construct an analytic and periodic Bloch frame for the Fermi projection and hence the Chern number must vanish. In \cite{Stubbs2020} we give a similar direct proof that $PXP$ having uniform spectral gaps implies vanishing of the $\field{Z}_2$ index when Fermionic time-reversal symmetry holds and $P$ is the projection onto two bands (the minimum possible dimension of $\range P$ in the presence of Fermionic time-reversal symmetry). Ultimately, however, these calculations merely confirm that gap conditions imply vanishing of topological invariants, which is already clear from Theorems \ref{th:periodic_case} and \ref{th:symmetries_preserved}, and do not address questions (1) and (2).


We can also gain insight by slightly generalizing our construction. Suppose that, for some model and choice of position operator $X$, $PXP$ does not have gaps so that our construction fails. Then, we can to try to perturb $X$ to another operator $\widehat{X}$ so that $P \widehat{X} P$ does have gaps, define band projectors $P_j$ onto the spectral islands of $P \widehat{X} P$, and then obtain generalized Wannier functions by diagonalizing the operators $P_j Y P_j$. It turns out to be straightforward to make this proposal rigorous under appropriate technical assumptions on $\widehat{X}$; see Appendix \ref{sec:extension-to-other} for details.
We show the power of this more general perspective in \cite{Stubbs2020}, where we find that by constructing an $\widehat{X}$ which explicitly breaks Fermionic time-reversal symmetry, while satisfying the technical assumptions, we obtain an operator $P \widehat{X} P$ with gaps, from which we can obtain exponentially localized generalized Wannier functions, even in the topological phase of the Kane-Mele model. 

In \cite{Lu2021}, two of the authors have demonstrated the principle more abstractly by proving that, whenever $\range(P)$ has an orthonormal basis which decays at a sufficiently fast algebraic rate, there exists an operator $\widehat{X}$, defined in terms of the algebraically decaying orthonormal basis, and satisfying the necessary technical assumptions, such that $P \widehat{X} P$ has gaps. 
These results suggest that questions (1) and (2) may be easier to answer if we relax the condition that projected position operators have gaps to the condition that operators ``close'' to projected position operators have gaps. 
Indeed, if we take this perspective, \cite{Lu2021} answers question (1) in the affirmative, with the necessary hypothesis being existence of a sufficiently rapidly decaying orthonormal basis of $\range(P)$. This result is far from ideal, however, because existence of such a basis is not an easily verified condition in the non-periodic case. 

\subsection{Previous Works on generalized Wannier functions}
\label{sec:previous_works}
Before continuing to our numerical results, we pause to discuss existing literature on generalized Wannier functions, other than the works of Kivelson and Nenciu-Nenciu we have already mentioned \cite{1982Kivelson,1998NenciuNenciu}. 
Before Nenciu-Nenciu's proof of exponential decay, Niu \cite{1991Niu} showed that eigenfunctions of $PXP$ would decay faster than any polynomial power. 
Geller and Kohn have studied generalized Wannier functions in ``nearly periodic'' materials \cite{1993GellerKohn,1993GellerKohn_2}. Nenciu and Nenciu have proved existence of generalized  Wannier functions for materials whose atomic potential is related to that of a crystal with exponentially-localized Wannier functions via an interpolation which does not close the spectral gap at the Fermi level \cite{1993NenciuNenciu}. 

More recent work by Cornean, Nenciu, and Nenciu \cite{2008CorneanNenciuNenciu} showed that the one-dimensional result of Nenciu-Nenciu \cite{1998NenciuNenciu} can be generalized to higher dimensions where $H = - \Delta + V$ and the potential $V$ is concentrated along a single axis. Hastings and Loring have 
studied Wannier functions in two dimensions 
defined as ``simultaneous approximate eigenvectors'' of the operators $PXP$ and $PYP$ \cite{2010HastingsLoring}. E and Lu proved existence and exponential localization of Wannier functions in smoothly deformed crystals in the limit where the deformation length-scale tends to infinity \cite{2011WeinanLu}. Prodan \cite{2015Prodan} showed that by diagonalizing $P e^{- R} P$ (where $R$ denotes the radial position operator $\sqrt{ X^2 + Y^2 }$) one can construct an orthogonal basis of functions which are concentrated on spherical shells in arbitrary dimension. Cornean, Herbst, and Nenciu \cite{2016CorneanHerbstNenciu} have proved that the existence and exponential localization of generalized Wannier functions is stable under perturbation by a weak (but not spatially decaying) magnetic field. In the special case where the unperturbed system is periodic and the perturbing magnetic field is constant, they prove further that the set of generalized Wannier functions remains closed under translations in the Bravais lattice.

In the presence of topological obstructions, which prevent existence of exponentially localized Wannier functions, exponentially localized overcomplete bases known as Parseval frames may nonetheless exist, see \cite{Kuchment2009,Auckly2018,2019CorneanMonacoMoscolari}.

\comment{\subsection{Connections with hybrid Wannier functions}
  \label{sec:hybrid_wannier}
  The present work can be thought of an extension of previous works on the `hermaphrodite' or `hybrid' Wannier functions which were first described in \cite{2001SgiarovelloPeressiResta}. For periodic systems, to generate a set of hybrid Wannier functions we first choose a single spatial direction and then take the Bloch-Floquet transform of the Bloch functions in that direction which a specific choice of gauge. This choice gauge is chosen so that the spread in the $x$ direction in minimized. Due to the work of Kohn, Kivelson, and Nenciu-Nenciu \cite{1959Kohn,1982Kivelson,1998NenciuNenciu}, the optimal choice of gauge is given by diagonalizing the projected position operator $PXP$. Because of this gauge choice, the hybrid Wannier functions are localized (Wannier-like) in the chosen spatial direction, and delocalized (Bloch-like) in the remaining directions. \\
  \indent The hybrid Wannier functions have been used in previous works to study the topological invariants of different systems. Starting with two papers by Soluyanov and Vanderbilt \cite{2011SoluyanovVanderbilt,2011SoluyanovVanderbilt2} there have been a series of works which use the expected position of the hybrid Wannier functions, which they refer to as the `Wannier charge centers', to numerically determine whether a system has non-trivial $\field{Z}_2$ invariant. This work was later expanded upon in \cite{2014TaherinejadGarrityVanderbilt} where they numerically determine if material is a trivial, Chern, weak, or strong topological insulator using the Wannier charge centers. More recently, the software packages \texttt{Z2Pack} \cite{2017GreschYazyevTroyerVanderbiltBernevigSoluyanov} and \texttt{WannierTools} \cite{2018WuZhangSongTroyerSoluyanov} have used the idea of Wannier charge centers as a tool to numerically identify topological materials. \\
\indent The connection between these numerical works and the present work is as follows. By construction, the hybrid Wannier functions are eigenfunctions of the operator $PXP$ \cite{2017GreschYazyevTroyerVanderbiltBernevigSoluyanov} and so the Wannier charge centers considered in previous works are exactly the eigenvalues of $PXP$. Since in two dimensions a well localized basis for the Fermi projection exists if and only if the Chern number is zero \cite{2018MonacoPanatiPisanteTeufel}, the main theorem in this work provides an alternate justification that Wannier charge centers are a sufficient way to determine if a material is a Chern insulator in two dimensions. Furthermore, the main theorem extends upon these results since it holds for all systems (periodic and non-periodic, time reversal symmetric and non-time reversal symmetric). The relationship between $PXP$ having uniform spectral gaps and the $\field{Z}_2$-invariant is the subject of ongoing work.}

\subsection*{Acknowledgements}
A.B.W. would like to thank Guillaume Bal, Christoph Sparber, and Jacob Shapiro for stimulating discussions, Michel Fruchart for pointing out the connection with hybrid Wannier functions, and Terry A. Loring for helpful comments on an early version of this manuscript. We would also like to thank the anonymous reviewers whose comments significantly improved this manuscript.

\subsection*{Declarations}
\subsubsection*{Funding}
This work is supported in part by the National Science Foundation via grant DMS-1454939 and the Department of Energy via grant DE-SC0019449. K.D.S.~is also supported in part by a National Science Foundation Graduate Research Fellowship under Grant No.~DGE-1644868.

\subsubsection*{Conflicts of interest}
We have no conflicts of interest to declare.

\subsubsection*{Availability of data and material}
Not applicable.

\subsubsection*{Code availability}
The Python code used to generate the figures for this project is available at \url{https://github.com/kstub/pxp-wannier}.

\section{Numerical Results}
\label{sec:numerical-results}

In this section we present results of implementing the numerical scheme suggested by our main theorem for generating exponentially localized generalized Wannier functions. We have further extended this numerical scheme, for example modifying the scheme to deal with different boundary conditions, and to obtain Wannier functions which respect model symmetries such as time-reversal, in \cite{Stubbs2020}. The scheme is as follows:
\begin{enumerate}
\item Choose position operators $X$ and $Y$ acting in non-parallel directions.
\item Compute the Fermi projector $P$ by diagonalizing the Hamiltonian $H$.
\item Diagonalize the operator $PXP$, and inspect $\sigma(PXP)$ for clusters of eigenvalues separated from other eigenvalues by spectral gaps.
\item Form band projectors $P_j$ onto each cluster of eigenvalues.
\item Diagonalize the operators $P_j Y P_j$ to obtain exponentially localized eigenvectors which span the Fermi projection.
\end{enumerate}
Since numerically one can only deal with a finite system, it is necessary to clarify two points compared with the infinite case.

First, note that any vector in a finite system is trivially exponentially-decaying by taking $C > 0$ sufficiently large and $\gamma > 0$ sufficiently small in \eqref{eq:exp-loc}. It is necessary to clarify, therefore, that the algorithm presented above yields exponentially-decaying eigenvectors with $C > 0$ and $\gamma > 0$ which are \emph{independent of system size}. In this sense, our algorithm yields a non-trivial result.

Second, in finite systems, all operators have purely discrete spectrum and hence (generically) there will be a spectral gap between \emph{any} pair of eigenvalues. However, to obtain localized eigenvectors it is not enough to simply form band projectors for each eigenvalue of $PXP$ alone. Hence the clarification in the algorithm that we must form band projectors from clusters of nearby eigenvalues separated from the remainder of the spectrum by clear spectral gaps. This point is clarified by our rigorous analysis in the following sections, where we show that the localization of the generalized Wannier functions produced by our scheme is related to the minimal gap between the bands of $\sigma(PXP)$ (see Section \ref{sec:pjypj-exp-loc}).

We choose to test our scheme on the Haldane model \cite{1988Haldane} at half-filling, a simple two-dimensional model whose Fermi projection, in the crystalline setting, may or may not have non-zero Chern number depending on model parameters. For this reason, the Haldane model is a natural model for investigating the connection between gaps of $PXP$ and topological triviality of $P$ in the case where the material is periodic. Historically, the Haldane model was the first model of a Chern insulator: a material exhibiting quantized Hall response without net magnetic flux through the material. We now briefly recap the essential features of this model.

\subsection{The Haldane Model}
\label{sec:haldane-def}
The Haldane model describes electrons in the tight-binding limit hopping on a honeycomb lattice. In addition to real nearest-neighbor hopping terms, the model allows for a real on-site potential difference between the $A$ and $B$ sites of the lattice, and for \emph{complex} next-nearest-neighbor hopping terms which break time-reversal symmetry without introducing net magnetic flux. 

In the crystalline case, the action of the Haldane tight-binding Hamiltonian acting on wave-functions $\psi \in \mathcal{H} := \ell^2(\field{Z}^2;\field{C}^2)$ is: 
\begin{equation}
\begin{split} \label{eq:Haldane_H}
  \begin{bmatrix}
    \left( H \psi \right)_{m,n}^A \\[1ex]
    \left( H \psi \right)_{m,n}^B 
  \end{bmatrix}
  & =
    v
    \begin{bmatrix}
      \psi_{m,n}^A \\[1ex]
      -\psi_{m,n}^B
    \end{bmatrix}
    +
    t
  \begin{bmatrix}
    \psi_{m,n}^B + \psi_{m,n-1}^B + \psi_{m-1,n}^B \\[1ex]
    \psi_{m,n}^A + \psi_{m+1,n}^A + \psi_{m,n+1}^A 
  \end{bmatrix} \\[1ex]
  & \qquad + t' e^{i\phi} 
  \begin{bmatrix}
    \psi_{m,n+1}^A + \psi_{m-1,n}^A + \psi_{m+1,n-1}^A \\[1ex]
    \psi_{m,n-1}^B + \psi_{m+1,n}^B + \psi_{m-1,n+1}^B 
  \end{bmatrix} \\
  & \qquad + t' e^{-i\phi} 
  \begin{bmatrix}
    \psi_{m,n-1}^A + \psi_{m+1,n}^A + \psi_{m-1,n+1}^A \\[1ex]
    \psi_{m,n+1}^B + \psi_{m-1,n}^B + \psi_{m+1,n-1}^B
  \end{bmatrix}.
\end{split}
\end{equation}
Here, $t, v, t'$, and $\phi$ are real parameters expressing the magnitude of nearest-neighbor hopping, the magnitude of on-site potential difference, the magnitude of complex next-nearest neighbor hopping, and the complex argument of the next-nearest neighbor hopping, respectively. 

By definition, at half-filling the Fermi level is at $0$. An explicit calculation using Bloch theory \cite{1988Haldane} (see also \cite{2013FruchartCarpentier}) shows that $H$ has a spectral gap (and hence describes an insulator) at $0$ whenever 
\begin{equation*}
    v \neq \pm 3 \sqrt{3} t' \sin \phi.
\end{equation*}
Further calculation shows that the Fermi projection has a non-trivial Chern number (equal to $1$ or $-1$) whenever 
\begin{equation} \label{eq:phase}
    |v| < 3 \sqrt{3} |t' \sin \phi|.
\end{equation}
In this case, exponentially-localized Wannier functions do not exist \cite{2018MonacoPanatiPisanteTeufel,2019MarcelliMonacoMoscolariPanati}. Whenever the parameters $t, v, t', \phi$ are such that \eqref{eq:phase} holds, we say the Haldane model is in its \emph{topological phase}.

For some of our experiments, we add a perturbation to the Hamiltonian \eqref{eq:Haldane_H} which models disorder. We replace the on-site potential $v$ in \eqref{eq:Haldane_H} by a spatially varying on-site potential $v + \eta(m,n)$, where $\eta(m,n)$ is drawn for each $m, n$ from independent Gaussian distributions with mean $0$ and variance $\sigma^2$:
\begin{equation} \label{eq:disorder}
    \eta(m,n) \sim \mathcal{N}(0,\sigma^2) \text{ for each $m,n$.}
\end{equation}
We refer to this kind of disorder as ``onsite'' disorder. Assuming $H$ \eqref{eq:Haldane_H} has a spectral gap with $\sigma^2 = 0$ (i.e. without disorder), then for sufficiently small $\sigma^2$, the spectral gap will persist almost surely and our method can be applied. 

To implement our method, we have to make a choice of position operators on the space $\ell^2(\field{Z}^2;\field{C}^2)$. The simplest choice is to define $X$ and $Y$ consistently with the crystal lattice by: 
\[
\begin{bmatrix}
\left( X \psi \right)^A_{m,n} \\[1ex]
\left( X \psi \right)^B_{m,n} 
\end{bmatrix}
= \begin{bmatrix} 
m \psi_{m,n}^A \\[1ex]
m \psi_{m,n}^B
\end{bmatrix}
\quad 
\begin{bmatrix}
\left( Y \psi \right)^A_{m,n} \\[1ex]
\left( Y \psi \right)^B_{m,n} 
\end{bmatrix}
= \begin{bmatrix} 
n \psi_{m,n}^A \\[1ex]
n \psi_{m,n}^B
\end{bmatrix}.
\]
We refer to this choice of $X$ and $Y$ as the {standard position operators}. A couple of remarks are in order. First, note that $X$ and $Y$ do not distinguish between $A$ and $B$ sites. Second, the crystal lattice vectors are not orthogonal hence eigenvalues of $X$ and $Y$ do not represent co-ordinates with respect to orthogonal axes. Since the lattice vectors are linearly independent our method can nonetheless be applied.

\subsection{Parameters for numerical tests and further remarks} For our numerical tests, we consider the Haldane model just described truncated to a $24 \times 24$ lattice (hence $\mathcal{H}$ has dimension $2 \times 24 \times 24$) under the following conditions: 
\begin{itemize}
\item Periodic boundary conditions with standard position operators, without disorder. We consider parameter values such that the system is in a non-topological phase and values such that the system is in a topological phase (Sections \ref{sec:periodic_topological} and \ref{sec:periodic}).
\item Dirichlet boundary conditions with standard position operators, without disorder (Section \ref{sec:dirichlet}).
\item Dirichlet boundary conditions with standard position operators, with weak disorder which does not close the spectral gap of $H$ (Section \ref{sec:dirichlet_weak_disorder}).
\item Dirichlet boundary conditions with standard position operators, with strong disorder (Section \ref{sec:dirichlet_strong_disorder}). Note that in this case the spectral gap assumption on $H$ is no longer valid; though $P$ will still be exponentially localized due to Anderson localization. 
\item Dirichlet boundary conditions with non-standard (rotated) position operators, without disorder (Section \ref{sec:dirichlet_rotated}).
\end{itemize}
Note that we do not consider any examples with Dirichlet boundary conditions in the topological phase. This is because $H$ does not have a spectral gap in this case due to edge states.

In each case we will display plots of the generalized Wannier functions generated by our algorithm. Specifically, given a generalized Wannier functions $\psi \in \mathcal{H} = \ell^2(\{1,...,24\}^2;\field{C}^2)$, we will plot the following matrix in a 3D surface plot:
\begin{equation}
  \label{eq:plot-mat}
  \begin{bmatrix}
    \sqrt{ |\psi_{1,1}^A|^2 + |\psi_{1,1}^B|^2} & \sqrt{ |\psi_{1,2}^A|^2 + |\psi_{1,2}^B|^2} & \cdots & \sqrt{ |\psi_{1,24}^A|^2 + |\psi_{1,24}^B|^2} \\[2ex]
    \sqrt{ |\psi_{2,1}^A|^2 + |\psi_{2,1}^B|^2} & \sqrt{ |\psi_{2,2}^A|^2 + |\psi_{2,2}^B|^2} & \cdots & \sqrt{ |\psi_{2,24}^A|^2 + |\psi_{2,24}^B|^2} \\[2ex]
    \vdots & \vdots & \ddots & \vdots \\[2ex]
    \sqrt{ |\psi_{24,1}^A|^2 + |\psi_{24,1}^B|^2}  & \sqrt{ |\psi_{24,2}^A|^2 + |\psi_{24,2}^B|^2} & \cdots & \sqrt{ |\psi_{24,24}^A|^2 + |\psi_{24,24}^B|^2} \\[2ex]
  \end{bmatrix}.
\end{equation}
To make the exponential decay of $\psi$ as clear as possible, we will also show 2D plots of the elementwise logarithm of this matrix.

We remark that while our theoretical results hold equally well in the both periodic and non-periodic cases for infinite systems, we find for finite systems our algorithm works better for systems with Dirichlet boundary conditions. This is not entirely surprising given that the position operators $X$ and $Y$ do not respect periodic boundary conditions. We have explored using different position operators, inspired by \cite{1998Resta,2000Zak,2019Aragaoetal}, in \cite{Stubbs2020}.


\subsection{Periodic Boundary Conditions, Topological versus Non-Topological} \label{sec:periodic_topological}
We have seen that for an infinite periodic system, whenever $PXP$ has uniform spectral gaps, the Chern number must vanish (see Theorems \ref{th:periodic_case} and \ref{th:Chern_thm}). In this section we provide numerical evidence that the uniform spectral gap assumption is actually equivalent to the Chern number vanishing in the case of the Haldane model by forming the Fermi projector $P$ from the Haldane Hamiltonian with periodic boundary conditions and numerically computing the spectrum of $PXP$ for different values of the Haldane model parameters. Our results are shown in Figure \ref{fig:periodic-pxp-topo-vs-non-topo}. For additional insight into these figures, see the proof of Theorem \ref{th:Chern_thm} and, in particular, Figure \ref{fig:charge_centers}.


We find that for model parameters such that the model is in a non-topological phase, $\sigma(PXP)$ shows clear gaps. For model parameters such that the model is in a topological phase, every gap of $\sigma(PXP)$ closes. This conclusion holds even when we choose model parameters such that the spectral gap of $H$ is approximately equal in either case ($\approx 2$).

Note that in the case where every gap of $\sigma(PXP)$ closes, our construction is technically well defined since the spectrum of $PXP$ is bounded on a finite domain. On the other hand, it is totally ineffective because we can only define one band projector $P_j$, which equals $P$. Hence the eigenfunctions of $P_j Y P_j$ in this case are the eigenfunctions of $P Y P$, which do not decay in $x$.

\begin{figure}
  \centering
  \includegraphics[width=.8\linewidth]{./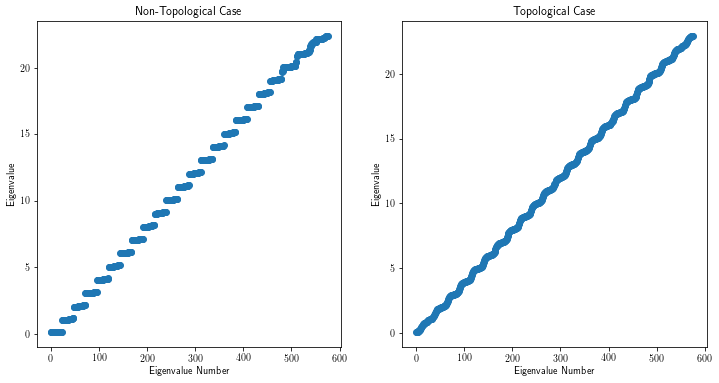}
  \caption{Plot of sorted non-zero eigenvalues of $PXP$ where $P$ is the Fermi projection and $X$ is the lattice position operator for the Haldane model on $24 \times 24$ system with periodic boundary conditions. The left plot corresponds to parameters $(t,t',v,\phi) = (1,0,1,\frac{\pi}{2})$ (non-topological phase) and the right plot corresponds to parameters $(t,t',v,\phi) = (1,\frac{1}{4},0,\frac{\pi}{2})$ (topological phase). The gap in $H$ for both non-topological and topological phase is $\approx 2$.} \label{fig:periodic-pxp-topo-vs-non-topo}
\end{figure}


\subsection{Periodic Boundary Conditions, Standard Position Operators} \label{sec:periodic}
In this section we implement our algorithm in the non-topological phase of Haldane with periodic boundary conditions, when $\sigma(PXP)$ shows clear gaps (Figure \ref{fig:periodic-pxp-topo-vs-non-topo}). Note that when we take periodic boundary conditions the last three bands of $PXP$ appear to merge together. We conjecture that this behavior is because the operator $X$ does not respect translation symmetry with respect to $x$.

Despite this, our theory still applies since we can enclose the last three bands by a single contour when we define the collection $\{ P_j \}_{j \in \cJ}$. For all bands but the last one, we find the eigenfunctions of $P_j Y P_j$ are exponentially localized like before. These results are shown in Figure \ref{fig:periodic_efuncs_notlast}. For the last band, we find that instead of the eigenfunctions of $P_j Y P_j$ being localized along a single line $x = c$ for constant $c$, they are somewhat spread across an interval of $x$ values: see Figure \ref{fig:periodic_efuncs_last}.

\begin{figure}
  \centering
  \includegraphics[width=.3\linewidth]{./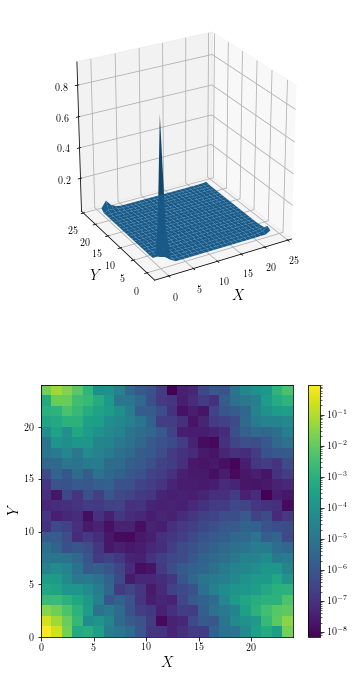}
  \includegraphics[width=.3\linewidth]{./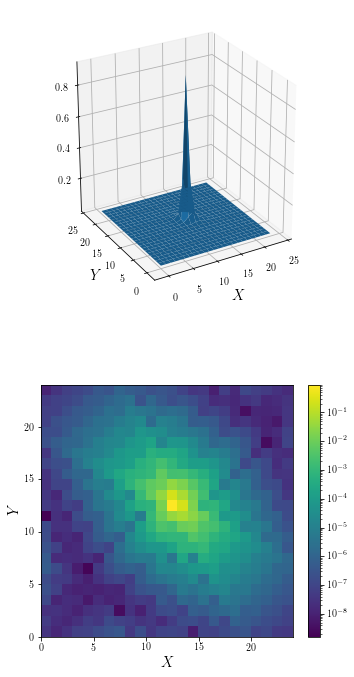}
  \includegraphics[width=.3\linewidth]{./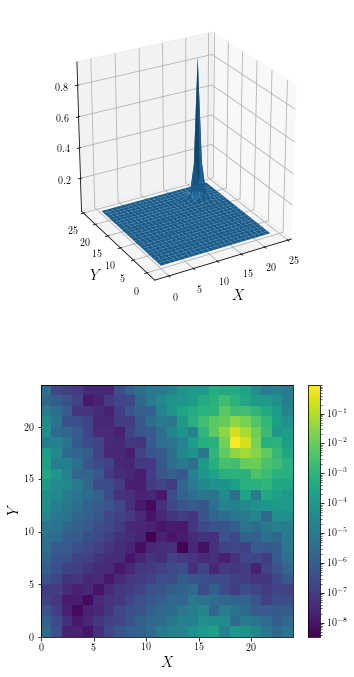} 
  \caption{Plot of eigenfunctions of the operator $P_j Y P_j$ for different values of $j$ where $\{ P_j \}_{j \in \cJ}$ are the band projectors for $PXP$, $P$ is the Fermi projection, and $X$, $Y$ are the lattice position operators. The projection $P$ comes from the Haldane model on $24 \times 24$ system with periodic boundary conditions. Parameters chosen are $(t,t',v,\phi) = (1,0,1,\frac{\pi}{2})$. Top row is a 3D surface plot of the matrix from Equation \eqref{eq:plot-mat}, bottom row is 2D log plot of the top row. For these figures we avoid the $P_j$ where a few bands of the spectrum of $PXP$ have clumped together (see Figure \ref{fig:periodic-pxp-topo-vs-non-topo}).} \label{fig:periodic_efuncs_notlast}
\end{figure}

\begin{figure}
  \centering
  \includegraphics[width=.3\linewidth]{./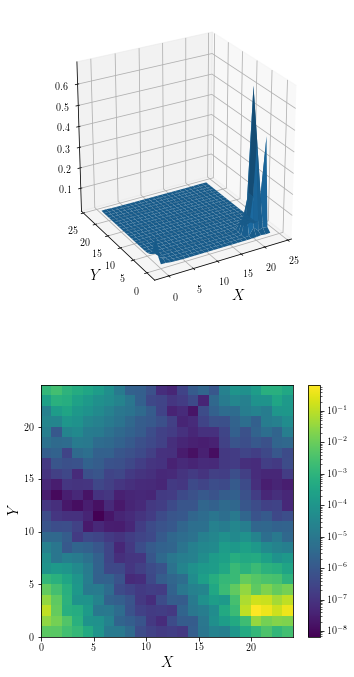}
  \includegraphics[width=.3\linewidth]{./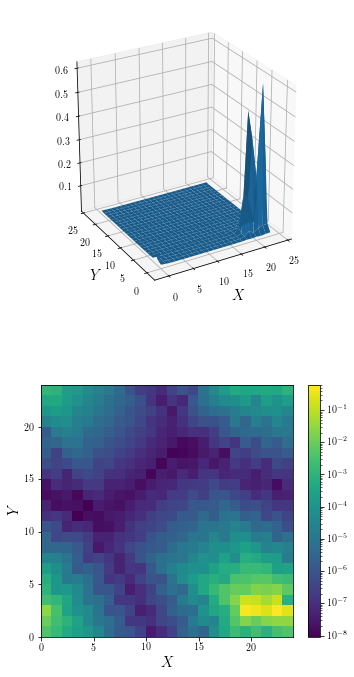} 
  \caption{Plot of eigenfunctions of the operator $P_j Y P_j$ for different values of $j$ where $\{ P_j \}_{j \in \cJ}$ are the band projectors for $PXP$, $P$ is the Fermi projection, and $X$, $Y$ are the lattice position operators. The projection $P$ comes from the Haldane model on $24 \times 24$ system with Dirichlet boundary conditions. Parameters chosen are $(t,t',v,\phi) = (1,0,1,\frac{\pi}{2})$. Top row is a 3D surface plot of the matrix from Equation \eqref{eq:plot-mat}, bottom row is 2D log plot of the top row. For these figures we consider the $P_j$ where a few bands of the spectrum of $PXP$ have clumped together (see Figure \ref{fig:periodic-pxp-topo-vs-non-topo}). Note that the generalized Wannier function generated by our method in this case has a relatively large spread in $x$ relative to those plotted in Figure \ref{fig:periodic_efuncs_notlast}.} \label{fig:periodic_efuncs_last}
\end{figure}


\subsection{Dirichlet Boundary Conditions using Standard Position Operators} \label{sec:dirichlet}
We consider the Haldane model with Dirichlet boundary conditions and parameters $(t,t',v,\phi) = (1,\frac{1}{10},1,\frac{\pi}{2})$, which correspond to the non-topological phase. For this choice of parameters the Hamiltonian $H$ has a gap of $\sim 1.006$. We plot the eigenvalues of $PXP$ in Figure \ref{fig:dirichlet-pxp-evals}, where we see $\sigma(PXP)$ shows clear gaps. We plot the eigenvectors of $PXP$ in Figure \ref{fig:dirichlet-pxp-efuncs}. We see that these eigenvectors are concentrated along lines $x = c$ for constants $c$. We finally plot the eigenfunctions of $P_j Y P_j$, which are localized with respect to $x$ and $y$, for a few different values of $j$ in Figure \ref{fig:dirichlet-pjypj-efuncs}.

\begin{figure}
  \centering
  \includegraphics[width=.8\linewidth]{./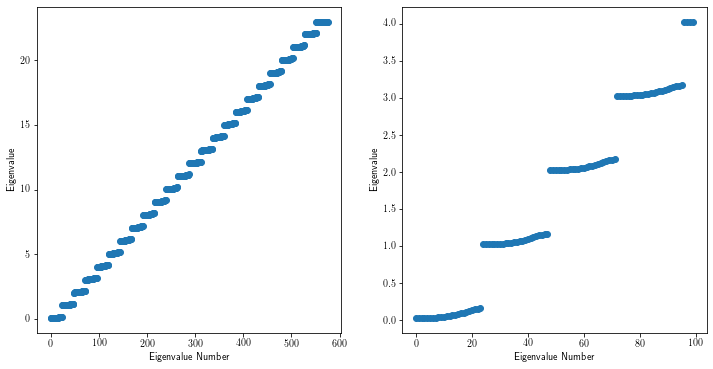}
  \caption{Plot of sorted non-zero eigenvalues of $PXP$ where $P$ is the Fermi projection and $X$ is the lattice position operator for the Haldane model on $24 \times 24$ system with Dirichlet boundary conditions. Entire spectrum (left) and first 100 eigenvalues (right). Parameters chosen are $(t,t',v,\phi) = (1,\frac{1}{10},1,\frac{\pi}{2})$. The spectrum shows clear gaps.} \label{fig:dirichlet-pxp-evals}
\end{figure}

\begin{figure}
  \centering
  \includegraphics[width=.23\linewidth]{./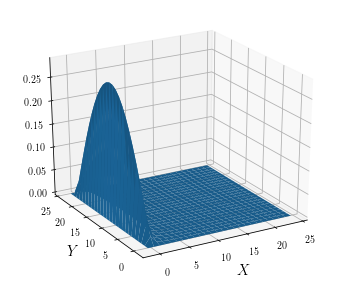} 
  \includegraphics[width=.23\linewidth]{./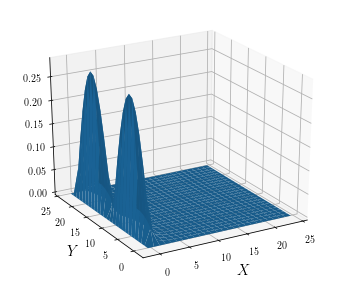} \includegraphics[width=.23\linewidth]{./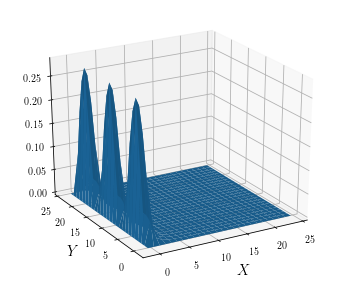} \\
  \includegraphics[width=.23\linewidth]{./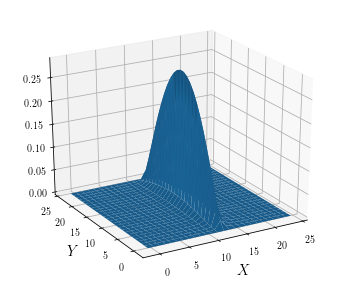} \includegraphics[width=.23\linewidth]{./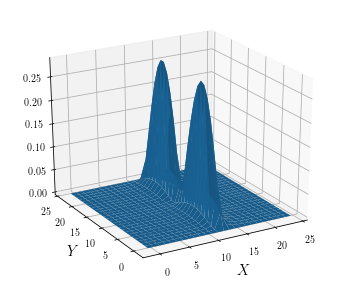} \includegraphics[width=.23\linewidth]{./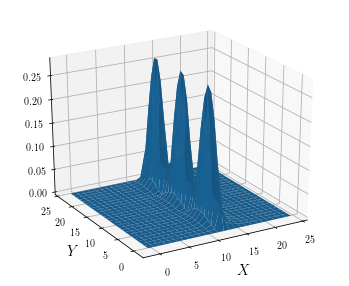} \\
  \includegraphics[width=.23\linewidth]{./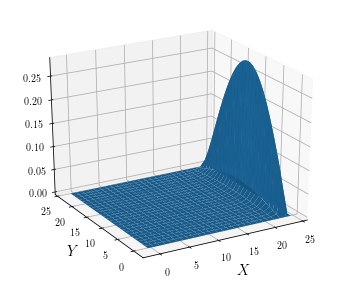} \includegraphics[width=.23\linewidth]{./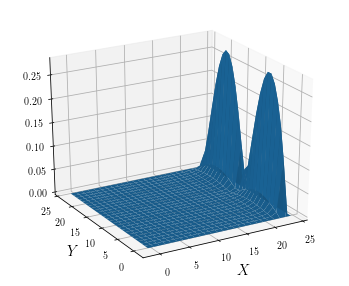} \includegraphics[width=.23\linewidth]{./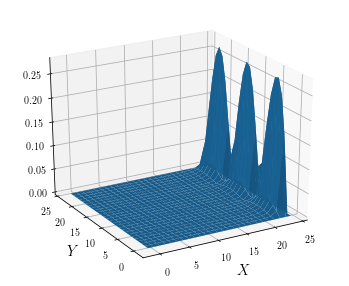} 
  \caption{Plot of eigenfunctions of the operator $PXP$ where $P$ is the Fermi projection and $X$ is the lattice position operator for the Haldane model on $24 \times 24$ system with Dirichlet boundary conditions. Parameters chosen are $(t,t',v,\phi) = (1,\frac{1}{10},1,\frac{\pi}{2})$. Each eigenvector of $PXP$ is localized along a line $x = c$ for some constant $c$.} \label{fig:dirichlet-pxp-efuncs}
\end{figure}

\begin{figure}
  \centering
  \includegraphics[width=.3\linewidth]{./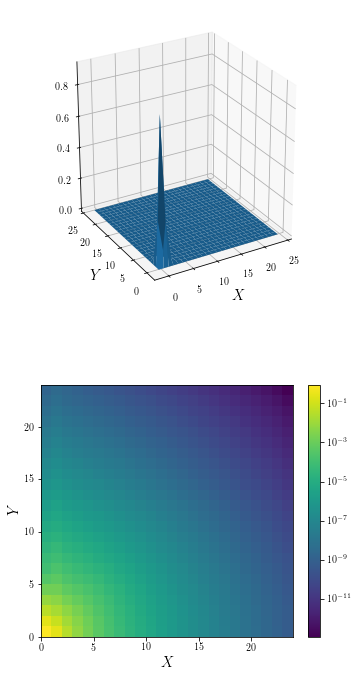}
  \includegraphics[width=.3\linewidth]{./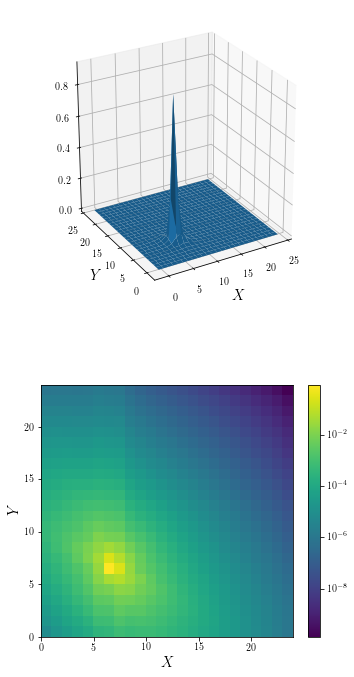} 
  \includegraphics[width=.3\linewidth]{./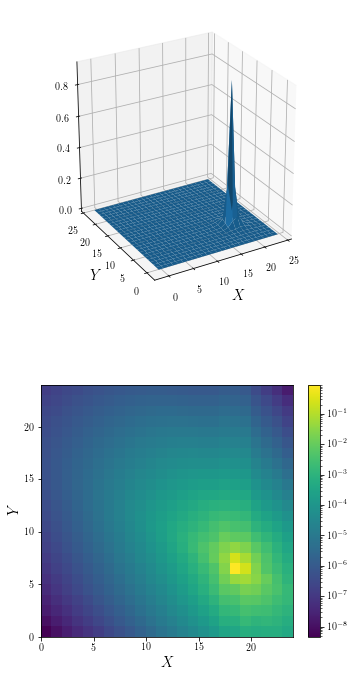} 
  \caption{Plot of eigenfunctions of the operator $P_j Y P_j$ for different values of $j$ where $\{ P_j \}_{j \in \cJ}$ are the band projectors for $PXP$, $P$ is the Fermi projection, and $X$, $Y$ are the lattice position operators. The projection $P$ comes from the Haldane model on $24 \times 24$ system with Dirichlet boundary conditions. Parameters chosen are $(t,t',v,\phi) = (1,\frac{1}{10},1,\frac{\pi}{2})$. Top row is a 3D surface plot of the matrix from Equation \eqref{eq:plot-mat}, bottom row is 2D log plot of the top row. Each eigenfunction shows clear exponential localization in line with our theoretical results.} \label{fig:dirichlet-pjypj-efuncs}
\end{figure}


\subsection{Dirichlet Boundary Conditions with Weak Disorder} \label{sec:dirichlet_weak_disorder}
We now consider a case where translational symmetry is broken even away from the edge of the material. Starting with the same parameters as in Section \ref{sec:dirichlet}, we add onsite disorder as in \eqref{eq:disorder}, with $\sigma^2 = \frac{1}{4}$. We plot results for a realization of the onsite disorder such that $H$ has a clear gap $\sim .253$. We find that the eigenvalues of $PXP$ show clear gaps despite the disorder, see Figure \ref{fig:dirichlet-disorder-pxp-evals}. We can therefore form projectors $P_j$, and the operators $P_j Y P_j$. We plot the eigenfunctions of $P_j Y P_j$ in Figure \ref{fig:dirichlet-disorder-pjypj-efuncs}. We observe that they are again exponentially localized, just as in the case without disorder (Figure \ref{fig:dirichlet-pjypj-efuncs}), in line with our theoretical results.
\begin{figure}
  \centering
  \includegraphics[width=.8\linewidth]{./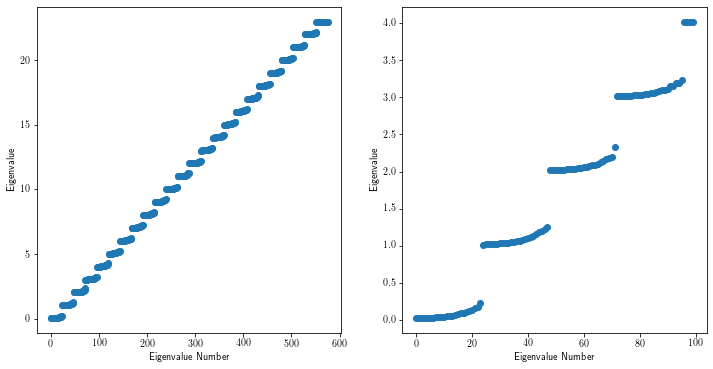}
  \caption{Plot of sorted non-zero eigenvalues of $PXP$ where $P$ is the Fermi projection and $X$ is the lattice position operator for the Haldane model on $24 \times 24$ system with Dirichlet boundary conditions. Entire spectrum (left) and first 100 eigenvalues (right). Parameters chosen are $(t,t',v,\phi) = (1,\frac{1}{10},1,\frac{\pi}{2})$, with onsite disorder drawn from a mean zero normal distribution with variance $\frac{1}{4}$. Despite the disorder, the spectrum still shows clear gaps.} \label{fig:dirichlet-disorder-pxp-evals}
\end{figure}

\begin{figure}
  \centering
  \includegraphics[width=.3\linewidth]{./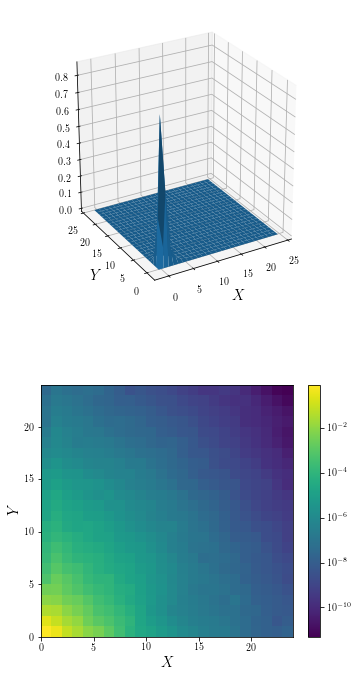}
  \includegraphics[width=.3\linewidth]{./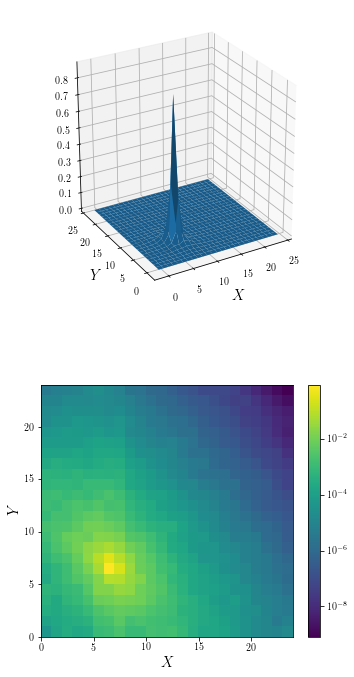}  \includegraphics[width=.3\linewidth]{./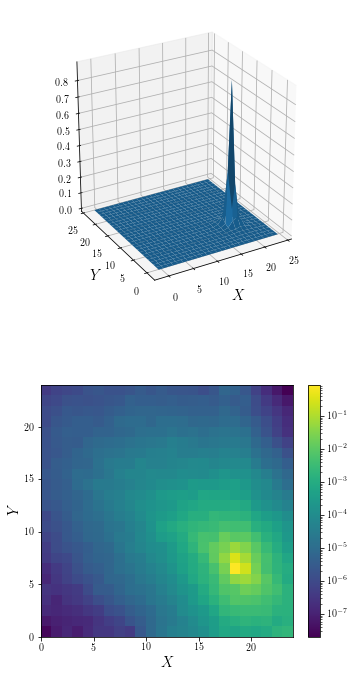} 
  \caption{Plot of eigenfunctions of the operator $P_j Y P_j$ for different values of $j$ where $\{ P_j \}_{j \in \cJ}$ are the band projectors for $PXP$, $P$ is the Fermi projection, and $X$, $Y$ are the lattice position operators. Parameters chosen are $(t,t',v,\phi) = (1,\frac{1}{10},1,\frac{\pi}{2})$, with onsite disorder drawn from a mean zero normal distribution with variance $\frac{1}{4}$. Top row is a 3D surface plot of the matrix from Equation \eqref{eq:plot-mat}, bottom row is 2D log plot of the top row. Despite the disorder, our algorithm yields exponentially-localized generalized Wannier functions.} \label{fig:dirichlet-disorder-pjypj-efuncs}
\end{figure}


\subsection{Dirichlet Boundary Conditions with Strong Disorder} \label{sec:dirichlet_strong_disorder}
We consider the same setup as the previous section, but with disorder strong enough ($\sigma^2 = 100$) to close the gap of $H$ (for the results shown in Figure \ref{fig:dirichlet_strong_disorder_pxp_evals}, the gap of $H$ $\approx .07$). Although our results do not directly apply to this case, the eigenfunctions of $H$ are themselves localized because of Anderson localization \cite{1958Anderson}. It is therefore plausible that $PXP$ may have gaps and that the eigenfunctions of $P_j Y P_j$ are localized nonetheless. 

We plot the non-zero eigenvalues in $PXP$ in Figure \ref{fig:dirichlet_strong_disorder_pxp_evals}. We find that $\sigma(PXP)$ shows clear gaps, and hence we may define projectors $P_j$ and operators $P_j Y P_j$.

We observe that the eigenfunctions of $P_j Y P_j$ are well localized. In Figure \ref{fig:dirichlet_strong_disorder_pjypj_h_compare}, we plot the eigenfunctions of $H$ in order of increasing energy value and plot an eigenfunction of $P_j Y P_j$ which has the same center. We observe that as the energy level increases, the corresponding eigenfunction of $H$ becomes less localized. In comparison, the eigenfunctions of $P_j Y P_j$ have similar rates of decay for all values of $j \in \cJ$.
\begin{figure}
  \centering
  \includegraphics[width=.8\linewidth]{./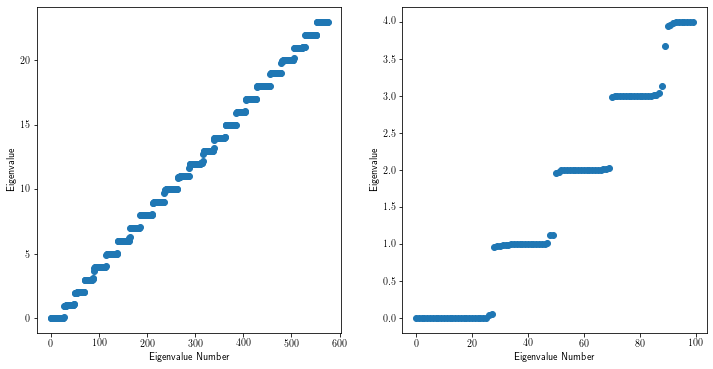}
  \caption{Plot of sorted non-zero eigenvalues of $PXP$ where $P$ is the Fermi projection and $X$ is the lattice position operator for the Haldane model on $24 \times 24$ system with Dirichlet boundary conditions. Entire spectrum (left) and first 100 eigenvalues (right). Parameters chosen are $(t,t',v,\phi) = (1,\frac{1}{10},1,\frac{\pi}{2})$, with disorder is drawn from a mean zero normal distribution with variance $100$.} \label{fig:dirichlet-strong-disorder-pxp-evals}
  \label{fig:dirichlet_strong_disorder_pxp_evals}
\end{figure}
\begin{figure}
  \centering
  \includegraphics[width=.4\linewidth]{./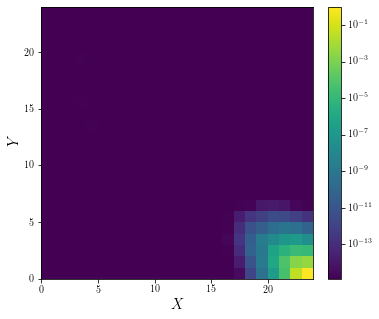} \includegraphics[width=.4\linewidth]{./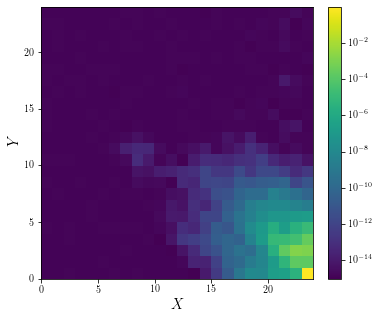} \\
  \includegraphics[width=.4\linewidth]{./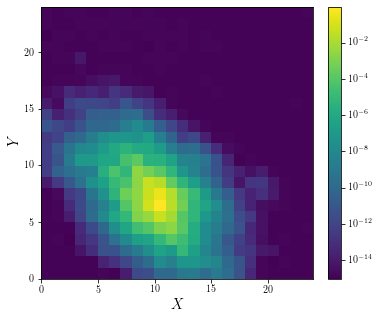} \includegraphics[width=.4\linewidth]{./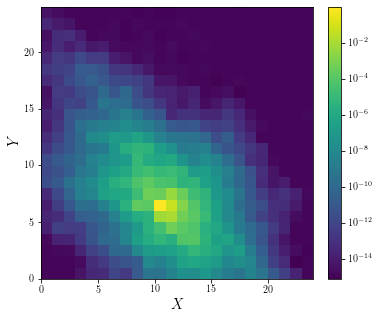} \\
  \includegraphics[width=.4\linewidth]{./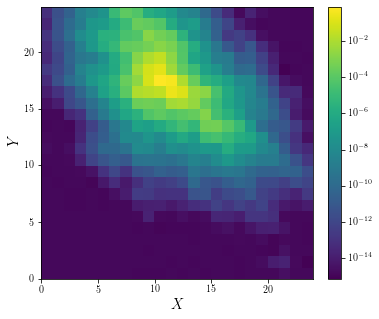} \includegraphics[width=.4\linewidth]{./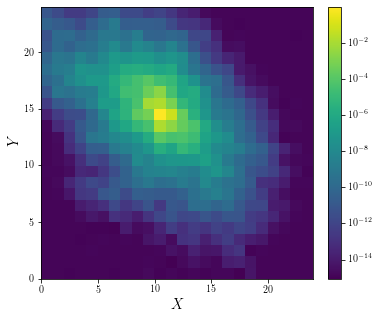} \\
        \includegraphics[width=.4\linewidth]{./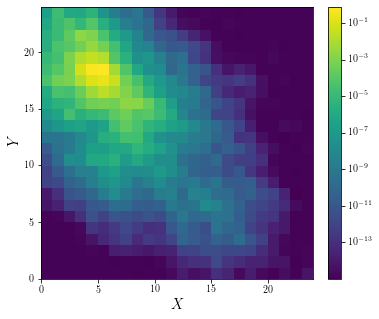} \includegraphics[width=.4\linewidth]{./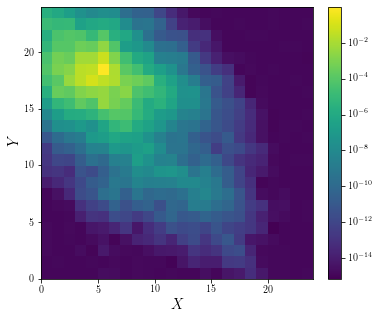} 
        \caption{Plot of eigenfunctions of $H$ (left) and $P_j Y P_j$ (right) for different values of $j$ where $\{ P_j \}_{j \in \cJ}$ are the band projectors for $PXP$, $P$ is the Fermi projection, and $X$, $Y$ are the lattice position operators.  Parameters chosen are $(t,t',v,\phi) = (1,\frac{1}{10},1,\frac{\pi}{2})$, while onsite disorder is drawn from a mean zero normal distribution with variance $100$. The eigenfunctions of $H$ are sorted in order of increasing energy (top $\rightarrow$ low energy, bottom $\rightarrow$ high energy) and eigenfunctions of $P_j Y P_j$ were chosen to have the same center as the corresponding $H$ eigenfunction.}
        \label{fig:dirichlet_strong_disorder_pjypj_h_compare}
\end{figure}


\subsection{Dirichlet Boundary Conditions using Rotated Position Operators} \label{sec:dirichlet_rotated} We now consider how our results change when we choose to work with different two-dimensional position operators (equivalently, different two-dimensional axes).
Note that, although our proofs are independent of any particular choice of position operators, we cannot rule out the possibility that the uniform spectral gap assumption on $PXP$ (Assumption \ref{def:usg}) holds only for particular choices. We also expect that different choices of position operators will yield different exponentially-localized generalized Wannier functions.

We consider the same Haldane model with Dirichlet boundary conditions but without disorder as in Section \ref{sec:dirichlet}, and introduce rotated position operators
\begin{equation}
  \label{eq:rot-pos}
  \tilde{X} := \frac{X - Y}{\sqrt{2}} \quad \tilde{Y} := \frac{X + Y}{\sqrt{2}}.
\end{equation}
The eigenvalues of $P \tilde{X}P$ are shown in Figures \ref{fig:dirichlet-rot-pxp-evals}. We find that, just like the eigenvalues of $PXP$ in Figure \ref{fig:dirichlet-pxp-evals}, the spectrum shows clear gaps.
The eigenfunctions of $P \tilde{X} P$ are shown in Figure \ref{fig:dirichlet-pxp-efuncs_rot}. They are clearly localized along lines $x + y = c$ for constant $c$. Since $P \tilde{X} P$ has gaps (Figure \ref{fig:dirichlet-rot-pxp-evals}), we can define the band projectors $P_j$ as before. The eigenfunctions of $P_j \tilde{Y} P_j$ are shown in Figure \ref{fig:pjypj-efuncs-tilde} and clearly exponentially decay similarly to those in Figure \ref{fig:dirichlet-pjypj-efuncs}.
\begin{figure}
  \centering
  \includegraphics[width=.8\linewidth]{./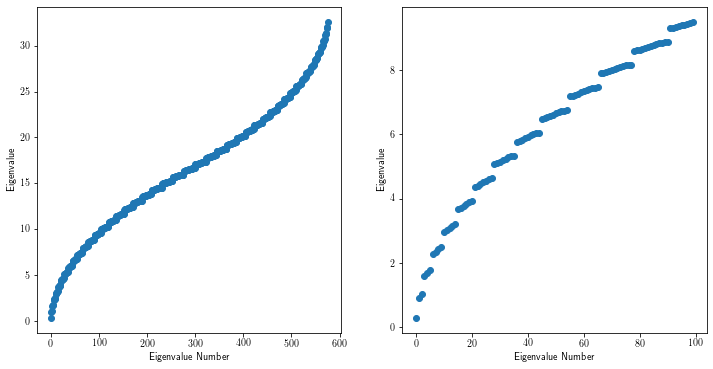}
  \caption{Plot of sorted non-zero eigenvalues of $P\tilde{X}P$ where $P$ is the Fermi projection and $\tilde{X}$ is the lattice position operator rotated by $45^\circ$ (see Equation \eqref{eq:rot-pos}) for the Haldane model on $24 \times 24$ system with Dirichlet boundary conditions. Entire spectrum (left) and first 100 eigenvalues (right). Parameters chosen are $(t,t',v,\phi) = (1,\frac{1}{10},1,\frac{\pi}{2})$. Full non-zero spectrum (left), zoom-in for the first 100 eigenvalues (right). The spectrum shows clear gaps.} \label{fig:dirichlet-rot-pxp-evals}
\end{figure}

\begin{figure}
  \centering
  \includegraphics[width=.23\linewidth]{./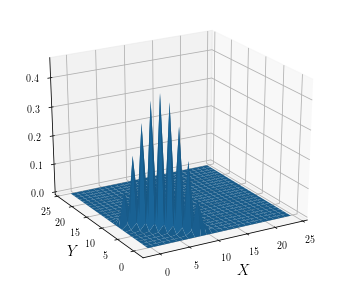} \includegraphics[width=.23\linewidth]{./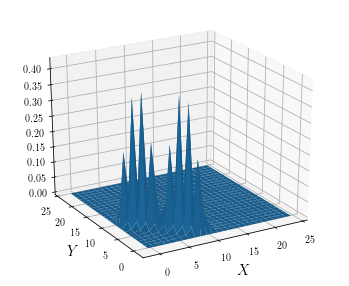} \includegraphics[width=.23\linewidth]{./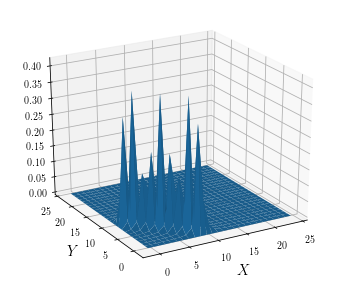} \\
  \includegraphics[width=.23\linewidth]{./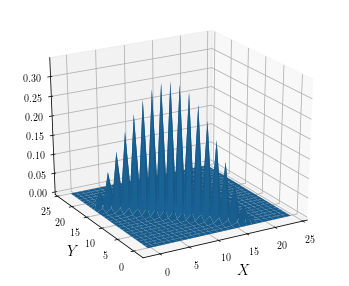} \includegraphics[width=.23\linewidth]{./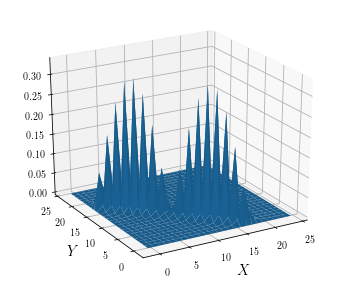} \includegraphics[width=.23\linewidth]{./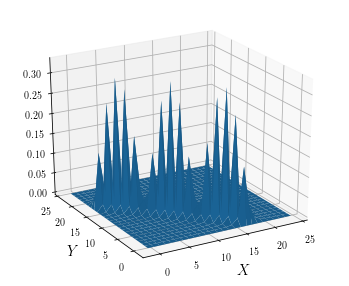} \\
  \includegraphics[width=.23\linewidth]{./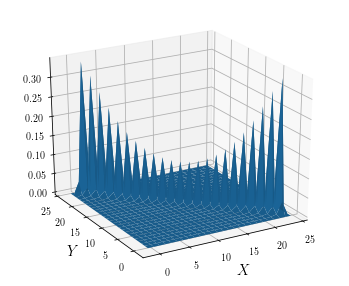} \includegraphics[width=.23\linewidth]{./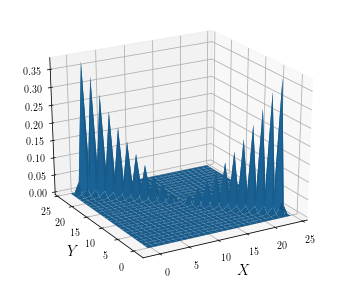} \includegraphics[width=.23\linewidth]{./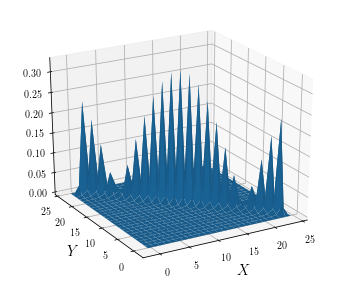} 
  \caption{Plot of eigenfunctions of the operator $P\tilde{X}P$ where $P$ is the Fermi projection and $\tilde{X}$ is the rotated lattice position operator for the Haldane model on $24 \times 24$ system with Dirichlet boundary conditions. Parameters chosen are $(t,t',v,\phi) = (1,\frac{1}{10},1,\frac{\pi}{2})$. Each eigenfunction is localized along a line $x + y = c$ for some constant $c$.} \label{fig:dirichlet-pxp-efuncs_rot}
\end{figure}

\begin{figure}
  \centering
  \includegraphics[width=.3\linewidth]{./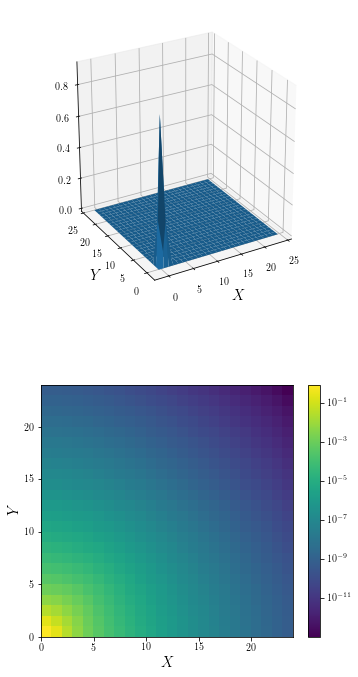} 
  \includegraphics[width=.3\linewidth]{./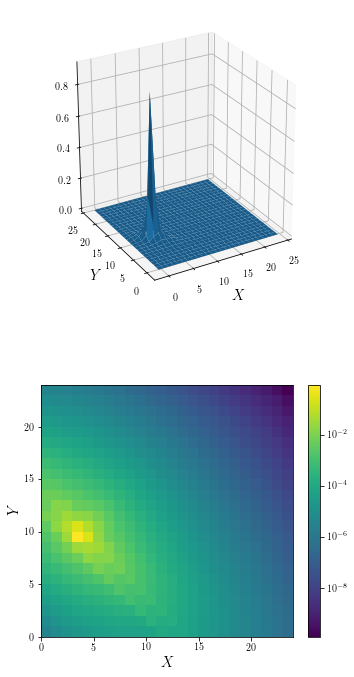} 
  \includegraphics[width=.3\linewidth]{./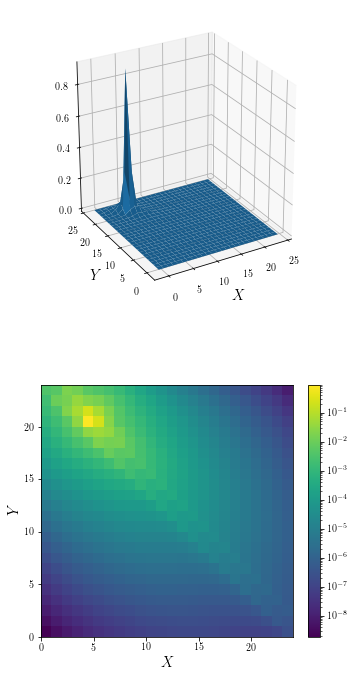} 
  \caption{Plot of eigenfunctions of the operator $P_j \tilde{Y} P_j$ for different values of $j$ where $\{ P_j \}_{j \in \cJ}$ are the band projectors for $P\tilde{X}P$, $P$ is the Fermi projection, and $\tilde{X}$, $\tilde{Y}$ are the rotated lattice position operators (see Equation \eqref{eq:rot-pos}). The projection $P$ comes from the Haldane model on $24 \times 24$ system with Dirichlet boundary conditions. Parameters chosen are $(t,t',v,\phi) = (1,\frac{1}{10},1,\frac{\pi}{2})$. Top row is a 3D surface plot of the matrix from Equation \eqref{eq:plot-mat}, bottom row is 2D log plot of the top row.} \label{fig:pjypj-efuncs-tilde}
\end{figure}

\section{Notation and Conventions} \label{sec:notations}
\label{sec:notation}
Before moving on to our proofs, we pause to review some notation. Vectors in $\field{R}^d$ will be denoted by bold face with their components denoted by subscripts. For example, $\vec{v} = (v_1, v_2, v_3, \cdots, v_d) \in \field{R}^d$. For any $\vec{v} \in \field{R}^d$, we use $\vert{} \cdot \vert{}$ to denote its Euclidean norm. That is $\vert{} \vec{v} \vert{} := \bigl(\sum_{i=1}^d v_i^2\bigr)^{1/2}$.

For any $f : \field{R}^2 \rightarrow \field{C}$ and any $p \in [1, \infty]$, we use $\|f\|_{L^p}$ to denote the $L^p$-norm of $f$. Additionally, we define $L^p_{\loc}(\field{R}^2)$ as the space of local $L^p(\field{R}^2)$ functions and $L^p_{\uloc}(\field{R}^2)$ as the set of uniformly local $L^p(\field{R}^2)$ functions
\[
    L^p_{\uloc}(\field{R}^2) := \Big\{ f \in L^p_{\loc}(\field{R}^2) : \exists M > 0 \text{ s.t. } \forall \vec{x'} \in \field{R}^2, \int_{\vert{} \vec{x} - \vec{x}'\vert{} \leq 1} \vert{}f(\vec{x})\vert{}^p \text{\emph{d}}{\vec{x}} \leq M \Big\}.
\]
For the special case of $p = 2$, we will drop the subscript $L^2$ and define $\| f \| := \| f \|_{L^2}$. For any linear operator, we will use $\| A \|$ to denote the spectral norm of $A$.

We generally define position operators $X$ and $Y$ with respect to a fixed co-ordinate system by 
\begin{equation}
\label{eq:X}
    X f(\vec{x}) = x_1 f(\vec{x}), \quad Y f(\vec{x}) = x_2 f(\vec{x}),
\end{equation}
although in the periodic setting it is natural to re-scale these operators \eqref{eq:XY_periodic}. 

Given two sets $A, B \subseteq \field{R}$ we define their diameter and distance as follows
\[
  \begin{array}{l}
    \diam{(A)} := \sup \{ \vert{}a_1 - a_2\vert{} : a_1, a_2 \in A \} \\[1.5ex]
    \dist{(A,B)} := \inf \{ \vert{}a - b\vert{} : a \in A, b \in B \}.
  \end{array}
\]
For any contour in the complex plane, $\cC$, we will use $\ell(\cC)$ to denote the length of $\cC$. 

Given a point $\vec{a} \in \field{R}^2$, and a non-negative constant $\gamma \geq 0$, we define an exponential growth operator by
\begin{equation*}
  B_{\gamma,\vec{a}} := \exp\left( \gamma \sqrt{ 1 + (X-a_1)^2 + (Y-a_2)^2 } \right).
\end{equation*}
Given a linear operator $A$, we define 
\begin{equation} \label{eq:notation_1}
  A_{\gamma,\vec{a}} := B_{\gamma,\vec{a}} A B_{\gamma,\vec{a}}^{-1}.
\end{equation}
We refer to $A_{\gamma,\vec{a}}$ as ``exponentially-tilted'' relative to $A$. We will often prove estimates where we use the notation \eqref{eq:notation_1} but omit the point $\vec{a}$. In this case the estimate should be understood as uniform in the choice of point $\vec{a}$. As a note, per our convention, when $\gamma = 0$, $A_{\gamma,\vec{a}} = A$. 

\section{Precise Statement of Main Theorem} \label{sec:results}

In this section, we present our assumptions in full detail and re-state our main theorem precisely. 
The details of the proof will be presented in Sections \ref{sec:self-adjointness}, \ref{sec:pjypj-disc-spec}, and \ref{sec:pjypj-exp-loc}, with proofs of key estimates postponed until Appendices \ref{sec:pj-props} and \ref{sec:shifting-proof}. 

Key to our proofs is the notion of an exponentially localized orthogonal projector which we now define. Note that we abuse notation in the standard way by using the same letter for the operator and integral kernel.
\begin{definition}[Exponentially Localized Projector]
  \label{def:exp-loc-kernel}
  We say that an orthogonal projector $P$ on $L^2(\field{R}^2)$ has an exponentially localized kernel if it admits an integral kernel $P(\cdot,\cdot): \field{R}^2 \times \field{R}^2 \rightarrow \field{C}$ and there exist finite, positive constants $(C, \gamma)$ so that for all $\vec{x}, \vec{x}' \in \field{R}^2$ the following bound holds almost everywhere:
  \begin{equation} \label{eq:exp_loc_proj_property}
    \vert{} P(\vec{x},\vec{x}') \vert{} \leq C e^{-\gamma \vert{}\vec{x} - \vec{x}'\vert{}}.
  \end{equation}
\end{definition}

In the context of electronic structure theory, exponentially localized projectors naturally arise as spectral projectors for gapped magnetic Schr{\"o}dinger operators. To illustrate this principle, we now state sufficient regularity conditions 
so that the Fermi projection is
an exponentially localized orthogonal projector:
\begin{assumption}[Regularity] \label{as:H_assump}
We consider the Hilbert space $\mathcal{H} := L^2(\field{R}^2)$ acted on by the Hamiltonian 
\[
  H = (-i\nabla + A(\vec{x}))^2 + V(\vec{x}).
\]
We assume that $V \in L^2_{\uloc}(\field{R}^2; \field{R})$, $A \in L^4_{\loc}(\field{R}^2; \field{R}^2)$, and $\divd A \in L^2_{\loc}(\field{R}^2; \field{R})$. 
\end{assumption}
Assumption \ref{as:H_assump} ensures that $H$ is essentially self-adjoint with core $C^\infty_c(\field{R}^2)$ \cite[Theorem 3]{1981LeinfelderSimader}, and is sufficient to prove the exponential localization of the Fermi projection. 
Note that we do not make any assumptions about periodicity of $H$ and hence our results apply for both periodic and non-periodic systems. 

We additionally assume that $H$ has a spectral gap. More formally: 
\begin{assumption}[Spectral gap] \label{as:gap_assump}
We assume that we can write
\[
  \sigma(H) = \sigma_0 \cup \sigma_1,
\]
where $\dist{(\sigma_0, \sigma_1)} > 0$ and $\diam{(\sigma_0)} < \infty$. 
\end{assumption}
Since $H$ is essentially self-adjoint, by the spectral theorem there exists an orthogonal projector $P$ associated with $\sigma_0$.  

Because of the spectral gap of $H$, it can be shown the projector $P$ admits an integral kernel which is exponentially localized.
\begin{proposition}[\cite{1982SimonSchrodinger,Moscolari2019,Marcelli2020}]
  Let $H$ satisfy the regularity and spectral gap assumptions (Assumptions \ref{as:H_assump} and \ref{as:gap_assump}) and let $P$ be the spectral projector associated with $\sigma_0$. Then $P$ is an exponentially localized orthogonal projector. 
\end{proposition}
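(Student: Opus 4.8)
The statement is classical, and the plan is to combine the Riesz projection formula with a Combes--Thomas-type weighted resolvent estimate and then upgrade the resulting operator bounds to a pointwise kernel bound using elliptic regularity; the cited references carry out the details in this generality, so I only sketch the argument. First I would use $\dist(\sigma_0,\sigma_1) > 0$ and $\diam(\sigma_0) < \infty$ to choose a bounded, positively-oriented contour $\cC \subseteq \field{C}$ enclosing $\sigma_0$ once with $\dist(\cC,\sigma(H)) \geq \delta$ for some $\delta > 0$, and write $P = \tfrac{1}{2\pi i}\oint_{\cC}(z-H)^{-1}\,\dee z$. It then suffices to estimate $(z-H)^{-1}$ conjugated by an exponential weight, uniformly for $z \in \cC$.

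The core step is the weighted resolvent bound. Fixing $\vec{a} \in \field{R}^2$ and $\gamma \geq 0$, a formal computation gives $B_{\gamma,\vec{a}} H B_{\gamma,\vec{a}}^{-1} = H + \gamma R_{\gamma,\vec{a}}$, where $R_{\gamma,\vec{a}}$ is a first-order magnetic perturbation of the schematic form $(\nabla\rho)\cdot(-i\nabla + A) + (-i\nabla+A)\cdot(\nabla\rho) + \gamma|\nabla\rho|^2$ with $\rho(\vec{y}) = \sqrt{1+|\vec{y} - \vec{a}|^2}$, so that $|\nabla\rho| \leq 1$ pointwise. Using $V \in L^2_{\uloc}$, $A \in L^4_{\loc}$, $\divd A \in L^2_{\loc}$ (which make $V$ and the magnetic cross terms form-bounded relative to $H$, and make $(-i\nabla+A)(i+H)^{-1}$ bounded), one checks that $R_{\gamma,\vec{a}}$ is $H$-form-bounded with relative bound $O(\gamma)$, \emph{uniformly in $\vec{a}$}, precisely because the bound on $|\nabla\rho|$ is uniform. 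A Neumann series argument then gives $\|B_{\gamma,\vec{a}}(z-H)^{-1}B_{\gamma,\vec{a}}^{-1}\| \leq C/\delta$ for all $z \in \cC$ once $\gamma$ is small enough depending only on $\delta$; integrating over $\cC$ yields $\sup_{\vec{a} \in \field{R}^2}\|B_{\gamma,\vec{a}} P B_{\gamma,\vec{a}}^{-1}\| < \infty$, and taking $\rho$ instead to be a Lipschitz approximation to $|\vec{y}-\vec{x}'|$ truncated near $\vec{x}$ gives $\|\chi_{\vec{x}} P \chi_{\vec{x}'}\| \leq C e^{-\gamma|\vec{x}-\vec{x}'|}$ with $(C,\gamma)$ independent of $\vec{x},\vec{x}'$, where $\chi_{\vec{x}}$ denotes the indicator of the unit ball about $\vec{x}$.

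Finally I would upgrade this to the pointwise bound \eqref{eq:exp_loc_proj_property}. Because $\diam(\sigma_0) < \infty$, the operator $H^k P$ is bounded for every $k \in \field{N}$, so $\range(P) \subseteq \cD(H^k)$; local elliptic regularity for the magnetic Schr{\"o}dinger operator with these coefficient classes then implies, in dimension two, that $P$ admits a jointly continuous integral kernel and that (with fixed cutoffs) $P$ factors through operators bounded $L^2 \to L^\infty_{\loc}$ and $L^1_{\loc} \to L^2$. Sandwiching the Combes--Thomas operator bound on $\chi_{\vec{x}} P \chi_{\vec{x}'}$ between two such smoothing factors localized near $\vec{x}$ and $\vec{x}'$ respectively converts it into the almost-everywhere bound $|P(\vec{x},\vec{x}')| \leq C e^{-\gamma|\vec{x}-\vec{x}'|}$, with constants still uniform in $\vec{x},\vec{x}'$.

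The main obstacle is the uniform-in-$\vec{a}$ form-boundedness of the conjugation correction $R_{\gamma,\vec{a}}$ in this low-regularity magnetic setting: one must control the first-order cross terms $(\nabla\rho)\cdot(-i\nabla+A)$ with a relative bound that is simultaneously $O(\gamma)$ and independent of the base point, since it is exactly this uniformity that forces the constants $(C,\gamma)$ in \eqref{eq:exp_loc_proj_property} to be independent of $\vec{x}$ and $\vec{x}'$. By contrast, the elliptic-regularity upgrade in the last step is routine given the known local regularity theory for magnetic Schr{\"o}dinger operators with coefficients in these classes, and is the only place the restriction to low dimension ($d \leq 3$) is used.
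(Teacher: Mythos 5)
The paper does not prove this proposition at all: it states it with citations to \cite{1982SimonSchrodinger,Moscolari2019,Marcelli2020} and thereafter treats exponential localization of $P$ as a black-box input (used through Definition \ref{def:exp-loc-kernel}). So there is no in-paper argument to compare against; your sketch should be judged on its own and against the referenced literature. On that basis it is correct and is essentially the standard route: Riesz formula over a contour at distance $\geq \delta$ from $\sigma(H)$, a Combes--Thomas conjugation $B_{\gamma,\vec{a}} H B_{\gamma,\vec{a}}^{-1} = H + \gamma R_{\gamma,\vec{a}}$ with the uniform bound $|\nabla\rho| \leq 1$ making the perturbation $H$-form-bounded with constants independent of $\vec{a}$, a Neumann-series stability bound giving $\sup_{\vec{a}}\|B_{\gamma,\vec{a}}(z-H)^{-1}B_{\gamma,\vec{a}}^{-1}\| \lesssim 1/\delta$ for $\gamma$ small, and then a smoothing factor (using boundedness of $H^k P$, which follows from $\diam\sigma_0 < \infty$) to promote the weighted operator bound to the a.e.\ kernel bound \eqref{eq:exp_loc_proj_property}. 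That is the same architecture as \cite{Moscolari2019,Marcelli2020}; \cite{1982SimonSchrodinger} reaches the $L^2\!\to\!L^\infty$ smoothing via Gaussian bounds on $e^{-tH}$ (Feynman--Kac--It\^o / diamagnetic inequality) rather than elliptic regularity, which is a more robust way to carry out your last step under these low-regularity hypotheses on $A$ and $V$.

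One small inaccuracy in your closing remarks: the dimension does not enter ``only'' in the elliptic upgrade. The $L^2\!\to\!L^\infty$ step works in any $d$ by taking $k$ large enough (you have $H^kP$ bounded for every $k$). Where $d$ genuinely enters is earlier, in the regularity Assumption \ref{as:H_assump} itself: $V\in L^2_{\uloc}$, $A\in L^4_{\loc}$, $\divd A\in L^2_{\loc}$ are tailored to $d=2$ to guarantee essential self-adjointness and the form-boundedness of $V$ and of the cross term $(\nabla\rho)\cdot(-i\nabla+A)$ that your Neumann-series step needs; in higher $d$ those exponents must be strengthened. With that caveat, the proposal is sound.
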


Our main results only require that the projector $P$ is exponentially localized as in Definition \ref{def:exp-loc-kernel}. Hence we expect these results can be extended to spectral projections appearing in other contexts.

We now claim the following Lemma:
\begin{lemma}\label{lem:PXP}
Let $P$ be any exponentially localized orthogonal projector. For $X$ is as in \eqref{eq:X}, the operator $PXP$ is self-adjoint in $L^2(\field{R}^2)$ with domain $\{ \mathcal{D}(X) \cap \range(P) \} \cup \range(P)^\perp$.
\end{lemma}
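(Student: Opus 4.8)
The plan is to show that $PXP$ is symmetric on the stated domain and that it admits no proper symmetric extension, by exhibiting a domain on which it is essentially self-adjoint and then checking the domain in the statement already contains the closure. First I would split the Hilbert space as $L^2(\field{R}^2) = \range(P) \oplus \range(P)^\perp$ and observe that on $\range(P)^\perp$ the operator $PXP$ acts as the zero operator, which is trivially bounded and self-adjoint there; so the whole problem reduces to analyzing $PXP$ as an operator on the Hilbert space $\range(P)$ with domain $\mathcal{D}(X) \cap \range(P)$. The direct sum of a self-adjoint operator (the zero operator on $\range(P)^\perp$) with a self-adjoint operator on $\range(P)$ is self-adjoint on the direct sum of domains, so it suffices to treat the $\range(P)$ part.

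The key analytic input is the exponential localization of the kernel of $P$ (Definition \ref{def:exp-loc-kernel}): I claim it implies that $P$ maps $\mathcal{D}(X)$ into $\mathcal{D}(X)$, and more precisely that $[X,P]$, initially defined on $\mathcal{D}(X)$, extends to a bounded operator on $L^2(\field{R}^2)$. Indeed the kernel of $XP - PX$ is $(x_1 - x_1')P(\vec{x},\vec{x}')$, which is bounded in absolute value by $C|\vec{x}-\vec{x}'|e^{-\gamma|\vec{x}-\vec{x}'|} \le C' e^{-(\gamma/2)|\vec{x}-\vec{x}'|}$, and a Schur test (the $L^1$ norm of this kernel in either variable is finite and uniform) gives boundedness. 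Consequently, for $f \in \mathcal{D}(X) \cap \range(P)$ we have $Pf = f$ and $PXf = PXPf = XPf - [X,P]f = Xf - [X,P]f$, and symmetry of $PXP$ on this domain follows from symmetry of $X$ and the fact that $P$ is an orthogonal projector: $\langle PXPf, g\rangle = \langle Xf, Pg\rangle = \langle f, XPg\rangle = \langle f, PXPg\rangle$ for $f,g \in \mathcal{D}(X)\cap\range(P)$, with all pairings finite because $Xf, Xg \in L^2$.

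For self-adjointness (not merely symmetry) on $\range(P)$, the cleanest route is to write $PXP|_{\range(P)} = PX|_{\range(P)} = X|_{\range(P)} - [X,P]|_{\range(P)}$ and argue that $PXP$ is a bounded perturbation of a self-adjoint operator. The subtle point is that $X$ does not leave $\range(P)$ invariant, so one cannot directly say "$X|_{\range(P)}$ is self-adjoint." Instead I would use the Kato–Rellich-type observation at the level of the full space: on $L^2(\field{R}^2)$, the operator $A := PXP + (I-P)X(I-P)$ has domain $\mathcal{D}(X)$ (since $A = X - P[X,P]^* - \dots$; more carefully $A = X - [P, [X,P]]\,$-type bounded corrections, or simply $A = X + B$ with $B$ bounded because $A - X = PXP + (I-P)X(I-P) - X = -PX(I-P) - (I-P)XP$ whose kernel is $\mp(x_1-x_1')P(\vec x,\vec x')$ up to signs — again bounded by Schur), hence $A$ is self-adjoint on $\mathcal{D}(X)$ by Kato–Rellich. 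Since $A$ commutes with $P$ by construction, it restricts to a self-adjoint operator on each of $\range(P)$ and $\range(P)^\perp$ with domains $\mathcal{D}(X)\cap\range(P)$ and $\mathcal{D}(X)\cap\range(P)^\perp$ respectively; the restriction to $\range(P)$ is exactly $PXP|_{\range(P)}$, and the restriction to $\range(P)^\perp$ is bounded, so reassembling gives self-adjointness of $PXP$ on $\{\mathcal{D}(X)\cap\range(P)\} \cup \range(P)^\perp$ as claimed.

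\textbf{Main obstacle.} The technical heart is establishing that the various commutators ($[X,P]$, $PX(I-P)$, etc.) are bounded operators on $L^2(\field{R}^2)$ — i.e., turning "the kernel of $P$ decays exponentially" into "$|\vec{x}-\vec{x}'|P(\vec{x},\vec{x}')$ defines a bounded operator." This is a Schur-test estimate, routine in principle, but one must be slightly careful that $P$ genuinely has a measurable integral kernel satisfying the pointwise bound a.e. (guaranteed by Definition \ref{def:exp-loc-kernel}) and that the resulting kernel bound $|\vec x-\vec x'|Ce^{-\gamma|\vec x-\vec x'|}$ has finite, $\vec x'$-uniform (and $\vec x$-uniform) $L^1$ mass, which it does since $\int_{\field{R}^2}|\vec z|e^{-\gamma|\vec z|}\,d\vec z < \infty$. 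Once this boundedness is in hand, the self-adjointness is a soft consequence of Kato–Rellich plus the direct-sum decomposition, with no further analytic difficulty.
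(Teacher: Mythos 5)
Your proof is correct, but it takes a genuinely different route from the paper's. The paper proves self-adjointness of $PXP$ on $\range(P)$ by directly verifying the fundamental criterion: it shows surjectivity of $PXP \pm i\mu$ for $\mu \geq 2\|[X,P]\|$ via the identity $(PXP \pm i\mu)P(X\pm i\mu)^{-1}P = P\bigl(I + P[X,P](X\pm i\mu)^{-1}P\bigr)$, which it then inverts by a Neumann series. You instead embed $PXP$ as the $\range(P)$-block of the auxiliary operator $A := PXP + (I-P)X(I-P)$, show $A - X$ is bounded (it equals $-P[X,P] - (I-P)[X,P]$ after the algebra $PX(I-P) = -P[X,P]$ and $(I-P)XP = (I-P)[X,P]$), deduce self-adjointness of $A$ on $\mathcal{D}(X)$ as a bounded perturbation of $X$, and then invoke the fact that $\range(P)$ reduces $A$ (since $P$ preserves $\mathcal{D}(X)$ and commutes with $A$) to conclude. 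Both arguments funnel through the single estimate $\|[X,P]\| < \infty$. Your approach is softer — it offloads the work to Kato–Rellich plus the standard reducing-subspace theorem for unbounded operators — and has the conceptual advantage of exhibiting $PXP$ as a block of a bounded perturbation of $X$; the paper's argument is more elementary and self-contained, needing only the basic criterion. One small imprecision worth flagging: you describe the kernel of $A - X$ as "$\mp(x_1 - x_1')P(\vec x,\vec x')$ up to signs," but $(I-P)XP$ is not given by that kernel (it is a product of $(I-P)$ with $[X,P]$, so its kernel involves a convolution); the boundedness conclusion is nonetheless correct because $(I-P)[X,P]$ and $P[X,P]$ are products of bounded operators. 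This is a cosmetic slip in the exposition, not a gap in the argument.
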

We prove Lemma \ref{lem:PXP} in Section \ref{sec:self_PXP}. Lemma \ref{lem:PXP} generalizes part (i) of Theorem 1 of Nenciu-Nenciu \cite{1998NenciuNenciu} to two dimensions. The proof is similar and only relies on exponential localization of $P$. 

We are now in a position to give our precise assumption on $PXP$. When it holds, we say that $PXP$ has \emph{uniform spectral gaps}.
\begin{assumption}[Uniform Spectral Gaps]  \label{def:usg}
We assume there exist constants $(d, D)$ such that: 
  \begin{enumerate}
  \item There exists a countable set, $\cJ$, such that:
    \[
      \sigma(PXP) = \bigcup_{j \in \cJ} \sigma_j.
    \]
  \item The distance between $\sigma_j, \sigma_k$ ($j \neq k$) is uniformly lower bounded:
    \[
      d := \min_{j \neq k} \Big( \dist( \sigma_j, \sigma_k ) \Big)  > 0.
    \]
  \item The diameter of each $\sigma_j$ is uniformly bounded:
    \[
      D := \max_{j \in \cJ} \Big( \diam( \sigma_j ) \Big) < \infty.
    \]
  \end{enumerate}
\end{assumption}
If $PXP$ has uniform spectral gaps in the sense of Assumption \ref{def:usg}, we can define spectral projections associated with each subset $\left\{ \sigma_j \right\}_{j \in \cJ}$ of $\sigma(PXP)$. We will refer to these projections as \emph{band projectors}. Note that our use of ``band'' in this context should not be confused with its use in the context of Bloch eigenvalue bands of periodic operators. 
\begin{definition}[Band projectors] \label{def:band-projectors}
When $PXP$ has uniform spectral gaps with constants $(d,D)$ and decomposition $\{\sigma_j \}_{j \in \cJ}$ in the sense of Assumption \ref{def:usg}, we let
  \begin{equation} \label{eq:Pj}
    P_j := P \bigg( \frac{1}{2 \pi i} \int_{\cC_j} (\lambda - PXP)^{-1} \emph{\dee}{\lambda} \bigg) P \quad j \in \cJ,
  \end{equation}
  where $\cC_j$ encloses $\sigma_j$ and satisfies 
  \begin{equation} \label{eq:Pj_contour}
    \sup_{\lambda \in \cC_j} \|(\lambda - PXP)^{-1}\| \leq C d^{-1} \text{ and } \ell(\cC_j) \leq C'(D + d)
  \end{equation}
  for some absolute constants $C, C'$ independent of $j$. In particular, $P_j$ is an orthogonal projection onto the spectral subspace associated with $\sigma_j$.
\end{definition}
Note that the existence of a contour $\cC_j$ satisfying \eqref{eq:Pj_contour} is clearly guaranteed by the uniform spectral gap assumption (Assumption \ref{def:usg}).

The addition of $P$ in the definition \eqref{eq:Pj} of the band projectors, $P_j$, is to ensure that $\range{(P_j)} \subseteq \range{(P)}$. This is necessary to avoid the trivial null space of $PXP$ consisting of functions in the null space of $P$. Since $P$ commutes with $PXP$, we have that 
\[
  P_j = \bigg( \frac{1}{2 \pi i} \int_{\cC_j} (\lambda - PXP)^{-1} \dee{\lambda} \bigg) P.
\]
With this definition of $P_j$, we will prove the following proposition: 
\begin{proposition}
  \label{prop:pj-props}
  Suppose that $P$ is an exponentially localized orthogonal projector (Definition \ref{def:exp-loc-kernel}) and let $PXP$ satisfy the uniform spectral gap assumption (Assumption \ref{def:usg}). Let $P_j$, with $j \in \cJ$, be the band projectors as in Definition \ref{def:band-projectors}.

  There exist constants $(C, C', \gamma)$, independent of $j \in \cJ$, so that for all $j \in \cJ$:
  \begin{enumerate}
      \item The projector $P_j$ admits an kernel $P_j(\cdot,\cdot) : \field{R}^2 \times \field{R}^2 \rightarrow \field{C}$ so that for almost all $\vec{x}, \vec{x}' \in \field{R}^2$:
    \[
      \vert{} P_j(\vec{x}, \vec{x}') \vert{} \leq C e^{-\gamma \vert{} \vec{x} - \vec{x}'\vert{}} 
    \] 
          
  \item For each $\eta_j \in \sigma_j$
    \[
      \begin{split}
        & \| (X - \eta_j) P_{j,\gamma} \| \leq C' \\
        & \| P_{j,\gamma} (X - \eta_j) \| \leq C'
      \end{split}
    \]
  \end{enumerate}
\end{proposition}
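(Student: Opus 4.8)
The plan is to derive both parts from a single Combes--Thomas-type resolvent estimate for $PXP$, together with routine bookkeeping for exponentially-tilted operators. The structural fact driving everything is that $B_{\gamma,\vec a}$, being a function of the multiplication operators $X$ and $Y$, commutes with $X$; consequently, for every centre $\vec a$,
\[
  (PXP)_{\gamma,\vec a} = B_{\gamma,\vec a}(PXP)B_{\gamma,\vec a}^{-1} = P_{\gamma,\vec a}\,X\,P_{\gamma,\vec a}, \qquad P_{\gamma,\vec a} := B_{\gamma,\vec a}P B_{\gamma,\vec a}^{-1},
\]
and $P_{\gamma,\vec a}$ is an (in general non-orthogonal) idempotent. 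From exponential localization of the kernel of $P$ one obtains, uniformly in $\vec a$ and for $\gamma$ below a fixed threshold, the bounds $\|P_{\gamma,\vec a}\| \le C$, $\|P_{\gamma,\vec a} - P\| \le C\gamma$, and $\|[X,P_{\gamma,\vec a}]\| = \|[X,P]_{\gamma,\vec a}\| \le C$ — the last because $[X,P]$ has kernel $(x_1 - x_1')P(\vec x,\vec x')$, which is again exponentially localized. These estimates, together with the standard lemma that a bound on exponential tilts which is uniform in the tilt centre upgrades to an exponentially localized kernel with the same rate, I will take from Appendix \ref{sec:pj-props}. I will also use the purely algebraic identities $(I-P)XP = (I-P)[X,P]$ and $P[X,P]P = 0$. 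The one substantial input is the resolvent estimate: there is a constant $C$ and a threshold $\gamma_0 > 0$ such that
\[
  \bigl\| B_{\gamma,\vec a}(\lambda - PXP)^{-1}P\, B_{\gamma,\vec a}^{-1} \bigr\| \ \le\ C d^{-1} \qquad \text{for all } j \in \cJ,\ \vec a \in \field{R}^2,\ \lambda \in \cC_j,\ 0 < \gamma \le \gamma_0.
\]
Granting this, parts (1) and (2) both follow by tilting contour integrals.

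For part (1), I would tilt the Riesz formula $P_j = \bigl(\tfrac{1}{2\pi i}\int_{\cC_j}(\lambda - PXP)^{-1}\dee\lambda\bigr)P$ of Definition \ref{def:band-projectors}: since $B_{\gamma,\vec a}$ commutes with $X$,
\[
  B_{\gamma,\vec a}P_j B_{\gamma,\vec a}^{-1} = \frac{1}{2\pi i}\int_{\cC_j} B_{\gamma,\vec a}(\lambda - PXP)^{-1}P\, B_{\gamma,\vec a}^{-1}\,\dee\lambda ,
\]
so the resolvent estimate and $\ell(\cC_j) \le C'(D+d)$ give $\|B_{\gamma,\vec a}P_j B_{\gamma,\vec a}^{-1}\| \le C''(D+d)d^{-1}$, uniformly in $j$ and $\vec a$. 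Since the kernel of $P$ is bounded, $P$ maps $L^2(\field{R}^2)$ into $L^\infty(\field{R}^2)$; as $P_j = P P_j$, the same holds for $P_j$, so $P_j$ admits an integral kernel, and the tilt-to-kernel lemma then yields $|P_j(\vec x,\vec x')| \le C e^{-\gamma|\vec x - \vec x'|}$ with constants independent of $j$, proving (1).

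For part (2), I would start from
\[
  (X - \eta_j)P_j = (PXP - \eta_j)P_j + (I-P)[X,P]P_j ,
\]
which follows from $P_j = P P_j$ and $(I-P)XP = (I-P)[X,P]$. As $B_{\gamma,\vec a}$ commutes with $X$, $\|(X-\eta_j)P_{j,\gamma,\vec a}\| = \|((X-\eta_j)P_j)_{\gamma,\vec a}\|$, and I bound the two tilted summands separately. The second equals $(I - P_{\gamma,\vec a})[X,P]_{\gamma,\vec a}(B_{\gamma,\vec a}P_j B_{\gamma,\vec a}^{-1})$ and is bounded uniformly by the tilt estimates above and part (1). For the first, I use $(PXP - \eta_j)P_j = \bigl(\tfrac{1}{2\pi i}\int_{\cC_j}(\lambda - \eta_j)(\lambda - PXP)^{-1}\dee\lambda\bigr)P$; tilting, applying the resolvent estimate, and using that by the geometry of $\cC_j$ (which encloses $\sigma_j \ni \eta_j$, has $\diam(\sigma_j) \le D$, and length $\le C'(D+d)$) one has $\sup_{\lambda \in \cC_j}|\lambda - \eta_j| \le C(D+d)$, I obtain a bound of order $(D+d)^2 d^{-1}$, uniformly in $j$ and $\vec a$. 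This proves $\|(X-\eta_j)P_{j,\gamma}\| \le C'$; the companion bound $\|P_{j,\gamma}(X-\eta_j)\| \le C'$ follows by taking adjoints, since $P_{j,\gamma,\vec a}^{*} = B_{\gamma,\vec a}^{-1}P_j B_{\gamma,\vec a}$ is the tilt with $\gamma$ replaced by $-\gamma$, and all the estimates above use only $|\gamma|$ small.

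The remaining task, which I expect to be the main obstacle, is the resolvent estimate. Writing the tilted operator as $(\lambda - P_{\gamma,\vec a}X P_{\gamma,\vec a})^{-1}P_{\gamma,\vec a}$ and attempting a Neumann expansion around $(\lambda - PXP)^{-1}P$, one meets the difficulty that the natural perturbation, once composed with the resolvent, produces a term behaving like $PXP(\lambda - PXP)^{-1}$, which is \emph{not} bounded uniformly over $\cJ$ because the spectral islands $\sigma_j$ may lie arbitrarily far from the origin. The fix is to recentre before expanding: with $\eta_j \in \sigma_j$ fixed, write $PXP = P(X - \eta_j)P + \eta_j P$, so that $(\lambda - PXP)^{-1}P = (\lambda' - P(X - \eta_j)P)^{-1}P$ with $\lambda' := \lambda - \eta_j$, and now $|\lambda'| \le C(D+d)$ on $\cC_j$. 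Conjugation then turns the perturbation into $W_\gamma = P_{\gamma,\vec a}(X - \eta_j)P_{\gamma,\vec a} - P(X - \eta_j)P = (P_{\gamma,\vec a} - P)(X - \eta_j)P + P_{\gamma,\vec a}(X - \eta_j)(P_{\gamma,\vec a} - P)$, and, because the resolvent $(\lambda' - P(X - \eta_j)P)^{-1}P$ has range inside $\range(P)$, one checks that $W_\gamma$ composed with it is bounded, of order $\gamma(D+d+1)d^{-1}$. Two points make this work: on $\range(P)$, $(X - \eta_j)(\lambda' - P(X - \eta_j)P)^{-1}P$ splits into a bounded piece of size $O((D+d)d^{-1})$ — with no growing $|\lambda|$, thanks to the recentring — plus $(I - P)[X,P](\lambda' - P(X-\eta_j)P)^{-1}P$; and the a priori $O(1)$ contribution of the term $P_{\gamma,\vec a}(X - \eta_j)(P_{\gamma,\vec a} - P)$ vanishes because $P[X,P]P = 0$, leaving an $O(\gamma)$ remainder. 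Choosing $\gamma_0$ depending only on $C$, $d$, $D$ and $\|[X,P]\|$ — hence uniformly in $j$ and $\vec a$ — makes the composed perturbation have norm at most $\tfrac12$, the Neumann series converges, and undoing the conjugation and the shift gives the estimate. This recentring (``shifting'') argument, with the domain considerations made precise, is carried out in Appendix \ref{sec:shifting-proof}.
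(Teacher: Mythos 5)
Your proposal is correct and follows essentially the same route as the paper's Appendix~\ref{sec:pj-props}: establish the uniform tilted resolvent bound (Proposition~\ref{prop:pj-res}) by recentring at $\eta_j$ (the shifting lemma) followed by a Neumann expansion in the perturbation $P_\gamma X_{\eta_j} P_\gamma - P X_{\eta_j} P$, then push that estimate through the Riesz formula for $P_j$ and into the tilt-to-kernel lemma (Proposition~\ref{prop:pj-exp-kernel}) for part~(1), and through a recentred contour integral for part~(2).

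The only genuine organizational departures are small. For the resolvent bound, you tilt both copies of $P$ at once, writing the perturbation as $W_\gamma = (P_\gamma - P)X_{\eta_j}P + P_\gamma X_{\eta_j}(P_\gamma - P)$ and observing that $P_\gamma(I-P) = (P_\gamma - P)(I-P) = O(\gamma)$ kills the dangerous $O(1)$ piece; the paper instead performs the chain in two stages, $PX_{\eta_j}P \to PX_{\eta_j}P_\gamma \to P_\gamma X_{\eta_j}P_\gamma$, each of which perturbs only one factor and so never meets that cancellation. Both work; the two-step version is safer bookkeeping, yours is shorter once the cancellation is noticed. For part~(2), your identity $(X-\eta_j)P_j = (PXP - \eta_j)P_j + (I-P)[X,P]P_j$, with the first summand computed by $\frac{1}{2\pi i}\int_{\cC_j}(\lambda - \eta_j)(\lambda - PXP)^{-1}P\,\dee\lambda$, is a cleaner split than the paper's (which inserts $P_\gamma + Q_\gamma$ inside the shifted contour integral and estimates each piece) because the recentring factor $(\lambda - \eta_j)$ appears explicitly and stays $O(D+d)$ on $\cC_j$. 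Finally, your adjoint trick $\|P_{j,\gamma}(X-\eta_j)\| = \|(X-\eta_j)P_{j,-\gamma}\|$ replaces the paper's ``same steps with the roles swapped'' and is a mild simplification, relying on the fact that all the tilt estimates are even in $\gamma$.
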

\begin{proof}
  Proven in Appendix \ref{sec:pj-props}.
\end{proof}
In two dimensions, $P Y P$, like $P X P$, does not have compact resolvent. However, as a consequence of the second estimate in Proposition \ref{prop:pj-props}, one can show that in fact $P_j Y P_j$ has compact resolvent on $\range{(P)}$. Since $P_j$ additionally admits an exponentially localized kernel, by applying similar technical machinery as used in Nenciu-Nenciu \cite{1998NenciuNenciu}, we can show eigenfunctions of $P_j Y P_j$ are exponentially localized. 

We now formally outline these steps by a series of lemmas (Lemma \ref{lem:pjypj}, \ref{lem:compact_res}, \ref{lem:exp_decay})
\begin{lemma} \label{lem:pjypj}
  Suppose that $\{ P_j \}_{j \in \cJ}$ is any family of orthogonal projectors satisfying the conclusions of Proposition \ref{prop:pj-props} and let $Y$ be as in \eqref{eq:X}. Then the operators $P_j Y P_j$ are each self-adjoint on $L^2(\field{R}^2)$ with domains $\{ \mathcal{D}(Y) \cap \range( P_j ) \} \cup \range( P_j )^\perp$.
\end{lemma}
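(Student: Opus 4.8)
The plan is to mirror the proof of Lemma~\ref{lem:PXP} verbatim, with the Fermi projector $P$ replaced by the band projector $P_j$ and the position operator $X$ replaced by $Y$. The only substantive input needed is that $P_j$ is an exponentially localized orthogonal projector, which is exactly Proposition~\ref{prop:pj-props}(1); none of the finer estimates (e.g.\ the quasi-one-dimensional bound $\|(X-\eta_j)P_{j,\gamma}\|\le C'$) are used here. First I would reduce to the range of $P_j$: decomposing $L^2(\field{R}^2)=\range(P_j)\oplus\range(P_j)^\perp$, the operator $P_jYP_j$ annihilates $\range(P_j)^\perp$ and leaves $\range(P_j)$ invariant, so it suffices to show that $A_j := P_jYP_j\rvert_{\range(P_j)}$, with domain $\mathcal{D}(Y)\cap\range(P_j)$, is self-adjoint in $\range(P_j)$; the claimed domain is then just the sum of this domain with $\range(P_j)^\perp$.

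The key technical step is boundedness of the commutator $[Y,P_j]$. Since $|P_j(\vec{x},\vec{x}')|\le C e^{-\gamma|\vec{x}-\vec{x}'|}$, the operator $[Y,P_j]$ has integral kernel $(x_2-x_2')P_j(\vec{x},\vec{x}')$, which still decays exponentially, so by the Schur test $[Y,P_j]$ extends to a bounded operator on $L^2(\field{R}^2)$. Consequently $P_j$ maps $\mathcal{D}(Y)$ into itself (from $YP_jf = P_jYf + [Y,P_j]f$), which shows that $\mathcal{D}(Y)\cap\range(P_j)$ is dense in $\range(P_j)$: approximate $g\in\range(P_j)$ by $f_n\in\mathcal{D}(Y)$, then $P_jf_n\in\mathcal{D}(Y)\cap\range(P_j)$ and $P_jf_n\to P_jg=g$. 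Thus $A_j$ is densely defined, and it is manifestly symmetric since $Y$ is.

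It remains to prove $\range(A_j\mp i\mu)=\range(P_j)$ for some $\mu>0$. Given $g\in\range(P_j)$, set $h:=(Y-i\mu)^{-1}g$ and $f_0:=P_jh\in\mathcal{D}(Y)\cap\range(P_j)$. Writing $YP_jh = P_jYh+[Y,P_j]h$ and $P_jYh = P_j(Y-i\mu)h + i\mu P_jh = g + i\mu f_0$, one gets $(A_j - i\mu)f_0 = g + K_\mu g$, where $K_\mu := P_j[Y,P_j](Y-i\mu)^{-1}\rvert_{\range(P_j)}$ satisfies $\|K_\mu\|\le \|[Y,P_j]\|\,\mu^{-1}<1$ once $\mu>\|[Y,P_j]\|$. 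Hence $I+K_\mu$ is boundedly invertible on $\range(P_j)$, and for arbitrary $g'\in\range(P_j)$ the vector $f:=P_j(Y-i\mu)^{-1}(I+K_\mu)^{-1}g'\in\mathcal{D}(Y)\cap\range(P_j)$ solves $(A_j-i\mu)f=g'$; the same construction works with $+i\mu$. A densely defined symmetric operator whose range hits the whole space at a pair of conjugate non-real spectral parameters is self-adjoint on its given domain, which completes the argument; combining with the trivial action on $\range(P_j)^\perp$ gives the stated domain.

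I do not expect a genuine obstacle here — the lemma is routine once Proposition~\ref{prop:pj-props} is available. The one point requiring care is that $\|[Y,P_j]\|$ is not small, so $A_j\pm i$ cannot be inverted by a naive Neumann series; the standard remedy, used above, is to move the spectral parameter far up the imaginary axis so that $(Y-i\mu)^{-1}$ has norm $\le \mu^{-1}$. Since all constants in Proposition~\ref{prop:pj-props}(1) are uniform in $j$, the threshold $\mu$ and all estimates can in fact be chosen uniformly in $j\in\cJ$, although the statement only requires self-adjointness for each fixed $j$.
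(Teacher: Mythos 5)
Your proposal is correct and follows essentially the same route as the paper: the paper likewise observes that the argument for $P_j Y P_j$ is the one for $PXP$ verbatim with $P\to P_j$, $X\to Y$, using boundedness of $[Y,P_j]$ (from Proposition~\ref{prop:pj-props}(1), stated as Lemma~\ref{lem:pj-est1}) and the surjectivity criterion for self-adjointness after pushing the spectral parameter to $\pm i\mu$ with $\mu$ large enough that the Neumann series converges. The only cosmetic difference is that you unwind the paper's operator identity $(P_jYP_j\pm i\mu)P_j(Y\pm i\mu)^{-1}P_j=P_j(I+P_j[Y,P_j](Y\pm i\mu)^{-1}P_j)$ pointwise on a given $g\in\range(P_j)$, and you spell out density of the domain, which the paper leaves implicit.
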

We prove Lemma \ref{lem:pjypj} in Section \ref{sec:self_PjYPj}. The proof is very similar to the proof of Lemma \ref{lem:PXP}, except that it relies on exponential localization of each $P_j$ (Proposition \ref{prop:pj-props}).

We now claim that each of the operators $P_j Y P_j$ has compact resolvent:
\begin{lemma} \label{lem:compact_res}
Each of the operators $P_j Y P_j$ $j \in \cJ$, as in Lemma \ref{lem:pjypj}, has compact resolvent.
\end{lemma}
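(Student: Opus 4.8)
The plan is to prove compactness of the resolvent of $P_j Y P_j$ on $\range(P_j)$ by exhibiting a compact operator which, composed appropriately with the resolvent, still covers the space — more precisely, I would show that the resolvent of $P_j Y P_j$ (acting on $\range(P_j)$, where it is self-adjoint by Lemma \ref{lem:pjypj}) is itself compact by showing that bounded sets in $\range(P_j)$ are mapped to precompact sets. The key mechanism is that functions in $\range(P_j)$ decay in the $x_1$-direction away from the line $x_1 = \eta_j$ (this is the content of the second estimate of Proposition \ref{prop:pj-props}, namely $\|(X-\eta_j)P_{j,\gamma}\|\leq C'$, which in particular gives $\|(X-\eta_j)P_j\|\leq C'$), while the resolvent of $P_j Y P_j$ should supply decay in the $x_2$-direction.

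Concretely, first I would fix $\eta_j \in \sigma_j$ and $i \notin \sigma(P_j Y P_j)$, and set $R := (P_j Y P_j - i)^{-1}$ as a bounded operator on $\range(P_j)$. I want to show $R$ is compact. The natural route is to write, for $f \in \range(P_j)$,
\begin{equation*}
  \langle 1 + (X-\eta_j)^2 + (Y-\eta_j')^2 \rangle^{1/2} R f,
\end{equation*}
or rather to show that $\langle \vec{x} \rangle R$ is bounded on $\range(P_j)$, where $\langle \vec{x}\rangle = \sqrt{1 + (x_1 - \eta_j)^2 + (x_2 - c)^2}$ for a suitable constant $c$; combined with the standard fact that multiplication by $\langle \vec{x}\rangle^{-1}$ followed by a cutoff of high Fourier modes is compact (Rellich-type compactness), this yields compactness of $R$. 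The $x_1$-localization of $R f$ comes for free: since $Rf \in \range(P_j) = \range(P_j)$ (as $R$ maps into the domain, hence into $\range(P_j)$), we have $\|(X-\eta_j)Rf\| \leq C'\|P_j R f\|$... wait — more carefully, $\|(X-\eta_j) R f\| = \|(X-\eta_j)P_j Rf\| \le \|(X-\eta_j)P_j\|\,\|Rf\| \le C'\|R\|\,\|f\|$, using $\|(X-\eta_j)P_j\|\le C'$ (the $\gamma=0$ case of Proposition \ref{prop:pj-props}(2)). The $x_2$-localization is where the resolvent does the work: writing $g = Rf$, so $(P_j Y P_j - i)g = f$, one computes $\|(Y - c) g\|$ (for an appropriate center $c$, perhaps $c = 0$ after a harmless shift, or $c$ chosen adapted to where the mass of $g$ sits) in terms of $\|f\|$ and $\|g\|$ by pairing the eigen-type equation against $(Y-c)g$ and using self-adjointness of $P_j Y P_j$ together with $g = P_j g$; the point is that $P_j Y P_j g = Y g + [P_j, Y]$-type corrections, and the commutator $[P_j, Y]$ is bounded because $P_j$ has an exponentially localized kernel, so $\|[Y,P_j]P_j\|$ is finite — this lets one absorb the $Y$-growth.

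The main obstacle, I expect, is making the $x_2$-direction estimate rigorous: unlike the $x_1$-direction, there is no a priori bound $\|(Y-c)P_j\| \le C'$ (indeed functions in $\range(P_j)$ are \emph{extended} in $x_2$), so one genuinely needs to use that $g$ lies in the domain of $P_j Y P_j$ and satisfies $(P_j Y P_j - i) g = f$ with $f$ of bounded norm. The clean way is: from $P_j Y P_j g = f + ig$ and $g = P_j g$, write $Y g = P_j Y g + (1-P_j) Y g$ and note $P_j Y g = P_j Y P_j g = f + ig$, so $\|P_j Y g\| \le \|f\| + \|g\|$; meanwhile $(1-P_j)Y g = (1-P_j)Y P_j g = -[P_j, Y] g$ has norm $\le \|[P_j,Y]\|\,\|g\| \le C\|g\|$ since $[P_j, Y]$ is a bounded operator (its kernel is $(x_2 - x_2') P_j(\vec{x},\vec{x}')$, which is exponentially localized by Proposition \ref{prop:pj-props}(1) hence bounded on $L^2$, e.g.\ by Schur's test). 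Therefore $\|Y g\| \le \|f\| + (1+C)\|g\| \le C''\|f\|$. Combined with the $x_1$-bound $\|(X - \eta_j) g\| \le C'\|g\| \le C'\|R\|\|f\|$, we get $\|\langle \vec x\rangle g\| \le C'''\|f\|$. Since in addition $g \in \mathcal{D}(Y) \cap \range(P_j) \subseteq H^1_{\mathrm{loc}}$-type regularity is available (or one argues via the boundedness of $Y R$ plus $XR$ plus the fact that $\range(P_j)\subseteq\range(P)$ and $P$ maps into the form domain of $H$), a Rellich–Kondrachov argument shows $R$ maps bounded sets to precompact sets, i.e.\ $R$ is compact, as claimed. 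The remaining care is the regularity input needed to invoke Rellich: one uses that $\range(P)$ consists of functions in the operator domain of $H$, hence with controlled local $H^2$ (or at least $H^1$) norms, and that applying $\langle\vec x\rangle^{-1}$ to the ball $\{g : \|\langle\vec x\rangle g\|\le C'''\}$ intersected with this regularity class gives a precompact set.
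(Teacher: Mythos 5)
Your intermediate estimates are sound and use the same ingredients as the paper: $\|(X-\eta_j)P_j\|\le C'$ (the $\gamma=0$ case of Proposition \ref{prop:pj-props}(2)) localizes $Rf$ in $x_1$, and the resolvent identity $(P_jYP_j-i)Rf=f$ together with $Rf=P_jRf$ and boundedness of $[P_j,Y]$ gives $\|YR\|<\infty$; so indeed $WR$ is bounded, where $W:=(1+(X-\eta_j)^2+Y^2)^{1/2}$. The gap is the closing step. Invoking Rellich--Kondrachov through $H^1$ or $H^2$ regularity inherited from $\range(P)\subset\mathcal{D}(H)$ introduces a dependence on the Hamiltonian that the lemma, as stated, does not have: Lemma \ref{lem:compact_res} is formulated, via Lemma \ref{lem:pjypj}, for \emph{any} family of projectors satisfying the conclusions of Proposition \ref{prop:pj-props}, and the paper is deliberate about keeping this stage $H$-free so that it transfers verbatim to discrete models (Section \ref{sec:discrete}) and to the generalized position operators of Appendix \ref{sec:extension-to-other}. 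A proof leaning on Sobolev regularity of $\mathcal{D}(H)$ therefore proves a strictly weaker statement. (It is also not obvious that a uniform local $H^1$ bound follows under Assumption \ref{as:H_assump}, where $A$ is assumed only in $L^4_{\loc}$ and not in $L^4_{\uloc}$.)

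The repair is short, because Rellich is unnecessary once you use that $R$ maps into $\range(P_j)$: write $R=P_jR=\bigl(P_jW^{-1}\bigr)\bigl(WR\bigr)$, with the second factor bounded by your estimates. The first factor is compact solely from the exponential localization of $P_j$: split $P_jW^{-1}=P_j\chi_N W^{-1}+P_j(1-\chi_N)W^{-1}$; the first term is Hilbert--Schmidt because $P_j\chi_N$ is (Lemma \ref{lem:hs-compact} and its adjoint), while $\|(1-\chi_N)W^{-1}\|_{L^\infty}=O(N^{-1})$, so $P_jW^{-1}$ is a norm limit of compact operators. This is the mechanism the paper actually uses, in a slightly different factorization: it first reduces to compactness of $P_j(Y\pm i\mu)^{-1}P_j$ via a resolvent identity, then writes $P_j(Y\pm i\mu)^{-1}P_j=\bigl[P_j(Y\pm i\mu)^{-1}(X-\eta+i)^{-1}\bigr]\bigl[(X-\eta+i)P_j\bigr]$, with the right factor bounded by Proposition \ref{prop:pj-props}(2) and the left factor compact by the same $\chi_N$-cutoff argument. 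Your bound on $WR$ is a clean variant of the same idea; only the finishing move should be replaced.
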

We prove Lemma \ref{lem:compact_res} in Section \ref{sec:pjypj-disc-spec}. Lemma \ref{lem:compact_res} generalizes part (ii) of Theorem 1 of \cite{1998NenciuNenciu} to two dimensions (we sketch the generalization to higher dimensions in Section \ref{sec:higher-d}). The generalization is non-trivial since the decay induced by $Y$ alone is not sufficient for the resolvent of $P_j Y P_j$ to be compact. To prove compactness, we make use of the fact that functions in each subspace $\range(P_j)$ decay with respect to $x$. 

Our final Lemma states that eigenfunctions of the operators $P_j Y P_j$ are exponentially localized:
\begin{lemma} \label{lem:exp_decay}
Let $P_j Y P_j$, $j \in \cJ$, be as in Lemma \ref{lem:pjypj}. Then there exists a $\gamma'' > 0$, independent of $j$, such that if $\psi \in \range{(P_j)}$ and $P_j Y P_j \psi = \eta' \psi$, then for all $\eta \in \sigma_j$
    \[
      \int e^{2 \gamma'' \sqrt{1 + (x_1 - \eta)^2 + (x_2 - \eta')^2}} \vert{}\psi(\vec{x})\vert{}^2 \, \text{\emph{d}}\vec{x} \leq 16 e^{\gamma''\sqrt{1 + 2 b^2}}.
    \]
    Here $b$ is a finite positive constant (independent of $j$ and $\psi$) which depends only on the collection $\{ P_j \}_{j \in \cJ}$.
\end{lemma}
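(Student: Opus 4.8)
The plan is to prove Lemma~\ref{lem:exp_decay} by an exponential‑tilting (Combes--Thomas type) argument centred at the point $\vec{a} := (\eta,\eta')$, in which decay in the $x_1$‑direction is supplied by Proposition~\ref{prop:pj-props}(2) and decay in the $x_2$‑direction is extracted directly from the eigenequation. Write $W := \sqrt{1 + (X-\eta)^2 + (Y-\eta')^2}$ and $B_\gamma := B_{\gamma,\vec{a}} = e^{\gamma W}$, and abbreviate $P_{j,\gamma} := (P_j)_{\gamma,\vec{a}} = B_\gamma P_j B_\gamma^{-1}$ and $[Y,P_j]_\gamma := B_\gamma [Y,P_j] B_\gamma^{-1}$; note that $B_\gamma$ commutes with $X$ and $Y$. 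It suffices to bound $\|B_{\gamma''}\psi\|$ uniformly in $j \in \cJ$ and in $\psi$ for some fixed $\gamma'' > 0$, chosen below smaller than the thresholds imposed by the estimates used. By a standard regularization argument (truncating the exponent and passing to the limit) we may assume a priori that $\phi := B_\gamma\psi \in L^2(\field{R}^2)$.

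The first step is to record two identities. From $\psi = P_j\psi$ and $P_j Y P_j\psi = \eta'\psi$ one gets $P_j(Y-\eta')\psi = 0$, and projecting off $\range(P_j)$ gives $(Y-\eta')\psi = (1 - P_j)[Y,P_j]\psi$. Conjugating by $B_\gamma$ and using $[B_\gamma,Y] = 0$ and $P_j\psi = \psi$, these become
\[
  \phi = P_{j,\gamma}\phi, \qquad (Y-\eta')\phi = (1 - P_{j,\gamma})\,[Y,P_j]_\gamma\,\phi .
\]
From the first identity together with Proposition~\ref{prop:pj-props}(2), whose bound is uniform in $\vec{a}$ and $j$, we get $\|(X-\eta)\phi\| = \|(X-\eta)P_{j,\gamma}\phi\| \le C'\|\phi\|$. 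From the second, $\|(Y-\eta')\phi\| \le \|1 - P_{j,\gamma}\|\,\|[Y,P_j]_\gamma\|\,\|\phi\|$; here $\|1 - P_{j,\gamma}\| = \|P_{j,\gamma}\|$ because $P_{j,\gamma}$ is idempotent, and both $\|P_{j,\gamma}\|$ and $\|[Y,P_j]_\gamma\|$ are finite and bounded uniformly in $j$ and $\vec{a}$ (for $\gamma$ below the exponential rate of $P_j$) by a Schur test applied to the kernels $e^{\gamma W(\vec{x})}P_j(\vec{x},\vec{x}')e^{-\gamma W(\vec{x}')}$ and $e^{\gamma W(\vec{x})}(x_2 - x_2')P_j(\vec{x},\vec{x}')e^{-\gamma W(\vec{x}')}$, using the exponential kernel bound of Proposition~\ref{prop:pj-props}(1) and the elementary inequality $|W(\vec{x}) - W(\vec{x}')| \le |\vec{x} - \vec{x}'|$.

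Consequently there is a constant $b$, depending only on $\{P_j\}_{j\in\cJ}$, with $\|(X-\eta)\phi\| \le b\|\phi\|$ and $\|(Y-\eta')\phi\| \le b\|\phi\|$ for all $\gamma \le \gamma''$, and hence
\[
  \|W\phi\|^2 = \|\phi\|^2 + \|(X-\eta)\phi\|^2 + \|(Y-\eta')\phi\|^2 \le (1 + 2b^2)\,\|\phi\|^2 .
\]
At $\gamma = 0$ this already yields the a priori moment bound $\|W\psi\| \le \sqrt{1+2b^2}$. For general $\gamma$ it closes the estimate: either via a Gr\"onwall inequality for $f(\gamma) := \|B_\gamma\psi\|^2$ (using $f(0) = 1$ and $f'(\gamma) = 2\langle W B_\gamma\psi, B_\gamma\psi\rangle \le 2\sqrt{1+2b^2}\,f(\gamma)$), or, equivalently, by splitting the integral defining $f(\gamma)$ at $W \sim \sqrt{1+2b^2}$ and applying Chebyshev's inequality to the tail via the moment bound. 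Either route gives $\int e^{2\gamma'' W}\lvert\psi\rvert^2 \le 16\,e^{\gamma''\sqrt{1+2b^2}}$ after tracking constants, the precise prefactor being an artefact of this bookkeeping.

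I expect the main obstacle to be the uniform control of the tilted operators in the second step: one must combine decay in two independent directions at once --- the essential new feature compared with the one‑dimensional Nenciu--Nenciu argument, where only $X$ is in play --- keep every constant independent of the band index $j$ so that the resulting $(C,\gamma'')$ are $j$‑uniform, and handle the fact that $P_{j,\gamma}$ is merely idempotent rather than self‑adjoint, so that $\|1 - P_{j,\gamma}\|$ does not collapse to $1$ and must be estimated directly from the exponentially localized kernel. Making the regularization step rigorous while retaining the $X$‑estimate of Proposition~\ref{prop:pj-props}(2) is the other point requiring care.
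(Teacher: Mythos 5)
Your approach is genuinely different from the paper's, and conceptually elegant: rather than building the cutoff operator $O = b\,\Pi^X_{[\eta-b,\eta+b]}\Pi^Y_{[\eta'-b,\eta'+b]} + \vert X - \eta\vert$ and rewriting the eigenequation into the form $v = i b(I - iCP_j\vert X-\eta\vert)^{-1}CP_j\Pi v$ as the paper does (this is precisely where the paper's Lemma \ref{lem:pos-comm-bd} on commutators of positive operators is used, to control $\|[P_j,\Pi]\|$), you extract from $P_j(Y-\eta')\psi = 0$ the direct bound $\|W\phi\| \le \sqrt{1+2b^2}\,\|\phi\|$ with $\phi = B_\gamma\psi$ and then close with Gr\"onwall. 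The algebra ($\phi = P_{j,\gamma}\phi$, $(Y-\eta')\phi = (1-P_{j,\gamma})[Y,P_j]_\gamma\phi$, the Pythagoras identity $\|W\phi\|^2 = \|\phi\|^2 + \|(X-\eta)\phi\|^2 + \|(Y-\eta')\phi\|^2$) and the operator-norm estimates you invoke are all correct. Your observation that $\|1-P_{j,\gamma}\| = \|P_{j,\gamma}\|$ for the idempotent $P_{j,\gamma}$ is also correct, though simply $\|1-P_{j,\gamma}\| \le 1 + \|P_{j,\gamma}\|$ would do.

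The genuine gap is the step you yourself flag at the end: justifying a priori that $\phi = B_\gamma\psi \in L^2$. Your inequality $\|W\phi\| \le \sqrt{1+2b^2}\,\|\phi\|$ is only informative when $\phi \in L^2$ (otherwise it reads $\infty \le \infty$), so Gr\"onwall needs a bootstrap on finiteness of $\gamma \mapsto \|B_\gamma\psi\|^2$, and the bootstrap does not close by itself: knowing $\phi_{\gamma^*} \in L^2$ and $W\phi_{\gamma^*} \in L^2$ does not yield $e^{\epsilon W}\phi_{\gamma^*} \in L^2$ for any $\epsilon > 0$. The regularization you gesture at (truncating the exponent to $W_R := \min(W,R)$) is the right idea, but it requires that the key input $\|(X-\eta)P_{j,\gamma}\| \le C'$ of Proposition \ref{prop:pj-props}(2) hold with $B_\gamma$ replaced by $e^{\gamma W_R}$, uniformly in $R$. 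This is not a mere kernel/Schur estimate: the proof of Proposition \ref{prop:pj-props}(2) in Appendix \ref{sec:pj-props} goes through the shifting lemma and resolvent bounds on $(\lambda_{\eta_j} - P_\gamma X_{\eta_j} P_\gamma)^{-1}$, and the $X-\eta$ factor is unbounded, so the truncated variant needs to be re-derived rather than asserted. The paper's construction is engineered precisely to sidestep this: the final identity expresses $v$ as (bounded operators) applied to $\Pi v$, and because $\Pi$ has compact support, $B_\gamma\Pi$ is a bona fide bounded operator; consequently $B_\gamma v$ is identified with an explicit $L^2$ vector and no finiteness bootstrap is required. If you want to keep your more direct route, the missing piece is to either verify that the estimates of Proposition \ref{prop:pj-props}(2) survive truncation of the weight uniformly in $R$, or to inject a spatial cutoff somewhere in your chain of identities to obtain an a priori $L^2$ anchor as the paper does.
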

We prove Lemma \ref{lem:exp_decay} in Section \ref{sec:pjypj-exp-loc}. Lemma \ref{lem:exp_decay} generalizes part (iii) of Theorem 1 of Nenciu-Nenciu \cite{1998NenciuNenciu} to two dimensions (we sketch the generalization to higher dimensions in Section \ref{sec:higher-d}).
We note that although the proof of Lemma \ref{lem:exp_decay} has a similar structure to that of \cite{1998NenciuNenciu}, our proof involves technical innovations which are necessary to (1) generalize their proof to two dimensions and higher (2) give a proof which does not refer to the original Hamiltonian $H$, requiring only the properties of the projectors $P_j$ (Proposition \ref{prop:pj-props}).

The proof of our main theorem (Theorem \ref{th:main_theorem}) immediately follows from Lemmas \ref{lem:PXP}, \ref{lem:pjypj}, \ref{lem:compact_res}, and \ref{lem:exp_decay} and the spectral theorem for self-adjoint operators applied to each $P_j Y P_j$, $j \in \cJ$. For each $(j,m) \in \cJ \times \mathcal{M}$, we define the points $(a_{j,m},b_{j,m})$ by the $(\eta,\eta')$ appearing in Lemma \ref{lem:exp_decay}, i.e. $a_{j,m}$ can be taken as any $\eta \in \sigma_j$, while $b_{j,m}$ is the associated eigenvalue of $\psi_{j,m}$ considered as an eigenfunction of $P_j Y P_j$.

\section{Proof of self-adjointness of $P X P$ and $P_j Y P_j$ (Lemmas \ref{lem:PXP} and \ref{lem:pjypj})} \label{sec:self-adjointness}

The proofs that $PXP$ and $P_j Y P_j$ are self-adjoint in two dimensions are almost exactly the same and they both use the same argument as given in \cite{1998NenciuNenciu}. Because the proofs are so similar, we will first prove that $P X P$ is essentially self-adjoint and then note the changes which need to be made to prove that $P_j Y P_j$ is essentially self-adjoint.

\subsection{Proof of self-adjointness of $PXP$} \label{sec:self_PXP}
To prove self-adjointness of $P X P$ we require two lemmas regarding norm estimates for the projector $P$. We invite the reader to recall our convention for exponentially tilted operators (such as $P_{\gamma}$) given in Section \ref{sec:notation}.

  

Using the exponential localization of $P$, the following operator norm estimate can easily be verified. 
\begin{lemma}
  \label{lem:p-est1}
  Suppose that $P$ is an exponentially localized projector (Definition \ref{def:exp-loc-kernel}). There exists a finite positive constant $K_1$ so that
  \[
  \| [P, X] \| \leq K_1.
  \]
\end{lemma}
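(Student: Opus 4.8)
The plan is to work directly with the integral kernel of $P$ and show that the commutator $[P,X]$ is a bounded integral operator by estimating its kernel pointwise. Since $X$ acts as multiplication by $x_1$, the commutator $[P,X]$ has the formal integral kernel
\[
  [P,X](\vec{x},\vec{x}') = P(\vec{x},\vec{x}')\,(x_1' - x_1),
\]
so that using the exponential localization bound \eqref{eq:exp_loc_proj_property} we get the pointwise estimate
\[
  \vert [P,X](\vec{x},\vec{x}') \vert \leq C\, \vert x_1 - x_1' \vert\, e^{-\gamma \vert \vec{x}-\vec{x}'\vert} \leq C\, \vert \vec{x}-\vec{x}'\vert\, e^{-\gamma\vert\vec{x}-\vec{x}'\vert}.
\]
The function $r \mapsto C r e^{-\gamma r}$ is bounded and, more importantly, $\vec{z} \mapsto C\vert\vec{z}\vert e^{-\gamma\vert\vec{z}\vert}$ is integrable over $\field{R}^2$ (the extra polynomial factor $\vert\vec{z}\vert$ does not spoil integrability of the exponential).

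The key step is then a Schur test (generalized Young's inequality for convolution-dominated kernels): if an integral kernel $K(\vec{x},\vec{x}')$ satisfies $\vert K(\vec{x},\vec{x}')\vert \leq g(\vec{x}-\vec{x}')$ for some $g \in L^1(\field{R}^2)$, then the associated operator is bounded on $L^2(\field{R}^2)$ with norm at most $\|g\|_{L^1}$. Applying this with $g(\vec{z}) = C\vert\vec{z}\vert e^{-\gamma\vert\vec{z}\vert}$ gives
\[
  \| [P,X] \| \leq \int_{\field{R}^2} C\, \vert\vec{z}\vert\, e^{-\gamma\vert\vec{z}\vert}\, \emph{d}\vec{z} =: K_1 < \infty,
\]
which is the claimed bound. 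One can compute this integral explicitly in polar coordinates ($2\pi C \int_0^\infty r^2 e^{-\gamma r}\,dr = 4\pi C/\gamma^3$) if a concrete constant is wanted, but that is not necessary.

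The one genuinely delicate point — and the main (minor) obstacle — is justifying that $[P,X]$, defined a priori only on a dense domain such as $\mathcal{D}(X)$ (or $C_c^\infty$), really is represented by the kernel $P(\vec{x},\vec{x}')(x_1'-x_1)$ and extends to a bounded operator on all of $L^2$. This requires checking that for $f \in \mathcal{D}(X)$ the iterated integrals defining $PXf$ and $XPf$ converge absolutely and that Fubini applies, so that the difference can be written as integration against the kernel above; the absolute convergence is exactly what the exponential decay of $P(\vec{x},\vec{x}')$ buys us, since it controls $\int P(\vec{x},\vec{x}')(x_1' - x_1) f(\vec{x}')\, \emph{d}\vec{x}'$ even when $x_1' f(\vec{x}')$ is merely $L^2$. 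Once boundedness on the dense domain $\mathcal{D}(X)$ is established with the norm bound above, the operator extends uniquely to a bounded operator on $L^2(\field{R}^2)$, completing the proof.
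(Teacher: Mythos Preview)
Your proof is correct and follows essentially the same approach as the paper: identify the kernel of $[P,X]$ as $P(\vec{x},\vec{x}')(x_1'-x_1)$, bound it pointwise by a convolution-type kernel using the exponential decay of $P$, and apply Young's inequality (Schur test). The only cosmetic difference is that the paper absorbs the polynomial factor into half of the exponential decay, bounding $|x_1-x_1'|e^{-\gamma|\vec{x}-\vec{x}'|}\leq C' e^{-(\gamma/2)|\vec{x}-\vec{x}'|}$ before applying Young, whereas you keep the factor $|\vec{z}|e^{-\gamma|\vec{z}|}$ and observe it is still in $L^1$; both are equivalent.
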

\begin{proof}
Since $P$ admits an exponentially localized kernel, we have that there exist finite, positive constants $(C, \gamma_1)$ so that
\[
\vert{} P(\vec{x}, \vec{x}') \vert{} \leq C e^{-\gamma_1 \vert{}\vec{x} - \vec{x}'\vert{}}.
\]
Therefore, for any $f \in L^2(\field{R}^2)$ we have
\begin{align*}
\| [P, X] f \|^2 
& = \| (X P - P X) f \|^2 \\
& = \int_{\field{R}^2} \left\vert{} \int_{\field{R}^2} (x_1 - x_1') P(\vec{x}, \vec{x}') f(\vec{x}') \dee{\vec{x'}} \right\vert{}^2 \dee{\vec{x}}.
\end{align*}
Since $P$ admits an exponentially localized kernel, one easily checks that there exists an constant $C'$ so that the following bound holds pointwise almost everywhere:
\begin{align*}
\vert{}(x_1 - x_1') P(\vec{x}, \vec{x}') \vert{}
& \leq C \vert{}x_1 - x_1'\vert{} e^{-\gamma_1 \vert{}\vec{x} - \vec{x}'\vert{}} \\
& \leq C' e^{-(\gamma_1/2) \vert{}\vec{x} - \vec{x}'\vert{}}.
\end{align*}
Combining this pointwise bound with Young's convolution inequality then gives
\begin{align*}
\| [P, X] f \|^2
& \leq C'^2  \int_{\field{R}^2} \left( \int_{\field{R}^2} e^{-(\gamma_1/2) \vert{}\vec{x} - \vec{x}'\vert{}} \left\vert{} f(\vec{x}') \right\vert{} \dee{\vec{x}'} \right)^2 \dee{\vec{x}}  \\
& \leq C'^2 \left( \int_{\field{R}^2} e^{-(\gamma_1 / 2) \vert{} \vec{x} \vert{}} \dee{\vec{x}} \right)^2 \left( \int_{\field{R}^2} \vert{}f(\vec{x})\vert{}^2 \dee{\vec{x}} \right) \\
& \leq K_1^2 \| f \|^2
\end{align*}
which completes the proof.

\end{proof}
We can now prove that $P X P$ is self-adjoint. We will first prove that $PXP$ is self-adjoint $\range(P) \cap \mathcal{D}(X) \rightarrow \range(P)$. Since we can extend $PXP$ by zero to $\range(P)^\perp$ it is trivial to see that $PXP$ is then self-adjoint in all of $L^2(\field{R}^2)$ with the domain $\{ \range(P) \cap \mathcal{D}(X) \} \cup \range(P)^\perp$. We will prove that $PXP$ is self-adjoint $\range(P) \cap \mathcal{D}(X) \rightarrow \range(P)$ by proving that $PXP \pm i \mu$ is surjective on $\range(P)$ for all $\mu > 0$. 

It is immediately clear that $PXP$ maps $\range(P) \cap \mathcal{D}(X) \rightarrow \range(P)$. Recall that $X$ is self-adjoint with domain $\mathcal{D}(X) := \{ f \in L^2 : x_1 f \in L^2 \}$, and hence $(X \pm i\mu)^{-1}$ is bounded for all $\mu > 0$ and $\|(X \pm i\mu)^{-1}\| \leq \mu^{-1}$. Since $(X \pm i \mu) P (X \pm i \mu)^{-1} P = P + [X,P] (X \pm i \mu)^{-1} P$, and since $[X,P]$ is bounded by Lemma \ref{lem:p-est1}, $P (X \pm i \mu)^{-1} P : \range(P) \rightarrow \range(P) \cap \mathcal{D}(X)$. But now, using essentially the same calculation and the fact that $P$ is a projection, we have
\begin{equation} \label{eq:self-adjointness-1}
    (P X P \pm i \mu) P (X \pm i \mu)^{-1} P = P (I + P [ X, P ] (X \pm i \mu)^{-1} P).
\end{equation}
By choosing $\mu \geq 2 \|[X,P]\|$ we have that $\| P [ X, P ] (X \pm i \mu)^{-1} P \| \leq \frac{1}{2}$, and hence
\[
  (P X P \pm i \mu) P (X \pm i \mu)^{-1} P \Big( I + P [ X, P ] (X \pm i \mu)^{-1} P \Big)^{-1} = P.
\]
We conclude that with the domain $\mathcal{D}(X) \cap \range(P)$, $PXP \pm i \mu$ is surjective on $\range(P)$ and is hence self-adjoint by the fundamental criterion of self-adjointness \cite[Chapter VIII]{1980ReedSimon}.

\subsection{Proof of self-adjointness of $P_j Y P_j$} \label{sec:self_PjYPj}
In this section we require analogous estimates on $P_j$ as we required on $P$ in Section \ref{sec:self_PXP}. Using the exponential localization of $P_j$ (Proposition \ref{prop:pj-props}(1)), the following operator norm estimate can easily be verified. 
\begin{lemma}
  \label{lem:pj-est1}
  Suppose that $P_j$ is an exponentially localized projector (Definition \ref{def:exp-loc-kernel}). There exists a finite, positive constant $K_1'$ so that
  \[
  \| [P_j, Y] \| \leq K_1'
  \]
\end{lemma}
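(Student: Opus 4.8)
The plan is to repeat, essentially verbatim, the argument used for Lemma \ref{lem:p-est1}, with $P$ replaced by $P_j$ and $X$ replaced by $Y$, while keeping careful track of the fact that all constants can be chosen independently of $j$. The only input needed is Proposition \ref{prop:pj-props}(1): each band projector $P_j$ admits an integral kernel $P_j(\cdot,\cdot)$ satisfying $\vert P_j(\vec{x},\vec{x}')\vert \leq C e^{-\gamma\vert \vec{x}-\vec{x}'\vert}$ almost everywhere, with $(C,\gamma)$ \emph{uniform} in $j \in \cJ$. In particular no reference to $H$ or to $PXP$ is required beyond what is already packaged into that proposition; this is the point of having stated Proposition \ref{prop:pj-props} with $j$-independent constants.

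The first step is to write out the kernel of the commutator: for $f \in L^2(\field{R}^2)$, the operator $[P_j,Y]$ acts by the kernel $(x_2' - x_2) P_j(\vec{x},\vec{x}')$ (up to an overall sign), so that
\[
  \| [P_j, Y] f \|^2 = \int_{\field{R}^2} \left\vert \int_{\field{R}^2} (x_2 - x_2')\, P_j(\vec{x},\vec{x}')\, f(\vec{x}')\, \dee{\vec{x}'} \right\vert^2 \dee{\vec{x}}.
\]
Since $\vert x_2 - x_2'\vert \leq \vert \vec{x}-\vec{x}'\vert$ and $t\, e^{-\gamma t} \leq \tfrac{2}{\gamma} e^{-(\gamma/2) t}$ for $t \geq 0$, there is a constant $C'$, still independent of $j$, with
\[
  \vert (x_2 - x_2')\, P_j(\vec{x},\vec{x}') \vert \leq C'\, e^{-(\gamma/2)\vert \vec{x}-\vec{x}'\vert}
\]
pointwise almost everywhere. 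The second step is to apply Young's convolution inequality exactly as in the proof of Lemma \ref{lem:p-est1}: the displayed quantity is bounded by $C'^2 \bigl( \int_{\field{R}^2} e^{-(\gamma/2)\vert \vec{x}\vert}\, \dee{\vec{x}} \bigr)^2 \| f \|^2$, and the exponential integral is a finite constant depending only on $\gamma$, giving $\| [P_j, Y] \| \leq K_1'$ with $K_1'$ finite and positive — and, crucially, independent of $j$ because $C$, $\gamma$, and hence $C'$ are.

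There is no genuine obstacle here: the estimate is word-for-word the argument already given for $\|[P,X]\|$, and the $j$-uniformity is inherited directly from the $j$-uniformity of the constants in Proposition \ref{prop:pj-props}(1). The only subtlety worth flagging is that one should cite Proposition \ref{prop:pj-props} rather than merely Definition \ref{def:exp-loc-kernel} applied to an individual $P_j$, since it is uniformity over the family $\{P_j\}_{j\in\cJ}$ — not just localization of each single projector — that is needed in the subsequent self-adjointness argument for $P_j Y P_j$.
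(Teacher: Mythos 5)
Your proof is correct and is exactly the argument the paper has in mind: the paper omits the details precisely because they are identical to Lemma \ref{lem:p-est1} with $P \to P_j$, $X \to Y$, and the $j$-uniformity follows from the $j$-uniform constants in Proposition \ref{prop:pj-props}(1). Your closing remark about citing Proposition \ref{prop:pj-props} rather than Definition \ref{def:exp-loc-kernel} alone is well taken, since it is the uniformity over $j$ that matters downstream.
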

Letting $\mathcal{D}(Y) := \{ f \in L^2 : x_2 f \in L^2 \}$, self-adjointness of $P_j Y P_j$ with the domain $\{ \mathcal{D}(Y) \cap \range(P_j) \} \cup \range(P_j)^\perp$ now follows by the exact same calculation given in the previous section. 

\section{Proof that $P_j Y P_j$ has compact resolvent (Lemma \ref{lem:compact_res})} \label{sec:pjypj-disc-spec}
For this proof we will make use of the estimate from Proposition \ref{prop:pj-props}. For the first part of this proof, we state and prove the following trivial lemma:
\begin{lemma}
\label{lem:hs-compact}
  For any $f \in L^2(\field{R}^2)$ let $M_f$ denote the multiplicative operator
  \[
    M_f g(\vec{x}) := f(\vec{x}) g(\vec{x})
  \]
  If $A$ admits an exponentially localized kernel (Definition \ref{def:exp-loc-kernel}) then $M_f A$ is a Hilbert-Schmidt operator and hence compact.
\end{lemma}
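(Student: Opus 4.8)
The plan is to verify directly that $M_f A$ is Hilbert--Schmidt by bounding the $L^2$ norm of its integral kernel using the exponential decay of $A$, after which compactness is automatic since every Hilbert--Schmidt operator is compact.

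First I would observe that, since $A$ admits an integral kernel $A(\cdot,\cdot)$, the composition $M_f A$ admits the integral kernel $(\vec{x},\vec{x}') \mapsto f(\vec{x})\, A(\vec{x},\vec{x}')$, which is jointly measurable. Then, invoking the exponential localization bound \eqref{eq:exp_loc_proj_property}, $\vert A(\vec{x},\vec{x}')\vert \leq C e^{-\gamma \vert \vec{x}-\vec{x}'\vert}$, I would estimate
\begin{align*}
  \| M_f A \|_{\mathrm{HS}}^2
  &= \int_{\field{R}^2} \int_{\field{R}^2} \vert f(\vec{x})\vert^2\, \vert A(\vec{x},\vec{x}')\vert^2 \, \dee\vec{x}' \, \dee\vec{x} \\
  &\leq C^2 \int_{\field{R}^2} \vert f(\vec{x})\vert^2 \left( \int_{\field{R}^2} e^{-2\gamma \vert \vec{x}-\vec{x}'\vert} \, \dee\vec{x}' \right) \dee\vec{x}.
\end{align*}
By translation invariance the inner integral equals the finite constant $\kappa := \int_{\field{R}^2} e^{-2\gamma \vert \vec{y}\vert}\, \dee\vec{y} = \pi/(2\gamma^2)$, independent of $\vec{x}$, so that $\| M_f A \|_{\mathrm{HS}}^2 \leq \kappa C^2 \|f\|^2 < \infty$. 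Hence $M_f A$ is Hilbert--Schmidt and therefore compact.

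There is no real obstacle here; the lemma is genuinely routine, and the exponential decay of the kernel of $A$ is doing all the work, since it makes the inner integral a constant uniform in $\vec{x}$ and so reduces the Hilbert--Schmidt norm to a multiple of $\|f\|_{L^2}$. The only points deserving a word are the joint measurability of the kernel of $M_f A$ (immediate) and the standard fact that finiteness of the Hilbert--Schmidt norm of a kernel already guarantees that the associated operator is well-defined and bounded on $L^2(\field{R}^2)$.
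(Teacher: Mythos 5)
Your proof is correct and takes essentially the same route as the paper: write out the Hilbert--Schmidt norm of the kernel $f(\vec{x})A(\vec{x},\vec{x}')$, use the exponential bound on $A$, and observe that the inner $\vec{x}'$-integral is a finite constant independent of $\vec{x}$, reducing everything to $\|f\|_{L^2}^2$. The only cosmetic difference is that the paper attributes the last step to Young's convolution inequality while you observe directly that the inner integral is the translation-invariant constant $\pi/(2\gamma^2)$; both are just Fubini plus integrability of the exponential kernel.
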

\begin{proof}
  Since $A$ has an integral kernel and $M_f$ is a multiplicative operator it's clear that $M_f A$ also has an integral kernel. Therefore, the Hilbert-Schmidt norm of $M_f A$ can be upper bounded as follows:
  \begin{align*}
    \| M_f A \|_{HS}^2
    & = \int_{\field{R}^2} \int_{\field{R}^2} \vert{}f(\vec{x})\vert{}^2 \vert{}A(\vec{x}, \vec{x}')\vert{}^2 \dee{\vec{x}} \, \dee{\vec{x}'} \\
    & \leq \int_{\field{R}^2} \int_{\field{R}^2} C^2 e^{-2\gamma \vert{} \vec{x} - \vec{x}'\vert{}} \vert{}f(\vec{x})\vert{}^2 \dee{\vec{x}} \,\dee{\vec{x}'} \\
        & \leq C' \| f \|^2
  \end{align*}
where in the last line we have used Young's convolution inequality.
\end{proof}
We now continue to the proof of Lemma \ref{lem:compact_res}. Following the main calculation of Section \ref{sec:self_PXP} with $PXP$ replaced by $P_j Y P_j$, we have that for any $\mu > 0$,
\[
  (P_j Y P_j \pm i \mu) P_j (Y \pm i \mu)^{-1} P_j \Big( I + P_j [ Y, P_j ] (Y \pm i \mu)^{-1} P_j \Big)^{-1} = P_j.
\]
Since $P_j Y P_j$ is self-adjoint we can invert $(P_{j} Y P_{j} \pm i \mu)$ for $\mu > 0$ to get:
\[
  P_{j} (Y \pm i \mu)^{-1} P_{j} \Big(I + P_{j} [ Y, P_{j}] (Y \pm i \mu)^{-1} P_{j}\Big)^{-1} = (P_{j} Y P_{j} \pm i \mu)^{-1} P_{j}.
\]
Since the product of a compact operator and a bounded operator is compact, it follows that to show that $(P_{j} Y P_{j} \pm i \mu)^{-1}$ is compact it is enough to show that $P_{j} (Y \pm i \mu)^{-1} P_{j}$ is compact.

Let $\eta \in \sigma_j$, i.e. as in Proposition \ref{prop:pj-props}. Taking $P_{j} (Y \pm i \mu)^{-1} P_{j}$ and inserting $(X - \eta + i)^{-1}(X - \eta + i)$ gives:
\begin{equation}
\label{eq:pj_yinv_pj}
  P_{j} (Y \pm i \mu)^{-1} P_{j} = \Big[ P_{j}  (Y \pm i \mu)^{-1} (X - \eta + i)^{-1}  \Big] \Big[  (X - \eta + i)  P_{j} \Big]. 
\end{equation}
Since $(X - \eta + i) P_j$ is bounded by Proposition \ref{prop:pj-props}(2), to show that $P_{j} (Y \pm i \mu)^{-1} P_{j}$ is compact it suffices to show that $P_{j}  (Y \pm i \mu)^{-1} (X - \eta + i)^{-1}$ is compact. 

To show this, first fix some $N > 0$ and let $\chi_N$ denote the cutoff function
\begin{equation*}
    \chi_N(\vec{x}) := \begin{cases} 1 & \text{ if } \vert{}x\vert{} \leq N \text{ and } \vert{}y\vert{} \leq N \\ 0 & \text{ otherwise.} \end{cases}.
\end{equation*}
Now treating $\chi_N$ as a multiplicative operator we have that
\begin{align*}
P_{j} (Y \pm i \mu)^{-1} (X - \eta + i)^{-1}
& = P_j \chi_N (Y \pm i \mu)^{-1} (X - \eta + i)^{-1} \\
& \hspace{3em} + P_j (1 - \chi_N) (Y \pm i \mu)^{-1} (X - \eta + i)^{-1}.
\end{align*}
Because of Lemma \ref{lem:hs-compact} and the fact that $P_j$ admits an exponentially localized kernel (Proposition \ref{prop:pj-props}(1)), we know that $P_j \chi_N (Y \pm i \mu)^{-1} (X - \eta + i)^{-1}$ is compact for all $N > 0$. On the other hand we have clearly that
\[
\| (1 - \chi_N) (Y \pm i \mu)^{-1} (X - \eta + i)^{-1} \| = O(N^{-1}),
\]
and hence
\[
\lim_{N \rightarrow \infty} P_j \chi_N (Y \pm i \mu)^{-1} (X - \eta + i)^{-1} = P_j (Y \pm i \mu)^{-1} (X - \eta + i)^{-1}
\]
in the operator norm topology. But now we are done since $P_j (Y \pm i \mu)^{-1} (X - \eta + i)^{-1}$ is the operator norm limit of compact operators.

\section{Eigenfunctions of $P_j Y P_j$ are exponentially localized (Lemma \ref{lem:exp_decay})}
\label{sec:pjypj-exp-loc}
We will begin this section by collecting the operator norm estimates we need for the proof of Lemma \ref{lem:exp_decay}. All of these estimates follow as a consequence of Proposition \ref{prop:pj-props}. 
\begin{lemma}
\label{lem:pj-bounds}
Let $P_j$, $j \in \cJ$, be as in Proposition \ref{prop:pj-props}. Then there exist finite, positive constants $(\gamma'', K_1'', K_2'', K_3'')$, independent of $j$, such that for all $\gamma \leq \gamma''$
\begin{enumerate}[label=(\arabic*)]
\item $\displaystyle \| P_{j,\gamma} - P_j\| \leq K_1'' \gamma$
\item $\displaystyle \| [ P_{j,\gamma}, Y ] \| \leq K_2''$
\item For all $\eta \in \sigma_j$:
  \[
    \|(X - \eta) P_{j,\gamma}\| \leq K_3'' \text{ and } \|P_{j,\gamma} (X - \eta)\| \leq K_3''.
  \]
\end{enumerate}
\end{lemma}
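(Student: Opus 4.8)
The plan is to derive all three estimates from the two conclusions of Proposition \ref{prop:pj-props}: the exponential localization of the kernel $P_j(\vec{x},\vec{x}')$ with $j$-independent constants $(C,\gamma)$, and the bounds $\|(X-\eta_j)P_{j,\gamma}\| \le C'$, $\|P_{j,\gamma}(X-\eta_j)\| \le C'$. Throughout I would fix a reference point $\vec{a}$ and work with the tilted operator $P_{j,\gamma} = B_{\gamma,\vec{a}} P_j B_{\gamma,\vec{a}}^{-1}$, keeping in mind the convention that all estimates are uniform in $\vec{a}$. The key preliminary observation is that the integral kernel of $P_{j,\gamma}$ is
\[
  P_{j,\gamma}(\vec{x},\vec{x}') = e^{\gamma(\sqrt{1+|\vec{x}-\vec{a}|^2} - \sqrt{1+|\vec{x}'-\vec{a}|^2})} P_j(\vec{x},\vec{x}'),
\]
and since $|\sqrt{1+|\vec{x}-\vec{a}|^2} - \sqrt{1+|\vec{x}'-\vec{a}|^2}| \le |\vec{x}-\vec{x}'|$ by the reverse triangle inequality applied to the norm $\sqrt{1+|\cdot|^2}$ (or directly: the function $t \mapsto \sqrt{1+t^2}$ is $1$-Lipschitz composed with a $1$-Lipschitz norm), we get the pointwise bound $|P_{j,\gamma}(\vec{x},\vec{x}')| \le C e^{-(\gamma_{\max}-\gamma)|\vec{x}-\vec{x}'|}$ whenever $\gamma < \gamma_{\max}$, where $\gamma_{\max}$ is the decay rate from Proposition \ref{prop:pj-props}(1). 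So for any $\gamma \le \gamma'' := \gamma_{\max}/2$, the tilted projector $P_{j,\gamma}$ still has an exponentially localized kernel with uniform constants, and moreover its Schur/Young-convolution norm bound is uniform in $j$ and $\vec{a}$.

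Given that, the three estimates follow by short arguments. For (2), $[P_{j,\gamma},Y]$ has kernel $(x_2 - x_2')P_{j,\gamma}(\vec{x},\vec{x}')$, which is still exponentially localized (absorbing the polynomial factor $|x_2-x_2'| \le |\vec{x}-\vec{x}'|$ into half the exponential, exactly as in the proof of Lemma \ref{lem:p-est1}), so Young's convolution inequality gives $\|[P_{j,\gamma},Y]\| \le K_2''$ uniformly. For (1), write $P_{j,\gamma} - P_j = (B_{\gamma,\vec{a}} - I)P_j B_{\gamma,\vec{a}}^{-1} + P_j(I - B_{\gamma,\vec{a}}^{-1})$, or more cleanly estimate the kernel directly: $|P_{j,\gamma}(\vec{x},\vec{x}') - P_j(\vec{x},\vec{x}')| = |e^{\gamma s} - 1|\,|P_j(\vec{x},\vec{x}')|$ where $|s| \le |\vec{x}-\vec{x}'|$, and $|e^{\gamma s}-1| \le \gamma|s|e^{\gamma|s|} \le \gamma|\vec{x}-\vec{x}'|e^{\gamma|\vec{x}-\vec{x}'|}$; for $\gamma \le \gamma''$ the combination $\gamma|\vec{x}-\vec{x}'|e^{\gamma|\vec{x}-\vec{x}'|}Ce^{-\gamma_{\max}|\vec{x}-\vec{x}'|}$ is bounded by $\gamma$ times an exponentially localized kernel with uniform constants, and Young's inequality yields $\|P_{j,\gamma}-P_j\| \le K_1''\gamma$. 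For (3), given $\eta \in \sigma_j$ with $\diam(\sigma_j) \le D$, pick the reference point $\eta_j \in \sigma_j$ from Proposition \ref{prop:pj-props}(2); then $(X-\eta)P_{j,\gamma} = (X-\eta_j)P_{j,\gamma} + (\eta_j - \eta)P_{j,\gamma}$, the first term is bounded by $C'$, $|\eta_j - \eta| \le D$, and $\|P_{j,\gamma}\| \le \|P_j\| + \|P_{j,\gamma}-P_j\| \le 1 + K_1''\gamma'' =: c$ by (1), so $\|(X-\eta)P_{j,\gamma}\| \le C' + Dc =: K_3''$, and symmetrically for $\|P_{j,\gamma}(X-\eta)\|$.

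I do not expect any genuine obstacle here — the lemma is a packaging of routine consequences of Proposition \ref{prop:pj-props}, and the only mild subtlety is confirming uniformity in both $j$ and the suppressed base point $\vec{a}$, which is built into the tilting convention and the fact that the Young-inequality constants depend only on $\gamma_{\max}$ and $C$, not on $\vec{a}$ or $j$. The one point worth stating carefully is the Lipschitz estimate $|\sqrt{1+|\vec{x}-\vec{a}|^2} - \sqrt{1+|\vec{x}'-\vec{a}|^2}| \le |\vec{x}-\vec{x}'|$, since every kernel bound in the proof rests on it; I would either cite it as elementary or prove it in one line via $|\sqrt{1+p^2}-\sqrt{1+q^2}| = \frac{|p^2-q^2|}{\sqrt{1+p^2}+\sqrt{1+q^2}} = \frac{|p-q|(p+q)}{\sqrt{1+p^2}+\sqrt{1+q^2}} \le |p-q|$ with $p = |\vec{x}-\vec{a}|$, $q = |\vec{x}'-\vec{a}|$, followed by $|p - q| \le |\vec{x}-\vec{x}'|$.
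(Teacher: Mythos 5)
Your proof is correct and takes the same route the paper sketches: the paper's own proof of Lemma~\ref{lem:pj-bounds} simply says that (3) is verbatim part of Proposition~\ref{prop:pj-props}, while (1) and (2) follow from exponential localization of the kernel of $P_j$ by calculations parallel to Lemma~\ref{lem:p-est1}, which is exactly what you carry out via the Lipschitz bound on $\sqrt{1+|\cdot|^2}$, the pointwise estimate $|e^{\gamma s}-1|\le\gamma|s|e^{\gamma|s|}$, and Young's convolution inequality. The only superfluous step is your handling of (3): Proposition~\ref{prop:pj-props}(2) already asserts the bound \emph{for each} $\eta_j\in\sigma_j$ with a $j$-independent constant, so the decomposition $(X-\eta)P_{j,\gamma}=(X-\eta_j)P_{j,\gamma}+(\eta_j-\eta)P_{j,\gamma}$ and the appeal to $\diam(\sigma_j)\le D$ are unnecessary — though harmless.
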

\begin{proof}
Estimate (3) is part of Proposition \ref{prop:pj-props}, hence we only need to show the first two estimates. Both of these estimates follow from the fact that $P_j$ admits an exponentially localized kernel using calculations similar to those used to prove Lemma \ref{lem:p-est1}.
\end{proof}
With these estimates in hand, we are now ready to prove Lemma \ref{lem:exp_decay}. The overall strategy, which follows Nenciu-Nenciu \cite{1998NenciuNenciu}, is to manipulate the eigenvalue equation $P_j Y P_j v = \eta' v$
into the form
\begin{equation} \label{eq:manipulated}
    v = \mathcal{L} v
\end{equation}
for some operator $\mathcal{L}$, multiply both sides of \eqref{eq:manipulated} by $B_{\gamma,\vec{a}}$ for some $\vec{a}$, and then use properties of $\mathcal{L}$ to deduce that the left-hand side is bounded. Our proof differs from Nenciu-Nenciu \cite{1998NenciuNenciu} in important details.

Suppose that $P_j Y P_j$ has an eigenvector $v \in \range(P_j)$ with eigenvalue $\eta'$.  Since $v \in \range{(P_j)}$ we have that
\[
  P_j Y P_j v = \eta' v \Longleftrightarrow P_j Y P_j v = \eta' P_j v \Longleftrightarrow P_j (Y - \eta') P_j v = 0.
\]
Now, assuming for a moment that $O$ is an operator such that $P_j O P_j v \in \range(P_j)$, we can add $i P_j O P_j v$ to both sides of the above equation to find
\begin{equation}
  \label{eq:nenciu-ext-2d-eq1}
  P_j (Y - \eta' + iO) P_j v = i P_j O P_j v.
\end{equation}
The main difference between our proof and the proof of Nenciu-Nenciu lies in the choice of the operator $O$. For now, let's suppose that we have chosen $O$ so that $(Y - \eta' + iO)$ is invertible and multiply both sides of Equation \eqref{eq:nenciu-ext-2d-eq1} by $(Y - \eta' + iO)^{-1}$ to get
\begin{equation}
  \label{eq:nenciu-ext-2d-eq2}
  (Y - \eta' + iO)^{-1} P_j (Y - \eta' + iO) P_j v = i (Y - \eta' + i O)^{-1} P_j O P_j v.
\end{equation}
We can simplify the left hand side of Equation \eqref{eq:nenciu-ext-2d-eq2} by commuting $P_j$ and $(Y - \eta' + iO)$ as follows
\begin{align*}
  (Y - \eta & + i O)^{-1}  P_j (Y - \eta' + i O) P_j  \\
            & = P_j + (Y - \eta' + i O)^{-1} [P_j, Y - \eta' + i O] P_j \\
            & = \Big(I + (Y - \eta' + i O)^{-1} ( [P_j, Y] + i [ P_j, O ] ) \Big) P_j.
\end{align*}
Therefore, we can write Equation \eqref{eq:nenciu-ext-2d-eq2} as follows
\[
  \Big(I + (Y - \eta' + i O)^{-1} ( [P_j, Y] + i [ P_j, O ] ) \Big) P_j v = i (Y - \eta' + i O)^{-1} P_j O P_j v.
\]
To reduce the number of terms in the next steps, let's define
\[
  A := (Y - \eta' + i O)^{-1} ( [P_j, Y] + i [ P_j, O ] ).
\]
With this definition and using that $v \in \range{(P_j)}$ we have that
\begin{equation}
  \label{eq:nenciu-ext-2d-eq3}
  (I + A) v = i (Y - \eta' + i O)^{-1} P_j O v.
\end{equation}
For the next step of the proof, we will want to show that the $(I + A)$ has bounded inverse. To do this, it is enough to show that 
\begin{equation}
  \label{eq:nenciu-ext-2d-eq4}
  \vert{}\vert{}A\vert{}\vert{} = \| (Y - \eta' + i O)^{-1} ( [P_j, Y] + i [ P_j, O ] ) \| \leq \frac{3}{4}.
\end{equation}
For Equation \eqref{eq:nenciu-ext-2d-eq4} to hold, we require a particular choice of the operator $O$ which differs from the choice made in Nenciu-Nenciu \cite{1998NenciuNenciu}. We require that $O$ satisfies the following properties:
\begin{enumerate}
\item $O$ commutes with $B_{\gamma}$. 
\item $O$ contains a cutoff in both the $X$ and $Y$ directions.
\item $(Y - \eta' + iO)$ is invertible.
\end{enumerate}
For our proof, we let $b > 0$ be a constant to be chosen later and set $O$ to be the following operator:
\begin{equation} \label{eq:def_O}
  O = b \Pi^X_{[\eta - b, \eta + b]} \Pi^Y_{[\eta' - b, \eta' + b]} + \vert{}X - \eta\vert{}
\end{equation}
where $\Pi^{X}_I$ (resp. $\Pi^Y$) is a spectral projection for $X$ (resp. $Y$) onto the interval $I$, $\vert{}X - \eta\vert{}$ is the pointwise operator $\vert{}X - \eta\vert{} f(\vec{x}) := \vert{}x_1 - \eta\vert{} f(\vec{x})$, and $\eta \in \sigma_j$ is arbitrary. Using Lemma \ref{lem:pj-bounds}(3) with $\gamma = 0$, $P_j O P_j$ is clearly bounded.

Before continuing we make three important observations about this choice for $O$:
\begin{enumerate}
\item Since $X$ and $Y$ are self-adjoint and commute, the operator $\Pi^X_{[\eta - b, \eta + b]} \Pi^Y_{[\eta' - b, \eta' + b]}$ is an orthogonal projector.
\item Due to the properties of $P_j$ we know that for all $\gamma$ sufficiently small both $\| P_{j,\gamma} \vert{}X - \eta\vert{} \|$ and $\|  \vert{}X - \eta\vert{} P_{j,\gamma} \|$ are bounded.
\item For all $b > 0$, $\|(Y - \eta' + iO)^{-1}\| \leq b^{-1}$.
\end{enumerate}
In what follows, we will abbreviate $\Pi := \Pi^X_{[\eta - b, \eta + b]} \Pi^Y_{[\eta' - b, \eta' + b]}$.

The key trick which allows us to show that $\| A \| \leq \frac{3}{4}$ is the following lemma (see Corollary 8 of \cite{2008WangDu} for an independent proof, see also \cite{2007Kittaneh}):
\begin{lemma}
  \label{lem:pos-comm-bd}
  Let $B,C$ be two bounded operators. If $B$ is positive semidefinite then
  \[
    \|[B,C]\| \leq \|B\|  \|C\|.
  \]
  If both $B$ and $C$ are positive semidefinite then
  \[
    \|[B,C]\| \leq \frac{1}{2} \|B\|  \|C\|.
  \]
\end{lemma}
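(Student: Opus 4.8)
The plan is to exploit the fact that a commutator is unchanged when a scalar multiple of the identity is added to either of its arguments, and to combine this with positivity so that the crude triangle-inequality bound $\|[B,C]\| = \|BC-CB\| \le \|BC\| + \|CB\| \le 2\|B\|\,\|C\|$ can be improved by a factor of $2$. The mechanism is ``centering'': if $B \ge 0$ then $\sigma(B) \subseteq [0,\|B\|]$, an interval of length $\|B\|$, so after subtracting $\tfrac{\|B\|}{2}I$ the spectrum sits in $[-\tfrac{\|B\|}{2},\tfrac{\|B\|}{2}]$ and the operator norm drops to at most $\tfrac{\|B\|}{2}$.

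First I would dispose of the case $B \ge 0$. Put $\beta := \|B\|$. Since $B$ is self-adjoint its norm equals its spectral radius, so $\sigma(B) \subseteq [0,\beta]$ and the functional calculus gives $\|B - \tfrac{\beta}{2}I\| \le \tfrac{\beta}{2}$. Because $[B - \tfrac{\beta}{2}I,\,C] = [B,C]$, the triangle inequality yields
\[
\|[B,C]\| = \big\|(B - \tfrac{\beta}{2}I)C - C(B - \tfrac{\beta}{2}I)\big\| \le \|(B - \tfrac{\beta}{2}I)C\| + \|C(B - \tfrac{\beta}{2}I)\| \le 2\cdot\tfrac{\beta}{2}\,\|C\| = \|B\|\,\|C\|,
\]
which is the first assertion.

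For the second assertion, with $C \ge 0$ as well, I would center \emph{both} operators. Writing $\beta := \|B\|$, $\gamma := \|C\|$, and $B' := B - \tfrac{\beta}{2}I$, $C' := C - \tfrac{\gamma}{2}I$, the same spectral argument gives $\|B'\| \le \tfrac{\beta}{2}$ and $\|C'\| \le \tfrac{\gamma}{2}$, while $[B',C'] = [B,C]$. Hence
\[
\|[B,C]\| = \|B'C' - C'B'\| \le \|B'C'\| + \|C'B'\| \le 2\,\|B'\|\,\|C'\| \le 2\cdot\tfrac{\beta}{2}\cdot\tfrac{\gamma}{2} = \tfrac{1}{2}\,\|B\|\,\|C\|.
\]

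I do not expect any genuine obstacle here: the only substantive point is the elementary spectral fact that $\sigma(B)\subseteq[0,\|B\|]$ forces $\|B - \tfrac{\|B\|}{2}I\|\le\tfrac{\|B\|}{2}$, and this is exactly where positivity of $B$ (rather than mere boundedness) enters — a general bounded $B$ has spectrum in a disk of radius $\|B\|$, not an interval of length $\|B\|$, and the trick fails. If desired, one can check on $2\times 2$ matrices (e.g. $B = \mathrm{diag}(1,0)$ against a suitable off-diagonal or rank-one positive $C$) that both constants $1$ and $\tfrac12$ are attained, confirming sharpness, but this is not needed for the statement.
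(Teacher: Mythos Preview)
Your proof is correct and follows essentially the same approach as the paper: center each positive operator by subtracting $\tfrac{1}{2}\|\cdot\|I$ to halve its norm, use that commutators are unchanged by adding scalars, and apply the triangle inequality. The paper's argument is identical in structure and detail.
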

\begin{proof}
  Suppose that $B$ is positive semidefinite and define $\tilde{B} := B - \frac{1}{2}\|B\| I$. Since $B$ is a positive semidefinite operator its spectrum lies in the range $[0, \|B\|]$. Therefore, the spectrum of $\tilde{B}$ lies in the range $[-\frac{1}{2} \|B\|, \frac{1}{2} \|B\vert{}]$ and hence $\|\tilde{B}\| = \frac{1}{2} \|B\|$. Since the identity commutes with every operator we have
  \[
    \|[B, C]\| = \|[\tilde{B}, C]\| = \|\tilde{B} C - C \tilde{B}\| \leq 2 \|\tilde{B}\|  \|C\| = \|B\|  \|C\|.
  \]
  If $C$ is also positive semidefinite then we can repeat the same argument using $\tilde{C} := C - \frac{1}{2}\|C\| I$ as well to get $\|[B,C]\| \leq \frac{1}{2} \|B\|  \|C\|$.
\end{proof}
We can now prove that, for $b$ sufficiently large, $\| A \| \leq \frac{3}{4}$. The following calculations are clear:
\begin{align*}
  \|A\| = \| (Y & - \lambda + iO)^{-1} [P_j, Y + iO]\| \\
                & \leq \| (Y - \lambda + iO)^{-1} \|  \Big( \|[P_j, Y]\| + \|[P_j, O]\| \Big) \\
                & \leq b^{-1} \Big( \| [P_j, Y] \| + b \|[P_j, \Pi]\| + \|[P_j, \vert{}X - \eta\vert{} ]\| \Big) \\
                &  \leq \|[P_j, \Pi]\| + b^{-1} \Big( \| [P_j, Y] \| + \|P_j\vert{}X - \eta\vert{} \| + \|\vert{}X - \eta\vert{} P_j\| \Big).
\end{align*}
Since $P_j$ and $\Pi$ are both orthogonal projectors we can apply Lemma \ref{lem:pos-comm-bd} to conclude that $\|[P_j, \Pi]\| \leq \frac{1}{2}$. It now follows that 
\[
  \|A\| \leq \frac{1}{2} + b^{-1} \Big( \| [P_j, Y] \| + \|P_j\vert{}X - \eta\vert{} \| + \|\vert{}X - \eta\vert{} P_j\| \Big),
\]
and if we choose $b$ so that
\[
  b > 4 \Big(\| [P_j, Y] \| + \|P_j\vert{}X - \eta\vert{} \| + \|\vert{}X - \eta\vert{} P_j\| \Big),
\]
we have that 
\begin{equation*}
    \|A\| \leq \frac{3}{4}.
\end{equation*}
Note that because of our estimates from Lemma \ref{lem:pj-bounds} we know that we can choose $b$ so that $b < \infty$.

Returning to Equation \eqref{eq:nenciu-ext-2d-eq3}, we now know that we can invert $(I + A)$ and get
\begin{align*}
  (I + A) v  & = i (Y - \eta' + i O)^{-1} P_j O v \\
  v & = i (I + A)^{-1} (Y - \eta' + i O)^{-1} P_j O v \numberthis{} \label{eq:nenciu-ext-2d-eq5}
\end{align*}
To reduce the number of terms in the next steps, let's define
\[
  C := (I + A)^{-1} (Y - \eta' + i O)^{-1}.
\]
With this definition Equation \eqref{eq:nenciu-ext-2d-eq5} becomes
\[
  v = i C P_j O v.
\]
Recalling that we chose $O := b \Pi + \vert{}X - \eta\vert{}$, we have that
\begin{align*}
  v & = i C P_j O v \\
    & = i C P_j (b\Pi + \vert{}X - \eta\vert{}) v \\
    & = i b C P_j \Pi v + i C P_j \vert{}X - \eta\vert{} v \\
  ( I - i C P_j \vert{}X - \eta\vert{} ) v & = ib C P_j \Pi v. \numberthis{} \label{eq:nenciu-ext-2d-eq6}
\end{align*}
Similar to before, we would like to invert the operator $( I - i C P_j \vert{}X - \eta\vert{} )$. Recall that if
\[
  b > 4 \Big(\| [P_j, Y] \| + \|P_j\vert{}X - \eta\vert{} \| + \|\vert{}X - \eta\vert{} P_j\| \Big),
\]
then $\| A \| \leq \frac{3}{4}$ so we have that
\begin{align*}
  \| C \| & = \| (I + A)^{-1} (Y - \eta' + i O)^{-1} \| \\
          & \leq \| (I + A)^{-1} \|  \| (Y - \eta' + i O)^{-1} \| \\
          & \leq 4 b^{-1}.
\end{align*}
Therefore,
\[
  \| i C P_j \vert{}X - \eta\vert{} \| \leq 4 b^{-1} \| P_j \vert{}X - \eta\vert{} \|.
\]
Since we have chosen $b > 4 \| P_j \vert{}X - \eta\vert{} \|$, the operator $( I - i C P_j \vert{}X - \eta\vert{} )$ is invertible.

Using this fact allows us to rewrite Equation \eqref{eq:nenciu-ext-2d-eq6} as
\begin{equation}
  \label{eq:nenciu-ext-2d-eq7}
  v = i b \big( I - i C P_j \vert{}X - \eta\vert{} \big)^{-1} C P_j \Pi v.
\end{equation}
After all of these algebraic steps, we have been able to derive an expression for $v$ as the product of a bounded operator and a cutoff function. The final step in this argument will be to multiply both sides of Equation \eqref{eq:nenciu-ext-2d-eq7} by the exponential growth operator $B_{\gamma}$ and show that the result is bounded. The inclusion of the cutoff function is what makes it possible to control this multiplication because $B_{\gamma} \Pi$ is bounded.

At least formally, we can multiply both sides of Equation \eqref{eq:nenciu-ext-2d-eq7} by $B_{\gamma}$ and insert copies of $B_{\gamma}^{-1} B_{\gamma}$ to get
\begin{align*}
  B_{\gamma} v & = ib B_{\gamma} \big(I - i C P_j \vert{} X - \eta \vert{} \big)^{-1} C P_j \Pi v \\[.5ex]
               & = ib B_{\gamma} \big(I - i C P_j \vert{} X - \eta \vert{} \big)^{-1} \Big(B_{\gamma}^{-1} B_{\gamma}\Big) C \Big( B_\gamma^{-1} B_\gamma \Big) P_j \Big( B_{\gamma}^{-1} B_{\gamma} \Big) \Pi v \\[.5ex]
               & = ib \bigg[ ( I - i C_{\gamma} P_{j,\gamma} \vert{}X - \eta\vert{} )^{-1} \bigg] \bigg[ C_{\gamma} P_{j,\gamma} \bigg] \bigg[B_{\gamma} \Pi\bigg] v,
\end{align*}
where we have used our convention for exponentially tilted operators $P_{j,\gamma} := B_{\gamma} P_j B_{\gamma}^{-1}$ and similarly for $C$. We will now show each of the bracketed terms are bounded.

The easiest term to bound is the last term, $B_{\gamma} \Pi$. Let's recall the definition of $B_{\gamma}$:
\[
  B_{\gamma, (a, b)} = \exp{\Big( \gamma \sqrt{1 + (X - a)^2 + (Y - b)^2} \Big)}.
\]
While we have ignored the center point $(a, b)$ thus far in the argument, here we will explicitly choose $(a, b) = (\eta, \eta')$. Since $\Pi = \Pi^X_{[\eta - b, \eta + b]} \Pi^Y_{[\eta' - b, \eta' + b]}$ we clearly have:
\[
  \| B_{\gamma, (\eta, \eta')} \Pi \| \leq e^{\gamma \sqrt{1 + 2b^2}}.
\]

To show that the first two terms are bounded we will show that, for $\gamma$ sufficiently small, $\|C_{\gamma}\| = O(b^{-1})$. Once we show this, the second term, $C_{\gamma} P_{j,\gamma}$, is clearly bounded and we may pick $b$ sufficiently large so that the first term is also bounded.

By definition, we have:
\begin{align*}
  C_{\gamma} & = B_{\gamma} ( I + A )^{-1}  (Y - \eta' + i O)^{-1} B_{\gamma}^{-1} \\
             & = B_{\gamma} \Big( I + (Y - \eta' + iO)^{-1} [P_j, Y + iO] \Big)^{-1}  (Y - \eta' + i O)^{-1} B_{\gamma}^{-1} \\
             & = \Big( I + (Y - \eta' + iO)^{-1} [P_{j,\gamma}, Y + iO] \Big)^{-1} (Y - \eta' + i O)^{-1} .
\end{align*}
For the above calculations to make sense, it suffices to show that
\[
  \| (Y - \eta' + iO)^{-1} [P_{j,\gamma}, Y + iO] \| \leq \frac{3}{4}.
\]
Since $\|P_{j,\gamma} - P_j\| \leq K''_1 \gamma$ we have that:
\begin{align*}
  \| (Y -& \eta' + iO)^{-1}  [P_{j,\gamma}, Y + iO] \|   \\
         & \leq \| (Y - \eta' + iO)^{-1} \| \Big( \| [ P_{j, \gamma}, Y] \| + b \| [ P_{j,\gamma}, \Pi ] \| + \| [ P_{j,\gamma}, \vert{}X - \eta\vert{} ] \| \Big) \\
         & \leq  b^{-1} \Big( \| [ P_{j, \gamma}, Y] \| + b \| [ P_{j,\gamma} - P + P, \Pi ] \| + \| [ P_{j,\gamma}, \vert{}X - \eta\vert{} ] \| \Big) \\
         & \leq  b^{-1} \Big( \| [ P_{j, \gamma}, Y] \| + b \| [ P_{j,\gamma} - P, \Pi ] \| + b \| [ P, \Pi ] \| + \| [ P_{j,\gamma}, \vert{}X - \eta\vert{} ] \| \Big) \\
         & \leq  b^{-1} \bigg( \frac{1}{2} b + b \|P_{j,\gamma} - P\|  +  \| [ P_{j, \gamma}, Y] \| + \| P_{j,\gamma} \vert{} X - \eta\vert{} \| + \|  \vert{} X - \eta\vert{} P_{j,\gamma}\|  \bigg) \\
         & \leq \frac{1}{2} + K''_1 \gamma + b^{-1} \Big( \| [ P_{j, \gamma}, Y] \| + \| P_{j,\gamma} \vert{} X - \eta\vert{} \| + \|  \vert{} X - \eta\vert{} P_{j,\gamma}\|  \Big).
\end{align*}
Therefore, if we pick $\gamma \leq (8 K''_1)^{-1}$ and
\[
  b \geq 8 \Big( \| [ P_{j, \gamma}, Y] \| + \| P_{j,\gamma} \vert{} X - \eta\vert{} \| + \|  \vert{} X - \eta\vert{} P_{j,\gamma}\|  \Big),
\]
we have that
\[
  \| P_{j,\gamma} (Y - \eta' + iO)^{-1} [P_{j,\gamma}, Y + iO] \| \leq \frac{3}{4}.
\]
Therefore, for these choices of $b$ and $\gamma$ we have that:
\begin{align*}
  \|C_{\gamma}\| & \leq \|\Big( I + P_{j,\gamma} (Y - \eta' + iO)^{-1} [P_{j,\gamma}, Y + iO] \Big)^{-1} \| \| (Y - \eta' + i O)^{-1} \| \\
                   & \leq 4 b^{-1}.
\end{align*}
Now recall the original equation we wanted to bound:
\[
  B_{\gamma} v = i \bigg[ (I - i  C_{\gamma} P_{j,\gamma} \vert{} X - \eta \vert{} )^{-1} \bigg] \bigg[ b C_{\gamma} P_{j,\gamma} \bigg] \bigg[ B_{\gamma} \Pi \bigg] v.
\]
Since $b \geq 8 \| P_{j,\gamma} \vert{} X - \eta\vert{} \|$ we have that
\[
  \|C_{\gamma} P_{j,\gamma} \vert{} X - \eta \vert{} \| \leq 4 b^{-1} \| P_{j,\gamma} \vert{} X - \eta\vert{} \| \leq \frac{1}{2}.
\]
Therefore,
\[
  \|(I - i  C_{\gamma} P_{j,\gamma} \vert{} X - \eta \vert{} )^{-1} \| \leq 2,
\]
so combining all of our bounds together gives:
\begin{align*}
  \|B_{\gamma, (\eta, \eta')} v\| & \leq \Big[2\Big] \Big[ 4(1 + K''_2 \gamma) \Big] \Big[ e^{\gamma\sqrt{1 + 2 b^2}} \Big] \\
                                  & \leq 16 e^{\gamma\sqrt{1 + 2 b^2}},
\end{align*}
so long as
\[
  \begin{array}{l}
    \gamma \leq (8 K''_2)^{-1} \\[1ex]
    b \geq 8 \Big( \| [ P_{j, \gamma}, Y] \| + \| P_{j,\gamma} \vert{} X - \eta\vert{} \| + \| \vert{} X - \eta\vert{} P_{j,\gamma}\|  \Big).
  \end{array}
\]
This proves Lemma \ref{lem:exp_decay}.

\section{Extensions to three dimensions and higher} \label{sec:higher-d}
The proof of Theorem \ref{th:main_theorem} generalizes to arbitrarily high dimensions under appropriate generalizations of the uniform spectral gaps assumption (Assumption \ref{def:usg}) by an inductive procedure.

Assume regularity and spectral gap assumptions analogous to Assumptions \ref{as:H_assump} and \ref{as:gap_assump}, and consider position operators $X$, $Y$, and $Z$ acting on $\mathcal{H}: = L^2(\field{R}^3)$ along directions corresponding to a three-dimensional basis. Let $P$ be the Fermi projection, and consider the operator $PXP$. Assume $PXP$ has uniform spectral gaps in the sense of Assumption \ref{def:usg}, and let $P_j$ denote spectral projections onto each of the separated components of the spectrum of $PXP$. By repeating the proof of Proposition \ref{prop:pj-props}, we can show that each $P_j$ admits an exponentially-localized kernel and is localized in $X$. Because of these properties, by assuming that $PXP$ has uniform spectral gaps, we have reduced the original three dimensional system defined by $P$, to an infinite family of quasi-two dimensional systems defined by the projectors $\{ P_j \}_{j \in \cJ}$.

If we additionally assume the operators $P_j Y P_j$ \emph{also} have uniform spectral gaps in the sense of Assumption \ref{def:usg}, and let $P_{j,k}$ denote spectral projections onto each of the separated components of the spectrum of $P_j Y P_j$. By analogous reasoning to the two dimensional case, functions in $\range(P_{j,k})$ are quasi-one dimensional in the sense that they decay away from lines $x = c_1$, $y = c_2$ for constants $c_1, c_2$. We therefore claim that the set of eigenfunctions of the operator $P_{j,k} Z P_{j,k}$ will form an exponentially localized basis of $\range(P_{j,k})$ for each $j, k$, and that the union of all of these eigenfunctions over $j$ and $k$ will form an exponentially-localized basis of $\range(P)$.

To make the above sketch rigorous, there are a few important steps in the proof which must be checked. We will discuss each step in turn.

\subsubsection*{Proving bounds on $P_{j,k}$} First, we must prove that $P_{j,k}$ satisfies estimates analogous to Proposition \ref{prop:pj-props}. To do this, note that when we proved Proposition \ref{prop:pj-props} we only used that $P$ has an exponentially localized kernel and $PXP$ has uniform spectral gaps. Since we have shown that $P_j$ has an exponentially localized kernel, it follows that under a uniform gap assumption on $P_j Y P_j$ we can prove that $P_{j,k}$ has an exponentially localized kernel and is localized in both $X$ and $Y$.

\subsubsection*{Proving $P_{j,k} Z P_{j,k}$ has compact resolvent} To prove $P_{j,k} Z P_{j,k}$ has compact resolvent, mimicking the calculations preceding \eqref{eq:pj_yinv_pj}, it is sufficient to prove that $P_{j,k} (Z + i \mu)^{-1} P_{j,k}$ is compact for each $j, k$. We will first show how a na\"ive generalization of the proof that $P_j Y P_j$ is compact in two dimensions fails, and then present a correct generalization. Just as in equation \eqref{eq:pj_yinv_pj}, we can write
\begin{equation} \label{eq:pjk_zinv}
\begin{split}
  P_{j,k} (Z \pm i \mu)^{-1} P_{j,k}
       = &\left[ P_{j,k} (Z \pm i \mu)^{-1} (X - \eta_j + i)^{-1} (Y - \eta_k + i)^{-1} \right] \\
       &\times \left[ (Y - \eta_k + i) (X - \eta_j + i) P_{j,k} \right] 
\end{split}
\end{equation}
where $\eta_j \in \sigma_j$, where $\sigma_j$ is the $j$th separated component of $\sigma(PXP)$, and $\eta_k \in \sigma_{j,k}$, where $\sigma_k$ is the $k$th separated component of $\sigma(P_j Y P_j)$. To prove $P_{j,k} (Z + i \mu )^{-1} P_{j,k}$ is compact, we must prove that the first term in \eqref{eq:pjk_zinv} is compact, while the second term is bounded. That the first term is compact follows from essentially the same argument as given in Section \ref{sec:pjypj-disc-spec}. Unfortunately, it is unclear if the last term
\begin{equation*}
    (Y - \eta_k + i) (X - \eta_j + i) P_{j,k} 
\end{equation*}
is bounded. The trick is to write, instead of \eqref{eq:pjk_zinv},
\begin{equation} \label{eq:pjk_zinv_2}
\begin{split}
  P_{j,k} (Z \pm i \mu)^{-1} P_{j,k}
       = &\left[ P_{j,k} (Z \pm i \mu)^{-1} (X - \eta_j + i)^{-1/2} (Y - \eta_k + i)^{-1/2} \right] \\
       &\times \left[ (Y - \eta_k + i)^{1/2} (X - \eta_j + i)^{1/2} P_{j,k} \right].
\end{split}
\end{equation}
That the first term of \eqref{eq:pjk_zinv_2} is compact follows from an almost identical argument as given in Section \ref{sec:pjypj-disc-spec}. We now show that the second term of \eqref{eq:pjk_zinv_2} is bounded. Note that if $f \notin \range(P_{j,k})$ then the operator acting on $f$ is clearly bounded. So let $f \in \range(P_{j,k})$. Then using the fact that the geometric mean is bounded by the arithmetic mean, we have that
\begin{equation*}
\begin{split}
   &\left\| (Y - \eta_k + i)^{1/2} (X - \eta_j + i)^{1/2} P_{j,k} f \right\|^2 \\
   &\leq \frac{1}{2} \left( \left\| (Y - \eta_k + i) P_{j,k} f \right\|^2 + \left\| (X - \eta_j + i) P_{j,k} f \right\|^2 \right).
\end{split}
\end{equation*}
The first term is bounded since $P_{j,k}$ is the projection onto a bounded subset of the spectrum of $P_j Y P_j$. The second term is bounded since $P_{j,k} = P_j P_{j,k}$ and $(X + \eta_j + i) P_j$ is bounded since $P_j$ is the projection onto a bounded subset of the spectrum of $P X P$.

\subsubsection*{Proving exponential localization of eigenfunctions of $P_{j,k} Z P_{j,k}$} The generalization of the proof of Lemma \ref{lem:exp_decay} to three dimensions is straightforward once we prove that $P_{j,k}$ admits an exponentially localized kernel. The only modification necessary is that in three dimensions the choice of the operator $O$ \eqref{eq:def_O} must be changed to
\begin{equation*}
    O = b \Pi^X_{[\eta_j - b, \eta_j + b]} \Pi^Y_{[\eta_{j,k} - b, \eta_{j,k} + b]} \Pi^Z_{[\eta_{j,k,l} - b, \eta_{j,k,l} + b]} + \vert{}X - \eta_j\vert{} + \vert{}Y - \eta_{j,k} \vert{},
\end{equation*}
where $\eta_j \in \sigma_j$, the $j$th component of $\sigma(PXP)$, $\eta_{j,k} \in \sigma_{j,k}$, the $k$th component of $\sigma(P_j Y P_j)$, and $\eta_{j,k,l}$ denotes the $l$th eigenvalue of $P_{j,k} Z P_{j,k}$.

\section{Generalization to discrete models} \label{sec:discrete}

In this section we sketch the generalization of our proofs to discrete models whose entries decay exponentially away from the diagonal. We give details for the case of models in $\ell^2(\field{Z}^2)$ when the sites are arranged on a square lattice with lattice constant $1$, but further generalizations are straightforward; see Remark \ref{rem:discrete_generalizations}. In fact, the generalization to discrete models is very simple because of the operator-theoretic structure of our proofs. Our proofs only use the specific form of the Hamiltonian in order to prove that the Fermi projection admits an exponentially localized kernel (Property \eqref{eq:exp_loc_proj_property}). For discrete space, this property is equivalent to exponential decay of the entries of the Fermi projection operator away from the diagonal. Hence our results generalize easily using the following lemma. 
\begin{lemma} \label{lem:discrete_lemma}
  For each $\vec{\lambda} = (\lambda_1, \lambda_2) \in \field{Z}^2$, let $e_{\vec{\lambda}} \in \ell^2(\field{Z}^2)$ denote a joint eigenvector of the position operators $X$ and $Y$ with eigenvalue $\lambda_1$ and $\lambda_2$ respectively:
    \begin{equation} \label{eq:XY_coords_discrete}
    X e_{\vec{\lambda}} = \lambda_1 e_{\vec{\lambda}} \qquad Y e_{\vec{\lambda}} = \lambda_2 e_{\vec{\lambda}}.
    \end{equation} 
  Furthermore, let $\vert{} \cdot \vert{}$ denote the Euclidean 2-norm on $\field{Z}^2$. That is, $\vert{} \vec{\lambda} \vert{} := \sqrt{\lambda_1^2 + \lambda_2^2}$.

  Next, let $H$ be a self-adjoint operator on $\ell^2(\field{Z}^2)$ with a spectral gap containing the Fermi level and $P$ be the Fermi projection. Suppose further that for some finite, positive constants $(\gamma', C)$ and for any $\vec{\lambda}, \vec{\mu} \in \field{Z}^2$:
    \begin{equation} \label{eq:discrete_decay}
    \vert{}\la e_{\vec{\lambda}}, H e_{\vec{\mu}} \ra\vert{} \leq C e^{-\gamma' \vert{} \vec{\lambda} - \vec{\mu} \vert{}}.
    \end{equation}
  Under these assumptions, there exist finite, positive constants $(\gamma'', K)$ depending only on $H$ so that for all $\gamma \leq \gamma''$:
\begin{equation}
    \vert{} \la e_{\vec{\lambda}}, P e_{\vec{\mu}} \ra \vert{} \leq K e^{- \gamma \vert{} \vec{\lambda} - \vec{\mu} \vert{}}.
\end{equation}
\end{lemma}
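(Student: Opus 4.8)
The plan is to obtain the decay of $P$ from a Combes--Thomas estimate on the resolvent $(z-H)^{-1}$ together with the Riesz projection formula. I would first observe that hypothesis \eqref{eq:discrete_decay} already forces $H$ to be \emph{bounded}: summing $|\la e_{\vec{\lambda}}, H e_{\vec{\mu}} \ra| \le C e^{-\gamma'|\vec{\lambda}-\vec{\mu}|}$ over $\vec{\mu}$ gives a bound uniform in $\vec{\lambda}$, so Schur's test applies. Hence $\sigma(H)$ is compact, the part $\sigma_0$ of $\sigma(H)$ below the gap is bounded, and there is a rectifiable contour $\Gamma$ of finite length $\ell(\Gamma)$ enclosing $\sigma_0$, separated from $\sigma(H)$ by a fixed distance $\delta := \tfrac14 \dist(\sigma_0,\sigma_1) > 0$, with $P = \frac{1}{2\pi i}\oint_{\Gamma}(z-H)^{-1}\,\dee z$ and $\|(z-H)^{-1}\| \le \delta^{-1}$ on $\Gamma$.

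Next I would carry out the Combes--Thomas conjugation. Fix $\vec{\mu} \in \field{Z}^2$, and for $n \in \field{N}$ let $\varphi_n(\vec{\lambda}) := \min\{|\vec{\lambda}-\vec{\mu}|,\, n\}$, a bounded, $1$-Lipschitz weight with $\varphi_n(\vec{\mu}) = 0$. For $\alpha > 0$ let $W_\alpha := e^{\alpha \varphi_n}$, a bounded and boundedly invertible multiplication operator on $\ell^2(\field{Z}^2)$, and set $H_\alpha := W_\alpha H W_\alpha^{-1}$, which has matrix entries $e^{\alpha(\varphi_n(\vec{\lambda}) - \varphi_n(\vec{\lambda}'))}\la e_{\vec{\lambda}}, H e_{\vec{\lambda}'}\ra$. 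Using $|\varphi_n(\vec{\lambda}) - \varphi_n(\vec{\lambda}')| \le |\vec{\lambda}-\vec{\lambda}'|$, the elementary inequality $e^t - 1 \le t e^t$, and \eqref{eq:discrete_decay}, a Schur-test computation shows $\|H_\alpha - H\| \le C'\alpha$, with $C'$ depending only on $(C,\gamma')$, uniformly in $n$ and $\vec{\mu}$, provided $\alpha \le \gamma'/2$. Setting $\gamma'' := \min\{\gamma'/2,\ \delta/(2C')\}$, a Neumann series then shows that $z - H_{\gamma''} = (z - H) - (H_{\gamma''} - H)$ is invertible with $\|(z - H_{\gamma''})^{-1}\| \le 2\delta^{-1}$ for every $z \in \Gamma$. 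Since $W_{\gamma''}(z - H)^{-1}W_{\gamma''}^{-1} = (z - H_{\gamma''})^{-1}$, evaluating in the position basis and using $\varphi_n(\vec{\mu}) = 0$ gives
\[
  \la e_{\vec{\lambda}}, (z-H)^{-1} e_{\vec{\mu}} \ra = e^{-\gamma'' \varphi_n(\vec{\lambda})}\, \la e_{\vec{\lambda}}, (z - H_{\gamma''})^{-1} e_{\vec{\mu}} \ra ,
\]
so $|\la e_{\vec{\lambda}}, (z-H)^{-1} e_{\vec{\mu}} \ra| \le 2\delta^{-1} e^{-\gamma'' \min\{|\vec{\lambda}-\vec{\mu}|,\,n\}}$; letting $n \to \infty$ yields $|\la e_{\vec{\lambda}}, (z-H)^{-1} e_{\vec{\mu}} \ra| \le 2\delta^{-1} e^{-\gamma''|\vec{\lambda}-\vec{\mu}|}$ for all $z \in \Gamma$.

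Finally I would integrate over $\Gamma$:
\[
  |\la e_{\vec{\lambda}}, P e_{\vec{\mu}} \ra| \le \frac{1}{2\pi}\oint_{\Gamma} |\la e_{\vec{\lambda}}, (z-H)^{-1} e_{\vec{\mu}} \ra|\,|\dee z| \le \frac{\ell(\Gamma)}{\pi\delta}\, e^{-\gamma''|\vec{\lambda}-\vec{\mu}|} =: K\, e^{-\gamma''|\vec{\lambda}-\vec{\mu}|},
\]
and since decay at rate $\gamma''$ implies decay at any rate $\gamma \le \gamma''$, the lemma follows. I expect the Combes--Thomas operator-norm bound $\|H_\alpha - H\| \le C'\alpha$ to be the main obstacle — it is the one place where the exponential-decay hypothesis \eqref{eq:discrete_decay} is actually used — and the truncation $\varphi_n$ is introduced precisely so that $W_\alpha$, $H_\alpha$, and the conjugation identity $W_{\gamma''}(z-H)^{-1}W_{\gamma''}^{-1} = (z - H_{\gamma''})^{-1}$ are all unambiguously defined before passing to the limit $n \to \infty$.
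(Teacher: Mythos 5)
Your proposal is correct and follows essentially the same route as the paper: a Combes--Thomas estimate obtained by conjugating $H$ with an exponential weight centered at $\vec{\mu}$, a Schur-test bound $\|H_\gamma - H\| = O(\gamma)$ using the inequality $e^t-1 \le te^t$ together with the decay hypothesis, a Neumann-series inversion of the tilted resolvent, and finally the Riesz projection formula. The only cosmetic difference is that you use a truncated $1$-Lipschitz weight $\varphi_n$ and pass to the limit $n\to\infty$ — a slightly more careful handling of the unbounded conjugator than the paper's direct use of $B_{\gamma,\vec{a}}$, but not a different argument.
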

\begin{remark} \label{rem:discrete_generalizations}
    Equation \eqref{eq:XY_coords_discrete} assumes that the sites of the discrete model are arranged in a square lattice with lattice constant $1$. For sites arranged in general Bravais lattices, we replace $\vec{\lambda}$ and $\vec{\mu}$ everywhere in the lemma by the real space co-ordinates of the relevant sites. We can deal similarly with multilattices, or systems with internal degrees of freedom (where the model Hilbert space is $\ell^2(\field{Z}^2;\field{C}^N)$ for some positive integer $N$). We can replace $\field{Z}^2$ by arbitrary point sets $\mathcal{D} \subset \field{R}^2$ similarly, as long as $\mathcal{D}$ has no accumulation points. If $\mathcal{D}$ has accumulation points, the decay assumption \eqref{eq:discrete_decay} no longer guarantees rapid decay of the matrix elements of $H$ away from the diagonal. In this case, estimate \eqref{eq:disc_breaks} fails. 
\end{remark}
\begin{proof}[Proof of Lemma \ref{lem:discrete_lemma}]

To prove this result, it suffices to show that there exist a constant $C$ so that $\| P_{\gamma} \|  \leq C$. To see why, observe that by the definition of the spectral norm we have that:
\[
\| P_{\gamma} \| = \sup_{\| f \|, \| g \| = 1} \vert{} \la f, P_{\gamma} g \ra\vert{}
\]
Hence, choosing $f =  e_{\vec{\lambda}}$ and $g =  e_{\vec{\mu}}$ we have that for all $\vec{a} \in \field{R}^2$:
\begin{align*}
& \vert{} \la e_{\vec{\lambda}}, B_{\gamma,\vec{a}} P B_{\gamma,\vec{a}}^{-1} e_{\vec{\mu}} \ra\vert{} \leq C \\
& \Longrightarrow  e^{\gamma \la \vec{\lambda} - \vec{a} \ra} e^{-\gamma \la \vec{\mu} - \vec{a} \ra} \vert{} \la e_{\vec{\lambda}}, P e_{\vec{\mu}} \ra\vert{} \leq C
\end{align*}
Since the choice of $\vec{a}$ is arbitrary, we can pick $\vec{a} = \vec{\mu}$ to conclude that
\[
 \vert{} \la e_{\vec{\lambda}}, P e_{\vec{\mu}} \ra\vert{} \leq C e^{-\gamma \la \vec{\lambda} - \vec{\mu} \ra}  \leq C' e^{-\gamma \vert{} \vec{\lambda} - \vec{\mu} \vert{}} 
\]
which completes the proof.

To show that $\| P_{\gamma} \| \leq C$, we use the Riesz formula and observe that
\begin{align*}
P_{\gamma}
& = \frac{1}{2 \pi i} \int_{\cC} B_{\gamma} (\lambda - H)^{-1} B_{\gamma}^{-1} = \frac{1}{2 \pi i} \int_{\cC} (\lambda - H_{\gamma})^{-1} 
\end{align*}
Hence
\[
\| P_{\gamma} \| \leq \frac{\ell(\cC)}{2 \pi} \left( \sup_{\lambda \in \cC} \| (\lambda - H_{\gamma})^{-1}  \| \right)
\]
and it so suffices to show that $(\lambda - H_{\gamma})^{-1}$ is a bounded operator for all $\lambda \in \cC$. We observe that
\begin{align*}
(\lambda - H_{\gamma})^{-1}
& = (\lambda - H + H - H_{\gamma})^{-1} \\
& = (\lambda - H)^{-1} \Big(I - (H_{\gamma} - H) (\lambda - H)^{-1} \Big)^{-1}
\end{align*}
So since $(\lambda - H)^{-1}$ is bounded by a constant for all $\lambda \in \cC$, if we can show that $\| H_{\gamma} - H \| = O(\gamma)$ then we can pick $\gamma$ sufficiently small so that $\| (H_{\gamma} - H) (\lambda - H)^{-1} \| \leq \frac{1}{2}$ for $\lambda \in \cC$. This in turn implies $ \sup_{\lambda \in \cC} \| (\lambda - H_{\gamma})^{-1} \|$ is bounded which completes the proof.

We now show that $\| H_{\gamma} - H \| = O(\gamma)$. Since $\{ e_{\vec{\lambda}} \}_{\vec{\lambda} \in \field{Z}^2}$ forms an orthonormal basis for $\ell^2(\field{Z}^2)$, we can expand the operator $H_{\gamma} - H$ in terms of this basis. In particular,
\[
H_{\gamma} - H = \sum_{\vec{\lambda} \in \field{Z}^2} \sum_{\vec{\mu} \in \field{Z}^2}  \Big( e^{\gamma \vert{} \vec{\lambda} - \vec{a} \vert{}} e^{-\gamma \vert{}\vec{\mu} - \vec{a} \vert{}} - 1 \Big)     \la e_{\vec{\lambda}}, H e_{\vec{\mu}} \ra \ket{e_{\vec{\mu}}} \bra{e_{\vec{\lambda}}}
\]
We will now use Schur's test to show that $\| H_{\gamma} - H \| = O(\gamma)$. We define
\[
\begin{split}
& \alpha := \sup_{\lambda} \sum_{\vec{\mu} \in \field{Z}^2}  \vert{} e^{\gamma \vert{} \vec{\lambda} - \vec{a} \vert{}} e^{-\gamma \vert{}\vec{\mu} - \vec{a} \vert{}} - 1 \vert{}  \vert{}\la e_{\vec{\lambda}}, H e_{\vec{\mu}} \ra\vert{} \\
& \beta := \sup_{\mu} \sum_{\vec{\lambda} \in \field{Z}^2}  \vert{} e^{\gamma \vert{} \vec{\lambda} - \vec{a} \vert{}} e^{-\gamma \vert{}\vec{\mu} - \vec{a} \vert{}} - 1 \vert{}  \vert{}\la e_{\vec{\lambda}}, H e_{\vec{\mu}} \ra\vert{}
\end{split}
\]
and by Schur's test $\| H_{\gamma} - H \| \leq \sqrt{\alpha \beta}$. By basic calculus one can easily verify that that
\[
  \vert{} e^{\gamma \vert{} \vec{\lambda} - \vec{a} \vert{}} e^{-\gamma \vert{}\vec{\mu} - \vec{a}\vert{}} - 1 \vert{} \leq \gamma \vert{} \vec{\lambda} - \vec{\mu} \vert{} e^{\gamma \vert{}\vec{\lambda} - \vec{\mu} \vert{}}.
\]
Hence
\[
\begin{split}
& \alpha \leq \sup_{\lambda} \sum_{\vec{\mu} \in \field{Z}^2}  \gamma \vert{} \vec{\lambda} - \vec{\mu} \vert{} e^{\gamma \vert{}\vec{\lambda} - \vec{\mu} \vert{}}  \vert{}\la e_{\vec{\lambda}}, H e_{\vec{\mu}} \ra\vert{} \\
& \beta \leq \sup_{\mu} \sum_{\vec{\lambda} \in \field{Z}^2}  \gamma \vert{} \vec{\lambda} - \vec{\mu} \vert{} e^{\gamma \vert{}\vec{\lambda} - \vec{\mu} \vert{}}  \vert{}\la e_{\vec{\lambda}}, H e_{\vec{\mu}} \ra\vert{}
\end{split}
\]
Since these upper bounds on $\alpha$ and $\beta$ are symmetric in $\vec{\lambda}$ and $\vec{\mu}$, to show $\| H_{\gamma} - H \| = O(\gamma)$ it suffices to show that there exists a constant $C'$ so that
\[
\sup_{\lambda} \sum_{\vec{\mu} \in \field{Z}^2} \vert{} \vec{\lambda} - \vec{\mu} \vert{} e^{\gamma \vert{}\vec{\lambda} - \vec{\mu} \vert{}}  \vert{}\la e_{\vec{\lambda}}, H e_{\vec{\mu}} \ra\vert{} \leq C'.
\]
This however is immediate from the fact that $H$ is exponentially localized:
\begin{equation} \label{eq:disc_breaks}
\sum_{\vec{\mu} \in \field{Z}^2} \vert{} \vec{\lambda} - \vec{\mu} \vert{} e^{\gamma \vert{}\vec{\lambda} - \vec{\mu} \vert{}}  \vert{}\la e_{\vec{\lambda}}, H e_{\vec{\mu}} \ra\vert{} \leq C \sum_{\vec{\mu} \in \field{Z}^2} \vert{} \vec{\lambda} - \vec{\mu} \vert{} e^{-(\gamma' - \gamma) \vert{}\vec{\lambda} - \vec{\mu} \vert{}} 
\end{equation}
The right-hand side is clearly bounded by a constant independent of $\vec{\lambda}$ so long as $\gamma < \gamma'$ and hence the result is proved.
\end{proof}

\appendix

\section{Proof of Properties of $P_j$ (Proposition \ref{prop:pj-props})}
\label{sec:pj-props}
It turns out that both of the estimates in Proposition \ref{prop:pj-props} will easily follow by showing that the resolvent of $PXP$ is exponentially localized in the following technical sense.
\begin{proposition}
  \label{prop:pj-res}
	Suppose that $P$ is an exponentially localized orthogonal projector and $PXP$ has uniform spectral gaps with decomposition $\{ \sigma_j \}_{j \in \cJ}$ and corresponding contours $\{ \cC_j \}_{j \in \cJ}$ (see Definition \ref{def:band-projectors}). There exists finite, positive constants $(C, \gamma^*)$ so that for all $0 \leq \gamma \leq \gamma^*$
	\[
	\sup_{j \in \cJ} \sup_{\lambda \in \cC_j} \| B_{\gamma} (\lambda - PXP)^{-1} B_{\gamma}^{-1} \| = \sup_{j \in \cJ} \sup_{\lambda \in \cC_j} \| (\lambda - P_{\gamma} X P_{\gamma})^{-1} \| \leq C.
	\]	
\end{proposition}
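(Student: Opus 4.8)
The plan is a Combes--Thomas-type perturbation argument. First I would record the algebraic identity behind the displayed equality in the statement: since $B_\gamma$ is a function of the commuting multiplication operators $X$ and $Y$ it commutes with $X$, so $B_\gamma(PXP)B_\gamma^{-1}=(B_\gamma PB_\gamma^{-1})X(B_\gamma PB_\gamma^{-1})=P_\gamma XP_\gamma$, and hence $B_\gamma(\lambda-PXP)^{-1}B_\gamma^{-1}=(\lambda-P_\gamma XP_\gamma)^{-1}$ whenever the right-hand side exists as a bounded operator. Moreover $\lambda-P_\gamma XP_\gamma$ is block diagonal with respect to the oblique splitting $L^2(\field{R}^2)=\range(P_\gamma)\oplus\ker(P_\gamma)$: it acts as $\lambda$ on $\ker(P_\gamma)$, and since every $\cC_j$ stays at distance $\gtrsim d$ from $\sigma(PXP)\ni 0$ we have $|\lambda|^{-1}\lesssim d^{-1}$ there, so it suffices to bound the inverse on $\range(P_\gamma)$, which by the conjugation identity equals $B_\gamma(\lambda-PXP)^{-1}P\,B_\gamma^{-1}$. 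Throughout I fix $j$, fix $\eta_j\in\sigma_j$, and set $\mu:=\lambda-\eta_j$; the contour bound $\ell(\cC_j)\le C'(D+d)$ gives $|\mu|\le C''(D+d)$ uniformly in $j$, while $|\mu|\ge\dist(\lambda,\sigma(PXP))\gtrsim d$.

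Next I would collect two inputs. The first is a pair of tilted-kernel estimates for the exponentially localized $P$, proved exactly as in Lemma \ref{lem:p-est1} (and as in the proof of Lemma \ref{lem:discrete_lemma}): there are constants so that for all $\gamma$ small and all tilt centers $\vec a$,
\[
\|P_\gamma-P\|\le K_0\gamma,\qquad \|[X,P_\gamma]\|\le K_1,\qquad \|[X,P_\gamma-P]\|\le K_0'\gamma .
\]
The second, and the real workhorse, is that the \emph{shifted} resolvent of $PXP$ absorbs the position operator on $\range(P)$: uniformly in $j$ and $\lambda\in\cC_j$,
\[
\bigl\|(X-\eta_j)(\lambda-PXP)^{-1}P\bigr\|\le C_1,\qquad \bigl\|P(\lambda-PXP)^{-1}(X-\eta_j)\bigr\|\le C_1 .
\]
This follows from two elementary manipulations. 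On $\range(P)$ one has $P(X-\eta_j)P=(PXP-\lambda)+\mu$, so $P(X-\eta_j)P(\lambda-PXP)^{-1}P=-P+\mu(\lambda-PXP)^{-1}P$ has norm $\le 1+|\mu|Cd^{-1}$; the off-diagonal part is controlled by $(I-P)(X-\eta_j)P=(I-P)[X,P]$, giving $(I-P)(X-\eta_j)(\lambda-PXP)^{-1}P=(I-P)[X,P](\lambda-PXP)^{-1}P$ of norm $\le 2K_1Cd^{-1}$; the second inequality is obtained the same way (or by taking adjoints on the reflected contour).

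With these in hand I would run the perturbation. Writing $\delta:=P_\gamma-P$ and $E:=P_\gamma(X-\eta_j)P_\gamma-P(X-\eta_j)P=P(X-\eta_j)\delta+\delta(X-\eta_j)P_\gamma$, one has $\mu-P_\gamma(X-\eta_j)P_\gamma=(\mu-P(X-\eta_j)P)-E$. The key point is that $E$, once it meets the resolvent restricted to $\range(P)$, is $O(\gamma)$: using the identity $(X-\eta_j)\delta=[X,\delta]+\delta(X-\eta_j)$ repeatedly to move every bare $X-\eta_j$ to the right until it lands either on $(\lambda-PXP)^{-1}P$ (where it is absorbed by the estimate above) or on another factor $\delta$ (bounded), and noting that each factor $\delta$ and each commutator $[X,\delta]$ carries a constant $O(\gamma)$, one gets $\|E(\lambda-PXP)^{-1}P\|=O(\gamma)$ uniformly, and the mirror-image bound likewise. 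A Neumann-series argument — organized so that the unbounded $X$ only ever acts after the resolvent has restricted matters to $\range(P)$ — then inverts $\mu-P_\gamma(X-\eta_j)P_\gamma$ on $\range(P_\gamma)$ with norm of order $d^{-1}$, once $\gamma\le\gamma^*$ for $\gamma^*$ small depending only on $d,D$ and the localization constants of $P$; combining with the trivial bound on $\ker(P_\gamma)$ yields the claimed estimate.

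The main obstacle is precisely the unboundedness of $X$, compounded by the fact that the spectral value $\eta_j$ is itself unbounded as $j$ ranges over $\cJ$ (for instance $\sigma(PXP)$ can be all of $\field{Z}$ in the atomic limit, so $|\lambda|$ is unbounded over the contours). A naive perturbation treating $P_\gamma XP_\gamma-PXP$ as a norm-bounded error fails, and even after shifting by $\eta_j$ one must never let $X$ appear with a bare $\eta_j$ coefficient nor act on a vector outside $\range(P)$; the whole argument hinges on keeping $X$ in the combination $X-\eta_j$ and flanked on one side by the resolvent restricted to $\range(P)$, by $P$, or by a factor $\delta$, which is what makes all bounds uniform in $j$.
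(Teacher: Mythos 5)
Your core ingredients are the right ones and essentially match the paper's: you fix $\eta_j\in\sigma_j$ so that $\mu=\lambda-\eta_j$ is uniformly bounded (this is exactly the content of the paper's Shifting Lemma~\ref{lem:shifting}, which you derive in passing from the identity $P(X-\eta_j)P(\lambda-PXP)^{-1}P=-P+\mu(\lambda-PXP)^{-1}P$); your ``workhorse'' bound $\|(X-\eta_j)(\lambda-PXP)^{-1}P\|\le C_1$, obtained by splitting into a $P$-part bounded by $1+|\mu|Cd^{-1}$ and a $Q$-part bounded via $(I-P)(X-\eta_j)P=(I-P)[X,P]$, is precisely the mechanism the paper uses in Sections~\ref{sec:exp-loc-res-pt1}--\ref{sec:exp-loc-res-pt2}; and the commutator identities for moving $X-\eta_j$ past $\delta=P_\gamma-P$ are how everything is rendered $O(\gamma)$. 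So the essence of the argument is the same.

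The genuine difference, and the place where your sketch has a gap, is that you tilt \emph{both} factors of $P$ at once, writing $E:=P_\gamma(X-\eta_j)P_\gamma-P(X-\eta_j)P$ and proposing a Neumann series for $A-E$ with $A:=\mu-P(X-\eta_j)P$. The naive series $A^{-1}(I-EA^{-1})^{-1}$ does not close, because $A^{-1}=(\lambda-PXP)^{-1}P+\mu^{-1}Q$ and $EQ=P_\gamma(X-\eta_j)P_\gamma Q$ carries a bare $(X-\eta_j)$ acting on $\range(P_\gamma)$, which is unbounded: your bound $\|E(\lambda-PXP)^{-1}P\|=O(\gamma)$ controls only the $\range(P)$-sector of $EA^{-1}$, not the $\ker(P)$-sector. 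Your parenthetical caveat (``organized so that the unbounded $X$ only ever acts after the resolvent has restricted matters to $\range(P)$'') names the right principle but does not implement it. The paper avoids the issue by tilting one factor at a time (the chain $PX_{\eta_j}P\to PX_{\eta_j}P_\gamma\to P_\gamma X_{\eta_j}P_\gamma$ in \eqref{eq:chain}): in the first step the perturbation $PX_{\eta_j}(P_\gamma-P)$ begins with $P$, so $Q\cdot PX_{\eta_j}(P_\gamma-P)=0$ and the dangerous $\ker(P)$-sector never appears; in the second step the perturbation $(P_\gamma-P)X_{\eta_j}P_\gamma$ is placed to the \emph{left} of the resolvent and absorbed by the bound on $X_{\eta_j}P_\gamma(\lambda_{\eta_j}-PX_{\eta_j}P_\gamma)^{-1}$. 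Your one-step version can also be made to work, but not by a direct Neumann series on all of $L^2$: one should instead set $T:=(\lambda-PXP)^{-1}P$ and verify directly, using your workhorse bound and its mirror, that $(\mu-P_\gamma(X-\eta_j)P_\gamma)T=P+O(\gamma)$ and $T(\mu-P_\gamma(X-\eta_j)P_\gamma)=P+O(\gamma)$, then conjugate by $P_\gamma$ (using $\|P_\gamma PP_\gamma-P_\gamma\|=O(\gamma)$) to get approximate left/right inverses on $\range(P_\gamma)$, and finally combine with the trivial $|\lambda|^{-1}\lesssim d^{-1}$ bound on $\ker(P_\gamma)$ via the boundedness of the oblique projectors $P_\gamma,Q_\gamma$. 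That fix should be spelled out; without it, the $\mu^{-1}EQ$ term is a real obstruction.
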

We will prove Proposition \ref{prop:pj-res} in Section \ref{sec:exp-loc-res}. We will then use Proposition \ref{prop:pj-res} to show that $P_j$ admits an exponentially localized kernel in Section \ref{sec:exp-loc-kernel} and that $P_j$ is localized along lines where $X = \eta_j$ in Section \ref{sec:pj-x-loc}. The analysis given in Sections \ref{sec:exp-loc-res}, \ref{sec:exp-loc-kernel}, and \ref{sec:pj-x-loc} is quite general and can be generalized to choices of position operator other than $X$ and $Y$. We show how our analysis can be extended to a wide class of self-adjoint position operators in Section \ref{sec:extension-to-other}.

\subsection{Proof that the resolvent of $P_j$ is exponentially localized (Proposition \ref{prop:pj-res})}
\label{sec:exp-loc-res}
One key part of Proposition \ref{prop:pj-res} is that the bound on $\| (\lambda - P_{\gamma} X P_{\gamma})^{-1} \|$ is \textit{uniform} in the choice of $j \in \cJ$ as well as the element of the contour $\lambda \in \cC_j$. By applying a naive argument, typically one finds that many of the estimates depend on $\vert{}\lambda\vert{}$. Since $\vert{}\lambda\vert{}$ can be arbitrarily large, this naive argument does not allow us to prove Proposition \ref{prop:pj-res} with uniform constants. To correct this issue we introduce a technical result, the shifting lemma (Lemma \ref{lem:shifting}), which allows one to effectively shift the contour $\cC_j$ so that instead of getting a dependence of $\vert{}\lambda\vert{}$ we have a dependence of $\vert{} \lambda - \eta \vert{}$ where $\eta$ is an arbitrary element from $\sigma_j$. If $\lambda \in \cC_j$ and $\eta \in \sigma_j$ then the difference $\vert{} \lambda - \eta \vert{}$ is bounded by the diameter of the contour $\cC_j$ and therefore bounded by a constant uniform in $j \in \cJ$.

The core idea underlying the shifting lemma is the following simple calculation. First, recall that if we are given a projector $P$, we can define the complementary projector $Q := I - P$. Since $P$ and $Q$ act on orthogonal subspaces we should expect that for any $\eta \neq \lambda$:
\begin{align*}
(\lambda - PXP)^{-1} P 
& = \Big((\lambda - \eta + \eta) - P(X - \eta + \eta)P\Big)^{-1} P \\
& = \Big((\lambda - \eta) - P(X - \eta)P + \eta (I - P)\Big)^{-1} P \\
& = \Big((\lambda - \eta) - P(X - \eta)P + \eta Q\Big)^{-1} P \\
& = \Big((\lambda - \eta) - P(X - \eta)P\Big)^{-1} P 
\end{align*}
By using this calculation, we are able to replace $\lambda$ with $\lambda - \eta$ which leads to uniform bounds as discussed above. Importantly, the above calculation does not require periodicity of the underlying system. By using some variations on this calculation, one can show the following lemma:
\begin{lemma}[Shifting Lemma]
  \label{lem:shifting}
  Suppose $P$ admits an exponentially localized kernel and suppose that $PXP$ has uniform spectral gaps with decomposition $\{ \sigma_j \}_{j \in \cJ}$ and corresponding contours $\{ \cC_j \}_{j \in \cJ}$. 
  Then there exists a $\gamma^*$ so that the following are equivalent for all $0 \leq \gamma \leq \gamma^*$:
  \begin{enumerate}[itemsep=1ex,topsep=1.5ex]
  \item There exists a $C > 0$, independent of $j$, such that
    \[
      \sup_{\lambda \in \cC_j} \|(\lambda - P_{\gamma} X P_{\gamma})^{-1}\| \leq C.
    \]
  \item There exists a $C' > 0$, independent of $j$, such that for each $j \in \cJ$:
    \[
      \sup_{\lambda \in \cC_j} \sup_{\eta_j \in \sigma_j} \|(\lambda_{\eta_j} - P_{\gamma} X_{\eta_j} P_{\gamma})^{-1}\| \leq C'
    \]
  \end{enumerate}
  Furthermore, for any $0 \leq \gamma \leq \gamma^*$ if either $\|(\lambda - P_{\gamma} X P_{\gamma})^{-1}\|$ or $\|(\lambda_{\eta_j} - P_{\gamma} X_{\eta_j} P_{\gamma})^{-1}\|$ are bounded, we have for any $j \in \cJ$, $\lambda \in \cC_j$ and any $\eta_j \in \sigma_j$:
  \[
    (\lambda - P_{\gamma} X P_{\gamma})^{-1} P_{\gamma} = (\lambda_{\eta_j} - P_{\gamma} X_{\eta_j} P_{\gamma})^{-1} P_{\gamma}.
  \]
\end{lemma}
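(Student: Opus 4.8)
The plan is to reduce the lemma to the elementary block structure of $\lambda - P_\gamma X P_\gamma$ relative to the (non-orthogonal, but topological) direct sum $L^2(\field{R}^2) = \range(P_\gamma) \oplus \range(I - P_\gamma)$. The point is that replacing $X$ by $X_{\eta_j} := X - \eta_j$ changes $\lambda - P_\gamma X P_\gamma$ only on the ``trivial'' summand $\range(I-P_\gamma)$, where $P_\gamma X P_\gamma$ acts as zero; this is the tilted version of the computation displayed just before the lemma. First I would fix $\gamma^* > 0$ small enough that, for all $0 \le \gamma \le \gamma^*$, the operator $P_\gamma := B_\gamma P B_\gamma^{-1}$ is a bounded idempotent with $\|P_\gamma\| \le 2$ and $\|Q_\gamma\| \le 2$, where $Q_\gamma := I - P_\gamma$; this follows from exponential localization of $P$ via the estimate $\|P_\gamma - P\| \le K\gamma$, proved exactly as the bound on $\|[P,X]\|$ in Lemma \ref{lem:p-est1} (compare Lemma \ref{lem:pj-bounds}(1)). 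Since $P_\gamma^2 = P_\gamma$ we have $P_\gamma X_{\eta_j} P_\gamma = P_\gamma X P_\gamma - \eta_j P_\gamma$, and hence the algebraic identity
\[
  \lambda_{\eta_j} - P_\gamma X_{\eta_j} P_\gamma = \bigl(\lambda - P_\gamma X P_\gamma\bigr) - \eta_j Q_\gamma, \qquad \lambda_{\eta_j} := \lambda - \eta_j .
\]

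Next I would observe that both $\lambda - P_\gamma X P_\gamma$ and $\lambda_{\eta_j} - P_\gamma X_{\eta_j} P_\gamma$ commute with $P_\gamma$ (the first trivially, the second because $Q_\gamma$ commutes with $P_\gamma$), so both are block-diagonal for the splitting above. On $\range(P_\gamma)$ they coincide, since $Q_\gamma P_\gamma = 0$; on $\range(Q_\gamma) = \ker(P_\gamma)$, the first acts as multiplication by $\lambda$ and the second as multiplication by $\lambda - \eta_j$, because $P_\gamma X P_\gamma$ annihilates $\ker(P_\gamma)$. Writing $R$ for the inverse of their common restriction to $\range(P_\gamma)$, extended by zero to $\range(Q_\gamma)$ (so that $R = P_\gamma R = R P_\gamma$), this means
\[
  (\lambda - P_\gamma X P_\gamma)^{-1} = R + \lambda^{-1} Q_\gamma , \qquad (\lambda_{\eta_j} - P_\gamma X_{\eta_j} P_\gamma)^{-1} = R + (\lambda - \eta_j)^{-1} Q_\gamma ,
\]
in the sense that the left-hand side is a bounded operator exactly when $R$ is bounded, in which case equality holds. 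In particular, right-multiplying by $P_\gamma$ and using $Q_\gamma P_\gamma = 0$ gives $(\lambda - P_\gamma X P_\gamma)^{-1} P_\gamma = R = (\lambda_{\eta_j} - P_\gamma X_{\eta_j} P_\gamma)^{-1} P_\gamma$, which is the final identity of the lemma.

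To make the equivalence of (1) and (2) quantitative and uniform in $j$, I would use \eqref{eq:Pj_contour}: since $PXP$ is self-adjoint, $\sup_{\lambda \in \cC_j}\|(\lambda - PXP)^{-1}\| \le C_0 d^{-1}$ forces $\dist(\cC_j, \sigma(PXP)) \ge d/C_0$, and because $0 \in \sigma(PXP)$ (the value of $PXP$ on $\ker P$, or a point of $\sigma(X) = \field{R}$ if $P = I$) and $\eta_j \in \sigma_j \subseteq \sigma(PXP)$, we get $|\lambda| \ge d/C_0$ and $|\lambda - \eta_j| \ge d/C_0$ for all $\lambda \in \cC_j$, $\eta_j \in \sigma_j$, uniformly in $j$. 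Feeding this together with $\|P_\gamma\|, \|Q_\gamma\| \le 2$ and $R = (\lambda - P_\gamma X P_\gamma)^{-1} P_\gamma = (\lambda_{\eta_j} - P_\gamma X_{\eta_j} P_\gamma)^{-1} P_\gamma$ into the two resolvent formulas above gives, whenever one side is bounded,
\[
  \|(\lambda - P_\gamma X P_\gamma)^{-1}\| \le 2\,\|(\lambda_{\eta_j} - P_\gamma X_{\eta_j} P_\gamma)^{-1}\| + 2 C_0 d^{-1}
\]
and the symmetric inequality with the two operators exchanged. Taking suprema over $\lambda \in \cC_j$, $\eta_j \in \sigma_j$, and $j \in \cJ$ then yields (1) $\Leftrightarrow$ (2) with $C' = 2C + 2C_0 d^{-1}$ in one direction and $C = 2C' + 2C_0 d^{-1}$ in the other.

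The main obstacle is the bookkeeping around the \emph{unbounded} conjugation by $B_\gamma$: ``$(\lambda - P_\gamma X P_\gamma)^{-1}$'' really denotes $B_\gamma (\lambda - PXP)^{-1} B_\gamma^{-1}$, and one must justify that this coincides, as a (possibly unbounded) operator, with $R + \lambda^{-1} Q_\gamma$ — equivalently, that $P_\gamma X P_\gamma$ is a closed operator with the claimed block structure, and that inserting $B_\gamma^{-1} B_\gamma$ between $(\lambda - PXP)^{-1}$ and $B_\gamma^{-1}$ is legitimate. The smallness of $\gamma^*$ is precisely what makes this safe: once $\|P_\gamma - P\| \le K\gamma < 1$, $P_\gamma$ is a genuine bounded idempotent close to $P$, $\range(B_\gamma^{-1})$ is a core on which the cancellations $B_\gamma B_\gamma^{-1} = I$ are valid, and the block-diagonal computation becomes purely algebraic. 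The remaining ingredient, the estimate $\|P_\gamma - P\| = O(\gamma)$, is the same exponential-localization calculation used repeatedly elsewhere in the paper and is routine.
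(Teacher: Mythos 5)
Your proof is essentially the same as the paper's: both decompose the resolvent along the (non-orthogonal) splitting $L^2 = \range(P_\gamma) \oplus \range(Q_\gamma)$ and use the algebraic identity $\lambda_{\eta} - P_\gamma X_{\eta} P_\gamma = (\lambda - P_\gamma X P_\gamma) - \eta Q_\gamma$ to see that the shift only affects the $Q_\gamma$ block, where both operators act as scalars. The paper carries this out by splitting $\|(\cdot)^{-1}\| \le \|(\cdot)^{-1}P_\gamma\| + \|(\cdot)^{-1}Q_\gamma\|$ and invoking a technical lemma (Lemma \ref{lem:shifting-technical}) to identify each piece; your block-diagonal framing with $R + \alpha Q_\gamma$ is an equivalent, arguably cleaner, presentation of the same computation.

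Two points of comparison worth noting. First, you make explicit an argument that the paper elides in its reverse implication (``proceeding along similar steps''): going from $\|(\lambda_{\eta_j} - P_\gamma X_{\eta_j}P_\gamma)^{-1}\|$ to $\|(\lambda - P_\gamma X P_\gamma)^{-1}\|$ produces a term $|\lambda|^{-1}$ rather than $|\lambda - \eta_j|^{-1}$, and your observation that $0 \in \sigma(PXP)$ (via $\ker P$) combined with $\dist(\cC_j,\sigma(PXP)) \ge d/C_0$ from \eqref{eq:Pj_contour} is exactly what controls this term uniformly in $j$. Second, you acknowledge but do not fully resolve the domain bookkeeping with the unbounded $X$ — you gesture at ``the block-diagonal computation becomes purely algebraic'' once $\|P_\gamma - P\| < 1$. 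The paper closes this loop more carefully: Lemma \ref{lem:shifting-technical} imposes that $\|[\tilde P, A]\|$ is bounded, which is the precise hypothesis ensuring $\tilde P A \tilde P$ is well-defined on the common domain, and this is verified in the present setting by the exponential localization of $P$ (giving $\|[P_\gamma, X]\| < \infty$). You should either invoke an analogue of that lemma or observe directly that boundedness of $[P_\gamma, X]$ is what licenses the identity $(\lambda - P_\gamma X P_\gamma)^{-1} = R + \lambda^{-1}Q_\gamma$ as an equality of operators defined on all of $L^2$.
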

\begin{proof}
  Given in Appendix \ref{sec:shifting-proof}.
\end{proof}
As a consequence of Lemma \ref{lem:shifting}, to prove Proposition \ref{prop:pj-res} it is enough to fix some $j \in \cJ$, $\lambda \in \cC_j$, and $\eta_j \in \sigma_j$ and show that
\begin{equation}
\label{eq:shifted-res}
\| (\lambda_{\eta_j} - P_{\gamma} X_{\eta_j} P_{\gamma})^{-1}\| \leq C'
\end{equation}
where the constant $C'$ is independent of the choice of $j$, $\lambda$, and $\eta_j$. For the remainder of this section, we will fix a choice of $j \in \cJ$, $\lambda \in \cC_j$, and $\eta_j \in \sigma_j$ and prove Equation \eqref{eq:shifted-res}.

The path to proving Equation \eqref{eq:shifted-res} is to use the following chain of implications (where USG is an abbreviation for uniform spectral gaps):
\begin{equation}
  \label{eq:chain}
  \begin{split}
  P X &P \text{ has USG} \Longrightarrow \| (\lambda_{\eta_j} - P X_{\eta_j} P )^{-1} \| < \infty \\
  & \Longrightarrow \| (\lambda_{\eta_j} - P X_{\eta_j} P_{\gamma} )^{-1} \| < \infty \Longrightarrow \| (\lambda_{\eta_j} - P_{\gamma} X_{\eta_j} P_{\gamma} )^{-1} \| < \infty
\end{split}
\end{equation}
In words, since $P X P$ has uniform spectral gaps, we know that $\| (\lambda_{\eta_j} - P X_{\eta_j} P )^{-1} \| < \infty$ for all $\lambda \in \cC_j$. Using this fact, along with the fact that $P$ is exponentially localized, we can conclude that $\| (\lambda_{\eta_j} - P X_{\eta_j} P_{\gamma} )^{-1} \| < \infty$ for all $\gamma$ sufficiently small. Finally, once we know that $(\lambda_{\eta_j} - P X_{\eta_j} P_{\gamma} )^{-1}$ is bounded, we can use that estimate to show that $(\lambda_{\eta_j} - P_{\gamma} X_{\eta_j} P_{\gamma})^{-1}$ is bounded for all $\gamma$ sufficiently small. This completes the proof of the proposition.

The first implication, that uniform spectral gaps implies $(\lambda_{\eta_j} - P X_{\eta_j} P )^{-1}$ is bounded is an immediate consequence of Lemma \ref{lem:shifting} by choosing $\gamma = 0$. Therefore, we only need to show the last two implications. We will prove the second implication in Section \ref{sec:exp-loc-res-pt1} and the final implication in Section \ref{sec:exp-loc-res-pt2}.

\subsubsection{Proof that shifted $(\lambda - P X P_{\gamma} )^{-1}$ is bounded}
\label{sec:exp-loc-res-pt1}
By adding and subtracting $P X_{\eta_j} P$ in the shifted resolvent we have that formally:
\begin{align*}
(\lambda_{\eta_j} & - P X_{\eta_j} P_{\gamma} )^{-1}  \\
& = (\lambda_{\eta_j} - P X_{\eta_j} P + P X_{\eta_j} P - P X_{\eta_j} P_{\gamma} )^{-1} \\
& = \Big(\lambda_{\eta_j} - P X_{\eta_j} P - P X_{\eta_j} (P_{\gamma} - P) \Big)^{-1} \\
& = \Big(I - (\lambda_{\eta_j} - P X_{\eta_j} P)^{-1} P X_{\eta_j} (P_{\gamma} - P) \Big)^{-1} (\lambda_{\eta_j} - P X_{\eta_j} P)^{-1}.
\end{align*}
Since $P$ admits an exponentially localized kernel, it is easy to verify that exists a constant $C$ so that for all $\gamma$ sufficiently small
\[
\| P_{\gamma} - P \| \leq C \gamma.
\]
Therefore, if we can show that $\| (\lambda_{\eta_j} - P X_{\eta_j} P)^{-1} P X_{\eta_j} \|$ is bounded by an absolute constant, we can choose $\gamma$ sufficiently small so that
\[
\| (\lambda_{\eta_j} - P X_{\eta_j} P)^{-1} P X_{\eta_j} (P_{\gamma} - P) \| \leq \frac{1}{2}.
\]
This implies that $(\lambda_{\eta_j} - P X_{\eta_j} P_{\gamma} )^{-1}$ is bounded since
\begin{align*}
\| (\lambda_{\eta_j} & - P X_{\eta_j} P_{\gamma} )^{-1} \| \\
& \leq \| \Big(I - (\lambda_{\eta_j} - P X_{\eta_j} P)^{-1} P X_{\eta_j} (P_{\gamma} - P) \Big)^{-1} \| \|(\lambda_{\eta_j} - P X_{\eta_j} P)^{-1} \| \\
& \leq 2 \|(\lambda_{\eta_j} - P X_{\eta_j} P)^{-1} \|.
\end{align*}
To show that $\| (\lambda_{\eta_j} - P X_{\eta_j} P)^{-1} P X_{\eta_j} \|$ is bounded, we recall that $I = P + Q$ and so
\begin{align*}
(\lambda_{\eta_j} & - P X_{\eta_j} P)^{-1} P X_{\eta_j} \\
& = (\lambda_{\eta_j} - P X_{\eta_j} P)^{-1} \Big( P X_{\eta_j} P + P X_{\eta_j} Q \Big) \\
& = (\lambda_{\eta_j} - P X_{\eta_j} P)^{-1} \Big( P X_{\eta_j} P - \lambda_{\eta_j} + \lambda_{\eta_j} + P X_{\eta_j} Q \Big) \\
& = -I + (\lambda_{\eta_j} - P X_{\eta_j} P)^{-1} \Big( \lambda_{\eta_j} + P X_{\eta_j} Q \Big) 
\end{align*}
Therefore, recalling $\lambda_{\eta_j}  = \lambda - \eta_j$ and $X_{\eta_j} = X - \eta_j$ we have that
\[
\| (\lambda_{\eta_j}  - P X_{\eta_j} P)^{-1} P X_{\eta_j} \| \leq 1 + \| (\lambda_{\eta_j} - P X_{\eta_j} P)^{-1}  \| \Big( \vert{}\lambda - \eta_j \vert{} + \| P (X - \eta_j) Q \|  \Big).
\]
As discussed previously, by the choice of $\eta_j$ and the uniform spectral gaps assumption, we know that $\vert{}\lambda - \eta_j\vert{}$ is bounded by a constant independent of $j$, $\lambda$, and $\eta_j$. To see the second term is bounded by a constant, recall that $PQ = 0$ so
\[
\| P (X - \eta_j) Q \| = \| P X Q \| = \| [P, X] Q \| \leq \| [P, X ] \|
\]
Since $P$ admits an exponentially localized kernel, it is easily verified that $\| [P, X ] \|$ is bounded by an absolute constant. Therefore, $\| (\lambda_{\eta_j}  - P X_{\eta_j} P)^{-1} P X_{\eta_j} \|$ is bounded and so by choosing
\[
\gamma \leq (2 C \| (\lambda_{\eta_j}  - P X_{\eta_j} P)^{-1} P X_{\eta_j} \|)^{-1}
\]
by the previous logic $(\lambda_{\eta_j} - P X_{\eta_j} P_{\gamma} )^{-1} $ is bounded, proving the first implication.

\subsubsection{Proof that shifted $(\lambda - P_{\gamma} X P_{\gamma} )^{-1}$ is bounded}
\label{sec:exp-loc-res-pt2}
Similar to before, we begin by adding and subtracting $P X_{\eta_j} P_{\gamma}$ in the shifted resolvent:
\begin{align*}
(\lambda_{\eta_j} & - P_{\gamma} X_{\eta_j} P_{\gamma} )^{-1}  \\
& = (\lambda_{\eta_j} - P X_{\eta_j} P_{\gamma} + P X_{\eta_j} P_{\gamma} - P_{\gamma} X_{\eta_j} P_{\gamma} )^{-1} \\
& = (\lambda_{\eta_j} - P X_{\eta_j} P_{\gamma} - (P_{\gamma} - P) X_{\eta_j} P_{\gamma} )^{-1} \\
& = \Big(I - (P_{\gamma} - P) X_{\eta_j}  P_{\gamma} (\lambda_{\eta_j} - P X_{\eta_j} P_{\gamma})^{-1} \Big)^{-1} (\lambda_{\eta_j} - P X_{\eta_j} P_{\gamma})^{-1} 
\end{align*}
Similar to before, since $\| P_{\gamma} - P \| \leq C \gamma$, if we can show that $\| X_{\eta_j}  P_{\gamma} (\lambda_{\eta_j} - P X_{\eta_j} P_{\gamma})^{-1} \|$ is bounded, then we can pick $\gamma$ sufficiently small so that
\[
\| (P_{\gamma} - P) X_{\eta_j}  P_{\gamma} (\lambda_{\eta_j} - P X_{\eta_j} P_{\gamma})^{-1} \| \leq \frac{1}{2}
\]
which implies that $(\lambda_{\eta_j} - P_{\gamma} X_{\eta_j} P_{\gamma} )^{-1}$ is bounded.

To show that $X_{\eta_j}  P_{\gamma} (\lambda_{\eta_j} - P X_{\eta_j} P_{\gamma})^{-1}$ is bounded, let us adopt the shorthand $E := P_{\gamma} - P$. Since $Q_{\gamma} = I - P_{\gamma}$ and $Q = I - P$ we also have that
\[
E = P_{\gamma} - P = Q - Q_{\gamma} \Rightarrow Q = Q_{\gamma} + E.
\]
Next, we calculate
\begin{align*}
X_{\eta_j}  P_{\gamma} (\lambda_{\eta_j} - P X_{\eta_j} P_{\gamma})^{-1}
& = (P + Q) X_{\eta_j}  P_{\gamma} (\lambda_{\eta_j} - P X_{\eta_j} P_{\gamma})^{-1} \\
& = (P + Q_{\gamma} + E) X_{\eta_j}  P_{\gamma} (\lambda_{\eta_j} - P X_{\eta_j} P_{\gamma})^{-1} 
\end{align*}
Moving the term multiplied by $E$ to the left hand side then gives
\begin{align*}
(I - E) X_{\eta_j} &  P_{\gamma} (\lambda_{\eta_j} - P X_{\eta_j} P_{\gamma})^{-1} = (P + Q_{\gamma}) X_{\eta_j}  P_{\gamma} (\lambda_{\eta_j} - P X_{\eta_j} P_{\gamma})^{-1} 
\end{align*}
Since $\| E \| = \| P_{\gamma} - P \| \leq C \gamma$, we can choose $\gamma$ sufficiently small so that $I - E$ is invertible and hence we conclude that
\begin{align*}
\| X_{\eta_j} &  P_{\gamma} (\lambda_{\eta_j} - P X_{\eta_j} P_{\gamma})^{-1} \| \\
& \leq \| (I - E)^{-1} \| \Big( \| P X_{\eta_j}  P_{\gamma} (\lambda_{\eta_j} - P X_{\eta_j} P_{\gamma})^{-1} \| \\
 & \hspace{8em} + \| Q_{\gamma} X_{\eta_j}  P_{\gamma} \| \| (\lambda_{\eta_j} - P X_{\eta_j} P_{\gamma})^{-1}  \| \Big) \\
 & \leq \| (I - E)^{-1} \| \Big( 1 + \vert{}\lambda - \eta_j \vert{} \| (\lambda_{\eta_j} - P X_{\eta_j} P_{\gamma})^{-1} \| \\
 & \hspace{8em} + \| Q_{\gamma} (X_{\eta_j}  P_{\gamma} \| \| (\lambda_{\eta_j} - P X_{\eta_j} P_{\gamma})^{-1}  \| \Big)
\end{align*}
Since $\vert{}\lambda - \eta_j\vert{}$ is bounded by construction and $\| (\lambda_{\eta_j} - P X_{\eta_j} P_{\gamma})^{-1} \|$ is bounded by the proof in Section \ref{sec:exp-loc-res-pt1}, the only term to bound is $\| Q_{\gamma} X_{\eta_j}  P_{\gamma} \|$. Using that $Q_{\gamma} P_{\gamma} = 0$ we have that
\[
\| Q_{\gamma} X_{\eta_j}  P_{\gamma} \| = \| Q_{\gamma} X  P_{\gamma} \| \leq \| Q_{\gamma} \| \| [X, P_{\gamma} ] \|.
\]
Since $P$ admits an exponentially localized kernel, it is easy to verify that $\| [X, P_{\gamma} ] \|$ is bounded. Since additionally,
\[
\| Q_{\gamma} \| = \| I - P_{\gamma} \| \leq 1 + \| P_{\gamma} \|
\]
we conclude that $\| X_{\eta_j}  P_{\gamma} (\lambda_{\eta_j} - P X_{\eta_j} P_{\gamma})^{-1} \|$ is bounded by an absolute constant for all $\gamma$ sufficiently small. Therefore, by the previously discussed reasoning, we conclude that $\| (\lambda_{\eta_j} - P_{\gamma} X_{\eta_j} P_{\gamma} )^{-1} \|$ is bounded by a constant, completing the proof of the proposition.

\subsection{Proof that $P_j$ admits an exponentially localized kernel (Proposition \ref{prop:pj-props}(1))}
\label{sec:exp-loc-kernel}
Let us recall the definition of $P_j$ (Definition \ref{def:band-projectors}) 
\[
P_j = \frac{1}{2 \pi i} \int_{\cC_j} (\lambda - PXP)^{-1} P \dee{\lambda}
\]
By multiplying on the left by $B_{\gamma}$ and on the right by $B_{\gamma}^{-1}$ we therefore have that
\begin{align*}
P_{j,\gamma}
& = \frac{1}{2 \pi i} \int_{\cC_j} B_{\gamma} (\lambda - PXP)^{-1} P B_{\gamma}^{-1} \dee{\lambda} \\
& = \frac{1}{2 \pi i} \int_{\cC_j} B_{\gamma} (\lambda - PXP)^{-1} \Big( B_{\gamma}^{-1} B_{\gamma}\Big) P B_{\gamma}^{-1} \dee{\lambda} \\
& = \frac{1}{2 \pi i} \int_{\cC_j} (\lambda - P_{\gamma} X P_{\gamma} )^{-1} P_{\gamma} \dee{\lambda}
\end{align*}
Therefore, we conclude that
\[
\| P_{j,\gamma} \| \leq \frac{\ell{(\cC)}}{2 \pi} \| P_{\gamma} \| \Big( \sup_{\lambda \in \cC_j} \| (\lambda - P_{\gamma} X P_{\gamma} )^{-1} \| \Big)
\]
In Section \ref{sec:exp-loc-res} we showed that there exists an absolute constant $C$ so that
\[
\sup_{\lambda \in \cC_j} \| (\lambda - P_{\gamma} X P_{\gamma} )^{-1} \| \leq C.
\]
Therefore, we have that there exists a constant $C$ so that
\[
\| P_{j,\gamma} \| \leq C.
\]
we can now use this estimate to show that $P_j$ admits an exponentially localized kernel.
\begin{proposition}
\label{prop:pj-exp-kernel}
Suppose that $P$ is an orthogonal projector which admits an exponentially localized kernel with rate $\gamma_1$ (Definition~\ref{def:exp-loc-kernel}). Suppose further that $P_j$ is an orthogonal projector which satisfies the properties:
\begin{enumerate}
    \item $P_j P = P P_j = P_j$
    \item There exist a constant $C$ such that for all $0 \leq \gamma \leq \gamma_2$ 
    \[
    \| P_{j,\gamma} \| \leq C.
    \]
\end{enumerate}
Then $P_j$ admits an integral kernel $P_j(\cdot, \cdot) : \field{R}^2 \times \field{R}^2 \rightarrow \field{C}$ such that for all $\gamma \leq \frac{1}{2} \min\{ \gamma_1, \gamma_2 \}$
\[
\vert{}P_j(\vec{x}, \vec{x}')\vert{} \leq C' e^{-\gamma \vert{}\vec{x} - \vec{x}'\vert{}} \qquad a.e.
\]
where the constant $C'$ only depends on $\gamma_1$, $\gamma_2$, and $\| P_{j,\gamma}\|$.
\end{proposition}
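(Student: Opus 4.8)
The plan is to build everything on the identity $P_j = P P_j P$, which is immediate from the hypothesis $P_j P = P P_j = P_j$, together with the fact that $P$ is an integral operator whose kernel decays exponentially at rate $\gamma_1$. First I would record the elementary consequences of exponential localization of $P$: for each $\vec{w} \in \field{R}^2$ the function $p_{\vec{w}}(\vec{y}) := P(\vec{y},\vec{w})$ belongs to $L^2(\field{R}^2)$ with $|p_{\vec{w}}(\vec{y})| \le C e^{-\gamma_1 |\vec{y}-\vec{w}|}$ (so in particular $\sup_{\vec w}\|p_{\vec w}\| < \infty$), and, by self-adjointness of $P$, $(P f)(\vec{x}) = \langle p_{\vec{x}}, f \rangle$ for all $f \in L^2$. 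Writing $P f = \int f(\vec{x}') p_{\vec{x}'}\,\dee\vec{x}'$ as a Bochner integral for $f$ in the dense class $L^1 \cap L^2$ and then applying $P_j$ and $P$ in turn, one finds that $P_j = P P_j P$ is itself an integral operator whose kernel is
\[
  P_j(\vec{x},\vec{x}') = \langle p_{\vec{x}},\, P_j\, p_{\vec{x}'} \rangle ,
\]
a quantity which is well-defined for every $\vec{x},\vec{x}'$ since $p_{\vec{x}},p_{\vec{x}'}\in L^2$.

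The heart of the argument is to insert exponential tilts into this formula. Fix $\vec{x},\vec{x}'$ and $0 \le \gamma \le \tfrac12 \min\{\gamma_1,\gamma_2\}$; in particular $\gamma < \gamma_1$, so $B_{\gamma,\vec{a}}\, p_{\vec{w}} \in L^2$ for every $\vec{a},\vec{w}$. Since $B_{\gamma,\vec{a}}$ is multiplication by the positive function $e^{\gamma \sqrt{1 + |\vec{x}-\vec{a}|^2}}$, one may move it across the inner product to obtain
\[
  P_j(\vec{x},\vec{x}') = \big\langle B_{\gamma,\vec{a}}^{-1} p_{\vec{x}},\; P_{j,\gamma,\vec{a}}\big( B_{\gamma,\vec{a}}\, p_{\vec{x}'} \big) \big\rangle ,
\]
and hence, using hypothesis (2) (which by the convention of Section \ref{sec:notation} holds uniformly in the center $\vec{a}$),
\[
  |P_j(\vec{x},\vec{x}')| \le \|P_{j,\gamma,\vec{a}}\|\, \|B_{\gamma,\vec{a}}^{-1} p_{\vec{x}}\|\, \|B_{\gamma,\vec{a}}\, p_{\vec{x}'}\| \le C\, \|B_{\gamma,\vec{a}}^{-1} p_{\vec{x}}\|\, \|B_{\gamma,\vec{a}}\, p_{\vec{x}'}\| .
\]
Next I would bound the two $L^2$ norms by combining $|p_{\vec{w}}(\vec{y})| \le C e^{-\gamma_1|\vec{y}-\vec{w}|}$ with the elementary inequality $\big| \sqrt{1+|\vec{u}|^2} - \sqrt{1+|\vec{v}|^2} \big| \le |\vec{u}-\vec{v}|$ (seen by viewing $\sqrt{1+|\vec{u}|^2}$ as the Euclidean norm of $(1,\vec{u})\in\field{R}^3$). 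Setting $I_\gamma := \int_{\field{R}^2} e^{-2(\gamma_1-\gamma)|\vec{z}|}\,\dee\vec{z} < \infty$, this gives
\[
  \|B_{\gamma,\vec{a}}^{-1} p_{\vec{x}}\| \le C\, I_\gamma^{1/2}\, e^{-\gamma \sqrt{1+|\vec{x}-\vec{a}|^2}}, \qquad \|B_{\gamma,\vec{a}}\, p_{\vec{x}'}\| \le C\, I_\gamma^{1/2}\, e^{\gamma \sqrt{1+|\vec{x}'-\vec{a}|^2}},
\]
with $I_\gamma$ independent of $\vec{x},\vec{x}',\vec{a}$.

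Finally — the one step that requires a choice rather than a routine estimate — I would take $\vec{a} = \vec{x}'$, so that $\sqrt{1+|\vec{x}'-\vec{a}|^2} = 1$ while $\sqrt{1+|\vec{x}-\vec{a}|^2} = \sqrt{1+|\vec{x}-\vec{x}'|^2} \ge |\vec{x}-\vec{x}'|$. Combining the last three displays yields
\[
  |P_j(\vec{x},\vec{x}')| \le C^3 I_\gamma\, e^{\gamma}\, e^{-\gamma |\vec{x}-\vec{x}'|} =: C' e^{-\gamma |\vec{x}-\vec{x}'|},
\]
where $C' = C^3 I_\gamma e^{\gamma}$ depends only on $C$, $\gamma_1$ and $\gamma_2$, and not on $j$ or on $\vec{x},\vec{x}'$. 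The only points needing care are minor bookkeeping: (i) justifying that the integral kernel of $P_j$ is indeed $\langle p_{\vec{x}},P_j p_{\vec{x}'}\rangle$, for which the Bochner-integral identity in the first paragraph suffices; and (ii) checking that the tilting manipulation is legitimate, which holds precisely because $\gamma < \gamma_1$ forces $B_{\gamma,\vec{a}} p_{\vec{x}'} \in L^2$ (so both slots of the inner product stay in $L^2$) and $\gamma \le \gamma_2$ makes $P_{j,\gamma,\vec{a}}$ bounded. No analytic input beyond the exponential localization of $P$ and hypothesis (2) is needed.
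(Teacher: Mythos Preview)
Your proof is correct and takes a genuinely different route from the paper's. The paper first shows $P_{j,\gamma}$ is bounded $L^2\to L^\infty$ by writing $P_{j,\gamma}=P_\gamma P_{j,\gamma}$ and using the exponentially localized kernel of $P$; from this Carleman property it extracts an integral kernel satisfying an $L^2$-in-one-variable bound, upgrades to an $L^1\to L^\infty$ bound, and finally invokes the Lebesgue differentiation theorem to obtain the pointwise estimate. Your argument bypasses all of this by exploiting the full two-sided sandwich $P_j = P P_j P$ to write down the kernel explicitly as $P_j(\vec{x},\vec{x}')=\langle p_{\vec{x}}, P_j\, p_{\vec{x}'}\rangle$, after which a single tilt-and-Cauchy--Schwarz step with the judicious choice $\vec{a}=\vec{x}'$ gives the decay directly.

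The trade-off is that the paper's route is a standard template (show good $L^p\to L^q$ mapping properties, then recover kernel pointwise bounds) that would work under weaker hypotheses on $P$, whereas your approach needs the columns $p_{\vec{x}'}$ of $P$ to sit in $L^2$ with a uniform bound --- which is exactly what the exponentially localized kernel gives. Under the stated hypotheses your argument is strictly shorter and more transparent; it also yields the bound for every $(\vec{x},\vec{x}')$ rather than almost every. One small point worth making explicit in a write-up: the two constants called $C$ (the one from Definition~\ref{def:exp-loc-kernel} for $P$ and the one from hypothesis~(2) for $\|P_{j,\gamma}\|$) are a priori different, and your $I_\gamma$ is only uniformly bounded once you restrict to $\gamma\le\tfrac12\gamma_1$, which is why the factor $\tfrac12$ in the statement matters.
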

\begin{proof}
For this proof we will first show that $P_{j}$ admits a measurable integral kernel and then show that $P_{j,\gamma}$ is a bounded operator from $L^1$ to $L^\infty$. Once we can show these properties, the fact that $P_j$ admits an exponentially localized kernel follows by application of the Lebesgue differentiation theorem.

Define $\gamma^* := \min\{ \gamma_1, \gamma_2 \}$, we will first show that $P_{j,\gamma}$ is a bounded operator from $L^2$ to $L^\infty$ for all $0 \leq \gamma < \gamma^*$ (i.e. that $P_{j,\gamma}$ is a Carleman operator). First, let us fix a realization of a function $f \in L^2(\field{R}^2)$. Using the fact that $P_{j,\gamma} = P_{\gamma} P_{j,\gamma}$, we have that for almost all $\vec{x}$:
\begin{align}
    \vert{}(P_{j,\gamma} f)(\vec{x})\vert{} = \vert{}(P_{\gamma} P_{j,\gamma} f)(\vec{x})\vert{}
    & = \left\vert{}\int_{\field{R}^2} P_{\gamma}(\vec{x}, \vec{x}') (P_{j,\gamma} f)(\vec{x}') \dee{\vec{x}'}\right\vert{} \\
    & \leq \int_{\field{R}^2} C e^{-(\gamma_1 - \gamma) \vert{}\vec{x} - \vec{x}'\vert{}} (P_{j,\gamma} f)(\vec{x}') \dee{\vec{x}'} \\
    & \leq C \left( \int_{\field{R}^2} e^{-2(\gamma_1 - \gamma) \vert{}\vec{x} - \vec{x}'\vert{}} \dee{\vec{x}'} \right) \| P_{j,\gamma} f \|  \label{eq:exp_decay}
\end{align}
where in the last line we have applied the Cauchy-Schwartz inequality. Taking the essential supremum over $\vec{x}$ on both sides then gives
\begin{align}
\sup_{\vec{x}}\vert{}(P_{j,\gamma} f)(\vec{x})\vert{}
& \leq C (\gamma_1 - \gamma)^{-1} \| P_{j,\gamma} \|  \| f \|.
\end{align}
Therefore, $P_{j,\gamma}$ is a bounded operator $L^2 \rightarrow L^\infty$. Since $P_{j,\gamma}$ is a bounded operator from $L^2 \rightarrow L^2$ and $L^2 \rightarrow L^\infty$, a standard result in the study of integral operators (see \cite[Corollary A.1.2]{1982SimonSchrodinger}) gives us that $P_{j,\gamma}$ admits an integral kernel which satisfies the following estimate:
\begin{equation}
  \label{eq:exp-kern-l2a}
\sup_{\vec{x}} \left[ \int \vert{}P_{j,\gamma}(\vec{x}, \vec{x}')\vert{}^2 \dee{\vec{x}'} \right]^{1/2} < \infty.
\end{equation}
By the definition of $P_{j,\gamma}$ this implies that for any $\vec{a} \in \field{R}^2$
\begin{equation}
\sup_{\vec{x}} \left[ \int \vert{}e^{\gamma \vert{} \vec{x} - \vec{a}\vert{}} P_{j}(\vec{x}, \vec{x}') e^{-\gamma \vert{} \vec{x}' - \vec{a}\vert{}}\vert{}^2 \dee{\vec{x}'} \right]^{1/2} < \infty.
\end{equation}
For reasons which will shortly be clear, we make the following observation. Since for all $\gamma$
\[
  \| P_{j,\gamma} \| = \| P_{j,\gamma}^\dagger\| = \| (B_{\gamma} P_j B_{\gamma}^{-1})^\dagger \| = \| P_{j,-\gamma} \|
\]
we can repeat the above steps replacing $\gamma$ with $-\gamma$. This implies that
\begin{equation}
\label{eq:exp-kern-l2b}
\begin{split}
\sup_{\vec{x}} & \left[ \int \vert{}e^{-\gamma \vert{} \vec{x} - \vec{a}\vert{}} P_{j}(\vec{x}, \vec{x}') e^{\gamma \vert{} \vec{x}' - \vec{a}\vert{}}\vert{}^2 \dee{\vec{x}'} \right]^{1/2} < \infty \\
& \Longleftrightarrow \sup_{\vec{x}} \left[ \int \vert{}e^{\gamma \vert{} \vec{x}' - \vec{a}\vert{}} P_{j}(\vec{x}', \vec{x}) e^{-\gamma \vert{} \vec{x} - \vec{a}\vert{}}\vert{}^2 \dee{\vec{x}'} \right]^{1/2} < \infty \\
& \Longleftrightarrow \sup_{\vec{x}} \left[ \int \vert{} P_{j,\gamma}(\vec{x}', \vec{x})\vert{}^2 \dee{\vec{x}'} \right]^{1/2} < \infty
\end{split}
\end{equation}
where we have used the fact that $P_j$ is self-adjoint and hence $P_j(\vec{x},\vec{x}') = \overline{P_j(\vec{x}',\vec{x})}$. Note that the difference between Equation~\eqref{eq:exp-kern-l2a} and~\eqref{eq:exp-kern-l2b} is that we have exchanged the arguments in the kernel for $P_{j,\gamma}$.

Having established that $P_{j,\gamma}$ has an integral kernel satisfying Equation~\eqref{eq:exp-kern-l2b}, we will now use the existence of this kernel to show that $P_{j,\gamma}$ is a bounded operator from $L^1 \rightarrow L^\infty$. Once we show this, the fact that the kernel for $P_j$ is exponentially localized will follow as a consequence of the Lebesgue differentiation theorem. 

Repeating the calculations that lead to \eqref{eq:exp_decay} we have that
\begin{align*}
    \vert{}(P_{j,\gamma} f)(\vec{x})\vert{} 
    & \leq C \int_{\field{R}^2} e^{-(\gamma_1 - \gamma) \vert{}\vec{x} - \vec{x}'\vert{}} \vert{}(P_{j,\gamma} f)(\vec{x}')\vert{} \dee{\vec{x}'} \\
    & \leq C \int_{\field{R}^2} e^{-(\gamma_1 - \gamma) \vert{}\vec{x} - \vec{x}'\vert{}} \int_{\field{R}^2} \vert{}P_{j,\gamma}(\vec{x}', \vec{y})\vert{} \vert{}f(\vec{y})\vert{} \dee{\vec{y}} \dee{\vec{x}'} \\
    & = C \int_{\field{R}^2} \vert{}f(\vec{y})\vert{}  \int_{\field{R}^2} e^{-(\gamma_1 - \gamma) \vert{}\vec{x} - \vec{x}'\vert{}} \vert{}P_{j,\gamma}(\vec{x}', \vec{y})\vert{}  \dee{\vec{x}'} \dee{\vec{y}} \\
    & \leq C \| f \|_{L^1} \left(\sup_{\vec{y}} \int_{\field{R}^2} e^{-(\gamma_1 - \gamma) \vert{}\vec{x} - \vec{x}'\vert{}} \vert{}P_{j,\gamma}(\vec{x}', \vec{y})\vert{}  \dee{\vec{x}'} \right)
\end{align*}
Taking the essential supremum over $\vec{x}$ on both sides then gives
\begin{align*}
    \sup_{\vec{x}} \vert{}(P_{j,\gamma} f)(\vec{x})\vert{} 
    & \leq C \| f \|_{L^1} \left(\sup_{\vec{x},\vec{y}} \int_{\field{R}^2} e^{-(\gamma_1 - \gamma) \vert{}\vec{x} - \vec{x}'\vert{}} \vert{}P_{j,\gamma}(\vec{x}', \vec{y})\vert{}  \dee{\vec{x}'} \right) \\[1.5ex]
    & \leq C \| f \|_{L^1} \left(\sup_{\vec{x}} \int_{\field{R}^2} e^{-2 (\gamma_1 - \gamma) \vert{}\vec{x} - \vec{x}'\vert{}} \dee{\vec{x}'} \right) \\
    & \hspace{8em} \times \left( \sup_{\vec{y}} \int_{\field{R}^2} \vert{}P_{j,\gamma}(\vec{x}', \vec{y})\vert{}^2 \dee{\vec{x}'} \right)
\end{align*}
where in the last line we have used the Cauchy-Schwarz inequality. Finally, using Equation~\eqref{eq:exp-kern-l2b} we conclude that
\[
  \sup_{\vec{x}} \vert{}(P_{j,\gamma} f)(\vec{x})\vert{}  \leq C (\gamma_1 - \gamma)^{-1} \| f \|_{L^1} 
\]
hence $P_{j,\gamma}$ is a bounded operator from $L^1 \rightarrow L^\infty$. 

To show that $P_{j}$ admits an exponentially localized kernel, let us fix some arbitrary points $\vec{a}, \vec{b} \in \field{R}^2$ and define a function $g_{\delta}$ as follows
\[
g_{\delta}(\vec{x})
=
\begin{cases}
1 & \vec{x} \in \cB_{\delta}(\vec{0}) \\
0 & \text{otherwise,}
\end{cases}
\]
where $\cB_{\delta}(\vec{0})$ is the ball of radius $\delta$ centered at $\vec{0}$. Observe that $\| g_{\delta} \|_{L^1} = \vert{}\cB_{\delta}\vert{}$ where $\vert{}\cB_{\delta}\vert{}$ is the volume of the ball of radius $\delta$. With this notation, using our previous bound for $P_{j,\gamma}$ we have that for all $\vec{a} \in \field{R}^2$ and almost all $\vec{b} \in \field{R}^2$:
\begin{align*}
    &\left\vert{} \int P_{j,\gamma}(\vec{b}, \vec{x}') g_{\delta}(\vec{x}' - \vec{a}) \dee{\vec{x}'} \right\vert{} \leq  C' (\gamma_1 - \gamma)^{-1} \| g_{\delta} \|_{L^1} \\
    & \Longrightarrow \frac{1}{\vert{}\cB_{\delta}\vert{}} \left\vert{} \int_{\cB_{\delta}(\vec{a})} P_{j,\gamma}(\vec{b}, \vec{x}') \dee{\vec{x}'} \right\vert{} \leq  C' (\gamma_1 - \gamma)^{-1} \\
    & \Longrightarrow \frac{1}{\vert{}\cB_{\delta}\vert{}} \left\vert{} \int_{\cB_{\delta}(\vec{a})} e^{\gamma \vert{} \vec{b} - \vec{a}\vert{}} P_{j}(\vec{b}, \vec{x}') e^{-\gamma \vert{} \vec{x}' - \vec{a}\vert{}} \dee{\vec{x}'} \right\vert{} \leq  C' (\gamma_1 - \gamma)^{-1}
\end{align*}
where in the last line, we have substituted in the definition for the kernel for $P_{j,\gamma} = B_{\gamma,\vec{a}} P_j B_{\gamma,\vec{a}}^{-1}$. Since for each fixed $\vec{b}$, the function $\vec{x}' \mapsto P_{j,\gamma}(\vec{b}, \vec{x}')$ is in $L^2(\field{R}^2)$ (Equation~\eqref{eq:exp-kern-l2b}), in particular this kernel is also in $L^1_{\loc}(\field{R}^2)$. Therefore, by the Lebesgue differentiation theorem \cite[Chapter 3, Theorem 1.2]{2009SteinShakarchi}, for almost every $\vec{a} \in \field{R}^2$:
\[
\lim_{\delta \rightarrow 0} \frac{1}{\vert{}\cB_{\delta}\vert{}} \left\vert{} \int_{\cB_{\delta}(\vec{a})}  e^{\gamma \vert{} \vec{b} - \vec{a}\vert{}} P_{j}(\vec{b}, \vec{x}') e^{-\gamma \vert{} \vec{x}' - \vec{a}\vert{}} \dee{\vec{x}'} \right\vert{} =  e^{\gamma \vert{} \vec{b} - \vec{a}\vert{}} \vert{}P_{j}(\vec{b}, \vec{a}) \vert{}.
\]
Hence, for almost every $\vec{a}, \vec{b} \in \field{R}^2$:
\[
\vert{}P_{j}(\vec{b}, \vec{a})\vert{} \leq  C' (\gamma_1 - \gamma)^{-1} e^{-\gamma \vert{} \vec{b} - \vec{a}\vert{}}
\]
which completes the proof.
\end{proof}

\subsection{Proof that $P_j$ is localized in $X$ (Proposition \ref{prop:pj-props}(2))}
\label{sec:pj-x-loc}
For this section, let us fix some $\eta_j \in \sigma_j$, we will prove that $\|(X - \eta_j) P_{j,\gamma}\|$ is bounded. The fact that $\|P_{j,\gamma} (X - \eta_j)\|$ is bounded follows by essentially the same steps. Recalling we define $\lambda_{\eta_j} := \lambda - \eta_j$ and $X_{\eta_j} := X - \eta_j$ and using Lemma \ref{lem:shifting} we have
\begin{align*}
  (X - \eta_j) P_{j,\gamma} & = \frac{1}{2 \pi i} \int_{\cC_j} (X - \eta_j) (\lambda - P_{\gamma} X P_{\gamma})^{-1}P_{\gamma} \dee{\lambda} \\
                          & = \frac{1}{2 \pi i} \int_{\cC_j} X_{\eta_j} (\lambda_{\eta_j} - P_{\gamma} X_{\eta_j} P_{\gamma})^{-1} P_{\gamma} \dee{\lambda} \\
                          & = \frac{1}{2 \pi i} \int_{\cC_j} X_{\eta_j} P_{\gamma} (\lambda_{\eta_j} - P_{\gamma} X_{\eta_j} P_{\gamma})^{-1} \dee{\lambda} \\
                          & = \frac{1}{2 \pi i} \int_{\cC_j} (P_{\gamma} + Q_{\gamma}) X_{\eta_j} P_{\gamma} (\lambda_{\eta_j} - P_{\gamma} X_{\eta_j} P_{\gamma})^{-1} \dee{\lambda}.
\end{align*}
where we have used that $P_{\gamma}$ commutes with $(\lambda_{\eta_j} - P_{\gamma} X_{\eta_j} P_{\gamma})^{-1}$. Therefore,
\begin{align*}
  \|(X - \eta_j) P_{j,\gamma}\| & \leq \frac{\ell(\cC_j)}{2 \pi} \bigg( \| P_{\gamma} X_{\eta_j} P_{\gamma} (\lambda_{\eta_j} - P_{\gamma} X_{\eta_j} P_{\gamma})^{-1}\| \\
  & \hspace{4em} + \| Q_{\gamma} X_{\eta_j} P_{\gamma}\|  \| (\lambda_{\eta_j} - P_{\gamma} X_{\eta_j} P_{\gamma})^{-1}\| \bigg).
\end{align*}
Note that
\[
  \| P_{\gamma} X_{\eta_j} P_{\gamma} (\lambda_{\eta_j} - P_{\gamma} X_{\eta_j} P_{\gamma})^{-1}\| \leq 1 + \vert{}\lambda_{\eta_j}\vert{} \| (\lambda_{\eta_j} - P_{\gamma} X_{\eta_j} P_{\gamma})^{-1}\|.
\]
Therefore, since $\| (\lambda_{\eta_j} - P_{\gamma} X_{\eta_j} P_{\gamma})^{-1}\|$ and $\| Q_{\gamma} X_{\eta_j} P_{\gamma}\|$ are both bounded, we can conclude that $\| (X - \eta_j) P_{j,\gamma}\|$ is bounded as we wanted to show.

\subsection{Extension to other position operators}
\label{sec:extension-to-other}
In Sections \ref{sec:exp-loc-res}, \ref{sec:exp-loc-kernel}, and \ref{sec:pj-x-loc} we show that if $PXP$ has uniform spectral gaps then if $\{ P_j \}_{j \in \cJ}$ are the band projectors for $PXP$ then each $P_j$ admits an exponentially localized kernel and each $P_j$ is localized along a line $X = \eta_j$. In the main text we used these properties of $P_j$ to show that exponentially localized Wannier functions can be constructed by diagonalizing $P_j Y P_j$ for each $j \in \cJ$. In this section we show that this argument does not rely on $X$ being precisely the position operator defined in \eqref{eq:X}. Instead, we will show that for operators $\widehat{X}$ which are, in a certain sense, close to the operator $X$, if $P \widehat{X} P$ has uniform spectral gaps, then if we define projections $P_j$ with respect to the spectral bands of $P \widehat{X} P$, the rest of our analysis goes through without modification. Specifically, we can prove that the projections $P_j$ admit exponentially localized kernels, localize along lines $X = \eta_j$, and can be used to form generalized Wannier functions by diagonalizing each $P_j Y P_j$ for $j \in \cJ$. We anticipate that this generalization may be important for two reasons.
\begin{itemize}
\item Depending on the application, it may be more natural to measure position by operators other than the standard ones.
\item It may be that $PXP$ does not have uniform spectral gaps, but $P \widehat{X} P$, where $\widehat{X}$ is an alternative position operator, does. In this case, the fact that our proofs can be generalized is essential to prove existence of exponentially localized generalized Wannier functions.
\end{itemize}
We consider various cases where alternative position operators are useful in \cite{Stubbs2020}. The problem of constructing an alternative position operator $\widehat{X}$ which ensures that $P \widehat{X} P$ has gaps has been considered in \cite{Lu2021}.



We will prove the following Lemma, which gives sufficient conditions so that the band projectors for $P \widehat{X} P$, where $\widehat{X}$ is an alternative position operator, satisfy the results of Proposition \ref{prop:pj-props}.
\begin{lemma}
\label{lem:extension}
Let $\widehat{X}$ be a symmetric operator satisfying $\| \widehat{X} - X \| \leq C$. Then $\widehat{X}$ is self-adjoint $\mathcal{D}(X) \rightarrow L^2(\field{R}^2)$ and $P \widehat{X} P$ is self-adjoint $\{ \mathcal{D}(X) \cap \range(P) \} \cup \range(P)^\perp \rightarrow L^2(\field{R}^2)$. Suppose further that
\begin{enumerate}[itemsep=.75ex]
\item $\| \widehat{X}_{\gamma} - \widehat{X} \| \leq C' \gamma$
\item $P \widehat{X} P$ has uniform spectral gaps.
\end{enumerate}
Then, if $\{ P_j \}_{j \in \cJ}$ are the band projectors of $P \widehat{X} P$, then $\{ P_j \}_{j \in \cJ}$ satisfies the results of Proposition \ref{prop:pj-props}.
\end{lemma}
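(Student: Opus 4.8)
The plan is to reprove Proposition~\ref{prop:pj-props} with the position operator $X$ replaced throughout by $\widehat{X}$, checking that every step in Appendix~\ref{sec:pj-props} uses only structural features of $X$ that survive the replacement. Precisely one feature does \emph{not} survive: in the original argument one constantly uses that $X$ commutes with the exponential growth operator $B_{\gamma}$, so that $X_{\gamma}=X$ exactly. For a general $\widehat{X}$ this fails, and its role is taken over by hypothesis~(1), $\|\widehat{X}_{\gamma}-\widehat{X}\|\leq C'\gamma$, which says $\widehat{X}_{\gamma}$ differs from $\widehat{X}$ only by a bounded operator of size $O(\gamma)$. Apart from carrying these $O(\gamma)$ corrections around, the proof is a line-by-line transcription of Appendix~\ref{sec:pj-props}.

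I would begin with self-adjointness. Writing $\widehat{X}=X+(\widehat{X}-X)$ with $\widehat{X}-X$ bounded and symmetric, the Kato--Rellich theorem (the perturbation being bounded, hence relatively bounded with relative bound $0$) shows $\widehat{X}$ is self-adjoint on $\mathcal{D}(X)$. Likewise $P\widehat{X}P=PXP+P(\widehat{X}-X)P$; since $PXP$ is self-adjoint on $\{\mathcal{D}(X)\cap\range(P)\}\cup\range(P)^{\perp}$ by Lemma~\ref{lem:PXP} and $P(\widehat{X}-X)P$ is bounded and symmetric, Kato--Rellich gives self-adjointness of $P\widehat{X}P$ on the same domain. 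Under hypothesis~(2) the band projectors $P_j$ of $P\widehat{X}P$ are therefore well defined by \eqref{eq:Pj}--\eqref{eq:Pj_contour}, and they satisfy $PP_j=P_jP=P_j$ by construction, so $P_j=\tfrac{1}{2\pi i}\int_{\cC_j}(\lambda-P\widehat{X}P)^{-1}P\,\dee\lambda$.

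The core of the proof is the analogue of Proposition~\ref{prop:pj-res}: for all $\gamma$ small enough there is a constant $C$, independent of $j$, with
\[
\sup_{j\in\cJ}\ \sup_{\lambda\in\cC_j}\ \|(\lambda-P_{\gamma}\widehat{X}_{\gamma}P_{\gamma})^{-1}\|\leq C .
\]
The shifting lemma (Lemma~\ref{lem:shifting}) goes through for $\widehat{X}$: its underlying identity, which for $\eta_j\in\sigma_j$ replaces $\lambda-P\widehat{X}P$ by $(\lambda-\eta_j)-P(\widehat{X}-\eta_j)P$ (and $\lambda-P_{\gamma}\widehat{X}_{\gamma}P_{\gamma}$ by $(\lambda-\eta_j)-P_{\gamma}(\widehat{X}_{\gamma}-\eta_j)P_{\gamma}$) after multiplication by $P$ (resp.\ $P_{\gamma}$), uses only the projection relations $PQ=QP=0$, $P+Q=I$ and the analogous ones for $P_{\gamma},Q_{\gamma}$, never the specific form of the sandwiched operator. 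Hence it suffices to run the chain of implications of Section~\ref{sec:exp-loc-res} with $X$ replaced by $\widehat{X}$. First, uniform spectral gaps of $P\widehat{X}P$ give a $j$-uniform bound on $\|((\lambda-\eta_j)-P(\widehat{X}-\eta_j)P)^{-1}\|$. Second, a Neumann-series perturbation based on $\|P_{\gamma}-P\|\leq C\gamma$ (which holds because $P$ is exponentially localized), together with the bound on $\|((\lambda-\eta_j)-P(\widehat{X}-\eta_j)P)^{-1}P(\widehat{X}-\eta_j)\|$ --- for which one needs only $|\lambda-\eta_j|\leq\ell(\cC_j)+\diam(\sigma_j)<\infty$ and $\|[\widehat{X},P]\|\leq\|[X,P]\|+2\|\widehat{X}-X\|<\infty$, exactly as in Section~\ref{sec:exp-loc-res-pt1} --- upgrades this to a $j$-uniform bound on $\|((\lambda-\eta_j)-P(\widehat{X}-\eta_j)P_{\gamma})^{-1}\|$. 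Third, using hypothesis~(1) to replace $\widehat{X}$ by $\widehat{X}_{\gamma}$ (a bounded perturbation of size $O(\gamma)$) and then the splitting $I=P+Q_{\gamma}+(P_{\gamma}-P)$ of Section~\ref{sec:exp-loc-res-pt2} to replace the outer $P$ by $P_{\gamma}$ --- which requires $\|[\widehat{X}_{\gamma},P_{\gamma}]\|=\|[\widehat{X},P]_{\gamma}\|$ bounded for $\gamma$ small, itself following from $\|\widehat{X}_{\gamma}-X\|\leq\|\widehat{X}-X\|+C'\gamma$ and the standard estimate on $\|[X,P_{\gamma}]\|$ --- one reaches a $j$-uniform bound on $\|((\lambda-\eta_j)-P_{\gamma}(\widehat{X}_{\gamma}-\eta_j)P_{\gamma})^{-1}\|$, which by the shifting identity is the displayed bound. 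The main obstacle lies in this third step: because $\widehat{X}$ does not commute with $B_{\gamma}$, the passage to the tilted resolvent now carries \emph{two} corrections of size $O(\gamma)$ --- one from $P_{\gamma}\neq P$, one from $\widehat{X}_{\gamma}\neq\widehat{X}$ --- rather than one, and one must verify the resulting constants remain uniform in $j$, in $\lambda\in\cC_j$, and in $\eta_j\in\sigma_j$; hypothesis~(1) is exactly what tames the second correction.

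From the displayed resolvent bound the conclusions of Proposition~\ref{prop:pj-props} follow as in Appendix~\ref{sec:pj-props}. Feeding it into $P_{j,\gamma}=\tfrac{1}{2\pi i}\int_{\cC_j}(\lambda-P_{\gamma}\widehat{X}_{\gamma}P_{\gamma})^{-1}P_{\gamma}\,\dee\lambda$ and using $\ell(\cC_j)\leq C'(D+d)$ and $\|P_{\gamma}\|\leq 1+C\gamma$ gives $\sup_j\|P_{j,\gamma}\|\leq C$ for $\gamma$ small; Proposition~\ref{prop:pj-exp-kernel} then yields that $P_j$ admits an exponentially localized kernel, since it requires only $P_jP=PP_j=P_j$, the bound on $\|P_{j,\gamma}\|$, and exponential localization of $P$, none of which mentions $\widehat{X}$. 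For the remaining estimate, repeating Section~\ref{sec:pj-x-loc} with $\widehat{X}$ in place of $X$ (the shifting lemma, together with $\|Q_{\gamma}(\widehat{X}_{\gamma}-\eta_j)P_{\gamma}\|=\|Q_{\gamma}[\widehat{X},P]_{\gamma}\|<\infty$) gives $\|(\widehat{X}_{\gamma}-\eta_j)P_{j,\gamma}\|\leq C'$ and $\|P_{j,\gamma}(\widehat{X}_{\gamma}-\eta_j)\|\leq C'$ for all $\eta_j\in\sigma_j$; since $X-\widehat{X}_{\gamma}=(X-\widehat{X})+(\widehat{X}-\widehat{X}_{\gamma})$ is bounded uniformly in $\gamma\leq\gamma^{*}$, we obtain $\|(X-\eta_j)P_{j,\gamma}\|\leq\|(\widehat{X}_{\gamma}-\eta_j)P_{j,\gamma}\|+\|X-\widehat{X}_{\gamma}\|\,\|P_{j,\gamma}\|\leq C''$, and likewise for $\|P_{j,\gamma}(X-\eta_j)\|$, which are precisely the bounds of Proposition~\ref{prop:pj-props}(2). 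This exhausts the conclusions of Proposition~\ref{prop:pj-props} and completes the proof.
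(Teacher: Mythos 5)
Your proposal is correct and takes essentially the same approach as the paper: self-adjointness by Kato--Rellich, the shifting lemma, the chain of implications from Section~\ref{sec:exp-loc-res} with $\widehat{X}$ in place of $X$ and the extra $O(\gamma)$ Neumann-series correction coming from hypothesis~(1), and then the arguments of Sections~\ref{sec:exp-loc-kernel} and~\ref{sec:pj-x-loc} to conclude. The only cosmetic difference is the order in which you perform the $P\to P_{\gamma}$ and $\widehat{X}\to\widehat{X}_{\gamma}$ replacements in the final link of the chain, and you spell out the (true, but elided in the paper) translation from $\|(\widehat{X}_{\gamma}-\eta_j)P_{j,\gamma}\|\leq C'$ to the bound in terms of the original $X$ required by Proposition~\ref{prop:pj-props}(2).
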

Following the proof in the main text, an immediate corollary of Lemma \ref{lem:extension} is the following.
\begin{corollary}
If it is possible to construct a symmetric operator $\widehat{X}$ satisfying the assumptions of Lemma \ref{lem:extension}, then $\range{(P)}$ admits a basis of exponentially localized generalized Wannier functions.
\end{corollary}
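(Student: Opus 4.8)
The plan is to re-run the argument of Appendix~\ref{sec:pj-props} with $\widehat X$ in place of $X$, isolating the two points where $\widehat X$'s lack of structure matters and showing that the hypotheses $\|\widehat X - X\| \le C$ and $\|\widehat X_\gamma - \widehat X\| \le C'\gamma$ supply exactly what is needed. Self-adjointness of $\widehat X$ on $\mathcal D(X)$ and of $P\widehat X P$ on $\{\mathcal D(X)\cap\range(P)\}\cup\range(P)^\perp$ are both immediate from the Kato--Rellich theorem, since $\widehat X - X$ and $P(\widehat X - X)P$ are bounded symmetric perturbations of the self-adjoint operators $X$ and $PXP$ (Lemma~\ref{lem:PXP}), respectively.

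The core of the work is the analogue of Proposition~\ref{prop:pj-res}: a $j$-uniform bound $\sup_{j \in \cJ}\sup_{\lambda\in\cC_j}\|(\lambda - P_\gamma\widehat X_\gamma P_\gamma)^{-1}\| \le C$ for $\gamma$ small. The structural difference from the original is that $B_{\gamma,\vec a}$ commutes with $X$ but not with $\widehat X$, so conjugating $P\widehat X P$ by $B_{\gamma,\vec a}$ yields $P_\gamma \widehat X_\gamma P_\gamma$ with the \emph{tilted} operator $\widehat X_\gamma$ in the middle. I would proceed in two stages. First, I would establish the uniform bound for $\|(\lambda - P_\gamma\widehat X P_\gamma)^{-1}\|$ (untilted $\widehat X$ in the middle) by mimicking the chain of implications \eqref{eq:chain} together with the Shifting Lemma~\ref{lem:shifting}; the Shifting Lemma's core identity $(\lambda - P\widehat X P)^{-1}P = (\lambda_{\eta_j} - P\widehat X_{\eta_j}P)^{-1}P$ is purely algebraic, using only that $P$ is a projection and never that $X$ is a multiplication operator, so it transfers unchanged. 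The commutator estimates the original proof borrows from exponential localization of $P$ --- namely $\|P_\gamma - P\| \le C\gamma$, $\|[P,X]\| \le K_1$, and boundedness of $[X, P_\gamma]$ --- reappear in the forms $\|[P,\widehat X]\| \le \|[P,X]\| + 2\|\widehat X - X\|$ and $\|[\widehat X, P_\gamma]\| \le \|[X, P_\gamma]\| + 2\|\widehat X - X\|\,\|P_\gamma\|$, all finite because $\|\widehat X - X\| \le C$ and $\|P_\gamma\| \le C$; the bound $\|P_\gamma - P\|\le C\gamma$ is untouched. Second, I would write $P_\gamma\widehat X_\gamma P_\gamma = P_\gamma\widehat X P_\gamma + P_\gamma(\widehat X_\gamma - \widehat X)P_\gamma$ with $\|P_\gamma(\widehat X_\gamma - \widehat X)P_\gamma\| \le C'\gamma\|P_\gamma\|^2$ by hypothesis (1), and conclude via a Neumann series that the resolvent bound transfers from $P_\gamma\widehat X P_\gamma$ to $P_\gamma\widehat X_\gamma P_\gamma$ once $\gamma$ is small enough. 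This is the one point where hypothesis (1) is essential.

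With the uniform resolvent bound in hand the remainder is bookkeeping. From $P_{j,\gamma} = \tfrac{1}{2\pi i}\int_{\cC_j}(\lambda - P_\gamma\widehat X_\gamma P_\gamma)^{-1}P_\gamma\,d\lambda$, the contour bound $\ell(\cC_j)\le C'(D+d)$, and $\|P_\gamma\|\le C$, one gets $\|P_{j,\gamma}\|\le C$ uniformly in $j$; since $P_j P = P P_j = P_j$ by construction, Proposition~\ref{prop:pj-exp-kernel} then delivers part~(1) of Proposition~\ref{prop:pj-props} verbatim. For part~(2), I would repeat Section~\ref{sec:pj-x-loc} with $\widehat X_\gamma$ in place of $X$ to obtain $\|(\widehat X_\gamma - \eta_j)P_{j,\gamma}\| \le C$ for every $\eta_j \in \sigma_j$; then $\|(\widehat X - \eta_j)P_{j,\gamma}\| \le C + C'\gamma\|P_{j,\gamma}\|$ by (1), and finally $\|(X - \eta_j)P_{j,\gamma}\| \le \|(\widehat X - \eta_j)P_{j,\gamma}\| + \|\widehat X - X\|\,\|P_{j,\gamma}\|$ is bounded, with $\|P_{j,\gamma}(X - \eta_j)\|$ handled by the mirror computation. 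The main obstacle I anticipate is purely organizational: keeping the distinction between $\widehat X$ and $\widehat X_\gamma$ straight at every occurrence --- the original proof never had to, since $X$ commutes with $B_{\gamma,\vec a}$ --- and verifying that hypothesis (1) is invoked only where it produces a genuine $O(\gamma)$ error and nowhere that a mere constant would break the argument. A minor additional point of care is that the Shifting Lemma and the chain \eqref{eq:chain} were stated for $X$; rather than re-deriving them I would restate them for an arbitrary bounded perturbation of $X$ and note that $\|\widehat X - X\| \le C$ is precisely the hypothesis that makes the generic version go through.
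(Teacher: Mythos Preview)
Your proposal is correct and follows essentially the same route as the paper's proof of Lemma~\ref{lem:extension}: Kato--Rellich for self-adjointness, the shifted-resolvent chain \eqref{eq:chain} carried through with $\widehat X$ in place of $X$ using only $\|\widehat X - X\|\le C$, and then a final Neumann-series step exploiting $\|\widehat X_\gamma - \widehat X\|\le C'\gamma$ to pass from $P_\gamma\widehat X P_\gamma$ to $P_\gamma\widehat X_\gamma P_\gamma$. The paper is slightly terser on part~(2) of Proposition~\ref{prop:pj-props}, but your transfer $\widehat X_\gamma \to \widehat X \to X$ is exactly the intended argument; once Proposition~\ref{prop:pj-props} holds for the $P_j$, the corollary follows immediately from Lemmas~\ref{lem:pjypj}--\ref{lem:exp_decay} since those rely only on the conclusions of Proposition~\ref{prop:pj-props}.
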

\begin{proof}[Proof of Lemma \ref{lem:extension}]
The first claim of the lemma follows easily from the Kato-Rellich theorem since $\widehat{X} - X$ is a bounded perturbation of $X$. Assuming (2), we can define the corresponding band projectors $P_j$ via the Riesz formula
\[
    P_j := \frac{1}{2 \pi i} \int_{\cC_j} (\lambda - P \widehat{X} P)^{-1} P \dee{\lambda}.
\]
Therefore, we can write $P_{j,\gamma}$ as follows
\[
    P_{j,\gamma} := \frac{1}{2 \pi i} \int_{\cC_j} (\lambda - P_{\gamma} \widehat{X}_{\gamma} P_{\gamma})^{-1} P_{\gamma} \dee{\lambda}.
\]
Our goal is now to establish Proposition \ref{prop:pj-res} for these band projectors. Similar to before, to get bounds which are independent of the choice of $j \in \cJ$, appealing to the shifting lemma (Lemma \ref{lem:shifting}), to prove an analog of Proposition \ref{prop:pj-res}, it suffices to show that for all $j \in \cJ$
\begin{equation}
\label{eq:other-pos-res}
\sup_{\lambda \in \cC_j} \sup_{\eta_j \in \cC_j} \| (\lambda_{\eta_j} - P_{\gamma} \widehat{X}_{\eta_j,\gamma} P_{\gamma})^{-1} \| \leq C'
\end{equation}
where $\lambda_{\eta_j} = \lambda - \eta_j$ and $\widehat{X}_{\eta_j,\gamma} = \widehat{X}_{\gamma} - \eta_j$. 

Equation \eqref{eq:other-pos-res} can be proved by the following series of implications:
\begin{equation}
  \begin{split}
  P \widehat{X} &P \text{ has USG} \Longrightarrow \| (\lambda_{\eta_j} - P \widehat{X}_{\eta_j} P )^{-1} \| < \infty \\
  & \Longrightarrow \| (\lambda_{\eta_j} - P \widehat{X}_{\eta_j} P_{\gamma} )^{-1} \| < \infty \\
  &\Longrightarrow \| (\lambda_{\eta_j} - P_{\gamma} \widehat{X}_{\eta_j} P_{\gamma} )^{-1} \| < \infty \\ & \Longrightarrow \| (\lambda_{\eta_j} - P_{\gamma} \widehat{X}_{\eta_j,\gamma} P_{\gamma} )^{-1} \| < \infty.
\end{split}
\end{equation}
Using $\| \widehat{X} - X \| \leq C$, one can check that the proof in Section \ref{sec:exp-loc-res} proves all of the implications except for the last one.

For the last of these implications, we observe that $\widehat{X}_{\eta_j, \gamma} - \widehat{X}_{\eta_j} = \widehat{X}_{\gamma} - \widehat{X}$ and recall that by assumption $\| \widehat{X}_{\gamma} - \widehat{X} \| \leq C' \gamma$. Now observe that
\begin{align*}
(\lambda_{\eta_j} & - P_{\gamma} \widehat{X}_{\eta_j,\gamma} P_{\gamma} )^{-1} \\
& = (\lambda_{\eta_j} - P_{\gamma} \widehat{X}_{\eta_j} P_{\gamma} )^{-1} \Big( I - P_{\gamma} (\widehat{X}_{\gamma} - \widehat{X}) P_{\gamma} (\lambda_{\eta_j} - P_{\gamma} \widehat{X}_{\eta_j,\gamma} P_{\gamma} )^{-1} \Big)^{-1}.
\end{align*}
Since $(\lambda_{\eta_j} - P_{\gamma} \widehat{X}_{\eta_j} P_{\gamma} )^{-1} $ is bounded, we can choose $\gamma$ sufficiently small so that
\[
\| P_{\gamma} (\widehat{X}_{\gamma} - \widehat{X}) P_{\gamma} (\lambda_{\eta_j} - P_{\gamma} \widehat{X}_{\eta_j,\gamma} P_{\gamma} )^{-1}  \| \leq \frac{1}{2}.
\]
which implies that $(\lambda_{\eta_j} - P_{\gamma} \widehat{X}_{\eta_j,\gamma} P_{\gamma} )^{-1}$ completing the proof of Proposition \ref{prop:pj-res} for the band projectors of $P \widehat{X} P$. Hence following the arguments in Sections \ref{sec:exp-loc-kernel} and \ref{sec:pj-x-loc}, we can conclude the band projectors for $P \widehat{X} P$ satisfy the results of Proposition \ref{prop:pj-props} proving the lemma.
\end{proof}

\section{Proof of the Shifting Lemma (Lemma \ref{lem:shifting})}
\label{sec:shifting-proof}

The basic steps to prove Lemma \ref{lem:shifting} are the following: 
\begin{align*}
  (\lambda_{\eta} - P_{\gamma} X_{\eta} P_{\gamma})^{-1} & = (\lambda - \eta - P_{\gamma} (X - \eta) P_{\gamma})^{-1} \\
                                                         & = (\lambda - \eta - P_{\gamma} X P_{\gamma} + \eta P_{\gamma})^{-1} \\
                                                         & = (\lambda - P_{\gamma} X P_{\gamma} - \eta Q_{\gamma})^{-1}. \numberthis{} \label{eq:shifting-eq1}
\end{align*}
Since $P_{\gamma} + Q_{\gamma} = I$, because of this calculation we know that
\begin{align*}
  \|(\lambda_{\eta} - P_{\gamma} X_{\eta} P_{\gamma})^{-1}\| & = \|(\lambda - P_{\gamma} X P_{\gamma} - \eta Q_{\gamma})^{-1}\| \\
                                                             & \leq \|(\lambda - P_{\gamma} X P_{\gamma} - \eta Q_{\gamma})^{-1} P_{\gamma}\| + \|(\lambda - P_{\gamma} X P_{\gamma} - \eta Q_{\gamma})^{-1} Q_{\gamma}\|.
\end{align*}

Since $P_{\gamma} Q_{\gamma} = Q_{\gamma} P_{\gamma} = 0$, we should expect that shifting by $\eta Q_{\gamma}$ should not change what happens on $\range{(P_{\gamma})}$. Similarly, the action of $P_{\gamma} X P_{\gamma}$ should not change what happens on $\range{(Q_{\gamma})}$. This observation leads us to expect that:
\begin{align}
  &  (\lambda - P_{\gamma} X P_{\gamma} - \eta Q_{\gamma})^{-1} P_{\gamma} = (\lambda - P_{\gamma} X P_{\gamma})^{-1} P_{\gamma} \label{eq:shifting-eq2} \\
  & (\lambda - P_{\gamma} X P_{\gamma} - \eta Q_{\gamma})^{-1} Q_{\gamma} = (\lambda - \eta Q_{\gamma})^{-1} Q_{\gamma}. \label{eq:shifting-eq3}
\end{align}
By similar reasoning:
\begin{equation}
  \label{eq:shifting-eq4}
  (\lambda - \eta Q_{\gamma})^{-1} Q_{\gamma} = (\lambda - \eta + \eta P_{\gamma})^{-1} Q_{\gamma} = (\lambda - \eta)^{-1} Q_{\gamma}.
\end{equation}
Assuming Equations \eqref{eq:shifting-eq2}, \eqref{eq:shifting-eq3}, \eqref{eq:shifting-eq4} are true, we conclude that:
\begin{equation}
  \label{eq:shifting-eq5}
  \|(\lambda_{\eta} - P_{\gamma} X_{\eta} P_{\gamma})^{-1}\| \leq \|(\lambda - P_{\gamma} X P_{\gamma})^{-1}\| \|P_{\gamma}\| + \vert{}\lambda - \eta\vert{}^{-1} \|Q_{\gamma}\|.
\end{equation}
Since $P$ admits an exponentially localized kernel (Definition \ref{def:exp-loc-kernel}) we know that $\|P_{\gamma}\|$ and $\|Q_{\gamma}\|$ are bounded. Because of the uniform spectral gaps assumption on $PXP$, since we have chosen $\eta \in \sigma_j$ and $\lambda \in \cC_j$ we also know that $\vert{} \lambda - \eta \vert{}^{-1}$ is bounded by a constant independent of $j$ and $\eta$. Therefore, Equation \eqref{eq:shifting-eq5} shows that
\[
  \|(\lambda - P_{\gamma} X P_{\gamma})^{-1}\| < \infty \Longrightarrow \|(\lambda_{\eta} - P_{\gamma} X_{\eta} P_{\gamma})^{-1}\| < \infty.
\]
We can prove the reverse implication by instead starting with the calculation
\begin{align*}
  (\lambda - P_{\gamma} X P_{\gamma})^{-1} & = (\lambda - \eta + \eta - P_{\gamma} (X - \eta + \eta) P_{\gamma})^{-1} \\
                                           & = (\lambda_{\eta} - P_{\gamma} X_{\eta} P_{\gamma} + \eta Q_{\gamma})^{-1},
\end{align*}
and proceeding along similar steps.

What remains to finish the proof of Lemma \ref{lem:shifting} is to prove that Equations \eqref{eq:shifting-eq2}, \eqref{eq:shifting-eq3}, \eqref{eq:shifting-eq4} are all true. For this, we have the following technical lemma:
\begin{lemma}
  \label{lem:shifting-technical}
  Let $\tilde{P}, \tilde{Q}$ be any pair of bounded operators such that $\tilde{P} \tilde{Q} = \tilde{Q} \tilde{P} = 0$. Next, let $A,B$ be possibly unbounded operators densely defined on a common domain $\cD$. Suppose further that $\|[\tilde{P},A]\|$ both $\|[\tilde{Q},B]\|$ are bounded.

  If $\tilde{\lambda} \in \field{C}$ is any scalar such that $\|(\tilde{\lambda} + \tilde{P}A\tilde{P})^{-1}\|$ is bounded then $\|(\tilde{\lambda} + \tilde{P}A\tilde{P} + \tilde{Q}B\tilde{Q})^{-1} \tilde{P}\|$ is also bounded and
  \[
    (\tilde{\lambda} + \tilde{P}A\tilde{P})^{-1} \tilde{P} = (\tilde{\lambda} + \tilde{P}A\tilde{P} + \tilde{Q}B\tilde{Q})^{-1} \tilde{P}.
  \]
\end{lemma}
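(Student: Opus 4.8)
The plan is to write down an explicit bounded candidate for $(\tilde\lambda+\tilde P A\tilde P+\tilde Q B\tilde Q)^{-1}\tilde P$ and verify it works. Abbreviate $T:=\tilde\lambda+\tilde P A\tilde P$ (densely defined, with bounded inverse by hypothesis), $S:=T+\tilde Q B\tilde Q$, and set $W:=T^{-1}\tilde P$; since $T^{-1}$ and $\tilde P$ are bounded, $W$ is bounded. The claim will follow once we establish the single operator identity $SW=\tilde P$: in every place the lemma is invoked, $S$ coincides — via \eqref{eq:shifting-eq1} — with a shifted resolvent already known to be boundedly invertible, so applying $S^{-1}$ on the left of $SW=\tilde P$ gives $W=S^{-1}\tilde P$, which is simultaneously the boundedness statement and the asserted identity $(\tilde\lambda+\tilde P A\tilde P)^{-1}\tilde P=S^{-1}\tilde P$.

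The algebraic heart is that $\tilde Q$ \emph{decouples} from $T$. From $\tilde P\tilde Q=\tilde Q\tilde P=0$ and associativity of composition one gets, on the natural domains, $\tilde Q\,\tilde P A\tilde P=(\tilde Q\tilde P)\,A\,\tilde P=0$ and $\tilde P A\tilde P\,\tilde Q=\tilde P A\,(\tilde P\tilde Q)=0$; hence $T\tilde Q=\tilde\lambda\tilde Q=\tilde Q T$ (the first identity everywhere, the second on $\mathcal D(T)$). Because $T$ is invertible this lifts to $\tilde Q T^{-1}=T^{-1}\tilde Q$: for any $w$ both $\tilde Q T^{-1}w$ and $T^{-1}\tilde Q w$ lie in $\mathcal D(T)$ (the former since $\tilde P\tilde Q T^{-1}w=0$), and applying $T$ to each yields $\tilde Q w$, so they coincide by injectivity of $T$. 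Therefore
\[
\tilde Q W=\tilde Q T^{-1}\tilde P=T^{-1}(\tilde Q\tilde P)=0,
\]
so $\range(W)\subseteq\ker\tilde Q\cap\mathcal D(T)$; in particular $W$ maps into $\mathcal D(S)$ and $\tilde Q B\tilde Q W=\tilde Q B(\tilde Q W)=0$. Consequently
\[
SW=TW+\tilde Q B\tilde Q W=\tilde P+0=\tilde P,
\]
as required.

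The only substantive work beyond this algebra is the functional-analytic bookkeeping: since $A$ (hence $T$ and $S$) is in general unbounded, one must check at each step that the vectors produced lie in the relevant domains, and that identities such as $\tilde Q\tilde P A\tilde P=0$, $\tilde P A\tilde P\tilde Q=0$ and $TW=\tilde P$ hold on those domains rather than merely formally. This is exactly where the hypotheses enter: that $A$ and $B$ share the dense domain $\cD$, together with $\|[\tilde P,A]\|<\infty$ and $\|[\tilde Q,B]\|<\infty$ (which presuppose that $\tilde P$ and $\tilde Q$ leave $\cD$ invariant, via $A\tilde P=\tilde P A-[\tilde P,A]$ on $\cD$), guarantees that $\tilde P A\tilde P$ and $\tilde Q B\tilde Q$, and hence $T$ and $S$, are densely defined on $\cD$ with the expected domains, so the rearrangements above are legitimate. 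I expect this domain tracking to be the main obstacle; the algebraic identity itself is robust, and in particular uses neither $\tilde\lambda\neq0$, nor idempotency of $\tilde P,\tilde Q$, nor $\tilde P+\tilde Q=I$.
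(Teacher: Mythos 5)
Your algebraic derivation of $SW=\tilde P$ is correct, and the route you take — establishing $T\tilde Q=\tilde\lambda\tilde Q=\tilde Q T$, lifting to $[T^{-1},\tilde Q]=0$, concluding $\tilde Q W=0$ and hence $\tilde Q B\tilde Q W=0$ — is a clean alternative to the paper's commutator manipulation $[S,T^{-1}]=0$; both ultimately reduce to $S\tilde P = T\tilde P$. The domain bookkeeping you flag is handled appropriately. So the surjectivity half of the argument is sound.

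The gap is in how you pass from $SW=\tilde P$ to $W=S^{-1}\tilde P$. You justify this by asserting that ``in every place the lemma is invoked, $S$ coincides with a shifted resolvent already known to be boundedly invertible,'' and then left-multiply by $S^{-1}$. But this is circular. Look at how Lemma~\ref{lem:shifting} uses the technical lemma: in the direction ``$\|(\lambda-P_\gamma X P_\gamma)^{-1}\|<\infty\Rightarrow\|(\lambda_\eta-P_\gamma X_\eta P_\gamma)^{-1}\|<\infty$,'' the operator $S=\lambda-P_\gamma X P_\gamma-\eta Q_\gamma=\lambda_\eta-P_\gamma X_\eta P_\gamma$ is \emph{exactly} the thing whose bounded invertibility is being established via \eqref{eq:shifting-eq5}; nothing about $S$ is known a priori beyond the hypotheses of the lemma. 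So you cannot appeal to $S^{-1}$ existing as a bounded operator — the lemma is the tool used to prove that. What your identity $SW=\tilde P$ alone shows is that $W$ is \emph{a} bounded right-preimage of $\tilde P$ under $S$; without injectivity of $S$ (at least on the relevant subspace), $S^{-1}\tilde P$ is not a well-defined operator and the equality $W=S^{-1}\tilde P$ is not meaningful.

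The missing step is short and is where the hypothesis $\|T^{-1}\|<\infty$ does real work beyond giving you $W$: for $v\in\range(\tilde P)\cap\mathcal D$ one has $\tilde Q B\tilde Q v=0$ (from $\tilde Q\tilde P=0$), so $Sv=Tv$ and hence
\[
\|Sv\|=\|Tv\|\ge\|T^{-1}\|^{-1}\|v\|.
\]
This lower bound gives injectivity of $S$ on $\range(\tilde P)\cap\mathcal D$. Combined with $SW=\tilde P$ (surjectivity onto $\range(\tilde P)$, with $W$ mapping into $\range(\tilde P)\cap\mathcal D$ when $\tilde P$ is idempotent, as in the application), this shows $S$ is a bijection on $\range(\tilde P)$, so $S^{-1}\tilde P$ is well-defined, bounded, and equal to $W=T^{-1}\tilde P$. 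This is precisely the paper's opening estimate, which your proposal omits.
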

Note that applying Lemma \ref{lem:shifting-technical} three times proves that Equations \eqref{eq:shifting-eq2}, \eqref{eq:shifting-eq3}, \eqref{eq:shifting-eq4} are all true.

The assumption that $\|[\tilde{P}, A]\|$ and $\|[\tilde{Q}, B]\|$ are bounded is purely a technical assumption which ensures that $\tilde{P} A \tilde{P}$ and $\tilde{Q} B \tilde{Q}$ are well-defined operators on $\cD$. To see why, observe that
\[
  \tilde{P} A \tilde{P} = \tilde{P} [A, \tilde{P}] + \tilde{P} \tilde{P} A \text{ and } \tilde{Q} B \tilde{Q} = \tilde{Q} [B, \tilde{Q}] + \tilde{Q} \tilde{Q} B.
\]
For our purposes, the only unbounded operator we will need to be careful with is the operator $X$. Since $P$ satisfies admits an exponentially localized kernel (Definition \ref{def:exp-loc-kernel}) we know that $\|[P_{\gamma}, X]\| = \|[Q_{\gamma}, X]\| < \infty$, therefore we may apply Lemma \ref{lem:shifting-technical} without worry.
\begin{proof}[Proof of Lemma \ref{lem:shifting-technical}]
First, note that $\tilde{\lambda} + \tilde{P} A \tilde{P} + \tilde{Q} B \tilde{Q}$ is injective on $\range(\tilde{P})$ since for arbitrary non-zero $v \in \range(\tilde{P}) \cap \mathcal{D}$,
\begin{equation*}
\begin{split}
    &\left\| \left( \tilde{\lambda} + \tilde{P} A \tilde{P} + \tilde{Q} B \tilde{Q} \right) v \right\| = \left\| \left( \tilde{\lambda} + \tilde{P} A \tilde{P} + \tilde{Q} B \tilde{Q} \right) \tilde{P} v \right\|     \\
    &= \left\| \left( \tilde{\lambda} + \tilde{P} A \tilde{P} \right) v \right\| \geq \| (\tilde{\lambda} + PAP)^{-1} \|^{-1} \| v \|.
\end{split}
\end{equation*}
  Now observe that since $\tilde{P} \tilde{Q} = \tilde{Q} \tilde{P} = 0$
  \[
    [ (\tilde{\lambda} + \tilde{P}A\tilde{P} + \tilde{Q}B\tilde{Q}), (\tilde{\lambda} + \tilde{P}A\tilde{P})] = 0.
  \]
  Since $(\tilde{\lambda} + \tilde{P}A\tilde{P})^{-1}$ is well defined, this implies that
  \[
    [ (\tilde{\lambda} + \tilde{P}A\tilde{P} + \tilde{Q}B\tilde{Q}), (\tilde{\lambda} + \tilde{P}A\tilde{P})^{-1} ] = 0.
  \]
  Since $\tilde{Q}\tilde{P} = 0$ we also have that
  \begin{align*}
    & (\tilde{\lambda} + \tilde{P}A\tilde{P} + \tilde{Q}B\tilde{Q}) \tilde{P} = (\tilde{\lambda} + \tilde{P}A\tilde{P}) \tilde{P} \\[.5ex]
    & ~~~ \Longleftrightarrow (\tilde{\lambda} + \tilde{P}A\tilde{P})^{-1} (\tilde{\lambda} + \tilde{P}A\tilde{P} + \tilde{Q}B\tilde{Q}) \tilde{P} = \tilde{P} \\[.5ex]
    & ~~~ \Longleftrightarrow (\tilde{\lambda} + \tilde{P}A\tilde{P} + \tilde{Q}B\tilde{Q}) (\tilde{\lambda} + \tilde{P}A\tilde{P})^{-1} \tilde{P} =  \tilde{P}.
  \end{align*}
  The final equality implies that $\range{(\tilde{P})} \subseteq \range{(\tilde{\lambda} + \tilde{P}A\tilde{P} + \tilde{Q}B\tilde{Q})}$. Since $(\tilde{\lambda} - \tilde{P}A\tilde{P})^{-1}$ is bounded we can conclude that $(\tilde{\lambda} + \tilde{P}A\tilde{P} + \tilde{Q}B\tilde{Q})$ is invertible on $\range{(\tilde{P})}$ and moreover we have
  \[
    (\tilde{\lambda} + \tilde{P}A\tilde{P})^{-1} \tilde{P} =  (\tilde{\lambda} + \tilde{P}A\tilde{P} + \tilde{Q}B\tilde{Q})^{-1} \tilde{P}.
  \]
\end{proof}

\section{Proof that the generalized Wannier functions of Theorem \ref{th:main_theorem} are Wannier functions when $A$ and $V$ are periodic (Theorem \ref{th:periodic_case})} \label{sec:gWFs_closed}

We start with some notation. Let $\Lambda$ denote a two-dimensional lattice generated by non-parallel lattice vectors $\vec{v}_1, \vec{v}_2 \in \field{R}^2$, i.e.
\begin{equation} \label{eq:lattice_lambda}
    \Lambda = \left\{ m_1 \vec{v}_1 + m_2 \vec{v}_2 : ( m_1, m_2 ) \in \field{Z}^2 \right\}.
\end{equation}
We assume that the functions $A$ and $V$ are periodic with respect to $\Lambda$ in the sense that for all $\vec{x} \in \field{R}^2$
\begin{equation}
    A(\vec{x} + \vec{v}) = A(\vec{x}), \quad V(\vec{x} + \vec{v}) = V(\vec{x}) \text{ for all $\vec{v} \in \Lambda$.}
\end{equation}
Define $L_j := \vert{}\vec{v}_j\vert{}$, $j = 1,2$. We work with co-ordinates defined with respect to the lattice vectors $\vec{v}_1$ and $\vec{v}_2$ so that for $\vec{x} \in \field{R}^2$,
\begin{equation} \label{eq:x_coords}
    \vec{x} = x_1 \vec{v}_1 + x_2 \vec{v}_2
\end{equation}
for $\vec{x} = (x_1, x_2) \in \field{R}^2$. We require the following assumption on the position operators $X$ and $Y$. 
\begin{assumption} \label{as:XY_periodic}
We assume the position operators $X$ and $Y$ are defined with respect to the co-ordinates \eqref{eq:x_coords} by
\begin{equation} \label{eq:XY_periodic}
    X f(\vec{x}) = L_1 x_1 f(\vec{x}), \quad Y f(\vec{x}) = L_2 x_2 f(\vec{x}).
\end{equation}
\end{assumption}
Let $T_{\vec{v}_1}$ and $T_{\vec{v}_2}$ denote translation operators associated to the Bravais lattice vectors $\vec{v}_1, \vec{v}_2$, so that
\begin{equation}
    T_{\vec{v}_j} f(\vec{r}) = f(\vec{r} - \vec{v}_j), \quad j \in \{1,2\}.
\end{equation}
First, note that by the Riesz projection formula, $[T_{\vec{v}_j},P] = 0$, $j = 1,2$. Since $[T_{\vec{v}_1},X] = - L_1 T_{\vec{v}_1}$ and $[T_{\vec{v}_2},X] = 0$, we have that
\begin{equation}
    [T_{\vec{v}_1},PXP] = - L_1 P T_{\vec{v}_1}, \quad [T_{\vec{v}_2},PXP] = 0.
\end{equation}
The first of these identities implies that the spectrum of $PXP$ restricted to $\range(P)$ is the union of shifted copies of the spectrum of $PXP$ restricted to the interval $[0,L_1)$, i.e. 
\begin{equation} \label{eq:spectrum_union}
    \sigma( PXP\rvert_{\range(P)} ) = \bigcup_{j \in \field{Z}} \{ \sigma_{[0,L_1)} + L_1 j \},
\end{equation}
where $\sigma_{[0,L_1)} := \sigma( PXP\rvert_{\range(P)} ) \cap [0,L_1)$. The $L^2$ spectrum of $PXP$ is then $\{ 0 \} \cup \sigma( PXP\rvert_{\range(P)} )$. 

Using the uniform spectral gaps assumption, we can assume that, perhaps after a constant shift of $X$, the interval $(-\epsilon,\epsilon) \notin \sigma( PXP \rvert_{\range(P)} )$. Because of the symmetry \eqref{eq:spectrum_union} of the spectrum, we have that $(- \epsilon,\epsilon ) + \mathbb{Z} L_1 \notin \sigma( PXP\rvert_{\range(P)} )$, and hence the components $\{ \sigma_{[0,L_1)} + L_1 j \}$ of $\sigma( PXP \rvert_{\range(P)} )$ are separated by gaps. We now assume the following.
\begin{assumption} \label{as:Pj_periodic}
    We define the band projectors $P_j$ as in Definition \ref{def:band-projectors} with $\sigma_j := \{ \sigma_{[0,L_1)} + L_1 j \}$ for $j \in \mathbb{Z}$.
\end{assumption}
Note that other choices of $\sigma_j$ which are compatible with the uniform spectral gaps assumption are possible if $\sigma_{[0,L_1)}$ has separated components, but for simplicity we do not consider this here.  The key consequence of Assumption \ref{as:Pj_periodic} is that
\begin{equation} \label{eq:Pj_projectors_periodic}
    T_{\vec{v}_1} P_j = P_{j + 1} T_{\vec{v}_1}.
\end{equation}


Using the Riesz formula again, we have that $[T_{\vec{v}_2},P_j] = 0$. Using the facts that $[T_{\vec{v}_1},Y] = 0$ and $[T_{\vec{v}_2},Y] = - L_2 T_{\vec{v}_2}$ we have that
\begin{equation}
    \sigma (P_j Y P_j \rvert_{\range( P_j )}) = \bigcup_{m \in \field{Z}} \{ \sigma_{[0,L_2)} + L_2 m \}
\end{equation}
where $\sigma_{[0,L_2)} := \sigma (P_j Y P_j \rvert_{\range( P_j )}) \cap [0,L_2)$. We can therefore index the eigenfunctions of $P_j Y P_j$ by $m \in \field{Z}$ for every $j \in \field{Z}$.

Now let $\{ \psi_{j,m} \}_{(j,m) \in \field{Z} \times \field{Z}}$ denote the set of eigenfunctions of the operators $P_j Y P_j$. We claim that this set is closed under $T_{\vec{v}_1}$ and $T_{\vec{v}_2}$. First, since $T_{\vec{v}_2}$ commutes with $P_j$ it is clear that if $\psi_{j,m}$ is an eigenfunction of $P_j Y P_j$ with eigenvalue $\mu_{j,m}$, then
\begin{equation} \label{eq:mu_calc}
    P_j Y P_j T_{\vec{v}_2} \psi_{j,m} = T_{\vec{v}_2} P_j (Y + L_2) P_j \psi_{j,m} = ( \mu_{j,m} + L_2 ) T_{\vec{v}_2} \psi_{j,m}
\end{equation}
and hence $T_{\vec{v}_2} \psi_{j,m}$ is another eigenfunction of $P_j Y P_j$ with eigenvalue $\mu_{j,m} + L_2$, so the set is closed under $T_{\vec{v}_2}$. To see the set is closed under $T_{\vec{v}_1}$, note that applying $T_{\vec{v}_1}$ to both sides of the eigenequation $P_j ( Y - \mu ) P_j \psi_{j,m} = 0$ yields $P_{j'} ( Y - \mu ) P_{j'} T_{\vec{v}_1} \psi_{j,m} = 0$ for some $j'$ and hence $T_{\vec{v}_1} \psi_{j,m} = \psi_{j',m}$ for some $j' \in \field{Z}$.

We now consider the centers $\{ (a_{j,m},b_{j,m}) \}_{(j,m) \in \field{Z} \times \field{Z}}$. Working in the co-ordinates \eqref{eq:x_coords}, closure under translation in the lattice $\Lambda$ is equivalent to closure of the set of centers under component-wise integer addition. Now recall that for each $j$, in the statement of Theorem \ref{th:main_theorem}, the $a_{j,m}$ can be chosen as any point in $\sigma_j$, where $\sigma_j$ is the spectrum associated to the spectral projection $P_j$, for all $m$. So, since $\sigma_{j+1} = \sigma_j + L_1$ as we have already noted, in the co-ordinates \eqref{eq:x_coords} we can choose the $a_{j,m}$ such that $a_{j+1,m} = a_{j,m} + 1$, and hence the set of centers is closed under integer addition in the first component. To see the same is true of the second component, note that \eqref{eq:mu_calc} shows that the eigenvalues of $P_j Y P_j$ are closed under addition of $L_2$ for each $j$. Since the centers $b_{j,m}$ are defined as the associated eigenvalues of $P_j Y P_j$ of the $\psi_{j,m}$ we see that the set of centers is also closed under integer addition in their second component (again working in the co-ordinates \eqref{eq:x_coords}).

\section{Construction of an analytic and periodic Bloch frame when the uniform spectral gap assumption holds in a crystalline insulator}
\label{sec:chern-app}

In this section we connect the uniform spectral gap assumption (Assumption \ref{def:usg}) to the existing well-developed theory of exponentially-localized Wannier functions in crystalline insulators. Specifically, we prove that if the uniform spectral gap assumption (Assumption \ref{def:usg}) holds for a two-dimensional \emph{crystalline} insulator, then it is possible to construct an analytic and periodic Bloch frame for the Fermi projection. It is well established that existence of such a frame is equivalent to triviality of the Chern number, a topological invariant associated to the Fermi projection, and to the existence of exponentially-localized composite Wannier functions defined via integration with respect to quasi-momentum over the Brillouin zone
\cite{1964DesCloizeaux,1964DesCloizeaux2,1983Nenciu,1988HelfferSjostrand,1991Nenciu,2007BrouderPanatiCalandraMourougane,2007Panati,2018MonacoPanatiPisanteTeufel}. To our knowledge, the observation that the uniform spectral gap assumption (Assumption \ref{def:usg}) is sufficient for the construction of an analytic and periodic Bloch frame is original to the mathematical literature, although it is implicit in the works of Soluyanov and Vanderbilt \cite{2011SoluyanovVanderbilt,2011SoluyanovVanderbilt2,2014TaherinejadGarrityVanderbilt,2017GreschYazyevTroyerVanderbiltBernevigSoluyanov,2018WuZhangSongTroyerSoluyanov}, and the techniques required for the proof are basically known \cite{2016CorneanHerbstNenciu,Cances2017}.




We will make the connection between the operator $PXP$ and existence of a periodic and analytic Bloch frame through the hybrid Wannier functions (HWFs) first described in \cite{2001SgiarovelloPeressiResta} (see also \cite{1991Niu}). The key observation is that, with an appropriate Bloch frame, HWFs are exact eigenfunctions of $PXP$ \cite{1991Niu,2011SoluyanovVanderbilt,2011SoluyanovVanderbilt2,2014TaherinejadGarrityVanderbilt,2017GreschYazyevTroyerVanderbiltBernevigSoluyanov,2018WuZhangSongTroyerSoluyanov}. 
We will see that the uniform spectral gap assumption (Assumption \ref{def:usg}) gives a sufficient condition for existence of analytic and periodic Bloch frame over the whole Brillouin zone.

The structure of this section is as follows. We will first recall the aspects of Bloch theory which are necessary to state our results in Section \ref{sec:bloch}. We will then make our assumptions precise and state the theorem which we will prove (Theorem \ref{th:Chern_thm}) in Section \ref{sec:prob_stat}. We will then construct an analytic frame over the whole Brillouin zone which is periodic with respect to \emph{one} of the components of the quasi-momentum (Section \ref{sec:construct_gauge}). We will then show that the hybrid Wannier functions defined with respect to this frame diagonalize $PXP$, and show that the uniform spectral gap assumption implies a condition necessary for existence of an analytic Bloch frame which is periodic with respect to \emph{both} components of the quasi-momentum (Section \ref{sec:implies_cond}).

\subsection{Notation and Bloch theory in two spatial dimensions} \label{sec:bloch} We consider Hamiltonians with the form (recall Assumption \ref{as:H_assump})
\begin{equation} \label{eq:H_per}
    H = ( A - i \nabla )^2 + V,
\end{equation}
where $V \in L^2_{\uloc}(\R^2; \R)$, $A \in L^4_{\loc}(\R^2; \R^2)$, and $\divd A \in L^2_{\loc}(\R^2; \R)$. Let $\Lambda$ denote a two-dimensional lattice generated by non-parallel lattice vectors $\vec{v}_1, \vec{v}_2 \in \field{R}^2$, i.e.
\begin{equation} \label{eq:lattice_lambda}
    \Lambda = \left\{ m_1 \vec{v}_1 + m_2 \vec{v}_2 : ( m_1, m_2 ) \in \field{Z}^2 \right\}.
\end{equation}
We assume that the functions $A$ and $V$ are periodic with respect to $\Lambda$ in the sense that for all $\vec{x} \in \field{R}^2$
\begin{equation}
    A(\vec{x} + \vec{v}) = A(\vec{x}), \quad V(\vec{x} + \vec{v}) = V(\vec{x}) \text{ for all $\vec{v} \in \Lambda$.}
\end{equation}

Define $L_j := |\vec{v}_j|, j = 1,2$. We work with co-ordinates defined with respect to the lattice vectors $\vec{v}_1$ and $\vec{v}_2$ so that for $\vec{x} \in \field{R}^2$,
\begin{equation} \label{eq:x_coords}
    \vec{x} = x_1 \vec{v}_1 + x_2 \vec{v}_2
\end{equation}
for $(x_1, x_2) \in \field{R}^2$. We let $\Omega$ denote a fundamental cell of the lattice $\Lambda$, i.e.
\begin{equation} \label{eq:Omega}
    \Omega := \left\{ x_1 \vec{v}_1 + x_2 \vec{v}_2 : (x_1,x_2) \in [0,1]^2 \right\}.
\end{equation}

For any $\vec{k} \in \field{R}^2$, let $L^2_{\vec{k}}$ denote the space of $L^2$ functions on $\Omega$ with ``$\vec{k}$-quasi-periodic'' boundary conditions
\begin{equation} \label{eq:L2k}
    L^2_{\vec{k}} := \left\{ f(\vec{x}) \in L^2(\Omega) : f(\vec{x} + \vec{v}) = e^{i \vec{k} \cdot \vec{v}} f(\vec{x}) \quad \forall \vec{v} \in \Lambda \right\},
\end{equation}
and let $L^2_{per}$ denote the same space but with periodic boundary conditions
\begin{equation}
    L^2_{per} := \left\{ f(\vec{x}) \in L^2(\Omega) : f(\vec{x} + \vec{v}) = f(\vec{x}) \quad \forall \vec{v} \in \Lambda \right\}.
\end{equation}
The operator $H$ restricted to any of the spaces $L^2_{\vec{k}}$ is self-adjoint and has compact resolvent. Its spectrum therefore consists only of real eigenvalues which can be ordered with multiplicity as
\begin{equation} \label{eq:Bloch_bands}
    E_1(\vec{k}) \leq E_2(\vec{k}) \leq ... \leq E_n(\vec{k}) \leq ....
\end{equation}
The functions $\vec{k} \mapsto E_n(\vec{k})$ are known as Bloch band functions, and the associated $L^2_{\vec{k}}$-eigenfunctions of these eigenvalues $\Phi_n(\vec{x},\vec{k})$ are known as Bloch functions. The parameter $\vec{k}$ is known as the quasi-momentum (or crystal momentum). 

The dual lattice is defined by
\begin{equation}
    \Lambda^* := \left\{ m_1 \vec{w}_1 + m_2 \vec{w}_2 : (m_1,m_2) \in \field{Z}^2 \right\},
\end{equation}
where $\vec{w}_1$ and $\vec{w}_2$ are defined by the relations
\begin{equation}
    \vec{v}_i \cdot \vec{w}_j = 2 \pi \delta_{ij} \quad i,j = 1,2.
\end{equation}
We work with co-ordinates defined with respect to the lattice vectors $\vec{w}_1$ and $\vec{w}_2$ so that
\begin{equation} \label{eq:k_coords}
    \vec{k} = \frac{ k_1 }{ 2 \pi } \vec{w}_1 + \frac{ k_2 }{ 2 \pi } \vec{w}_2.
\end{equation}
Since $L^2_{\vec{k}+\vec{w}} = L^2_{\vec{k}}$ for any $\vec{w} \in \Lambda^*$, the Bloch band functions \eqref{eq:Bloch_bands} and associated eigenprojections $P_n(\vec{k})$ are periodic with respect to $\Lambda^*$. There is thus no loss in restricting attention to $\vec{k}$ only in a fundamental cell of the dual lattice which we denote by $\mathcal{B}$ and refer to as the Brillouin zone (see Remark \ref{rem:BZ})
\begin{equation} \label{eq:B}
    \mathcal{B} := \left\{ \frac{k_1}{2 \pi} \vec{w}_1 + \frac{k_2}{2 \pi} \vec{w}_2 : (k_1,k_2) \in [-\pi,\pi]^2 \right\}.
\end{equation}

The Bloch functions $\Phi_n(\vec{x},\vec{k})$ extend naturally to functions on $\field{R}^2$ using the boundary condition \eqref{eq:L2k}. These functions form a complete set in $L^2(\field{R}^2)$ in the following sense (see \cite{gel1950expansion,Odeh1964,papanicolau1978asymptotic} for proofs). For any $f \in L^2(\field{R}^2)$, define the Floquet-Bloch coefficients
\begin{equation}
    \tilde{f}_n(\vec{k}) := \ip{ \Phi_n(\cdot,\vec{k}) }{ f(\cdot) }_{L^2(\field{R}^2)} \quad n \geq 1, \vec{k} \in \mathcal{B}. 
\end{equation}
Then, for a suitable normalization of the $\Phi_n(\vec{x},\vec{k})$,
\begin{equation}
    f(\vec{x}) = \sum_{n \geq 1} \inty{\mathcal{B}}{}{ \tilde{f}_n(\vec{k}) \Phi_n(\vec{x},\vec{k}) }{\vec{k}}.
\end{equation}

It follows that the $L^2(\field{R}^2)$-spectrum of $H$ is simply the union of the set of 
closed intervals swept out by the maps $\vec{k} \mapsto E_n(\vec{k})$ as $\vec{k}$ varies over the Brillouin zone, i.e.
\begin{equation}
    \sigma(H) = \bigcup_{ n \in \field{N} } \bigcup_{ \vec{k} \in \mathcal{B} } E_n(\vec{k}).
\end{equation}

The eigenvalues \eqref{eq:Bloch_bands} and associated Bloch functions can equivalently be obtained by noting that $\Phi_n(\vec{x},\vec{k}) \in L^2_{\vec{k}}$ is an eigenfunction of $H$ with eigenvalue $E_n(\vec{k})$ if and only if $\chi_n(\vec{x},\vec{k}) = e^{-i \vec{k} \cdot \vec{x}} \Phi_n(\vec{x},\vec{k})$ is an eigenfunction of the operator
\begin{equation} \label{eq:Bloch_op}
    H(\vec{k}) := ( \vec{k} + A - i \nabla )^2 + V
\end{equation}
acting on $L^2_{per}$ with eigenvalue $E_n(\vec{k})$. Each $\chi_n(\vec{x},\vec{k})$ is known as a ``periodic Bloch function''. The eigenprojections $P_{per,n}(\vec{k})$ onto periodic Bloch functions satisfy the symmetry
\begin{equation} \label{eq:chi}
    P_{per,n}(\vec{k} + \vec{w}) = e^{- i \vec{w} \cdot \vec{k}} P_{per,n}(\vec{k}) e^{i \vec{w} \cdot \vec{k}} \quad \forall \vec{w} \in \Lambda^*.
\end{equation}

\begin{remark} \label{rem:BZ}
Note that generally speaking the Brillouin zone refers to the fundamental cell of the dual lattice defined as the set of points closer to the origin of the reciprocal lattice than to any other reciprocal lattice points (Wigner-Seitz cell), which does not agree with the definition \eqref{eq:B} for every Bravais lattice. Here we abuse notation because for our purposes it is simpler to work with the rectangular geometry \eqref{eq:B}.
\end{remark}


\subsection{Problem statement} \label{sec:prob_stat}

We at this point make our assumptions precise. Let $H$ be a periodic Schr\"odinger operator as in \eqref{eq:H_per}.
\begin{assumption} \label{as:per_spec_gap}
We assume that there exists a positive integer $N$ such that the $N$th and $N+1$th Bloch band functions of $H$ are separated by a uniform gap, i.e.
\begin{equation}
    E_{\text{gap}} := \min_{\vec{k} \in \mathcal{B}} E_{N+1}(\vec{k}) - \max_{\vec{k} \in \mathcal{B}} E_N(\vec{k}) > 0.
\end{equation}
\end{assumption}
Assumption \ref{as:per_spec_gap} is essentially Assumption \ref{as:gap_assump} specialized to the crystalline case. Note that we do not assume gaps between the other Bloch band functions $E_n(\vec{k})$ where $1 \leq n \leq N$. We define the Fermi projection $P : L^2(\field{R}^2) \rightarrow L^2(\field{R}^2)$ as the orthogonal projection onto the set of Bloch functions associated with the first $N$ bands, i.e.
\begin{equation} \label{eq:P_2}
    P f(\vec{x}) = \sum_{n = 1}^N \inty{\mathcal{B}}{}{ \ip{ \Phi_n(\cdot,\vec{k})}{ f(\cdot) }_{L^2(\field{R}^2)} \Phi_n(\vec{x},\vec{k}) }{\vec{k}}.
\end{equation}
Because of the gap assumption (Assumption \ref{as:per_spec_gap}), we could equivalently define $P$ via a Riesz projection. It is useful to introduce orthogonal projections $P(\vec{k}) : L^2_{\vec{k}} \rightarrow L^2_{\vec{k}}$ onto the sets of Bloch functions associated with the first $N$ bands for each $\vec{k} \in \mathcal{B}$, i.e. 
\begin{equation}
    P(\vec{k}) f(\vec{x}) = \sum_{n = 1}^N \ip{\Phi_n(\cdot,\vec{k})}{f(\cdot)}_{L^2_{\vec{k}}} \Phi_n(\vec{x},\vec{k}).
\end{equation}
We finally introduce orthogonal projections $P_{per}(\vec{k}) : L^2_{per} \rightarrow L^2_{per}$ onto the sets of periodic Bloch functions associated with the first $N$ bands for each $\vec{k} \in \mathcal{B}$, i.e.
\begin{equation}
    P_{per}(\vec{k}) f(\vec{x}) = \sum_{n = 1}^N \ip{\chi_n(\cdot,\vec{k})}{f(\cdot)}_{L^2_{per}} \chi_n(\vec{x},\vec{k}).
\end{equation}

We now introduce the precise concept of a Bloch frame.
\begin{definition} \label{def:BF_gauge}
A Bloch frame is a choice of basis for $\range P_{per}(\vec{k})$ at every $\vec{k} \in \mathcal{B}$, i.e. a collection of maps
\begin{equation}
\begin{split}
    \mathcal{B} &\rightarrow \left( L^2_{per} \right)^N  \\
    \vec{k} &\mapsto (\Xi_1(\vec{x},\vec{k}),...,\Xi_N(\vec{x},\vec{k})),
\end{split}
\end{equation}
where the set $(\Xi_1(\vec{x},\vec{k}),...,\Xi_N(\vec{x},\vec{k}))$ is a basis for $\range P_{per}(\vec{k})$ at every $\vec{k} \in \mathcal{B}$. We say the Bloch frame is analytic if every map $\vec{k} \mapsto \Xi_n(\vec{x},\vec{k}), 1 \leq n \leq N$ is (real) analytic for all $\vec{k} \in \mathcal{B}$. We say the Bloch frame is periodic if every map $\vec{k} \mapsto \Xi_n(\vec{x},\vec{k}), 1 \leq n \leq N$ satisfies 
\begin{equation} 
    \Xi_n(\vec{x},\vec{k}+\vec{w}) = e^{- i \vec{w} \cdot \vec{x}} \Xi_n(\vec{x},\vec{k}) \quad 1 \leq n \leq N, \forall \vec{w} \in \Lambda^*.
\end{equation}
\end{definition}
We will prove that
\begin{theorem} \label{th:Chern_thm}
Let $P$ be the Fermi projection of a periodic Hamiltonian as in \eqref{eq:P_2}, and assume the spectral gap Assumption \ref{as:per_spec_gap}. Recall the co-ordinates introduced in \eqref{eq:x_coords}, and define the operator $X$ by
\begin{equation} \label{eq:X_def}
    X f(\vec{x}) = L_1 x_1 f(\vec{x}).
\end{equation}
Then, if the operator $PXP$ satisfies the uniform spectral gap assumption (Assumption \ref{def:usg}), an analytic and periodic Bloch function gauge in the sense of Definition \ref{def:BF_gauge} exists.
\end{theorem}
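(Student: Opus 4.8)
The plan is to pass to the Bloch representation, realize the eigenfunctions of $PXP$ as hybrid Wannier functions, and read off from their spectral data the topological obstruction to a global analytic periodic Bloch frame.

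First I would perform a \emph{partial} Bloch--Floquet transform only in the $\vec{v}_2$-direction, which is legitimate because $A,V$ are $\Lambda$-periodic and, in the co-ordinates \eqref{eq:x_coords}, $X = L_1 x_1$ commutes with $T_{\vec{v}_2}$. This writes $L^2(\field{R}^2) \cong \int^\oplus_{[-\pi,\pi]} \mathcal{H}_{k_2}\, dk_2$, where $\mathcal{H}_{k_2}$ consists of functions with $x_1\in\field{R}$ and $x_2$ quasi-periodic with phase $e^{ik_2}$, and under it $H \mapsto \int^\oplus H(k_2)$, $P \mapsto \int^\oplus P(k_2)$, and $X$ is unchanged, so $PXP \mapsto \int^\oplus P(k_2) X P(k_2)\, dk_2$. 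Each fibre is effectively one-dimensional, so by the Kivelson--Nenciu--Nenciu argument (Lemma~\ref{lem:PXP} together with the one-dimensional compactness argument of \cite{1998NenciuNenciu,2008CorneanNenciuNenciu}), $P(k_2) X P(k_2)$ has compact resolvent on $\range P(k_2)$ and hence discrete spectrum; moreover $T_{\vec{v}_1}$ descends to each fibre and conjugates $P(k_2)XP(k_2)$ to $P(k_2)(X - L_1)P(k_2)$, so $\sigma(P(k_2)XP(k_2))$ is $L_1\field{Z}$-periodic. Matching this against $\sigma(PXP)\setminus\{0\} = \overline{\bigcup_{k_2} \sigma(P(k_2)XP(k_2))}$, the uniform spectral gaps of $PXP$ (Assumption~\ref{def:usg}) say precisely that on each fibre the spectrum splits, uniformly in $k_2$, into separated pieces of uniformly bounded diameter, one $L_1\field{Z}$-orbit of which — the unit-cell window, following Assumption~\ref{as:Pj_periodic} — carries exactly $N$ eigenvalues $\bar x_1(k_2)\le\cdots\le\bar x_N(k_2)$, and the associated rank-$N$ spectral projection is analytic in $k_2$ because its spectral window is uniformly isolated (note this uses only separation of the \emph{whole} window, not of the individual $\bar x_n$).

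Next I would build the frame. Using analyticity of $\vec{k}\mapsto P_{per}(\vec{k})$ and the known regularity theory for Bloch frames \cite{2016CorneanHerbstNenciu,Cances2017,2007Panati,2018MonacoPanatiPisanteTeufel}, construct by parallel transport in the $k_1$ variable, starting from an analytic $k_2$-periodic frame on the edge $\{k_1=-\pi\}$, an analytic Bloch frame $\{\Xi_n(\vec{x},\vec{k})\}_{n=1}^N$ over all of $\mathcal{B}$ that is periodic in $k_2$; traversing the $k_1$-circle once transforms the frame by a $k_2$-dependent holonomy $U(k_2)\in U(N)$. The hybrid Wannier functions $w_n(\vec{x},k_2)$, obtained by integrating out the $k_1$-component of the quasi-momentum against $\{\Xi_n\}$, are then by a direct computation from the eigenvalue equation (the periodic analogue of the calculation in Section~\ref{sec:pjypj-exp-loc}) eigenfunctions of $P(k_2)XP(k_2)$ with eigenvalues the Berry phases $\bar x_n(k_2) = \frac{L_1}{2\pi}\oint_{-\pi}^{\pi} \langle \Xi_n(\cdot,\vec{k})\,|\, i\partial_{k_1}\Xi_n(\cdot,\vec{k})\rangle_{L^2_{per}}\, dk_1 \pmod{L_1}$, i.e. the Wannier charge centres. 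Hence the unit-cell window of $\sigma(PXP)$ is exactly $\{\bar x_n(k_2): 1\le n\le N,\ k_2\in[-\pi,\pi]\}$ and $\sum_n \bar x_n(k_2) = \frac{L_1}{2\pi i}\log\det U(k_2)\pmod{L_1}$.

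Finally, Assumption~\ref{def:usg} forces the winding number of $\det U(k_2)$ around the $k_2$-circle to vanish: a nonzero winding would make $\sum_n \bar x_n(k_2)$ sweep out a full period $L_1$ as $k_2$ traverses $[-\pi,\pi]$, forcing the unit-cell window of $\sigma(PXP)$ to touch one of its $L_1$-translates and violating the uniform gap. Since this winding number is, up to sign and the normalization of the reciprocal lattice, the Chern number, we conclude the Chern number vanishes; equivalently, since $\pi_1(SU(N))=0$ one corrects $\Xi_n$ by a $k_1$-dependent unitary to make the frame $k_1$-periodic as well, without destroying analyticity or $k_2$-periodicity, which yields the claimed analytic and periodic Bloch frame. (Once the Chern number is shown to vanish, one may alternatively simply invoke \cite{2007Panati,2018MonacoPanatiPisanteTeufel}.) I expect the main obstacle to be the rigorous identification of the fibred spectral projections of $P(k_2)XP(k_2)$ with the hybrid Wannier construction together with the non-abelian bookkeeping relating the holonomy $U(k_2)$, the trace of the Berry connection, and the winding number — in particular making all the regularity (analyticity in $k_2$ of the $N$-dimensional occupied eigenbundle, continuity/analyticity of the charge centres, and the construction of $U(k_2)$) genuinely uniform in $k_2$ and independent of any crossings among the individual $\bar x_n(k_2)$.
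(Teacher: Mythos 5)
Your outline follows the paper through the parallel-transport construction of an analytic, $k_2$-periodic frame, the identification of hybrid Wannier functions as eigenfunctions of $P(k_2)XP(k_2)$, and the reading of the uniform gap assumption off the holonomy $U(k_2)$ in \eqref{eq:kap_1_per}. Where you diverge is the final step, and there you have a genuine gap.

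You argue that the gap assumption forces the winding number of $\det U(k_2)$ to vanish because ``a nonzero winding would make $\sum_n \bar x_n(k_2)$ sweep out a full period $L_1$,'' and then conclude by invoking $\pi_1(SU(N))=0$ together with \cite{2007Panati,2018MonacoPanatiPisanteTeufel}. The implication as you state it does not follow: what must stay away from a subinterval of $[0,L_1)$ is the \emph{union} $\bigcup_n \{L_1\field{Z}+\bar x_n(k_2)\}$ over all $k_2$, not the \emph{sum} $\sum_n\bar x_n(k_2)$, and the sum sweeping a full period does not by itself force the union to cover $[0,L_1)$ unless you also track the permutation monodromy of the individual charge centres around the $k_2$-circle (e.g.\ by decomposing into cycles and applying the intermediate value theorem to each cycle). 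Even once the winding is shown to vanish, ``correct by a $k_1$-dependent unitary'' requires an \emph{analytic} periodic logarithm of $U(k_2)$, which is not a consequence of $\pi_1(SU(N))=0$ alone; you are right that the Panati/Monaco--Panati--Pisante--Teufel machinery would supply it, but that is an outsourcing, not a proof.

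The paper extracts a strictly stronger consequence from Assumption~\ref{def:usg}: not merely that the winding of $\det U$ vanishes, but that there is a single $\lambda$ on the unit circle with $\lambda\notin\sigma(U(k_2))$ for \emph{all} $k_2$ (this is exactly what \eqref{eq:must_gap} gives, because the eigenvalues of $U(k_2)$ are $e^{i\Gamma_n(k_2)}$ and a gap in $\sigma(PXP)\cap[0,L_1)$ punches a uniform hole in the circle). That uniform spectral gap of $U(k_2)$ is precisely the hypothesis of Lemma~\ref{lem:nenciu_lemma} (Cornean--Herbst--Nenciu), which delivers a Hermitian, analytic, $2\pi$-periodic $h(k_2)$ with $U(k_2)=e^{ih(k_2)}$; the final frame $\exp(-ik_1 h(k_2)/2\pi)\widetilde\Xi_n$ is then explicit and manifestly analytic and periodic in both variables. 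In short: the paper's route never passes through vanishing of the Chern number (that is a \emph{corollary}, not an intermediate step) and avoids the cycle-monodromy bookkeeping your argument would need to be made rigorous. Your approach, patched up, is correct but delivers less (it establishes only the weaker winding condition and then defers to the known Chern-zero $\Rightarrow$ analytic periodic frame theorem), whereas the paper's use of the full strength of the gap assumption gives a self-contained, constructive proof.
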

We will prove Theorem \ref{th:Chern_thm} across the following sections. Recall that we have introduced natural $\vec{k}$-space co-ordinates $(k_1,k_2)$ \eqref{eq:k_coords}. We will first construct a Bloch function gauge which is analytic over the Brillouin zone and periodic with respect to $k_1$. We will then prove that, whenever the uniform spectral gap assumption on $PXP$ holds, a gauge which is periodic with both $k_1$ and $k_2$ can be constructed. 

\subsection{Proof of Theorem \ref{th:Chern_thm}: Parallel transport unitaries and construction of an analytic gauge periodic with respect to $k_1$} \label{sec:construct_gauge}

We first recall the notion of parallel transport of periodic Bloch functions (see \cite{2016CorneanHerbstNenciu} for more detail).

Recall we define $P_{per}(k_1,k_2)$ to be the $L^2_{per}$-projection onto the set of periodic Bloch functions at $k_1,k_2$, i.e. onto the span of $\left\{ \chi_n(\vec{x},k_1,k_2) \right\}_{1 \leq n \leq N}$. The operators $P_{per}(k_1,k_2)$ satisfy the following symmetries, which follow from symmetry of the operator $H(\vec{k})$ \eqref{eq:Bloch_op}
\begin{equation} \label{eq:P_symm}
\begin{split}
    &e^{- i \vec{w}_2 \cdot \vec{x}} P_{per}(k_1,k_2) e^{i \vec{w}_2 \cdot \vec{x}} = P_{per}(k_1,k_2 + 2 \pi) \\
    &e^{- i \vec{w}_1 \cdot \vec{x}} P_{per}(k_1,k_2) e^{i \vec{w}_1 \cdot \vec{x}} = P_{per}(k_1 + 2 \pi,k_2).
\end{split}
\end{equation}

Following Kato \cite{Kato}, we define unitaries $T_{k_2}(k_1)$ for each fixed $k_2 \in [-\pi,\pi]$ and for $k_1 \in [-\pi,\pi]$ by the ODE
\begin{equation} \label{eq:para_trans_ODE}
\begin{split}
    &i \de_{k_1} T_{k_2}(k_1) = i [ \de_{k_1} P_{per}(k_1,k_2) , P_{per}(k_1,k_2) ] T_{k_2}(k_1)  \\
    &T_{k_2}(0) = I.
\end{split}
\end{equation}
We note three properties of these unitaries. First, the identity (see \cite{Kato}, Chapter 2 $\S$4 for the proof)
\begin{equation}
    P_{per}(k_1,k_2) T_{k_2}(k_1) = T_{k_2}(k_1) P_{per}(0,k_2)
\end{equation}
shows that $T_{k_2}(k_1)$ restricted to $\range(P_{per}(0,k_2))$ maps to $\range(P_{per}(k_1,k_2))$ bijectively. Second, using the symmetries \eqref{eq:P_symm}, we have that, for example,
\begin{equation}
\begin{split}
    &i \de_{k_1} \left( e^{- i \vec{w}_2 \cdot \vec{x}} T_{k_2}(k_1) e^{i \vec{w}_2 \cdot \vec{x}} \right) = i [ \de_{k_1} P_{per}(k_1,k_2 + 2 \pi) , P_{per}(k_1,k_2 + 2 \pi) ] \left( e^{- i \vec{w}_2 \cdot \vec{x}} T_{k_2}(k_1) e^{i \vec{w}_2 \cdot \vec{x}} \right)   \\
    &e^{- i \vec{w}_2 \cdot \vec{x}} T_{k_2}(0) e^{i \vec{w}_2 \cdot \vec{x}} = I,
\end{split}
\end{equation}
and hence (since they satisfy the same initial value problems)
\begin{equation} \label{eq:T_sym}
    T_{k_2 + 2 \pi}(k_1) = e^{- i \vec{w}_2 \cdot \vec{x}} T_{k_2}(k_1) e^{i \vec{w}_2 \cdot \vec{x}}, \quad T_{k_2}(k_1 + 2\pi) = e^{- i \vec{w}_1 \cdot \vec{x}} T_{k_2}(k_1) e^{i \vec{w}_1 \cdot \vec{x}}.
\end{equation}
Third, by the method of successive approximations (see \cite{CoddingtonLevinson}), the maps $k_1 \mapsto T_{k_2}(k_1)$ and $k_2 \mapsto T_{k_2}(k_1)$ are both (real) analytic since $P_{per}(k_1,k_2)$ is analytic in both variables. It is clear that by the same calculations we can also define unitaries $T_{k_1}(k_2)$ where $k_1 \in [-\pi,\pi]$ is fixed and $k_2 \in [-\pi,\pi]$ with analogous properties.

Although our ultimate aim in this section is to construct a Bloch frame which is periodic with respect to $k_1$, we first construct a Bloch frame which is periodic with respect to $k_2$ as follows. Let $\left\{ \Xi_n(\vec{x},0,0) \right\}_{1 \leq n \leq N}$ be an arbitrary basis of the span of $\left\{ \chi_n(\vec{x},0,0) \right\}_{1 \leq n \leq N}$. Define
\begin{equation}
    \Xi_n(\vec{x},0,k_2) := T_{0}(k_2) \Xi_n(\vec{x},0,0) \quad 1 \leq n \leq N, k_2 \in [-\pi,\pi].
\end{equation}
By analyticity of $T_0(k_2)$, the $\Xi_n(\vec{x},0,k_2)$ are analytic with respect to $k_2$. Using symmetry of $P_{per}(k_1,k_2)$ \eqref{eq:P_symm} we have that
\begin{equation}
    \Xi_m(\vec{x},0,- \pi) = e^{i \vec{w}_2 \cdot \vec{x}} \sum_{n = 1}^N U_{mn} \Xi_n(\vec{x},0,\pi)
\end{equation}
where $U = \left\{ U_{mn} \right\}_{1 \leq m \leq N, 1 \leq n \leq N}$ is a unitary $N \times N$ matrix with eigenvalues $(\lambda_1,...,\lambda_N)$. By unitarity we can write $U$ as
\begin{equation}
    U = Q D Q^*, \quad D := \diag(\lambda_1,...,\lambda_N),
\end{equation}
where $Q$ is the matrix whose columns are the normalized eigenvectors of $U$, $\lambda_1,...,\lambda_N$ denotes the eigenvalues of $U$, and $\diag(v)$ denotes the diagonal matrix with the components of the vector $v$ along its diagonal. Again by unitarity, we have that each eigenvalue of $U$ can be written $\lambda_m = e^{i \Gamma_m}$ where $\Gamma_m$ is real and $-\pi \leq \Gamma_m < \pi$. We define the matrix log of $U$ in the usual way via spectral calculus so that
\begin{equation}
    - i \log U := Q \left( - i \log D \right) Q^*, \quad ( - i \log D ) := \diag(\Gamma_1,...,\Gamma_N).
\end{equation}
To obtain a periodic frame along the line $k_1 = 0$ we define
\begin{equation}
    \exp\left( - i \frac{ k_2 (- i \log U) }{ 2 \pi } \right) := Q \exp\left( - i \frac{ k_2 (- i \log D) }{ 2 \pi } \right) Q^*
\end{equation}
where
\begin{equation}
    \exp\left( - i \frac{ k_2 (- i \log D) }{ 2 \pi } \right) := \diag\left(e^{- i \frac{ \Gamma_1 k_2 }{2 \pi} }, ... ,e^{- i \frac{ \Gamma_N k_2 }{2 \pi}}\right).
\end{equation}
We then define a new frame along the line $k_1 = 0$ by
\begin{equation} \label{eq:use_U_log}
    \widetilde{\Xi}_n(\vec{x},0,k_2) := \exp\left( - i \frac{ k_2 ( - i \log U ) }{ 2 \pi } \right) \Xi_n(\vec{x},0,k_2).
\end{equation}
The $\widetilde{\Xi}_n(\vec{x},0,k_2)$ are clearly analytic with respect to $k_2$, and are now periodic in the sense that 
\begin{equation} \label{eq:til_chi_per}
    \widetilde{\Xi}_n(\vec{x},0,- \pi) = e^{i \vec{w}_2 \cdot \vec{x}} \widetilde{\Xi}_n(\vec{x},0,\pi).
\end{equation}

We now extend this frame to the whole Brillouin zone by defining
\begin{equation}
    \widetilde{\Xi}_n(\vec{x},k_1,k_2) := T_{k_2}(k_1) \widetilde{\Xi}_n(\vec{x},0,k_2) \quad 1 \leq n \leq N, (k_1,k_2) \in [-\pi,\pi]^2.
\end{equation}
Using the properties of $T_{k_2}(k_1)$ the $\widetilde{\Xi}_n(\vec{x},k_1,k_2)$ are analytic with respect to both $k_1$ and $k_2$, and periodic with respect to $k_2$ since for any $k_1 \in [-\pi,\pi]$,
\begin{equation}
\begin{split}
    \widetilde{\Xi}_n(\vec{x},k_1,-\pi) = T_{-\pi}(k_1) \widetilde{\Xi}_n(\vec{x},0,-\pi) &= e^{i \vec{w}_2 \cdot \vec{x}} T_\pi (k_1) e^{- i \vec{w}_2 \cdot \vec{x}} e^{i \vec{w}_2 \cdot \vec{x}} \widetilde{\Xi}_n(\vec{x},0,\pi) \\
    &= e^{i \vec{w}_2 \cdot \vec{x}} T_\pi(k_1) \widetilde{\Xi}_n(\vec{x},0,\pi) \\
    &= e^{i \vec{w}_2 \cdot \vec{x}} \widetilde{\Xi}_n(\vec{x},k_1,\pi),
\end{split}
\end{equation}
where the second equality is by \eqref{eq:T_sym} and \eqref{eq:til_chi_per}. Note that using \eqref{eq:para_trans_ODE} the frame satisfies the following ``parallel transport'' property with respect to $k_1$
\begin{equation} \label{eq:para_trans}
    \ip{ \widetilde{\Xi}_m(\vec{x},k_1,k_2) }{ \de_{k_1} \widetilde{\Xi}_n(\vec{x},k_1,k_2) }_{L^2_{per}} = 0 \quad 1 \leq m, n \leq N, (k_1,k_2) \in [-\pi,\pi]^2.
\end{equation}

Having constructed a frame which is analytic and periodic with respect to $k_2$, we now aim to ``mend'' the gauge so that it is also periodic with respect to $k_1$. We will see that this is not always possible while preserving periodicity with respect to $k_2$.

For each $k_2$, the gauge so far constructed will not in general be periodic with respect to $k_1$, but must satisfy
\begin{equation} \label{eq:kap_1_per}
    \widetilde{\Xi}_m(\vec{x},-\pi,k_2) = e^{i \vec{w}_1 \cdot \vec{x}} \sum_{n = 1}^N U_{mn}(k_2) \widetilde{\Xi}_n(\vec{x},\pi,k_2) \quad k_2 \in [-\pi,\pi]
\end{equation}
for a family of $N \times N$ unitary matrices $U(k_2) = \left\{ U_{mn}(k_2) \right\}_{1 \leq m \leq N, 1 \leq n \leq N}$ depending analytically and periodically on $k_2 \in [-\pi,\pi]$. 

We now require the following Lemma.
\begin{lemma} \label{lem:per_evec}
Let $\phi(\tau)$ be a family of unitary $N \times N$ matrices depending real-analytically and periodically on $\tau \in [-\pi,\pi]$. Then there exist real-analytic functions
\begin{equation}
\begin{split}
    [-\pi,\pi] &\rightarrow \field{R}^N \\
    \tau &\mapsto \vec{v}_n(\tau)
\end{split}
\end{equation}
for $1 \leq n \leq N$ such that $\vec{v}_n(\tau)$ is a normalized eigenvector of $\phi(\tau)$ for every $\tau \in [-\pi,\pi]$. The associated eigenvalues of these eigenvectors are also real-analytic. If the eigenvalues and eigenprojections of $\phi(\tau)$ are periodic then the $\vec{v}_n(\tau)$ can be chosen to be additionally periodic.
\end{lemma}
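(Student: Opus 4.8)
The plan is to prove Lemma~\ref{lem:per_evec} by a two-step argument: first establish the claim on a \emph{path} $\tau \in [-\pi,\pi]$ ignoring periodicity, using the classical Rellich/Kato theory of analytic perturbation of self-adjoint (or, here, unitary) families of matrices; then address the periodicity obstruction separately. For the first step, I would note that $\phi(\tau)$ is a real-analytic one-parameter family of unitary matrices, so we may write $\phi(\tau) = e^{i \Theta(\tau)}$ locally with $\Theta(\tau)$ Hermitian and real-analytic (taking a branch of $-i\log$ away from a cut, patching across $\tau$ using that the spectrum of $\phi(\tau)$ moves continuously and avoids some point of the unit circle on small subintervals); alternatively, and more robustly, I would invoke directly the theorem (Kato, \emph{Perturbation Theory for Linear Operators}, Ch.~II, Thm.~6.1, and Ch.~II~\S4 on the transformation function) that for a real-analytic family of normal matrices on an interval, the eigenvalues can be labelled as real-analytic functions $\lambda_n(\tau)$ and the corresponding eigenprojections $\Pi_n(\tau)$ are real-analytic, and one can choose real-analytic normalized eigenvectors $\vec{v}_n(\tau)$ spanning each $\range \Pi_n(\tau)$. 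This handles analyticity of eigenvalues, eigenprojections, and eigenvectors on the open interval, and by compactness on the closed interval $[-\pi,\pi]$.

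The second, and genuinely delicate, step is the \emph{periodic} case: we assume the eigenvalues and eigenprojections of $\phi(\tau)$ are periodic and must produce periodic real-analytic eigenvectors. Here the point is that even when the eigenprojection $\Pi_n(\tau)$ onto a (say one-dimensional) eigenspace is periodic and analytic, the analytically-chosen unit eigenvector $\vec{v}_n(\tau)$ need not satisfy $\vec{v}_n(-\pi) = \vec{v}_n(\pi)$; it can pick up a phase, $\vec{v}_n(\pi) = e^{i\alpha_n}\vec{v}_n(-\pi)$. The fix is the same ``$\log U$'' trick used for $\widetilde{\Xi}$ in Section~\ref{sec:construct_gauge}: multiply $\vec{v}_n(\tau)$ by $e^{-i\alpha_n(\tau+\pi)/(2\pi)}$, which is real-analytic and periodic-correcting, to kill the phase mismatch while keeping analyticity and unit norm. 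For eigenspaces of dimension $>1$ (degenerate eigenvalue, constant in $\tau$ if the eigenprojection is to stay analytic through a crossing — or more precisely, grouping analytically-labelled eigenvalues that coincide identically), the phase $e^{i\alpha_n}$ is replaced by a unitary matrix $W_n$ relating the analytic frame of $\range\Pi_n$ at $\pi$ to that at $-\pi$, and one applies the matrix version: write $W_n = e^{i \Theta_n}$ with $\Theta_n$ Hermitian, and correct the frame by $e^{-i\Theta_n(\tau+\pi)/(2\pi)}$. One must check this correction factor is itself real-analytic in $\tau$ and that the corrected frame still diagonalizes $\phi(\tau)$ within each eigenspace — the latter is automatic because we only mix vectors inside a fixed analytic eigenspace $\range\Pi_n(\tau)$, on which $\phi(\tau)$ acts as the scalar $\lambda_n(\tau)$ (periodic by hypothesis), or more carefully we may need to re-diagonalize within $\range\Pi_n(\tau)$ after correction, which is possible analytically since the restricted family is again analytic and unitary.

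I expect the main obstacle to be the careful bookkeeping in the degenerate case: ensuring that when several analytic eigenvalue branches cross, the grouping into analytic eigenprojections is done so that the hypothesis ``eigenvalues and eigenprojections are periodic'' is used correctly, and that the matrix $-i\log W_n$ is chosen consistently (a branch choice) so the correction is globally real-analytic and periodic. A clean way to sidestep some of this is to reduce to the non-degenerate generic case by a density/limiting argument, but that risks losing analyticity, so I would instead argue directly. One subtlety worth flagging explicitly in the write-up: the statement requires $\vec{v}_n(\tau) \in \field{R}^N$, which is presumably a typo for $\field{C}^N$ (the matrices $\phi(\tau)$ are unitary, not orthogonal, so real eigenvectors are not available); I would state and prove it for $\field{C}^N$. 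The rest — analyticity of the associated eigenvalues, which follows since $\lambda_n(\tau) = \langle \vec{v}_n(\tau), \phi(\tau)\vec{v}_n(\tau)\rangle$ is a composition/inner product of real-analytic maps — is routine.
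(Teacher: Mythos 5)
Your proposal is correct and follows essentially the same route as the paper: Kato/Rellich analytic perturbation theory for the analytic eigenvalue branches and eigenprojections (grouping branches that coincide identically), the transformation function (equivalently, parallel transport as in \eqref{eq:para_trans_ODE}) for analytic eigenvectors, and a matrix-logarithm phase correction as in \eqref{eq:use_U_log} for the periodic case. Your observation that $\field{R}^N$ should read $\field{C}^N$ is also a correct catch of a typo in the statement.
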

\begin{proof}
We start by noting that the roots of the characteristic polynomial of $\phi(\tau)$ are branches of analytic functions of $\tau$ with only algebraic singularities \cite{Kato}. It follows that the number of eigenvalues of $\phi(\tau)$ is a constant with the exception of finitely many points in the interval $[-\pi,\pi]$, which implies that every eigenvalue of $\phi(\tau)$ is either degenerate for all $\tau$ or non-degenerate except at finitely many $\tau$. It follows that we can find $\tau^*$ such that every eigenvalue which is not degenerate for every $\tau \in [-\pi,\pi]$ is non-degenerate. We can define projections onto each eigenvector, or degenerate family of eigenvectors, in a neighborhood of $\tau^*$ via the Riesz projection formula
\begin{equation} \label{eq:reese}
    Q_n(\tau) := \frac{1}{2 \pi i} \inty{\gamma_n}{}{ ( z - \phi(\tau) )^{-1} }{z},
\end{equation}
where $\gamma_n$ is a contour enclosing exactly one eigenvalue of $\phi(\tau^*)$. It is clear that the projection is analytic in a neighborhood of $\tau^*$. We next note that each projection and eigenvalue can be analytically continued over the whole interval $[-\pi,\pi]$, even through eigenvalue crossings (see Rellich \cite{rellich1969perturbation}). We can now define eigenvectors which are analytic in $\tau$ by defining parallel transport unitaries associated to the projections $Q_n(\tau)$ as in \eqref{eq:para_trans_ODE}. To see the last part of the Lemma, note that if the eigenprojections are additionally periodic, the eigenvectors can be made periodic by a phase shift as in \eqref{eq:use_U_log}.
\end{proof}
\begin{remark} \label{rem:periodic}
It is important to note that real-analytic and $2 \pi$-periodic matrices may not have real-analytic and $2 \pi$-periodic eigenvalues or eigenprojections. To see this, define
\begin{equation}
    \vec{v}_1(\tau) := \frac{1}{\sqrt{1 + \cos^2\left(\frac{\tau}{2}\right)}} \begin{pmatrix} 1 \\ \cos\left(\frac{\tau}{2}\right) \end{pmatrix}, \quad \vec{v}_2(\tau) := \frac{1}{\sqrt{1 + \cos^2\left(\frac{\tau}{2}\right)}} \begin{pmatrix} 1 \\ - \cos\left(\frac{\tau}{2}\right) \end{pmatrix},
\end{equation}
and consider the real-analytic and $2 \pi$-periodic matrix
\begin{equation}
    \phi(\tau) := e^{i \cos(\tau/2)} \vec{v}_1 \otimes \vec{v}_1^\top + e^{- i \cos(\tau/2)} \vec{v}_2 \otimes \vec{v}_2^\top, 
\end{equation}
which has real-analytic eigenprojections and eigenvalues which are not $2 \pi$-periodic.
\end{remark}
Using Lemma \ref{lem:per_evec} we can write $U(k_2)$ as
\begin{equation}
    U(k_2) = Q(k_2) D(k_2) Q^*(k_2), \quad D(k_2) := \diag(\lambda_1(k_2),...,\lambda_N(k_2))
\end{equation}
where $Q(k_2)$ is the matrix whose columns are normalized eigenvectors of $U(k_2)$, and $\lambda_1(k_2), ... ,\lambda_N(k_2)$ are the eigenvalues of $U(k_2)$. 
We now define
\begin{equation} \label{eq:breve_frame}
    \breve{\Xi}_m(x,k_1,k_2) = \sum_{n = 1}^N Q_{mn}(k_2) \widetilde{\Xi}_n(x,k_1,k_2) \quad 1 \leq m \leq N, (k_1,k_2) \in [-\pi,\pi]^2.
\end{equation}
The new frame $\left\{ \breve{\Xi}_n(\vec{x},k_1,k_2) \right\}_{1 \leq n \leq N}$ retains real-analyticity with respect to $k_1$ and $k_2$, and now satisfies, instead of \eqref{eq:kap_1_per},
\begin{equation} \label{eq:kap_1_per_2}
    \breve{\Xi}_n(\vec{x},-\pi,k_2) = e^{i \vec{w}_1 \cdot \vec{x}} \lambda_n(k_2) \breve{\Xi}_n(\vec{x},\pi,k_2) \quad 1 \leq n \leq N.
\end{equation}
Because of the behavior described in Remark \ref{rem:periodic} it is not necessarily true that the new frame retains periodicity with respect to $k_2$. We can nevertheless shift the frame once more as
\begin{equation} \label{eq:mending}
    \check{\Xi}_n(\vec{x},k_1,k_2) = e^{- i \frac{\Gamma_n(k_2) k_1}{2 \pi}} \breve{\Xi}_n(\vec{x},k_1,k_2) \quad 1 \leq n \leq N, (k_1,k_2) \in [-\pi,\pi]^2,
\end{equation}
where $\Gamma_n(k_2)$ is, for each $1 \leq n \leq N$ and each $k_2 \in [-\pi,\pi]$, the logarithm of $\lambda_n(k_2)$ chosen such that $k_2 \mapsto \Gamma_n(k_2)$ is analytic and $-\pi \leq \Gamma_n(0) < \pi$. Clearly the frame $\left\{ \check{\Xi}_n(\vec{x},k_1,k_2) \right\}_{1 \leq n \leq N}$ is real-analytic and periodic with respect to $k_1$ and real-analytic with respect to $k_2$. Note that even if the eigenvalues and eigenvectors of $U(k_2)$ can be chosen to be periodic in $k_2$ (so that the frame \eqref{eq:breve_frame} is actually periodic in $k_2$), the frame $\left\{ \check{\Xi}_n(\vec{x},k_1,k_2) \right\}_{1 \leq n \leq N}$ still may not be periodic with respect to $k_2$ because periodicity of the $\lambda_n(k_2)$ does not imply periodicity of the $\Gamma_n(k_2)$, only periodicity mod $2 \pi$, i.e. it is only guaranteed that 
\begin{equation} \label{eq:eq_mod_2pi}
    \Gamma_n(\pi) = \Gamma_n(-\pi) \mod 2 \pi \quad 1 \leq n \leq N.
\end{equation}
If $\Gamma_n(\pi) \neq \Gamma_n(-\pi)$, the frame will not retain periodicity with respect to $k_2$. 
\begin{remark}
A simple example of a real-analytic and $2 \pi$-periodic unitary operator, with periodic eigenprojections and eigenvalues, whose logarithm is nonetheless not periodic, is $\phi(\tau) := e^{i \tau}$. 
\end{remark}

We will see in the next section that when $PXP$ satisfies the uniform spectral gap assumption (Assumption \ref{def:usg}), it is always possible to construct an analytic and periodic gauge with respect to both $k_1$ and $k_2$.

\subsection{Proof of Theorem \ref{th:Chern_thm}: the $PXP$ gap assumption implies existence of an analytic and periodic Bloch frame} \label{sec:implies_cond}

In the previous section we constructed a Bloch frame $\left\{ \check{\Xi}_n(\vec{x},k_1,k_2) \right\}_{1 \leq n \leq N}$ which was analytic and periodic with respect to $k_1$ but not necessarily periodic with respect to $k_2$. We will see that this frame diagonalizes the operator $PXP$, allowing us to link the uniform gap assumption to the matrix $U(k_2)$ defined by \eqref{eq:kap_1_per}, and from there to the possibility of constructing an analytic and periodic Bloch frame.

We now introduce a basis of $P L^2(\field{R}^2)$ which diagonalizes the operator $PXP$. Note that via Bloch theory (the operator $PXP$ is invariant under translations with respect to $\vec{v}_2$) it suffices to consider the operator $PXP$ restricted to the family of spaces
\begin{equation}
\begin{split}
    L^2_{k_2} &:= \left\{ f(\vec{x}) \in L^2(\Upsilon) : f(\vec{x} + \vec{v}_2) = e^{i \vec{k} \cdot \vec{v}_2} f(\vec{x}) \right\}  \\
    &= \left\{ f(x_1,x_2) \in L^2(\Upsilon) : f(x_1,x_2) = e^{i k_2} f(x_1,x_2) \right\},
\end{split}
\end{equation}
where $k_2 \in [0,2\pi]$ and $\Upsilon$ denotes the strip
\begin{equation}
    \Upsilon := \left\{ x_1 \vec{v}_1 + x_2 \vec{v}_2 : x_1 \in \field{R}, x_2 \in [0,1] \right\}.
\end{equation}
Define $P(k_2) : L^2_{k_2} \rightarrow L^2_{k_2}$ by
\begin{equation}
    P(k_2) f(\vec{x}) = \sum_{n = 1}^N \inty{-\pi}{\pi}{ \ip{ \Phi_n(\cdot,k_1,k_2) }{ f(\cdot) }_{L^2_{k_2}} \Phi_n(\vec{x},k_1,k_2) }{k_1},
\end{equation}
then the restrictions of the operator $PXP$ to the spaces $L^2_{k_2}$ are the operators $P(k_2) X P(k_2) : L^2_{k_2} \rightarrow L^2_{k_2}$ where $k_2 \in [-\pi,\pi]$. In particular, we have
\begin{equation} \label{eq:spec_PXP}
    \sigma(PXP) = \bigcup_{k_2 \in [-\pi,\pi]} \sigma( P(k_2) X P(k_2) ).
\end{equation}
We define quasi-Bloch functions (functions which span $\range P(k_1,k_2)$ but are not necessarily individually eigenfunctions of $H$) by multiplying each of the components of the Bloch frame by $e^{i \vec{k} \cdot \vec{x}} = e^{i \left[ k_1 x + k_2 y \right]}$ so that
\begin{equation}
    \check{\Psi}_n(\vec{x},k_1,k_2) = e^{i \left[ k_1 x + k_2 y \right]} \check{\Xi}_n(\vec{x},k_1,k_2).
\end{equation}
In particular, we can equivalently define $P(k_2)$ in terms of the quasi-Bloch functions as
\begin{equation}
    P(k_2) f(\vec{x}) = \sum_{n = 1}^N \inty{-\pi}{\pi}{ \ip{ \check{\Psi}_n(\cdot,k_1,k_2) }{ f(\cdot) }_{L^2_{k_2}} \check{\Psi}_n(\vec{x},k_1,k_2) }{k_1}.
\end{equation}

We define hybrid Wannier functions (HWFs) for each $M \in \field{Z}$ by integrating the quasi-Bloch functions with respect to $k_1$ along lines of constant $k_2$
\begin{equation} \label{eq:HWFs}
    H_n(\vec{x},M,k_2) = \frac{1}{2 \pi} \inty{-\pi}{\pi}{ \check{\Psi}_n(\vec{x},k_1,k_2) e^{- i k_1 M} }{k_1} \quad 1 \leq n \leq N.
\end{equation}
Note that each set of hybrid Wannier functions with fixed $k_2$ is complete in $L^2_{k_2}$ and the union over $k_2$ complete in $P L^2(\field{R}^2)$ by completeness of the quasi-Bloch functions.


We claim that, if the Bloch functions are formed using the frame just constructed (i.e. after the shift \eqref{eq:mending}) and the operator $X$ is defined as in \eqref{eq:X_def}, that $H_n(\vec{x},M,k_2)$ is an exact eigenfunction of $P(k_2) X P(k_2)$ for every $k_2 \in [-\pi,\pi]$ and $1 \leq n \leq N$. To see this, first note that
\begin{equation}
    P(k_2) X P(k_2) H_n(\vec{x},M,k_2) = \frac{1}{2 \pi} P(k_2) \inty{-\pi}{\pi}{ L_1 x e^{i [ k_1 x + k_2 y ]} \check{\Xi}(\vec{x},k_1,k_2) e^{- i k_1 M} }{k_1}.
\end{equation}
Since $x e^{i [k_1 x + k_2 y]} = - i \de_{k_1} e^{i [ k_1 x + k_2 y ] }$ and integrating by parts we have
\begin{equation} \label{eq:this_eq}
\begin{split}
    = &\frac{- i L_1}{2 \pi} P(k_2) \left[ \check{\Psi}(\vec{x},k_1,k_2) e^{- i k_1 M} \right]_{-\pi}^\pi    \\
    &+ \frac{i L_1}{2 \pi} P(k_2) \int_{-\pi}^{\pi} \sum_{m = 1}^N \ip{ \check{\Xi}_m(\cdot,k_1,k_2) }{ \de_{k_1} \check{\Xi}_n(\cdot,k_1,k_2) } \check{\Psi}_m(\vec{x},k_1,k_2) e^{- i k_1 M} \\
    &\quad \quad \quad \quad \quad \quad \quad \quad \quad + \check{\Psi}_n(\vec{x},k_1,k_2) (- i M) e^{- i k_1 M} \textrm{d}k_1,
\end{split}
\end{equation}
where we have expanded $\de_{k_1} \check{\Xi}_n(\vec{x},k_1,k_2)$ in terms of the other elements of the Bloch frame, with $\ip{}{}$ denoting the $L^2_{per}$ inner product. Since $\check{\Psi}_n(\vec{x},k_1,k_2)$ is $2 \pi$-periodic with respect to $k_1$, the first term of \eqref{eq:this_eq} vanishes. Now recall that using \eqref{eq:mending},
\begin{equation}
    \de_{k_1} \check{\Xi}_n(\vec{x},k_1,k_2) = - i \frac{\Gamma_n(k_2)}{2 \pi} \check{\Xi}_n(\vec{x},k_1,k_2) + e^{- i \frac{\Gamma_n(k_2) k_1}{2 \pi} } \de_{k_1} \breve{\Xi}_n(\vec{x},k_1,k_2),
\end{equation}
where $\ip{\breve{\Xi}_m(\cdot,k_1,k_2)}{\de_{k_1} \breve{\Xi}_n(\vec{x},k_1,k_2)} = 0$ for all $1 \leq m, n \leq N$ by \eqref{eq:para_trans}. Hence \eqref{eq:this_eq} simplifies to 
\begin{equation}
\begin{split}
    &P(k_2) X P(k_2) H_n(\vec{x},M,k_2) \\
    &= \left( L_1 M + \overline{x}(k_2) \right) H_n(\vec{x},M,k_2), \\
\end{split}
\end{equation}
where 
\begin{equation}
    \overline{x}_n(k_2) := \frac{ L_1 \Gamma_n(k_2) }{ 2 \pi }
\end{equation}
is known as the $n$th Wannier charge center ``at $k_2$''. Since $M$ is arbitrary, we see that the spectrum of $P(k_2) X P(k_2)$ is the union of shifted copies of $L_1 \field{Z}$
\begin{equation} \label{eq:spec_PXP}
    \sigma(P(k_2) X P(k_2)) = \bigcup_{1 \leq n \leq N} \left\{ L_1 \field{Z} + \overline{x}_n(k_2) \right\}.
\end{equation}

We now use the uniform spectral gap assumption to deduce information about the matrix $U(k_2)$. The uniform spectral gap assumption implies that $\sigma(PXP)$ is made up of intervals separated by gaps. Since $\sigma(PXP)$ is $L_1$-periodic, this implies that $\sigma(PXP) \cap [0,L_1)$ must have a gap and hence there exists an open subinterval $I \subset [0,L_1)$ such that
\begin{equation} \label{eq:must_gap}
\begin{split}
     &I \cap \bigcup_{k_2 \in [-\pi,\pi]} \bigcup_{1 \leq n \leq N} \{ L_1 \field{Z} + \overline{x}_n(k_2) \} = \emptyset    \\
     &\implies I \cap \bigcup_{k_2 \in [-\pi,\pi]} \bigcup_{1 \leq n \leq N} \left\{ L_1 \field{Z} + \frac{L_1 \Gamma_n(k_2)}{2 \pi} \right\} = \emptyset.
\end{split}
\end{equation}
But recall that the eigenvalues of the matrix $U(k_2)$ are given by $e^{i \Gamma_n(k_2)}$ for $1 \leq n \leq N$. Hence \eqref{eq:must_gap} implies the existence of $\lambda := e^{i \Gamma}$ which \emph{is not in the spectrum of $U(k_2)$ for any $k_2 \in [-\pi,\pi]$}. We now recall the following Lemma; for the proof, see Section 2.7.3 of \cite{2016CorneanHerbstNenciu}.
\begin{lemma} \label{lem:nenciu_lemma}
Let $\phi(\tau)$ be a family of unitary $N \times N$ matrices depending real-analytically and periodically on $\tau \in [-\pi,\pi]$, and assume that there exists $\lambda \in \field{C}$ with $|\lambda| = 1$ such that $\lambda \notin \bigcup_{\tau \in [-\pi,\pi]} \sigma( \phi(\tau) )$. Then there exists a self-adjoint Hermitian matrix $h(\tau)$, real-analytic and $2 \pi$-periodic in $\tau$, such that $\phi(\tau) = e^{i h(\tau)}$ for all $\tau$.
\end{lemma}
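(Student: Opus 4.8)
The plan is to build $h(\tau)$ directly by a holomorphic functional calculus, choosing a single branch of the logarithm whose cut avoids $\sigma(\phi(\tau))$ for \emph{every} $\tau$ at once.

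First I would exploit the hypothesis together with compactness. Since $\phi$ is continuous and $[-\pi,\pi]$ is compact, $\Sigma := \bigcup_{\tau \in [-\pi,\pi]} \sigma(\phi(\tau))$ is a compact subset of the unit circle which, by assumption, misses $\lambda = e^{i\Gamma}$. Hence $\Sigma$ is contained in a thin simply connected open set $U \subset \field{C} \setminus R_\lambda$, where $R_\lambda := \{ r\lambda : r \geq 0 \}$ is the ray through $\lambda$; fix a contour $\cC \subset U$ which encircles $\sigma(\phi(\tau))$ with winding number one for every $\tau$, and note that $\dist(\cC, \Sigma) > 0$. Let $g$ be the branch of $z \mapsto -i \log z$ that is holomorphic on $\field{C} \setminus R_\lambda$ and real-valued on the unit circle minus $\{\lambda\}$ (with values in an interval of length $2\pi$ containing no multiple of $2\pi$ offset issues, i.e. $g(e^{i\theta}) = \theta$ for $\theta$ in a fixed length-$2\pi$ window avoiding $\Gamma$); in particular $e^{i g(z)} = z$ there. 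Then define
\[
  h(\tau) := \frac{1}{2\pi i} \oint_{\cC} g(z)\, (z - \phi(\tau))^{-1} \, \dee z .
\]

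Second I would verify the three required properties. Real-analyticity in $\tau$: because $\cC$ stays a fixed positive distance from $\Sigma$, the resolvent $(z - \phi(\tau))^{-1}$ is jointly real-analytic in $(z,\tau)$ near $\cC \times [-\pi,\pi]$, and integrating over the fixed contour $\cC$ preserves analyticity; $2\pi$-periodicity of $h$ is then immediate from $2\pi$-periodicity of $\phi$ in the formula. Self-adjointness and the exponential identity follow from the spectral decomposition $\phi(\tau) = \sum_j \mu_j(\tau) Q_j(\tau)$ into orthogonal spectral projections: the functional calculus gives $h(\tau) = \sum_j g(\mu_j(\tau)) Q_j(\tau)$ with each $g(\mu_j(\tau)) \in \field{R}$, so $h(\tau)^* = h(\tau)$, while $e^{i h(\tau)} = \sum_j e^{i g(\mu_j(\tau))} Q_j(\tau) = \sum_j \mu_j(\tau) Q_j(\tau) = \phi(\tau)$.

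The only genuine obstacle is the first step — producing one branch cut (equivalently one contour $\cC$) that works uniformly in $\tau$ — and this is precisely where the hypothesis is used: it fails exactly when $\Sigma$ is the whole unit circle, as in the example $\phi(\tau) = e^{i\tau}$ from the preceding remark, which indeed admits no $2\pi$-periodic real-analytic logarithm. Given the uniform spectral gap supplied by Assumption \ref{def:usg} (which forced the existence of $\lambda \notin \bigcup_{k_2} \sigma(U(k_2))$ in Section \ref{sec:implies_cond}), this step is routine, and everything after it — analyticity of the Riesz integral, Hermiticity, and $e^{ih} = \phi$ — is standard.
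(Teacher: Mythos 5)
Your proof is correct and uses the standard approach — holomorphic functional calculus with a single branch of $-i\log$ whose cut along the ray through $\lambda$ avoids the compact set $\Sigma = \bigcup_{\tau}\sigma(\phi(\tau))$, which by compactness works uniformly in $\tau$; this is the same idea as the proof in Section~2.7.3 of~\cite{2016CorneanHerbstNenciu}, to which the paper defers. The only place you are terse is in asserting the existence of a closed contour $\cC\subset\field{C}\setminus R_\lambda$ that encircles $\Sigma$ with winding number one while having winding number zero around every point of $R_\lambda$ (in particular the origin must lie outside $\cC$), but this is routine since $\Sigma$ lies in a compact arc of the circle away from $\lambda$, so one may take $\cC$ to be the boundary of a thin annular sector containing that arc.
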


We can now prove Theorem \ref{th:Chern_thm}. Let $U(k_2)$ be as in \eqref{eq:kap_1_per}. By constructing a Bloch frame which diagonalizes $PXP$, we have seen that whenever $PXP$ has gaps, it must be that there exists a $\lambda$ in the complex unit circle which is not in the spectrum of $U(k_2)$ for any $k_2 \in [-\pi,\pi]$. But Lemma \ref{lem:nenciu_lemma} implies that in this case $U(k_2) = e^{i h(k_2)}$ for a family of Hermitian, real-analytic, and $2 \pi$-periodic matrices $h(k_2)$. So define a new Bloch frame by
\begin{equation}
    \acute{\Xi}_n(\vec{x},k_1,k_2) := \exp\left( - i \frac{k_1 h(k_2)}{2 \pi} \right) \widetilde{\Xi}_n(\vec{x},k_1,k_2) \quad 1 \leq n \leq N, (k_1, k_2) \in [-\pi, \pi]^2.
\end{equation}
Using the properties of the frame $\{ \widetilde{\Xi}_n(\vec{x},k_1,k_2) \}_{1 \leq n \leq N}$ and of the family $h(k_2)$, the Bloch frame $\{ \acute{\Xi}_n(\vec{x},k_1,k_2) \}_{1 \leq n \leq N}$ is a periodic and real-analytic Bloch frame in the sense of Definition \ref{def:BF_gauge} and we are done.

\begin{remark}
The proof of Theorem \ref{th:Chern_thm} clearly generalizes without difficulty to tight-binding models under analogous assumptions.
\end{remark}

Numerical computations of the spectra of $\im \log P(k_2) \exp\left(\frac{2 \pi i X}{M}\right) P(k_2)$ where $k_2 \in [-\pi,\pi]$ for the Haldane model (where $N = 1$) are shown in Figure \ref{fig:charge_centers}. Here $M$ denotes the number of cells in the $X$ direction. Since formally
\begin{equation}
    \frac{M}{2 \pi} \im \log \exp\left( \frac{2 \pi i X}{M} \right) = X \text{ mod } M,
\end{equation}
we expect the spectra of $\im \log P(k_2) \exp\left(\frac{2 \pi i X}{M}\right) P(k_2)$ approximate those of $P(k_2) X P(k_2)$ while respecting periodic boundary conditions (see Resta \cite{1998Resta}).
The computations illustrate clearly that the uniform spectral gap assumption fails when the model is in its topological phase because each eigenvalue of $\im \log P(k_2) \exp\left(\frac{2 \pi i X}{M}\right) P(k_2)$ sweeps out an interval of width $\frac{2 \pi}{M}$ as $k_2$ is varied over the interval $[-\pi,\pi]$, so that the union of the spectra is the whole interval $[-\pi,\pi]$. Similar figures appear in the works of Soluyanov, Vanderbilt and coauthors \cite{2011SoluyanovVanderbilt,2011SoluyanovVanderbilt2,2014TaherinejadGarrityVanderbilt,2017GreschYazyevTroyerVanderbiltBernevigSoluyanov,2018WuZhangSongTroyerSoluyanov}.

\begin{figure}
  \centering
  \includegraphics[width=.4\linewidth]{./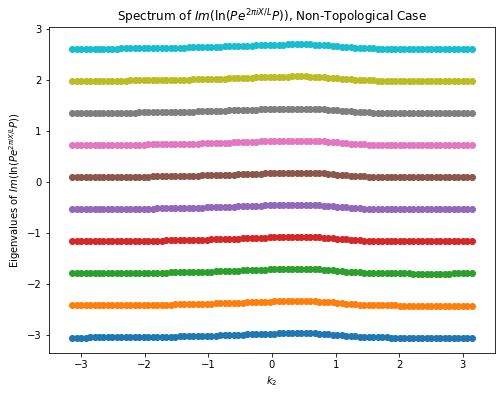}
  \includegraphics[width=.4\linewidth]{./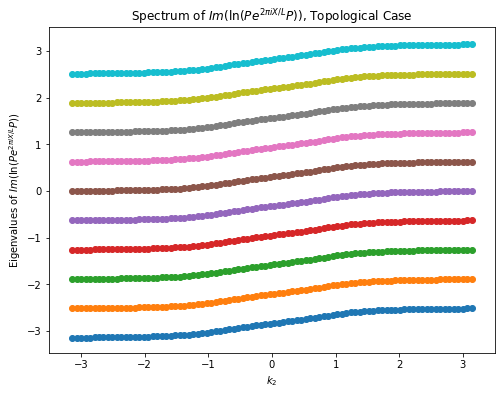} 
  \caption{Eigenvalues of $\im \log P(k_2) \exp\left(\frac{2 \pi i X}{10}\right) P(k_2)$ as a function of $k_2$, computed from the Haldane model on a $10 \times 90$ lattice with periodic boundary conditions in its non-topological (left) phase and topological (right) phase. Note that for large $M$, $\im \log P(k_2) \exp\left(\frac{2 \pi i X}{M}\right) P(k_2) \approx \frac{1}{M} P(k_2) X P(k_2)$. Eigenvalues of $\im \log P(k_2) \exp\left(\frac{2 \pi i X}{10}\right) P(k_2)$ sweep out closed intervals of width $\frac{2 \pi}{10}$ as $k_2$ is varied from $0$ to $2 \pi$ so that the spectrum of $\im \log P \exp\left(\frac{2 \pi i X}{10}\right) P$ does not have gaps. In the non-topological phase, eigenvalues of $\im \log P \exp\left(\frac{2 \pi i X}{10}\right) P$ return to their original values after $k_2$ is varied from $0$ to $2 \pi$.
 } \label{fig:charge_centers}
\end{figure}


\printbibliography

\end{document}
